\newcommand{\pbs}[1]{\let\temp=\\#1\let\\=\temp}
\numberwithin{equation}{section}
\def\be{\begin{equation}}\def\ee{\end{equation}}
\providecommand{\abs}[1]{\lvert#1\rvert}
\def\cvp{\raise 2pt\hbox{,}} 
 \def\tr{\mathop{\text{tr}}\nolimits}
\def\im{\mathop{\text{Im}}\nolimits}
\def\re{\mathop{\text{Re}}\nolimits}  
 \def\d{{\rm d}} 
\def\la{\lambda}\def\La{\Lambda}
 \def\uN{\text{U}(N)}
\DeclareMathOperator{\diff}{Diff}\DeclareMathOperator{\met}{Met}
\DeclareMathOperator{\wig}{Wig}
\DeclareMathOperator{\imm}{Imm}\DeclareMathOperator{\himm}{hImm}
\DeclareMathOperator{\emb}{Emb}\DeclareMathOperator{\salmet}{SAL}
\DeclareMathOperator{\Sch}{\text{Sch}}
\DeclareMathOperator{\rank}{\text{rk}}
\DeclareMathOperator{\Iso}{\text{Iso}}\DeclareMathOperator{\Aut}{\text{Aut}}
\DeclareMathOperator{\Ai}{\text{Ai}}\DeclareMathOperator{\Bi}{\text{Bi}}
\def\disk{\mathscr D}
\def\diffSp{\diff_{+}(\Sone)}
\def\Htwo{\text{H}^{2}}\def\Sone{\text{S}^{1}}\def\Stwo{\text{S}^{2}}\def\Etwo{\text{E}^{2}}
\def\PslR{\text{PSL}(2,\mathbb R)}
\def\PslR{\text{PSL}(2,\mathbb R)}
\def\Htwo{\text{H}^{2}}
\def\Etwo{\text{E}^{2}}\def\Stwo{\text{S}^{2}}
\def\la{\lambda}
\def\La{\Lambda}
\def\disk{\mathscr D}
\def\met{\text{Met}}
\def\SOPu{$\text{SOP}_{\text u}\ $}
\newtheorem{definition}{Definition}[section]
\newtheorem{lemma}{Lemma}[section]
\newtheorem{proposition}{Proposition}[section]
\newtheorem{hypothesis}{Hypothesis}[section]
\newtheorem{theorem}{Theorem}[section]
\newtheorem{corollary}{Corollary}[section]
\newtheorem{conjecture}{Conjecture}[section]
\def\pla#1#2#3{{\it Phys.\ Lett.\ }{\bf A #1} (#2) #3}
\def\plb#1#2#3{{\it Phys.\ Lett.\ }{\bf B #1} (#2) #3}
\def\npb#1#2#3{{\it Nucl.\ Phys.\ }{\bf B #1} (#2) #3}
\def\prl#1#2#3{{\it Phys.\ Rev.\ Lett.\ }{\bf #1} (#2) #3}
\def\jhep#1#2#3{{\it J. High Energy Phys.\ }{\bf #1} (#2) #3}
\def\prd#1#2#3{{\it Phys.\ Rev.\ }{\bf D #1} (#2) #3}
\def\pre#1#2#3{{\it Phys.\ Rev.\ }{\bf E #1} (#2) #3}
\def\atmp#1#2#3{{\it Adv.\ Theor.\ Math.\ Phys.\ }{\bf #1} (#2) #3}
\def\cmp#1#2#3{{\it Comm.\ Math.\ Phys.\ }{\bf #1} (#2) #3}
\def\pr#1#2#3{{\it Phys.\ Rep.\ }{\bf #1} (#2) #3}
\def\ijmpa#1#2#3{{\it Int.\ J.\ Mod.\ Phys.\ }{\bf A #1} (#2) #3}
\def\mpla#1#2#3{{\it Mod.\ Phys.\ Lett.\ }{\bf A #1} (#2) #3}
\def\ap#1#2#3{{\it Ann.\ of Phys.\ }{\bf #1} (#2) #3}
\def\rmp#1#2#3{{\it Rev.\ Mod. Phys. }{\bf #1} (#2) #3}
\def\imath#1#2#3{{\it Invent math }{\bf #1} (#2) #3}
\def\jpa#1#2#3{{\it J.\ Phys.\ }{\bf A #1} (#2) #3}
\def\ahpa#1#2#3{{\it Ann.\ I.~H.~Poincar\'e }{\bf A #1} (#2) #3}
\def\cqg#1#2#3{{\it Class.\ Quant.\ Grav.\ }{\bf #1} (#2) #3}
\begin{document}
%
%
{\pagestyle{empty}
\parskip 0in
\

\vfill
\begin{center}
{\LARGE Random Disks of Constant Curvature:}

\bigskip

{\LARGE the Lattice Story}



\vspace{0.4in}


Frank F{\scshape errari}

\medskip
{\it Service de Physique Th\'eorique et Math\'ematique\\
Universit\'e Libre de Bruxelles (ULB) and International Solvay Institutes\\
Campus de la Plaine, CP 231, B-1050 Bruxelles, Belgique}

%

\smallskip
{\tt frank.ferrari@ulb.be}
\end{center}
\vfill\noindent

We lay the groundwork for a UV-complete formulation of the Euclidean Jackiw-Teitelboim two-dimensional models of quantum gravity when the boundary lengths are finite, emphasizing the discretized approach. The picture that emerges is qua\-li\-ta\-ti\-ve\-ly new. For the disk topology, the problem reduces to counting so-called self-overlapping curves, that are closed loops that bound a distorted disk, with an appropriate multiplicity. We build a matrix model that does the correct counting. The theories in negative, zero and positive curvatures have the same UV description but drastically different macroscopic properties. The Schwarzian theory emerges in the limit of very large and negative cosmological constant in the negative curvature model, as an effective theory valid on distance scales much larger than the curvature length scale. In positive curvature, we argue that large geometries are ubiquitous and that the theory exists only for positive cosmological constant. Our discussion is pedagogical and includes a review of several relevant topics.

\vfill

\medskip
%
\begin{flushleft}
\today
\end{flushleft}
\newpage\pagestyle{plain}
\baselineskip 16pt
\setcounter{footnote}{0}

}

\tableofcontents

\section{Introduction}

\subsection{Two-dimensional quantum gravity: from Liouville...}

The starting point of our work is to postulate that a Euclidean theory of two-dimensional quantum gravity must be a diffeomorphism invariant theory of random metrics on a two-dimensional manifold $\mathscr M_{h,b}$ of fixed topology, with genus $h$ and $b$ boundary components.\footnote{The study of Lorentzian signature metrics and/or the consideration of the sum over all topologies are of course essential in some applications and will be briefly discussed later.} A rigorous formulation of such theories can be obtained by following the Wilsonian paradigm, in the spirit of statistical physics. The models are first regularized by considering discretized surfaces made of flat polygons. The problem is then purely combinatorial: one enumerates all the possible ways to build a surface by gluing polygons together. This counting of ``maps'' was pioneered by Tutte in the 60s \cite{Tutte}. In 1978, following ideas first put forward by 't~Hooft \cite{tHooft}, Br\'ezin, Itzykson, Parisi and Zuber related the combinatorics of maps to the solution of formal matrix integrals in the large $N$ limit \cite{BIPZ}. This point of view greatly simplifies the counting problem.
The discretized theory being solved, one must then define a continuum limit in which the size of the polygons goes to zero. The existence of non-trivial and universal continuum limits relies on the existence of critical points \cite{Kazakov}, \cite{Dblescaleref}. 

Another approach, pioneered by Polyakov in 1981 \cite{Polyakov}, is to use a direct field theoretic formulation in the continuum, called Liouville theory. One uses conformal gauge and treats the conformal factor as an ordinary two-dimensional quantum field. Diffeomorphism invariance of the underlying quantum gravity theory implies that the resulting Liouville model must be a conformal field theory. This remains true even when gravity is coupled to an arbitrary, possibly non-conformal, matter quantum field theory. This approach has scored a number of impressive successes \cite{Liouvillesuccess1,Liouvillesuccess2}. When comparable, the results obtained in the continuum Liouville approach and in the discretized approach beautifully match. This subject has become so mature that it is by now a well-established field in probability theory, see e.g.\ \cite{mathLiouville} and references therein.

A ``topological'' version of two-dimensional quantum gravity can also be studied. Instead of considering the space of all metrics on a given surface $\mathscr M_{h,b}$, one restricts oneself to metrics of constant negative curvature and requires the boundaries to be geodesics of given lengths. The resulting moduli spaces of metrics are finite dimensional. The associated partition functions are thus given by ordinary integrals that compute the volumes of these moduli spaces with respect to the Weil-Petersson measure, as a function of the boundary lengths. These volumes can be computed by using the Mirzakhani recursion relations \cite{Mirzakhani}, making the link with previous famous work by Witten and Kontsevitch \cite{Schlimittopo}. The Mirzakhani recursion relations are themselves equivalent to the so-called topological recursion relations in matrix models \cite{Eynardtoporec}. 

Nice reviews and relevant references on these classic results may be found in \cite{gravityreviews}. 

\subsection{...to Jackiw-Teitelboim}

In recent years, a version of two-dimensional quantum gravity, called Jackiw-Teitel\-boim quantum gravity \cite{JTpapers}, or JT gravity, for short, has been extensively studied. JT gravity lies somewhere between topological gravity and Liouville gravity. One fixes the bulk scalar curvature and the boundary lengths to be constants, as in the case of topological gravity, but one waives the constraint that the boundaries must be geodesics.\footnote{An important consequence of our work will be to show that the boundaries in JT gravity are actually fractals with a non-trivial Hausdorff dimension. The boundary lengths that are fixed are thus ``quantum lengths,'' analogous to the diffusion time for Brownian paths, as will be explained later.} The boundary conditions are therefore similar to the FZZT branes in Liouville gravity \cite{FZZT}.\footnote{In the original FZZT construction, one fixes the boundary cosmological constant. Fixing the boundary length is obtained by a Laplace transform. The important point is that no other condition is imposed on the boundary.} JT gravity comes in three versions, depending on whether the bulk curvature is chosen to be negative, zero or positive. Most of the literature focuses on the negative curvature case, because of its direct relevance for holography. The three versions are identical at distance scales much smaller than the bulk curvature length scale: they have the same UV-completion. The UV-complete formulation of the models can thus be studied by focusing first on the simplest case of the zero curvature theory, in which all the essential ideas can be introduced in a somewhat simplified context. The cases of negative and positive curvature will also be discussed, in a second stage. At long distance scales, we shall explain that the three models display drastically different physics.

In order to avoid a confusion in the terminology, let us stress that for us the cosmological constant $\La$ is defined to be the constant multiplying the bulk area term in the action.\footnote{For the disk topology, and when the non-zero curvature models are formulated in terms of a dilaton field, as is traditional, $\La$ is proportional to the boundary value of the dilaton.}   In JT gravity, it thus has no direct relation with the sign of the bulk curvature, which is fixed independently. In particular, the negative curvature model is defined for any value of the cosmological constant, positive or negative, the near-hyperbolic limit most relevant for holography corresponding to $\La\rightarrow -\infty$. On the other hand, we shall argue that the positive curvature model is defined only for positive cosmological constant, $\La\geq 0$, whereas the zero curvature model is expected to exist as long as $\La\geq \La_{\text c}$, for a strictly negative critical cosmological constant $\La_{\text c}$.

From the physics point of view, JT gravity may seem to be too simple. Indeed, since the bulk curvature is fixed, the bulk geometry cannot fluctuate and thus the local bulk physics looks non-gravitational. Consequently, on closed manifolds $\mathscr M_{h,0}$, JT gravity is equivalent to topological gravity. 

However, the situation turns out to be much more interesting on manifolds with boundaries. The core of our work will be to show, following \cite{ferrari}, that the situation is in fact even more interesting and far more subtle than the existing literature suggests.

For instance, in the case of the disk topology, the space of metrics is directly related to the space of immersions of the source disk into a canonical target space of constant curvature, which is either the hyperbolic space $\Htwo$, the Euclidean plane $\Etwo$ or the two-sphere $\Stwo$, depending on whether we consider the case of negative, zero or positive curvature. The target space geometry itself is totally rigid, which translates mathematically the fact that the local bulk physics looks non-gravitational. However, there is a very large number of possible immersions. By pullback, this yields an infinite dimensional space of metrics on the source disk. Intuitively, the degrees of freedom of the theory seem to be encoded in the shape of the boundary of the immersed disk into the fixed canonical target spaces. One of the main goal of the present paper is to explain in detail the fluctuating boundary picture, from first principles. This will reveal some truly surprising and unexpected features. In particular, the standard lore according to which there is a one-to-one correspondence between the degrees of freedom of the theory and the shape of the boundary modulo the action of the isometries of the target space will turn out to be incorrect. We shall also see that typical immersed boundaries have many self-intersections and are fractal curves.

The vast majority of the literature on JT gravity has focused on the case of negative curvature in a particular limit for which the lengths of the boundaries and the absolute value of the negative bulk cosmological constant go to infinity, their ratios being kept fixed in the bulk curvature units. The intuition behind this limit is that it corresponds to taking the infinite UV cut-off limit in  the dual boundary description. This is often referred to as the ``near-hyperbolic limit'' or the ``Schwarzian limit.''\footnote{The terminology ``near AdS'' limit is also frequently encountered, but we will use the name ``hyperbolic space'' for the space of constant negative curvature in Euclidean signature and reserve the term ``anti-de Sitter'' for the case of Lorentzian signature.} The conjectured dominant configurations in the limit correspond to smoothly wiggling boundaries that fluctuate mildly on distance scales much larger than the curvature length scale. Mathematically, they are described by the so-called ``reparameterization ansatz'' for which the JT gravity path integral reduces to an integral over the group of diffeomorphisms of the circle governed by the Schwarzian action. The resulting model is exactly solvable \cite{SWloca}. Despite its simplicity, it has demonstrated its relevance in shedding light on some non-trivial questions in quantum gravity. It provides a holographic description \cite{holomod1,holomod2,holomod3,holomod4} of strongly coupled quantum mechanical systems \cite{Kitaev,tensors} that are relevant to describe near-extremal black holes, including, for instance, the process of black hole formation \cite{ferphases} or the physics of traversable wormholes \cite{MaldacenaWorm}. The Schwarzian description has allowed to address the black hole information paradox in a controlled set-up \cite{islands}. At the same time, some unexpected subtleties with the holographic framework itself have emerged, in relation with the non-factorization of amplitudes. In particular, the Schwarzian framework is not consistent with a quantum mechanical boundary description, but requires an interpretation in terms of an ensemble average over quantum theories \cite{SSS}. 

Our results imply that the Schwarzian theory is an effective theory, valid on distance scales much greater than the curvature length scale $L$. It emerges as a correct IR description of the negative curvature JT gravity model in the limit $\La\rightarrow -\infty$. On distances of the order or smaller than $L$, the structure of the boundaries become much more complicated than the smoothly wiggling boundaries described by the Schwarzian. In particular, the boundaries are self-intersecting fractal curves, and in the microscopic description of the models the length parameter is a ``quantum'' length $\beta_{\text q}$ that has an anomalous dimension. The usual smooth boundary length $\ell$ characterizing the configurations included in the Schwarzian description is itself an effective IR parameter that can be expressed in terms of the microscopic parameters $\beta_{\text q}$, $\La$ and $L$ in the $\La\rightarrow -\infty$ limit.

\subsection{Microscopic and finite cut-off Jackiw-Teitelboim}

The primary motivation for the work presented in this article and the companion papers \cite{ferrari,ferraJTconfgauge} is the observation that, despite the important applications mentioned above and the fact that the relevance of the model is widely acknowledged, the theoretical foundations of JT quantum gravity have been very little studied and are poorly understood until now. In particular, the status of the subject is far from the one achieved by Liouville gravity. 

Our goal is to provide some of the keys to a rigorous study of the theory, starting from first principles, both in the discretized and the continuous approaches. Reference \cite{ferrari} provides a synthetic expos\'e of the main findings. The present article focuses predominantly, but not exclusively, on the development of the discretized approach. The continuous approach, using conformal gauge, is further studied in the upcoming paper \cite{ferraJTconfgauge}. Ideas from the continuum conformal gauge formalism were use in \cite{Loopcalc} to study the JT gravity models coupled to a conformal field theory of central charge $c$, in the semi-classical limit $c\rightarrow -\infty$, at the one-loop order.

An important point is that we consider Jackiw-Teitelboim gravity for finite values of the microscopic parameters: the curvature length scale $L$, the bulk cosmological constant $\La$ and the quantum boundary lengths $\beta_{\text q,i}$.\footnote{Of course, in the non-zero curvature models, one may set the units so that $L=1$. It is sometimes convenient, for the physical discussion, to keep $L$ explicitly.} As we have already mentioned, the boundaries in JT gravity turn out to be fractal.\footnote{The precise statement is that typical metrics will have fractal boundaries, with probability one. Metrics with smooth boundaries of course exist, but form a subset of measure zero in the space of metrics.} Thus, they do not have smooth boundary length parameters. Such a notion only emerges on long distance scales in the negative curvature model, in the Schwarzian limit $\La\rightarrow -\infty$.

In the negative curvature theory, working for finite values of the parameters is interpreted as introducing a finite UV cut-off in the holographic dual. The Schwarzian limit $\La\rightarrow -\infty$, in which very large, nearly-hyperbolic geometries dominate\footnote{The precise definition of the near-hyperbolic geometries, described by the so-called reparameterization ansatz, is reviewed in section \ref{repaSec}.}, is thus a large UV cut-off limit from the boundary perspective. This interpretation comes from the usual UV/IR relation in holography. However, it is important to stress that, unlike in higher dimensions, the strict limit of infinite cut-off is trivial when the bulk geometry is two-dimensional. This follows from the fact that, in two dimensions, there is a unique complete metric of constant negative curvature on the disk, corresponding to hyperbolic space itself. Therefore, the theory with the cut-off strictly removed has no degree of freedom.\footnote{In Lorentzian signature, a similar result is valid, see in particular \cite{finitecutoffAdS}.} From the boundary perspective, the impossibility of completely removing the UV cut-off is the  consequence of the fact that there is no standard quantum mechanical and non-trivial conformal field theory in one dimension.\footnote{A simple argument is as follows. A conformal primary operator $\mathscr O$ of conformal dimension $\Delta>0$ must have a two-point function $\langle\mathscr O(t)\mathscr O(0)\rangle$ proportional to $1/|t|^{2\Delta}$, which thus diverges when $t\rightarrow 0$. But in quantum mechanics, there is no singularity when one multiplies operators at the same point and thus no divergence in the two-point functions when $t\rightarrow 0$.} One can find theories with conformally invariant IR fixed points, as the SYK model, but the IR physics can never be completely decoupled from the UV, which is not conformally invariant. In other words, conformal invariance is only approximate and is broken above some UV scale that is impossible to get rid off.

The lesson from this discussion is that it is essential to formulate the theory for finite values of the parameters. The distance scales of the order of the curvature length scale are  not decoupled and the formulation must be UV-complete from the bulk perspective. 

\subsection{UV-complete Jackiw-Teitelboim quantum gravity}

On which basis can we construct a UV-complete formulation of JT gravity? Is there a unique ``consistent'' UV completion? Why can't we simply assume that the SYK model, for example,  provides a satisfactory UV completion?

The axiom on which our work is based is that, in general, Euclidean two-di\-men\-sio\-nal quantum gravity is a diffeomorphism invariant theory of random metrics. In particular, Euclidean Jackiw-Teitelboim quantum gravity is a diffeomorphism invariant theory of random constant curvature metrics. Note that in dimensions strictly greater than two, this minimalist metric point of view may very well be too naive, but in two dimensions, there is no reason to believe this. On the contrary, the classical results on Liouville theory and our own results on JT gravity offer a rather compelling physical picture and a rigorous UV-complete formulation of quantum gravity in two dimensions.

The metric axiom rules out the SYK model, or other similar models, since these are not models of random metrics. There is another independent argument against the idea that SYK could be a satisfactory UV completion of JT gravity, based on the following observations. First, the UV completion must be the same for all three versions of JT, negative, zero or positive curvature, since the curvature length scale is  irrelevant at very short distances. Second, the Schwarzian description emerges as a long-distance effective theory only in the case of the negative curvature theory. As will be made clear in the following, the other models have a very different long-distance behaviour. Third, the existence of the IR Schwarzian description is the basic reason, if not the only reason, why the SYK model has been studied in relation with quantum gravity. It seems highly improbable that SYK could provide any insight into the UV structure of the zero or positive curvature models, in which the Schwarzian plays no role. Putting these three observations together, it seems extremely unlikely and unnatural that SYK could provide a UV completion of JT gravity. We shall argue that the Schwarzian theory is a universal long distance description of many models of closed random loops in hyperbolic space, independently of any relation with quantum gravity.

A completely different, non-minimalist but physically well-grounded route to find UV completions of JT gravity is to consider that the theory is the IR limit of a higher dimensional quantum gravity theory. This is natural when one studies the near-horizon limit of higher dimensional near-extremal black holes. The higher dimensional picture also suggests to consider general dilaton gravity models in two dimensions, for which the bulk curvature is not fixed. From this point of view, JT gravity corresponds to a linear dilaton potential, in which case the dilaton plays the role of a Lagrange multiplier field imposing the constant curvature constraint. The higher dimensional point of view is certainly worth exploring further, but this is far beyond the scope of the present paper. In any case, building a rigorous UV completion along these lines would require a detailed understanding of the higher dimensional quantum gravity models, at a level which does not seem to be available at the moment. We believe that an interesting guiding principle to do this could be to ensure that a matching with the purely two-dimensional UV-complete point of view developed in our work should be made.

Even if we restrict ourselves strictly to dimension two and we follow the metric axiom, is the UV completion unique? Formally, the metric axiom implies that the theory is defined by a path integral over constant curvature metric
\be\label{formalpath} \int\! D g\,\delta \bigl(R[g]-2\eta/L^{2}\bigr)\, e^{-\frac{\La}{16\pi} A[g]}\bigl(\cdots\bigr)\, ,\ee
where $D g$ is a formal path integral measure over metrics, the functional $\delta$-function imposes the constraint of constant curvature, with $\eta=\pm 1$ or $0$ depending on the version of JT one considers, $A[g]$ is the area associated with the metric $g$ and the dots indicate possible insertions of diffeomorphism invariant operators. The question of uniqueness  amounts to asking whether there are several inequivalent ways to make sense of the formal path integral \eqref{formalpath}, or, equivalently, if there are several distinct measures, in the rigorous mathematical sense, on the space of constant curvature metrics modulo diffeomorphisms. The answer to this question is yes. An obvious way to understand that there are many interesting measures is that the quantum gravity theory could be coupled to a non-trivial QFT, and integrating out the QFT degrees of freedom then produces a new interesting measure. Measures that probably cannot be interpreted in this simple way will also be constructed in \cite{ferraJTconfgauge}. To achieve uniqueness, one thus needs an additional guiding principle on top of the metric axiom. There are in fact at least three possible angles of attack.

One point of view is to start from the real-time quantization and the requirement that the Euclidean theory computes correctly suitably defined observables in the Hilbert space framework. In spirit, this approach is similar to the logic used in constructing the Euclidean path integral in quantum mechanics, which reproduces correctly the quantum mechanical predictions if and only if one uses the Wiener measure of the set of paths, which are thus Brownian almost surely. However, the rigorous real-time quantization of JT gravity remains an outstanding open problem at the microscopic, finite volume, level; see \cite{realJTpaper} for an interesting approach in the near-AdS framework. We hope that our work on the Euclidean formulation, in the present paper and in \cite{ferraJTconfgauge}, will actually provides important clues to achieve it in the future. 

A second point of view is to assume that the measure $D g$ in the path integral \eqref{formalpath} is the usual ultralocal DeWitt measure on the space of metrics and proceed from there. This is non-rigorous, since ultralocal measures are only formal objects, but experience with Liouville theory indicates that this approach is valuable in providing strong clues at to what the correct, rigorously defined measure is. We shall use this point of view in the companion paper \cite{ferraJTconfgauge}.

A third point of view, that we shall follow in the present paper, is to use a lattice formulation, discretizing the space of constant curvature metrics and then taking a continuum limit. This approach provides a compelling physical picture and is unique, in the sense that the continuum limit of the lattice models is universal, independent of the details of the discretization procedure. In the continuum, it yields a privileged measure on the space of constant curvature metrics. We conjecture that this measure is the same as the one obtained through the formal procedure starting from the ultralocal DeWitt measure \cite{ferraJTconfgauge}, and is also the correct measure for making the link with the real-time quantization of the models.

\subsection{Comparison with the previous literature}

To the best of our knowledge, possible microscopic, UV-complete, definitions of JT gravity have been discussed in only three papers, all focusing on the disk topology. 

i) In \cite{KitaevSuh}, the authors take seriously the idea that the fluctuating boundary behaves as if it were the trajectory of a ``boundary particle.'' The path integral is then defined in the usual quantum mechanical way, using a quadratic action which describes the continuum limit of a closed random walk in target space, which is hyperbolic space for the negative curvature theory studied in \cite{KitaevSuh}. This point of view has been used to investigate various properties of the models, see e.g.\ \cite{PWAlgebra}. 

ii) In \cite{StanfordSAP}, the authors use a very different starting point. They assume that, instead of arbitrary closed random walks, the fluctuating boundaries must be self-avoiding loops, that is to say, closed curves with no self-intersection. This is a highly non-trivial model of random paths, which has been much studied in the literature because it has many applications in statistical physics and in mathematics.

iii) The idea to consider that the boundary must be a non self-interesting curve was also used in \cite{VerlindeSAP}, albeit with different results than in \cite{StanfordSAP}, but making the link with an interesting ``bottom-up'' approach in which the finite cut-off JT theory is postulated to correspond to a $T\bar T$ irrelevant deformation of the Schwarzian infinite cut-off description \cite{ttbar}.

\emph{The picture that emerges from the investigations presented below is qualitatively and quantitatively different from these earlier proposals.} 

As we shall explain in detail below, a description in terms of random walks, as in \cite{KitaevSuh}, does not satisfy the metric axiom. The model, however, is quite interesting and does share some important qualitative features with the UV-complete formulation that we propose. The approaches in \cite{StanfordSAP} and \cite{VerlindeSAP}, on the other hand, do describe theories of random metrics, since one can easily show that to any self-avoiding loop is associated a unique metric on the disk. But the set of metrics so obtained is a subset of measure zero with respect to the relevant measure on the space of constant curvature metrics. Models based on self-avoiding loops may thus be seen as some kind of minisuperspace approximations.

As already mentioned, and in sharp contrast with the existing approaches based on the fluctuating boundary picture, one of the most surprising consequence of our analysis is that \emph{the degrees of freedom of the theory are not in one-to-one correspondence with the shape of the boundary.} This is the consequence of the fact that to a given boundary curve may be associated several distinct metrics on the disk, that is to say, metrics that are not related by a disk diffeomorphism and are thus truly physically inequivalent. This fact alone shows that many ideas about the model must be rethought.

\subsection{Plan of the paper}

Section \ref{diskviewSec} is devoted to a detailed discussion of the space of constant curvature metrics on the disk. In Sections \ref{dissurfrevSec} and \ref{dismetSec}, a microscopic, discretized description, is given in the flat case, using surfaces built by gluing polygons. Generating functions are defined in Section \ref{Genfun1Sec}. A continuum description in terms of immersions, treating in parallel all three cases of zero, positive and negative curvatures, is given in Section \ref{immersionSec}. Finally, in Section \ref{earlySec}, we propose an overview of earlier approaches, in order to make the paper as self-contained as possible and also to be able to better explain and highlight the novelties brought about by our own contribution. We discuss the reparameterization ansatz and the associated notion of ``wiggling boundary,'' the Schwarzian theory, the edge mode picture and the existing approaches to the finite cut-off models.

In Section \ref{bdviewSec}, we develop the description of the model in terms of the boundary curves. The fact that JT quantum gravity can be approached from two very different perspectives, either based on random metrics or focusing on the boundary fluctuations, is one of its most interesting property. The analysis of existing models indeed suggest that theories of random metrics, e.g.\ Liouville theory, and theories of random paths, e.g.\ self-avoiding paths or Brownian paths, have qualitatively distinct features and describe very different physics. After a general introduction in Section \ref{curvgeniSec}, we explain in \ref{bdcodeSec} that the correct model for the fluctuating boundary in JT is a model of \emph{random self-overlapping curves}. General properties of self-overlapping curves are explained in \ref{SOPpropSec}. In \ref{WhitneySec} and \ref{overlapSec} we discuss simple criteria that self-overlapping curves must satisfy, based on the Whitney index and the notion of number of overlaps. A more comprehensive analysis, in terms of Blank cuts and Blank words, in given in \ref{BlankSec}. In Section \ref{MilnorSec}, we introduce the fundamental notion of \emph{multiplicity}. We illustrate this notion on the simplest example of the Milnor curve. The multiplicity of a self-overlapping curve counts the number of distinct disks it bounds or equivalently the number of physically distinguishable, that is to say, diffeomorphism inequivalent, constant curvature metrics associated with it. The fact that the multiplicity can be strictly greater than one implies that the degrees of freedom of JT quantum gravity are not in one-to-one correspondence with the shape of the boundary. In Section \ref{Genfun2Sec}, we define the precise model of random self-avoiding polygons that is equivalent to the model of discretized random constant curvature metrics discussed in Section \ref{diskviewSec}. We also mention some possible generalizations for which self-overlapping curves can be open and we very briefly discuss the case of the annulus or higher topologies. Finally, we give some preliminary information on the discretized formulation in negative and positive curvatures, highlighting the subtleties and differences with the flat case.

The goal of Section \ref{continuumSec} is twofold. First, we provide a pedagogical review of some ideas that play a central role in the study of known models of random metrics (the Liouville theory, Section \ref{RevLSec}) and of random paths (Brownian, self-avoiding, convex) in \ref{randompathrev}. Our aim is to introduce some useful concepts and technical tools and also to highlight the similarities, but also some crucial differences between random metrics and random paths in the traditional set-ups, having in mind that in the case of JT the random metrics and random paths formulations are strictly equivalent. We then initiate the study of the self-overlapping polygon model in Section \ref{SOPJTSec}, both in the case of the uniform measure and in the case of the non-uniform measure taking into account the multiplicity, which is relevant for JT. We describe some interesting observables that are specific to these models and explain some conjectures on some critical exponents and the asymptotic properties of the area distribution law.

In Section \ref{mmSec}, we introduce a Hermitian matrix model that does the appropriate counting of metrics in flat JT gravity, its large $N$ free energy coinciding with the generating functions introduced in Sections \ref{diskviewSec} and \ref{bdviewSec}. The model is based on the idea of dually weighted graphs, that we review in \ref{duallywSec}. As explained in \ref{nogoSec} and \ref{VectorSec}, some modifications with respect to the usual formulation must be introduced in order to be able to describe JT gravity. In particular, we construct the appropriate JT gravity macroscopic loop operators and provide a matrix model formula for the multiplicity of a self-overlapping polygon.

Section \ref{nonzeroRSec} focuses on the cases of non-zero curvature. At short distance scales, the positive, negative and zero curvature models coincide, but at long distances they display drastically different physics. The large distance behaviours can be characterized by the asymptotics of the area probability densities, for which we present conjectures. For the negative curvature model, we explain the emergence of the Schwarzian description at long distances and make the link with the ballistic behaviour of random paths models in hyperbolic space. For the positive curvature model, we argue that large geometries will be ubiquitous due to the absence of an isoperimetric inequality. At the quantum level, this implies that the model exists only for positive or zero cosmological constant. We also discuss the lattice formulation of the non-zero curvature models and postulate a relation with a suitable analytic continuation of the matrix model discussed in Section \ref{mmSec}.

Finally, in Section \ref{OutSec1}, we briefly summarize some of the main results of our work. We then discuss in \ref{OutSec2} the consequences for the study of JT gravity on manifolds of arbitrary topology. We explain that the standard approach by Saad, Shenker and Stanford, which applies in the Schwarzian limit, cannot be generalized in any straightforward way to the case of the microscopic, finite cut-off description. We close the paper in Section \ref{OutSec3} by highlighting some interesting open questions that we would like to explore in future research project.

\section{\label{diskviewSec}Constant curvature metrics on the disk}

We present a construction of the constant curvature metrics on the disk, mainly focusing on the discretized version in the flat case, but also discussing the general case in the continuum. The discretized formulation is central, since it provides the basis for a rigorous microscopic formulation of the JT quantum gravity theory. We put our results into perspective by comparing them with existing approaches in the literature.

\subsection{\label{dissurfrevSec}Discretized surfaces and discretized metrics}

Let us start by briefly reviewing some well-known facts about the discretization of two-dimensional surfaces and metrics. 

A discretized surface is built by gluing polygons along their edges. The construction is purely combinatorial in nature: the data are the set of polygons and the pairing (gluing) of the edges. These same data define a matrix model fat Feynman graph as well. To see this, one may use one of two equivalent points of view. In the first, ``direct'' point of view, the $p$-gons correspond to faces of length $p$ in the matrix model graph and the polygon vertices and edges correspond to the Feynman graph vertices and edges respectively. In the second, ``dual'' point of view, the $p$-gons correspond to vertices of degree $p$ of the Feynman graph and a gluing is associated to an edge joining the vertices associated with the glued polygons. The vertices of the direct point of view are mapped to the faces of the dual point of view and vice-versa.

Surfaces with $k$ boundary components can be described as closed surfaces with $k$ marked faces that may have an arbitrary length, the edges belonging to the marked faces being interpreted as forming the surface boundary. The marked faces may be removed altogether. The Euler identity reads
\be\label{Euler} F-E+V = 2 - 2h -b\, ,\ee
where $F$ is the number of unmarked faces, $E$ the number of edges, $V$ the number of vertices, $h$ the genus and $b$ the number of boundary components (or marked faces). 

Assuming that the unmarked faces are made of $p$-gons for a fixed value of $p$,\footnote{The most general case, with different types of polygons, can be treated similarly, but we don't need it for our purposes.} and noting $r_{1},\ldots r_{b}$ the lengths (number of edges) of the boundaries, we have
\be\label{pgonid1}
2 E = p F+\sum_{i=1}^{b}r_{i}\, .\ee
Using Euler's \eqref{Euler}, we get
\be\label{pgonid2} 2(2-2h-b)+\sum_{i=1}^{b}r_{i} = (2-p)F+2V\, .\ee

A metric on a discretized surface is specified by introducing a length scale $\ell_{0}$. One decides that all the polygons building up the surface are flat and regular, with edge length $\ell_{0}$. Moreover, the gluing along the edges is smooth and flat. The cut-off $\ell_{0}$ may be interpreted as the Planck length. In this construction, the curvature is entirely localized on the vertices. The Ricci curvature in the vicinity of a bulk vertex (defined to be a vertex that does not belong to the boundary) of degree $q$, at which $q$ $p$-gons are glued together, reads
\be\label{localR} R(x) = R_{p,q}\frac{\delta(x-x_{V})}{\sqrt{g(x)}}\,\cvp\ee
where $x$ is a local coordinate system around the vertex situated at $x=x_{V}$, $g$ the determinant of the metric and 
\be\label{Rpqform} R_{p,q} = \frac{2\pi}{p}\bigl( 4-(p-2)(q-2)\bigl)\, .\ee
The extrinsic curvature in the vicinity of a boundary vertex of degree $q$ is similarly
\be\label{localK} k(s) = k_{p,q}\,\delta(s-s_{V})\, ,\ee
where $s$ is the curvilinear coordinate along the boundary and 
\be\label{Kpqform} k_{p,q} = \frac{\pi}{p}\bigl(2-(p-2)(q-2)\bigl)\, .\ee

Let $V_{q}$ and $v_{q}$ be the numbers of vertices of degree $q$ in the bulk and on the boundary, respectively. We have (also using \eqref{pgonid1})
\be\label{EVrel} 2 E = \sum_{q} q (V_{q} + v_{q}) = pF + \sum_{i=1}^{b}r_{i}\, .\ee
Moreover, the total number of boundary vertices must match the total boundary length,
\be\label{vbdcount} V_{\text b} = \sum_{q}v_{q} = \sum_{i=1}^{b}r_{i}\, .\ee
Using \eqref{Rpqform} and \eqref{Kpqform}, one can then check that the Gauss-Bonnet formula
\be\label{GaussBonnet} \int\!\d^{2}x\sqrt{g}\, R + 2\oint\!\d s\, k=4\pi (2-2h-b)\ee
is equivalent to the Euler formula \eqref{Euler}, or equivalently to \eqref{pgonid2}.

\subsection{\label{dismetSec}Discretized flat Jackiw-Teitelboim gravity}

A straightforward application of the ideas reviewed in the previous subsection allows to define a discretized version of Euclidean Jackiw-Teitelboim quantum gravity in the zero curvature case. Working this out explicitly and explaining some simple consequences is the subject of the present subsection.

The equation \eqref{Rpqform} shows that the constraint of zero curvature in the bulk implies that  $2p=(p-2)q$ for each bulk vertex. The solutions are $(p,q) = (3,6)$, $(4,4)$ or $(6,3)$, corresponding to the three regular tessellations of the plane, made of triangles, squares or hexagons. These three possibilities could be used in our construction and they are all expected to yield the same theory in the continuum limit. For illustrative purposes, we mainly focus on quadrangulations. Using hexagons may present some technical advantages that will be mentioned in Section \ref{mmSec}.

We choose to orient the boundaries in such a way that the outer normal is directed to the right, the surface itself being situated to the left. With this convention, a positive extrinsic curvature means that the boundary is making a left turn and a negative extrinsic curvature that it is making a right turn. Precisely, with quadrangulations, formula \eqref{Kpqform} shows that the extrinsic curvature at the boundary vertices is
\be\label{k4qform} k_{4,q} = 
\begin{cases}
\pi/2 & \text{if $q=2$ (left turn)},\\
0 & \text{if $q=3$ (straight line)},\\
-\pi/2  &\text{if $q=4$ (right turn)}.
\end{cases}
\ee
We see that vertices of degree two, three and four correspond to a left turn, going straight, or a right turn, respectively, and that the turns are made at right angle. We restrict ourselves to vertices of degree two, three or four on the boundary, since, intuitively, this is enough to obtain boundaries of arbitrary shapes in the continuum limit.\footnote{It may be sufficient to work with degrees two and four only. The case of degrees five or more will be briefly mentioned later.}

Let us denote by $V_{\text B}$ the number of bulk vertices and by $v_{\text L}$, $v_{\text S}$ and $v_{\text R}$ the number of vertices of degree two, three and four on the boundary. Eq.\ \eqref{pgonid1}, \eqref{EVrel} and \eqref{vbdcount} now take the form
\be\label{quadrelgraph} 2 E = 4 V_{\text B} + 2 v_{\text L} + 3 v_{\text S} + 4 v_{\text R} = 4 F + \sum_{i=1}^{b}r_{i} = 4 F + v_{\text L} + v_{\text S} + v_{\text R}\, .\ee
In particular
\be\label{vvv4} v_{\text L} + 2 v_{\text S} + 3 v_{\text R} = 4(F-V_{\text B})\ee
must be a multiple of four. Using \eqref{quadrelgraph}, Euler's identity \eqref{Euler} is found to be equivalent to 
\be\label{EulergenflatJT} v_{\text L}-v_{\text R} = 4(2-2h-b)\, .\ee
Taking into account \eqref{k4qform}, this is of course also equivalent to the zero-curvature Gauss-Bonnet formula $\oint k\d s = 2\pi(2-2g-b)$. Note that, together with \eqref{vvv4}, we get
\be\label{vSform} v_{\text{S}} = 2(F-V_{\text B}) - 2 v_{\text R} -2 (2-2h-b)\, .\ee
The number of straight vertices must be even.

On the disk, Eq.\ \eqref{EulergenflatJT} yields
\be\label{EulergenflatJTdisk} v_{\text L}-v_{\text R} = 4\, .\ee
This is a simple and useful constraint on the possible shapes of the boundary curve in JT gravity. We have derived it in the flat case, but we will see later that a version of this constraint is also valid in the positive and negative curvature cases; the difference is that in these latter cases it will not be a direct consequence of Gauss-Bonnet.

\begin{figure}
\centerline{\includegraphics[width=6in]{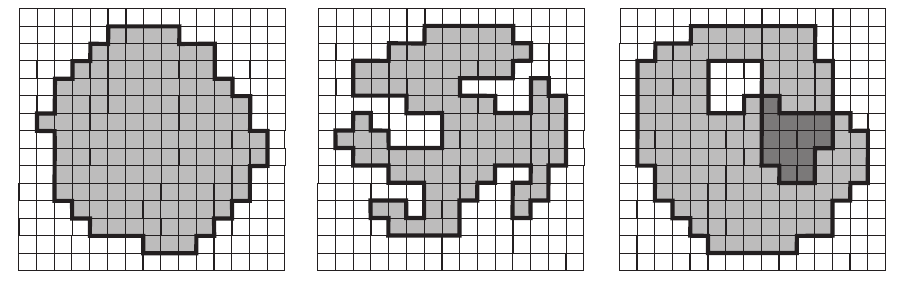}}
\caption{\label{disdiskbasicFig}Three examples of discretized disks drawn isometrically on a flat square lattice. The thick black lines represent the boundaries. The lattice tiles occupied by squares forming the disk, which are the faces of the associated fat graph in the matrix model direct representation, are shaded in grey. Dark grey tiles are occupied by two overlapping faces. Left inset: a disk with a smoothly wiggling boundary, akin to the ``reparameterization ansatz'' configurations much studied in the literature; center inset: a disk whose boundary is a typical self-avoiding polygon; right inset: a disk with a twofold overlapping region (shaded in dark grey) and a self-intersecting boundary.}
\end{figure}

What do discretized flat discs look like? We have depicted in Fig.\ \ref{disdiskbasicFig} three simple examples. All our drawings are made on an underlying square lattice. The basic building blocks of the lattice are called ``tiles.'' Each square face of a discretized disk is put on a lattice tile, building up a figure satisfying the constraints indicated above. The result is a graph for which all the faces, but the marked boundary, have length four, all the bulk vertices have degree four and all the boundary vertices have degree two, three or four. 

A crucial feature is that \emph{discretized disks may have overlaps}; see, for instance, the example depicted on the right inset of Fig.\ \ref{disdiskbasicFig}. As a consequence, \emph{the boundary curves may have self-intersections.} This is an absolutely fundamental feature, which is also relevant in the positive and negative curvature cases.

Intuitively, in the continuum limit, discretized disks are arbitrary \emph{distorted disks}, which are obtained by starting from a round disk and then allowing arbitrary smooth deformations, stretching and turning around. Under such deformations, some pieces of the original round disk may eventually cover other pieces, possibly multiple times. A deformation process, that yields a continuous configuration akin to the discretized disk depicted in the right inset of Fig.\ \ref{disdiskbasicFig}, is shown in Fig.\ \ref{deformFig}.

\begin{figure}
\centerline{\includegraphics[width=6in]{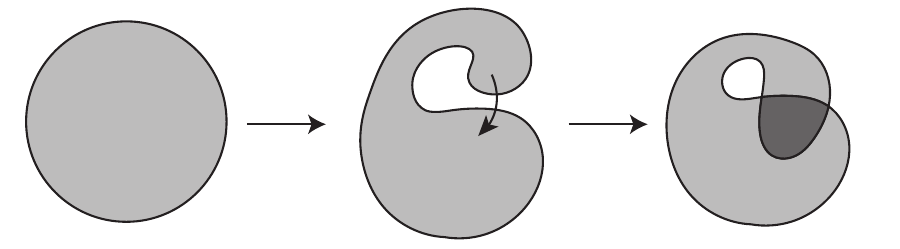}}
\caption{\label{deformFig}A distorted disk with an overlapping region (right inset), obtained by smoothly deforming a round disk (left inset). The deformation is a regular homotopy between the trivial embedding of the circle in Euclidean space and a more general immersion. It can also be interpreted, by pullback, as a smooth deformation of the Euclidean round metric on the disk to a more general flat metric.}
\end{figure}

Let us immediately dissipate a possible confusion. The graph associated with the disk is of course planar. As a consequence, it can be drawn on a plane in such a way that there is no overlap and, in particular, in such a way that the boundary has no self-intersection. This comes from a purely topological argument. The reason why we do get overlaps and self-intersections in our drawings is that we insist to depict the disks \emph{in an isometric way}, respecting the distances defined by the metric associated with the discretization, as explained in \ref{dissurfrevSec}. This is a very convenient and visual way to depict metrics of constant curvature, both in the discretized and in the continuum cases. This point of view will be further explored in Section \ref{immersionSec}. 

To close this subsection, we have depicted five forbidden configurations in Fig.\ \ref{forbidFig}. The first three insets represent cases with boundary vertices of orders strictly greater than four: order $q=5$, for which $k_{4,5}= -\pi$ (``U-turn''); $q=6$, for which $k_{4,6} = -3\pi/2$; and $q=7$, for which $k_{4,6} = -2\pi$. Allowing for such configurations could be interesting, since they may allow for higher critical points in the model. In the last two insets on the right are depicted two candidate boundary curves that turn out to be forbidden, because they do not bound distorted disks. It is indeed impossible to find discretized disks, as defined above, whose boundaries match these curves. The curve in the fourth inset has $v_{\text L}-v_{\text R}=8$ and can thus be ruled out by the simple criterion \eqref{EulergenflatJTdisk}. The curve in the rightmost inset satisfies the constraint $v_{\text L}-v_{\text R}=4$ but is also ruled out because, by checking the position of the would-be distorted disk with respect to the boundary, one realizes that it would have to fill up the entire plane. These simple examples hint at the fact that the set of allowed boundary curves is constrained in a rather non-trivial and non-local way. This will be discussed in much more detail in Section \ref{bdviewSec}.

\begin{figure}
\centerline{\includegraphics[width=6in]{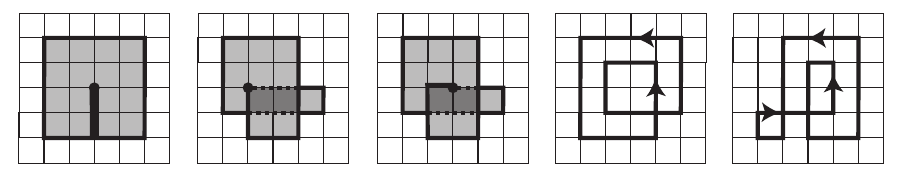}}
\caption{\label{forbidFig}Forbidden configurations corresponding to boundary vertices (black dots) of degrees 5, 6 and 7 (first, second and third insets respectively) and impossible boundary curves (right insets).}
\end{figure}

\subsection{\label{Genfun1Sec}Generating functions}

We denote by $W_{v,p}$ the number of distorted, flat, discretized disks with $F=p$ faces and $v=v_{\text L}+v_{\text S}+v_{\text R}$ boundary vertices. Note that \eqref{quadrelgraph} implies that $v$ is even\footnote{Actually, any closed path on a square lattice has even length, even paths that do not bound a distorted disks.} and thus we set
\be\label{vnrel} v = 2n\, .\ee
From \eqref{quadrelgraph} and \eqref{EulergenflatJTdisk}, the total number of edges and bulk vertices are
\be\label{nprel} E = 2p+n\, ,\quad V_{\text B} = p-n+1\, .\ee
%
To define $W_{2n,p}$ precisely, we identify two disks that are related by a lattice translation. From the point of view of the metric or gravitational interpretation, it would also be natural to identify configurations related by lattice rotations and reflexions, since these transformations clearly do not change the metric. However, for various reasons, this is not the most convenient choice to make, see in particular Section \ref{mmSec}. We thus decide to count as distinct non-identical configurations related by lattice rotations or reflexions. For generic configurations, and thus also in the continuum limit, this simply amounts to multiplying the generating functions by a factor of $4\times 2=8$. One may then rescale the generating functions by a global factor of $1/8$, but we will not do this since it is more natural to work with coefficients $W_{2n,p}$ that are integers.

\begin{figure}
\centerline{\includegraphics[width=6in]{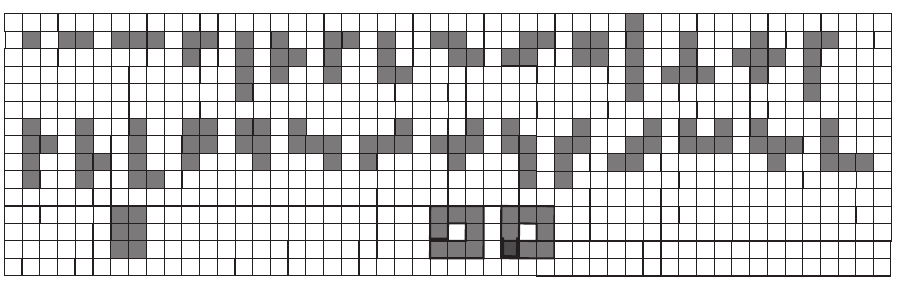}}
\caption{\label{figlowdisk}All the distorted disks, modulo lattice rotations, having five faces or less, or a boundary length 10 or less. In the lower part of the figure are also depicted the two simplest distorted disks whose boundaries are not self-avoiding polygons (contributing to the partition function at order $g^{18}t^{8}$) or that do have an overlap (contributing to the partition function at order $g^{20}t^{9}$).}
\end{figure}

All the distorted disks with low values of $n$ and $p$ are depicted in Fig.\ \ref{figlowdisk}. With our counting convention, the first configuration in the upper-left corner is counted one time, $W_{4,1}=1$, the second two times, $W_{6,2}=2$, the third two times and the fourth four times, $W_{8,3} = 6$, etc. The numbers $W_{2n,p}$ are conveniently encoded in the generating functions
\be\label{Zdef1} W(t,g) = \sum_{n,p\geq 1}W_{2n,p}t^{p}g^{2n}\, .\ee
One gets, from Fig.\ \ref{figlowdisk},
\be\label{tildeZnorot} W(t,g) = g^{4}t + 2 g^{6} t^{2} + 6 g^{8} t^{3} + (18g^{10}+g^{8})t^{4} + (8g^{10} + 55g^{12})t^{5} + 2g^{10}t^{6} + O(g^{12}t^{6})\, .\ee
This low order counting is not very interesting, because for low values of $n$ and $p$ the boundaries are standard self-avoiding polygons (SAPs) \cite{SAPcounting}. The counting starts to deviate from the SAP case at order $g^{18}t^{8}$, see Fig.\ \ref{figlowdisk}.

The parameters $t$ and $g$ are related to the bulk and boundary bare cosmological constants $\La_{0}$ and $\la_{0}$ by $t=e^{-\La_{0}\ell_{0}^{2}/(16\pi)}$ and $g=e^{-\la_{0}\ell_{0}}$, where $\ell_{0}$ is the Planck length.\footnote{We use the traditional convention in JT gravity of introducing a factor of $\frac{1}{16\pi}$ in the definition of the bulk cosmological constant.} The generating functions at fixed boundary length and fixed area are
\be\label{Zfixed} W_{2n}(t) = \sum_{p\geq 1}W_{2n,p}t^{p}\, ,\quad \tilde W_{p}(g) = \sum_{n\geq 1}W_{2n,p}g^{2n}\, .\ee
The full generating function \eqref{Zdef1} can be expressed as
\be\label{Zdef2} W(t,g) = \sum_{n\geq 1}W_{2n}(t)g^{2n} = \sum_{p\geq 1}\tilde W_{p}(g)t^{p}\, .\ee
For a given boundary length $2n$, the area is bounded above by $[n^{2}/4]$, which is $n^{2}/4$ or $(n^{2}-1)/4$ depending on whether $n$ is even or odd. This is the isoperimetric inequality on the square lattice. For a given area $p$, the largest possible boundary length is $2(p+1)$, which, by using \eqref{nprel}, is obtained when $V_{\text B}=0$, i.e.\ when the disk is tree-like, or branched polymer-like, with no bulk structure. This implies that the generating functions at fixed length and fixed area are polynomials, with
\be\label{degreeW} \deg W_{2n}(t) = [n^{2}/4]\, ,\quad \deg \tilde W_{p}(g) = 2 (p+1)\, .\ee
In particular,
\be\label{Z2npdsef} w_{2n} = W_{2n}(1) = \sum_{p\geq 1}W_{2n,p}\, ,\quad \tilde w_{p} = \tilde W_{p}(1) = \sum_{n\geq 1}W_{2n,p}\ee
are finite and count the total number of distorted disks with fixed boundary length $2n$ and fixed area $p$ respectively. From \eqref{tildeZnorot}, one finds
\be\label{wtildewlow}
\begin{split} & w_{2}=0\, ,\ w_{4}=1\, ,\ w_{6}=2\, ,\ w_{8} = 7\, ,\ w_{10} = 28\\
& \tilde w_{1}= 1\, ,\ \tilde w_{2} = 2\, ,\ \tilde w_{3}=6\, ,\ \tilde w_{4}=19\, ,\ \tilde w_{5}=63\, .
\end{split}\ee

The functions $W$, $W_{2n}$ or $\tilde W_{p}$ are a priori formal generating functions encoding the numbers $W_{2n,p}$. Their radii of convergence and the possibility to define a continuum limit will depend on the asymptotic behaviour at large $n$ and $p$. These aspects will be discussed in Section \ref{continuumSec}.

\subsection{\label{immersionSec}Constant curvature metrics on the disk in the continuum}

The above discussion has a natural extension in the continuum that we present in this subsection. The cases of zero, positive or negative curvatures are treated in parallel.

\subsubsection{Immersions and pull-back metrics}

Let us first recall the definition of an immersion and a few simple facts about the concept.
\begin{definition} An immersion $F:\mathscr M\rightarrow\mathscr N$ between two differentiable manifolds $\mathscr M$ and $\mathscr N$, $\dim\mathscr N\geq\dim\mathscr M$, is a differentiable function such that $\d F$ has maximum rank everywhere, that is to say, $\rank\d F_{|P} = \dim\mathscr M$.
\end{definition}
In local coordinate systems $(x^{\mu})$ on $\mathscr M$ and $(y^{a})$ on $\mathscr N$, an immersion $F$ is described by functions $y^{a}(x)$ such that the matrix $\partial y^{a}/\partial x^{\mu}$ has rank $\dim\mathscr M$ everywhere. 

We shall use this in the following way. If $F:\mathscr M\rightarrow\mathscr N$ is an immersion and $\mathscr N$ is endowed with a metric $\delta$, then there is a naturally associated metric $g = F^{*}\delta$ on $\mathscr M$, called the pull-back metric, given in local coordinates by
\be\label{pullbackdef} g_{\mu\nu}(P) = \frac{\partial y^{a}}{\partial x^{\mu}}\frac{\partial y^{b}}{\partial x^{\nu}}\delta_{ab}(F(P))\, .\ee
The fact that the metric tensor $g$ is positive-definite follows from the facts that $\delta$ is positive-definite and that $\partial y^{a}/\partial x^{\mu}$ does not have a zero eigenvalue. If we waive the condition that $F$ is an immersion, then the symmetric tensor $g$ defined by \eqref{pullbackdef} will still be positive, but it will be degenerate (non-invertible) at the points where $\d F$ is not invertible. Thus, it will not be a metric in the usual sense.

An immersion $F:\mathscr M \rightarrow\mathscr N$ is a local diffeomorphism between $\mathscr M$ and its image $F(\mathscr M)$. When an immersion is injective, one says that the immersion is an embedding. There is then a global identification between the spaces $\mathscr M$ and $F(\mathscr M)$. Of course, a generic immersion is not an embedding. 

The simplest case to consider is that of immersions $F:\Sone\rightarrow\mathbb R^{2}$ of the circle into the plane. The image $\im F=\gamma$ is a closed curve. For a general immersion $F$, $\gamma$ has self-intersections. The immersion is an embedding if and only if $\gamma$ has no self-intersection. Another case, which is very important for our purposes, is that of immersions $F:\disk\rightarrow\mathbb R^{2}$ from the disk to the plane. The discretized disks of Section \ref{dismetSec} are the discretized versions of the images $F(\disk)$ of  immersions $F$. The associated metrics are the pull-back metrics \eqref{pullbackdef}, where $\delta$ is the Euclidean metric. This idea yields an identification between the space of constant curvature metrics and appropriate quotients of the space of immersions, that we are going to work out below.

Consistently with the convention we have already used, we always denote by $\disk$ the disk manifold. This is the ``source'' space, on which the quantum gravity theory is defined. It can be described by the usual polar coordinates $(\rho,\theta)$, $0\leq\rho <1$, or complex coordinates $z = \rho e^{i\theta}$,
\be\label{diskdef} \disk = \bigl\{ z\in\mathbb C \bigm| \abs{z} < 1\bigr\}\, .\ee
The boundary is $\Sone=\partial\disk=\{z\in\mathbb C\, |\, \abs{z} =1\}$ and the closed disk is denoted by $\bar\disk = \disk\cup\partial\disk$.

\subsubsection{\label{zeroImmSec}The space of zero curvature metrics on the disk, I}

Let $\Etwo = (\mathbb R^{2},\delta)$ be the two dimensional Euclidean plane, where $\delta$ is the usual flat metric. In Cartesian coordinates $(x,y)$ and complex coordinates $w = x + i y$,
\be\label{deltaE} \delta = \d x^{2} + \d y^{2} = |\d w|^{2}\, .\ee
A flat two-dimensional disk $(\disk, g)$, where $g$ is a flat metric on $\disk$, is locally isometric to $\Etwo$. This is equivalent to saying that there exists an immersion
\be\label{FEucl} F:\disk\rightarrow\mathbb R^{2}\ee
such that
\be\label{pullBE} g=F^{*}\delta\, .\ee
Let us denote by $\imm(\disk,\mathbb R^{2})$ the space of immersions $F:\disk\rightarrow\mathbb R^{2}$. This is a bit imprecise, because we do not specify at this stage the boundary behaviour of the allowed immersions; the discussion of this important point is postponed to the next subsection.

We can now make precise the notion of ``distorted disk'' that we have already encountered above.
\begin{definition} A distorted disk is an immersion belonging to $\imm(\disk,\mathbb R^{2})$.
\end{definition}
On $\imm(\disk,\mathbb R^{2})$ act the groups $\Iso(\Etwo) = \text{ISO}(2) = \text{T}(2)\rtimes\text{O}(2)$ of isometries of the Euclidean plane, on the left; and the group $\diff_{+}(\disk)$ of orientiation-preserving diffeomorphisms of the disk, on the right.	We consider only orientation-preserving diffeomorphisms for convenience, but including complex conjugation $z\mapsto\bar z$ is of course straightforward. By definition, the group $\diff_{+}(\disk)$ includes boundary reparameterizations. At the infinitesimal level, it is generated by vector fields on $\disk$ that are tangent to the boundary at the boundary, as is standard for a manifold with boundaries. 

If we note $g_{F}=F^{*}\delta$ the metric associated with the immersion $F$ as in Eq.\ \eqref{pullBE} then, by definition of the isometry group, $g_{F} = g_{\psi\circ F}$ if $\psi\in\Iso(\Etwo)$. Moreover, if $\varphi\in\diff_{+}(\disk)$, then $g_{F}=\varphi^{*}g_{F\circ\varphi^{-1}}$, i.e.\ the metrics $g_{F}$ and $g_{F\circ\varphi^{-1}}$ are related by the usual action of the group of diffeomorphisms of $\disk$. They are thus identified. Overall, we have shown that the space of flat metrics on the disk is
\be\label{metzero} \met^{0}(\disk)  = \Iso(\Etwo)\backslash\imm(\disk,\mathbb R^{2})/\diff_{+}(\disk)\, .\ee

This can be further simplified by noting that, by acting with $\diff_{+}(\disk)$ together with the reflection $F\mapsto\bar F$, which is an element of $\Iso(\Etwo)$, we can always find a holomorphic representative $F$, with $F'$ never vanishing, for each equivalence class of immersions in the quotient space on the right-hand side of \eqref{metzero}. For a holomorphic immersion, the pull-back metric \eqref{pullBE} is
\be\label{pullmetholoflat} g_{F} = |F'(z)|^{2} |\d z|^{2}\, .\ee
This result is a special case of a more general theorem that ensures that any metric on $\disk$, not necessarily flat, can always be put in the so-called conformal gauge,
\be\label{conformalgauge} g = e^{2\Sigma}|\d z|^{2}\, ,\ee
for a conformal factor (``Liouville field'') $\Sigma$. The flatness condition is equivalent to the fact that $\Sigma$ is harmonic on the disk and is thus the real part of a holomorphic function $H$, $\Sigma = \re H$. If $F'=e^{H}$, 
\be\label{cgsigmaE} e^{\Sigma} = |F'(z)|\ee
and thus Eq.\ \eqref{conformalgauge} is equivalent to \eqref{pullmetholoflat}. Let us denote by $\himm(\disk)$ the set of holomorphic immersions $F:\disk\rightarrow\mathbb C$, i.e.\ the set of holomorphic functions on the disk such that $F'(z)\not =0$ for all $z\in\mathbb\disk$.\footnote{As for $\imm(\disk,\mathbb R^{2})$, we let unspecified the behaviour at the boundary. This is discussed in the next subsection.} On $\himm(\disk)$ act the group $\Iso_{+}(\Etwo) = \text{ISO}_{+}(2) = \text{T}(2)\rtimes\text{SO}(2)$ of orientation-preserving isometries of the Euclidean plane on the left and the group $\Aut(\disk)$ of holomorphic diffeomorphisms of the disk on the right. This group is the remnant of the full group of diffeomorphisms in conformal gauge \eqref{conformalgauge}. It is generated by the conformal Killing vectors on the disk and is isomorphic to $\PslR$. We thus have
\be\label{metzeroh} \met^{0}(\disk)  = \Iso_{+}(\Etwo)\backslash\himm(\disk)/\Aut(\disk) = 
\text{ISO}_{+}(2)\backslash\himm(\disk)/\PslR\, .\ee
The explicit actions of the groups on the disk and on the Euclidean plane are given by
\begin{align}\label{pslaction}& \PslR:\quad z' = e^{i\alpha}\frac{z-z_{0}}{1-\bar z_{0}z}\,\cvp\quad |z_{0}|<1\, ,\ (z,z')\in\bar\disk^{2}\, ,\\ 
\label{ISOact} & \text{ISO}_{+}(2):\quad w' = e^{i\alpha}(w-w_{0})\, ,\quad  w_{0}\in\mathbb C\, ,\ (w,w')\in\mathbb C^{2}=\Etwo\times\Etwo\,  .
\end{align}
Conformal gauge will be thoroughly studied in a companion paper \cite{ferraJTconfgauge}.

\noindent\emph{Remarks}:

\noindent i) Both the group $\diff_{+}(\disk)$ of diffeomorphisms of the disk and the group $\Iso(\Etwo)$ of isometries of the Euclidean plane must be considered as gauge symmetries. This statement is equivalent to the presentation \eqref{metzero} of the space of metrics as a quotient space. It is useful to understand that the origin of these two gauge symmetries is very different. On the one hand, the gauge symmetry $\diff_{+}(\disk)$ is consubstantial with our problem: it is the fundamental diffeomorphism gauge symmetry present in any gravity theory. On the other hand, the gauge symmetry $\Iso(\Etwo)$ appears only because we use immersions to describe the metrics. It is a redundancy that enters via the formula \eqref{pullBE}. The same can be said after the partial gauge fixing that yields \eqref{metzeroh}: the groups $\PslR$ and $\text{ISO}_{+}(2)$ are both gauge symmetries but $\PslR$ is always present in conformal gauge for quantum gravity on the disk whereas $\text{ISO}_{+}(2)$ appears because we parameterize the metrics in terms of holomorphic immersions.

\noindent ii) In the spirit of what we have done in Section \ref{dismetSec}, where configurations obtained by performing lattice rotations were counted as inequivalent, we could quotient only by the group of translations $\text{T}(2)$ instead of the full isometry group of $\Etwo$. This kind of modification is largely irrelevant. It simply introduces trivial factors of $2\pi$ in the definition of the partition function. The important point is to take into account the non-compact pieces of the gauge symmetries in order to work with well-defined path integrals.


\subsubsection{\label{zeroImmSecbis}The space of zero curvature metrics on the disk, II}

The discussion above does not address the problem of the boundary properties of the disk metrics. There are many mathematically natural possibilities and thus a large number of a priori interesting spaces of flat metrics on $\disk$. A precise definition of the Euclidean quantum JT gravity theory requires to single out one of these spaces and to define a probability measure on it.

A seemingly natural choice (which has been made implicitly in virtually all studies on JT gravity so far, see in particular the review in Section \ref{earlySec}) is to impose that the immersions $F$ extend continuously to the boundary $\partial\disk$. In conformal gauge, Eq.\ \eqref{conformalgauge}, this is equivalent to imposing that the boundary Liouville field $\sigma = \Sigma_{|\partial\disk}$ is continuous.\footnote{Note that the bulk Liouville field is always infinitely differentiable, because it must be a harmonic function.} With this choice, the boundary length
\be\label{bdlengthdef} \ell = \int_{0}^{2\pi}\!e^{\sigma}\d\theta\ee
is well-defined and one can consider the space $\met^{0}_{\ell}(\disk)$ of flat metrics on the disk with fixed boundary length $\ell$.

An important consequence of the analysis presented in our work, consistently with the discussion in \cite{ferrari}, is that this smoothness condition on the disk boundary is not well-founded from the point of view of the microscopic, UV-complete definition of the models.

To clearly appreciate the nature of the problem, it is useful to use a simple analogy with the standard functional integral formulation of quantum mechanics. In this case, one wishes to define a sum over ``paths.'' But what is exactly the space of paths one must consider? There are two physically motivated approaches to answer this question. The first is to construct a path integral that reproduces the results obtained from canonical quantization in the traditional operator formalism of quantum mechanics. The second is to start from a discretized model, in which the sum over path is a well-defined finite problem, and look for a critical point that allows to define a continuum limit. Both procedures yield the Wiener measure on the set of \emph{continuous} paths, describing Brownian motion. The typical Brownian paths turn out to be almost surely nowhere differentiable and have Hausdorff dimension two. If we had started from the seemingly natural but ad-hoc idea that the paths had to be differentiable, or rectifiable, we wouldn't have been able to obtain the correct theory.

We propose that exactly the same logic should be used in the case of JT gravity. One could try to set up a rigorous canonical quantization formalism in real time and then explore its consequences for the Euclidean path integral formulation. This is very interesting, but not in the spirit of the present paper. We favour the discretized formulation, as defined in Sections \ref{dismetSec} and \ref{Genfun1Sec}. The continuum limit should then yield precise predictions for the boundary properties of the typical flat disk metrics. Alternatively, one could start from a formal continuum approach, based on the formal ultralocal measure on the space of metrics, and try to make sense of it, see \cite{ferrari} and \cite{ferraJTconfgauge}. 

We conjecture that all these methods (canonical quantization, discrete formulation, continuous formulation) yield the same quantum gravity theory. The results announced in \cite{ferrari}, and confirmed  in \cite{Loopcalc} by studying a particular limit of JT gravity coupled to conformal matter, strongly suggest that the resulting disk metrics are highly singular on the disk boundary. In conformal gauge, the boundary Liouville field is predicted to be a random distribution, not a random function.\footnote{This is in sharp contrast with the traditional one-dimensional quantum mechanical variables, which are random continuous functions. This behaviour is closer to that of a quantum field in dimension two.} The resulting typical immersed boundary curves are fractal curves of Hausdorff dimension $d_{\text H}>1$.\footnote{For the pure JT gravity model, which corresponds to the discretized model discussed in Section \ref{dismetSec}, the Hausdorff dimension of the disk boundary is predicted to be two \cite{ferrari}. This is the same value as for Brownian paths, even though the disk boundary in JT is not Brownian; see the discussion in Sections \ref{bdviewSec} and \ref{continuumSec}.} In particular, the usual geometrical boundary length defined by \eqref{bdlengthdef} is infinite for such metrics, but it is possible to define a notion of renormalized length, or ``quantum'' length $\beta_{\text q}$.\footnote{We use the notation $\beta_{\text q}$ for the quantum length of the boundary, in order to avoid confusion with another parameter $\beta$, that we shall denote by $\beta_{\text S}$, which is traditionally used in the Schwarzian limit, see Eq.\ \eqref{Schlimit}.} The explicit definition of $\beta_{\text q}$ is given in Section \ref{continuumSec} in the discretized formulation and in \cite{ferrari,ferraJTconfgauge} in the continuum formulation. It is similar to the diffusion time in a Brownian process, which provides a notion of ``quantum length'' for a Brownian path. An important qualitative feature is that the quantum length does not have the dimension of a geometrical length. Instead, $\beta_{\text q}^{1/d_{\text H}}$ has the dimension of length. The fact that the renormalized length $\beta_{\text q}$ exists is a desirable feature for physics, since it allows to define canonical and microcanonical ensembles, see below.

A similar discussion applies to the spaces of constant positive and negative curvature metrics that we are going to discuss below. The relevant metrics and immersions are highly singular on the boundary, with the same UV properties as in the flat case.

\subsubsection{\label{metpmdisSec}The spaces of constant positive and negative curvature metrics on the disk}

The discussion of Section \ref{zeroImmSec} can be repeated, with straightforward modifications, to describe the spaces of constant positive and negative curvature metrics. The difference is that the Euclidean space, which plays the role of the canonical target space in the case of the flat metrics, is replaced by the round two-sphere $\Stwo$ for positive curvature and by the hyperbolic space $\Htwo$ for negative curvature. 

Let $\Stwo$ be the round two-sphere, with canonical constant positive curvature metric $R=+2$ given by
\be\label{deltaStwo} \delta^{+} =  \frac{4 |\d w|^{2}}{(1+|w^{2}|)^{2}}\ee
in the usual projective coordinates $w\in\bar{\mathbb C}$. Let $\Htwo$ be the hyperbolic space, with canonical constant negative curvature metric $R=-2$ given by
\be\label{deltaHtwo} \delta^{-} =  \frac{4 |\d w|^{2}}{(1-|w^{2}|)^{2}}\ee
in the usual Poincar\'e disk coordinates, $|w|<1$. Let us note that $\Htwo$ is topologically a disk, endowed with the metric $\delta^{-}$, so we can write $\Htwo = (\disk,\delta^{-})$. 

Let $\met^{\pm}(\disk)$ be the spaces of constant positive or negative curvature metrics, $R=\pm 2$, on $\disk$, with suitable boundary properties, as described in \ref{zeroImmSecbis}. All such metrics can be written as
\be\label{pullBS} g = F^{*}\delta^{\pm}\ee
for some immersion $F:\disk\rightarrow\Stwo$ or $F:\disk\rightarrow\Htwo$. Let $\imm(\disk,\Stwo)$ and $\imm(\disk,\Htwo)$ be the associated spaces of immersions of $\disk$ into $\Stwo$ or $\Htwo$. By following the same logic that yields Eq.\ \eqref{metzero} in the flat case, we obtain
\begin{align}\label{metp} & \met^{+}(\disk)  = \Iso(\Stwo)\backslash\imm(\disk,\Stwo)/\diff_{+}(\disk)\, ,\\\label{metm} & \met^{-}(\disk)  = \Iso(\Htwo)\backslash\imm(\disk,\Htwo)/\diff_{+}(\disk)\, .
\end{align}
Going to conformal gauge, Eq.\ \eqref{conformalgauge}, we can write the metrics in terms of holomorphic functions $F$,
\be\label{cgpmsigma} e^{\sigma} = \frac{2|F'(z)|}{1\pm|F(z)|^{2}}\quad \text{for}\quad R=\pm2\, .\ee
In the case of positive or negative curvature, the holomorphic functions are from $\disk$ to $\bar{\mathbb C}$ or from $\disk$ to $\disk$, respectively. This yields
\begin{align}\label{metph} & \met^{+}(\disk)  = \Iso_{+}(\Stwo)\backslash\himm(\disk,\bar{\mathbb C})/\Aut(\disk) = 
\text{SO}(3)\backslash\himm(\disk,\bar{\mathbb C})/\PslR\, ,\\\label{metmh} &
\met^{-}(\disk)  = \Iso_{+}(\Htwo)\backslash\himm(\disk,\disk)/\Aut(\disk) = 
\PslR\backslash\himm(\disk,\disk)/\PslR\, .
\end{align}
The automorphism group $\PslR$ acting on the right is as in \eqref{pslaction}, whereas the isometry groups act on the left as
\begin{align}
\label{SOact} & \text{SO}(3):\quad w' = e^{i\alpha}\frac{w-w_{0}}{1+\bar w_{0}w}\, \cvp\quad  w_{0}\in\bar{\mathbb C}\, ,\ (w,w')\in\bar{\mathbb C}^{2} = \Stwo\times\Stwo\, ,\\\label{pslisoact}& \PslR:\quad w' = e^{i\alpha}\frac{w-w_{0}}{1-\bar w_{0}w}\,\cvp\quad |w_{0}|<1\, ,\ (w,w')\in\disk^{2}=\Htwo\times\Htwo\, . 
\end{align}
Note that, following remark i) at the end of subsection \ref{zeroImmSec}, the two $\PslR$ groups, acting on the right and on the left in \eqref{metmh}, are of totally different origins, even though they coincide as abstract groups; this is an important point to keep in mind, see \cite{ferraJTconfgauge} for further discussion.

\subsubsection{\label{stab1Sec}Quantum JT gravity on the disk}

The disk partition function of quantum JT gravity in zero, positive or negative curvature, in the canonical ensemble, is expressed as a path integral
\be\label{qJT1} W^{0,\pm}(\beta_{\text q},\La) = \int_{\met_{\beta_{\text q}}^{0,\pm}(\disk)}\d\mu(g)\,  e^{-\frac{\La}{16\pi}A[g]}\ee
over the spaces $\smash{\met_{\beta_{\text q}}^{0,\pm}(\disk)}$ of constant curvature metrics on the disk with a fixed quantum length $\beta_{\text q}$.\footnote{Throughout the paper, we use the letter $W$ to denote quantum gravity partition functions in fixed topology, instead of $Z$, to emphasize the fact that they are continuum limits of generating functions which are themselves interpreted as the sum over connected Feynman graphs in suitable matrix models.} These spaces are subspaces of the spaces $\met^{0,\pm}(\disk)$ defined above, for which we impose the constraint of fixed quantum boundary length, see \cite{ferrari,ferraJTconfgauge}.  In \eqref{qJT1}, $A[g]$ denotes the area of the disk for the metric $g$ and $\La$ is the cosmological constant. The expression \eqref{qJT1} is of course formal. The challenge is to define measures $\smash{\d\mu(g)}$ over the spaces $\smash{\met_{\beta_{\text q}}^{0,\pm}(\disk)}$. As we have already mentioned, an approach for constructing these measures is to start from the discretized formulation and to take an appropriate continuum limit of the generating functions defined in \ref{Genfun1Sec}, see Section \ref{continuumSec}. A direct continuum approach is also proposed in \cite{ferrari,ferraJTconfgauge}. 

\noindent\emph{Remark on stability}

The partition functions \eqref{qJT1} will always be well-defined for positive or zero bulk cosmological constants, $\La\geq 0$. The situation for $\La<0$ is much more subtle.

At the classical level, we usually consider metrics that extend smoothly to the boundary. In zero and negative curvature, the isoperimetric inequalities then imply that the area is bounded above for fixed boundary length. In this sense, the models are thus classically stable. However, in positive curvature, there is no isoperimetric inequality and the model is classically unstable for $\La<0$, see Section \ref{poscurvlastSec}.

At the quantum level, the situation is much more subtle. Even in zero or negative curvature, the isoperimetric inequalities do not apply, since the boundary is a fractal curve of infinite length. There are thus configurations of infinite area that have a finite quantum length. As a consequence, the classical action $\frac{\La}{16\pi}A$ is not bounded from below when $\La<0$. But this does not imply that the models are quantum mechanically ill-defined, since the entropy of the configurations must be taken into account.

We conjecture the following behaviour:

i) The negative curvature theory is well-defined for all values, positive or negative, of the  cosmological constant $\La$.

ii) The zero curvature theory is well-defined only for a range $\La>\La_{\text c}$ of cosmological constants, where the critical value $\Lambda_{\text c}$ is strictly negative.

iii) The positive curvature theory is well-defined only for $\La\geq 0$.

This conjectured behaviour will be further discussed in Section \ref{nonzeroRSec}.

\subsection{\label{earlySec}Discussion of the existing literature}

Now that the stage is set, at least at the conceptual level, it is useful to review some of the basic ideas that underly the vast existing literature on JT gravity. This subsection is written mainly for non-expert in the field of SYK and JT gravity. Emphasis is put on important differences between standard approaches and our own work, in order to put into perspective the new point of view that we are introducing. 

By far the most popular framework to study the negative curvature model is based on the so-called ``reparameterization ansatz'' and the Schwarzian action. This description is usually assumed to become correct in a particular near-hyperbolic limit
\be\label{Schlimit}\ell\rightarrow\infty\, ,\quad \La\rightarrow -\infty\, ,\quad \frac{2 \ell}{|\La|} = \beta_{\text S}\quad \text{fixed}.\footnote{The factor of two in the definition of $\beta_{\text S}$ is inserted to match with the standard definition in the literature. Note that the cosmological constant $\La$ is related to the boundary value $\Phi_{\text b}$ of the dilaton field in the usual formulation of JT gravity by $|\La| = -\La = 2\Phi_{\text b}$.}\ee
As we shall argue, the Schwarzian theory is an effective description of JT gravity, that emerges in the limit $\La\rightarrow -\infty$ and is valid on distance scales much larger than the curvature length scale. The very definition of the limit in Eq.\ \eqref{Schlimit} is ``effective,'' since it refers to a smooth boundary length parameter $\ell$ that does not exist at the microscopic level. The relation between the macroscopic length $\ell$ and the microscopic parameters will be discussed in Section \ref{nonzeroRSec}.

An interesting attempt at a microscopic definition of JT gravity was proposed by Kitaev and Suh in \cite{KitaevSuh}. The idea is to assume that the disk boundaries are random Brownian loops. The resulting path integral is then related to the quantum mechanics of a charged particle in hyperbolic space. As is clear from the discussion in Section \ref{dismetSec}, the vast majority of the configurations introduced in this way are not associated with disk metrics, because a typical random loop does not bound a distorted disk. Therefore, the Kitaev-Suh model is not a model of random metrics.

Other proposed microscopic descriptions are based on using self-avoiding loops \cite{StanfordSAP,VerlindeSAP}. This captures only a very small subset of the allowed configurations, actually a subset of measure zero. These approaches are thus in the same spirit as the minisuperspace approximations to quantum gravity. They are unlikely to provide a good approximation to the microscopics of JT gravity, even qualitatively. For instance, the Hausdorff dimension of self-avoiding loops counted with the uniform measure, as in \cite{StanfordSAP}, is $4/3$, to be compared to the correct value 2 for pure JT. In the approach of \cite{VerlindeSAP}, the boundaries are smooth. A picture based on ``edge modes'' or broken diffeomorphisms, has also been studied; it is another minisuperpace approximation, closely related to the other approaches.

\subsubsection{\label{repaSec}Reparameterization ansatz}

The reparameterization ansatz is relevant only in the case of the negative curvature theory.

\paragraph{Brief review}

\begin{figure}
\centerline{\includegraphics[width=6in]{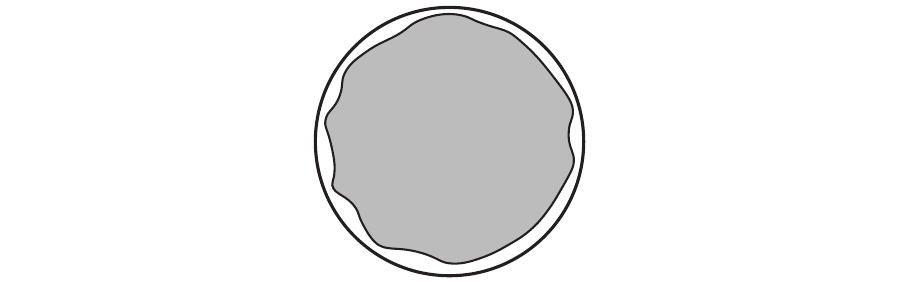}}
\caption{\label{repansatzfig} A reparameterization ansatz, or near-hyperbolic geometry, as defined in \cite{holomod3}. The open disk with the thick boundary is hyperbolic space $\Htwo = (\disk,\delta^{-})$ in the Poincar\'e disk representation, Eq.\ \eqref{deltaHtwo}. The geometry is defined by the cut-out region, in grey, which is itself determined by the shape of a gently wiggling boundary curve. In our language, the cut-out regions associated with reparameterization ansatzes are distorted disks associated with a very special class of embeddings of the source disk, on which the quantum gravity theory is defined, into hyperbolic space.}
\end{figure}

The authors of \cite{holomod3} considered the possibility of defining negative curvature geometries by ``cutting out pieces of hyperbolic spaces'' with particular shapes, as they phrase it. A typical example is depicted in Fig.\ \ref{repansatzfig}. The piece that is cut-out is defined by the shape of its boundary. The boundary curve is assumed to be smoooth, of length $\ell$, and is described as follows. We introduce the arc-length coordinate $s$ along the curve and polar coordinates $(r,\phi)$, $w=r e^{i\phi}$, such that the hyperbolic metric \eqref{deltaHtwo} reads
\be\label{deltaHtwo2} \delta^{-} = \frac{4(\d r^{2} + r^{2}\d\phi^{2})}{(1-r^{2})^{2}}\,\cdotp\ee
We parameterize
\be\label{repaansatzdef} r(s) = 1-\frac{1}{\ell} h(\vartheta)\, ,\quad \phi(s) = f(\vartheta)\, , \quad\text{where}\quad\vartheta = \frac{2\pi s}{\ell}\ee
is a $2\pi$-periodic variable. The length $\ell$ is considered to be large.\footnote{Note that, with the choice of metric \eqref{deltaHtwo2}, the units are chosen in such a way that the curvature length scale is one. ``Large'' thus means large compared to the curvature length scale.} The function $f$ is assumed to be smooth, strictly monotonic and such that $f(\vartheta + 2\pi) = f(\vartheta) + 2\pi$. These conditions define the group
\be\label{diffSonedef} \diffSp = \bigl\{f\in C^{1}(\mathbb R)\mid f(\vartheta + 2\pi) = f(\vartheta) + 2\pi\ \text{and}\ f'(\vartheta) >  0\bigr\}/2\pi\mathbb Z\ee
of orientation-preserving diffeomorphisms of the circle. Denoting with a dot the derivatives with respect to $s$, the function $h$ is related to $f$ by the constraint
\be\label{repansatzcons1} \dot r^{2}+r^{2}\dot\phi^{2} = \frac{1}{4}(1-r^{2})^{2}\, .\ee
One also assumes that the functions $f$ and $h$ are infinitely differentiable and their derivatives $f^{(k)}$ and $h^{(k)}$, $k\geq 1$, with respect to $\vartheta$, are of order one when $\ell$ is large. This implies that $\dot r = O(\ell^{-2})$, $\ddot r = O(\ell^{-3})$, $\dot\phi = O(\ell^{-1})$, $\ddot\phi = O(\ell^{-2})$, etc. Under these assumptions, Eq.\ \eqref{repansatzcons1} implies that $h$ can be expressed in terms of $f$ in a large $\ell$ expansion of the form
\be\label{hexpgen} h = 2\pi\sum_{n\geq 0}\Bigl(\frac{2\pi}{\ell}\Bigr)^{n}h_{n}\, ,\ee
where the coefficients $h_{n}$ are local expressions involving higher and higher derivatives of $f$. For instance, the first terms are computed to be
\be\label{hasfepsexp} h_{0}= f' ,\ h_{1}= -\frac{1}{2}f'^{2} ,\ h_{2}=\frac{f''^{2}}{2f'}\cvp\ h_{3}= \frac{1}{8}f'^{4}-\frac{1}{2}f''^{2},\ h_{4}=-\frac{1}{2}f'f''^{2}-\frac{5f''^{4}}{8f'^{3}}+\frac{f''^{2}f'''}{f'^{2}}\cdotp\ee
The resulting curve is a gently ``wiggling'' boundary. The associated large $\ell$ geometries are called ``nearly hyperbolic.'' Everything is fixed in terms of the diffeomorphism $f$, hence the name ``reparameterization ansatz.''

\paragraph{Reparameterization ans\"atze and distorted disks}

\begin{figure}
\centerline{\includegraphics[width=6in]{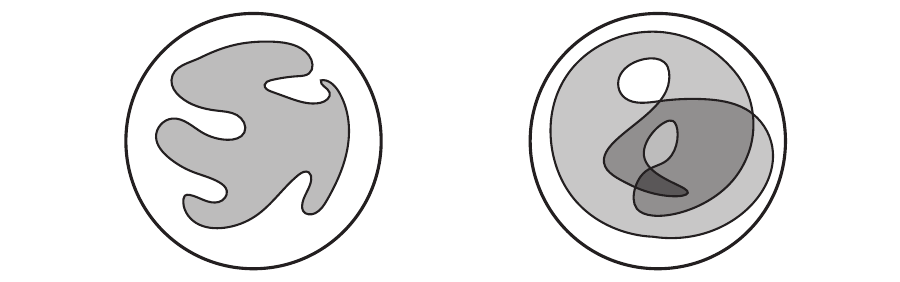}}
\caption{\label{notrepansatzfig} Left inset: an embedding that is not a reparameterization ansatz, because the function $\phi(s)$ is not monotonic. Right inset: an immersion that is not an embedding, because the associated distorted disk has overlaps. Note that both configurations are associated with constant curvature metrics on the disk and thus contribute to the JT quantum gravity path integral.}
\end{figure}

In the language of Section \ref{immersionSec}, the nearly-hyperbolic geometries correspond to particular immersions $F:\disk\rightarrow\Htwo$ and thus to particular disk metrics $g = F^{*}\delta^{-}$. The cut-out piece of hyperbolic space is the image $F(\disk)$ and thus a special instance of a distorted disk. The immersions, or, equivalently, the distorted disks, that are taken into account by the reparameterization ansatz, are \emph{very} atypical:

i) The reparameterization ansatz immersions are actually embeddings, which implies in particular that the distorted disks have no overlaps.

ii) Since $f$ is assumed to be strictly monotonic, the boundary curve cannot make zig-zags. 

An embedding that is not a reparameterization ansatz, as well as an immersion that is not an embedding, are depicted in Fig.\ \ref{notrepansatzfig}. Note that immersions that are not embeddings cannot be seen as ``cut-out'' pieces of hyperbolic space.

iii) The condition that the derivatives $f^{(k)}$ of the reparameterization are of order one means that we cut-off frequencies of order $\ell$ or higher. In other words, the distances under which the wiggling of the boundary curve occurs must be much larger than the curvature scale.

iv) The reparameterization ansatz describes geometries that have a smooth, rectifiable boundaries, whereas typical geometries will have fractal boundaries almost surely.

Replacing the full path integral \eqref{qJT1} by a path integral over the smooth wiggling boundaries described by the reparameterization ansatz is thus an approximation in which only a measure zero subset of the possible metrics are taken into account. We shall argue in Section \ref{nonzeroRSec} that this approximation is justified in the limit $\La\rightarrow -\infty$ as an effective, long-wavelength description, for which the curvature length scale plays the r\^ole of an ultraviolet cut-off.

\noindent\emph{Remarks}:

\noindent i) For any given cut-out region, $F$ exists and is unique modulo the right action of $\diff_{+}(\disk)$. In conformal gauge, this result is equivalent to the Riemann mapping theorem. 

\noindent ii) The condition $f(\vartheta+2\pi) = f(\vartheta) + 2\pi$ on the reparameterization $f$ is usually imposed in a rather ad-hoc way. One might wonder why other sectors with higher winding number, $f(\vartheta+2\pi) = f(\vartheta) + 2\pi n$, $n\in\mathbb N$, are not included (changing the sign of $n$ would simply amounts to changing the orientation). The answer is that it is only in the case $n=1$ that the immersion $F$ exists, and therefore it is only in this case that the gently wiggling curve is associated with a metric on the disk. This follows from the continuous version of the constraint \eqref{EulergenflatJTdisk}, see Section\ \ref{WhitneySec}.

\paragraph{The space $\wig(\disk)$ of wiggling boundaries}

Exactly as in our discussion in Section \eqref{metpmdisSec}, configurations that are related by the action of the $\PslR$ isometry group of $\Htwo$, Eq.\ \eqref{pslisoact}, must be identified. This action yields an action on $\diffSp$, via the relation $\phi(s) = f(\vartheta)$ in \eqref{repaansatzdef}, whose explicit form may be derived straightforwardly in the large $\ell$ expansion by using \eqref{pslisoact} and $w=r e^{i\phi}$. At leading order, the action matches with the left action of the standard $\PslR$ subgroup of $\diffSp$, which amounts to identifying circle diffeomorphisms $f$ and $\tilde f$ related by a relation of the form
\be\label{pslsubdiff} \tan \frac{\tilde f}{2} = \frac{a\tan\frac{f}{2} + b}{c\tan\frac{f}{2} + d}\,\cvp\quad ad-bc = 1\, .\ee
The parameters $(z_{0},\alpha)$ in \eqref{pslisoact} and $(a,b,c,d)$, constrained by $ad-bc = 1$, are related by
\be\label{parreldiskaut} e^{i\alpha} = \frac{b-c-i(a+d)}{c-b-i(a+d)}\,\cvp\quad z_{0}=-\frac{b+c+i(a-d)}{b-c-i(a+d)}\,\cdotp\ee
Moreover, diffeomorphisms that differ by a constant shift of the variable $\vartheta$ should also be identified, since there is no privileged point on the boundary.\footnote{In most of the previous literature, it is actually implicitly assumed that one has a marked point on the boundary. This is not natural from the quantum gravity point of view. This point is discussed further below.} The set of metrics associated with the reparameterization ansatz thus form the space
\be\label{Wigspacedef} \wig(\disk) = \PslR\backslash\diffSp/\Sone\, .\ee
The notation Wig refers to the ``wiggling'' boundary. 

Note that, unlike in \eqref{metm}, the group $\diff_{+}(\disk)$ does not appear in the quotient defining $\wig(\disk)$. This is consistent: the description in terms of the circle diffeomorphism $f$ does not refer directly to the source disk $\disk$ and thus the group $\diff_{+}(\disk)$ does not act on $f$.

In the above discussion, to derive \eqref{Wigspacedef} with an action of the isometry group given by \eqref{pslsubdiff}, we have implicitly restricted ourselves to isometry transformations that are independent of $\ell$. But, actually, the reparameterization ansatz and the large $\ell$ expansion are not consistent with the action of the full isometry group. Indeed, the hyperbolic space $\Htwo$ is homogeneous and $\PslR$ acts transitively. This implies, for example, that we can always find an isometry that maps any point on the wiggling boundary parameterized as in \eqref{repaansatzdef} to the center $r=0$ of the Poincar\'e disk. This new configuration is clearly not of the form \eqref{repaansatzdef} for $h= O(1)$. One can also find isometries that create turning points and are thus inconsistent with the condition $f'>0$. Since the full $\PslR$ is gauged, this is a priori problematic. However, the situation is not so different in spirit from the usual perturbative treatment of gauge theories, for which the usual gauge fixing conditions have Gribov ambiguities. The perturbative consistency is ensured by the fact that the Gribov copies are too far away in field space to be seen in perturbation theory. 

\paragraph{The Schwarzian theory}

The area $A[g]$ for a reparameterization ansatz can be computed straightforwardly. One starts  from the exact formula giving the extrinsic curvature of the boundary 
\be\label{kexact} k = \frac{2(1+r^{2})}{1-r^{2}}\dot\phi + \frac{r\ddot\phi}{\dot r}\,\cdotp\ee
As for $h$ in \eqref{hexpgen}, \eqref{hasfepsexp}, the extrinsic curvature has a local large $\ell$ expansion involving higher and higher derivatives of $f$,
\be\label{kexpgen} k = 1 + \sum_{n\geq 1}\Bigl(\frac{2\pi}{\ell}\Bigr)^{n}k_{n}\, .\ee
The first few terms can be easily computed,
\be\label{kexp}\begin{split} & k_{1}=0\, ,\quad k_{2}= \Sch[f] + \frac{1}{2}f'^{2}\, ,\quad k_{3}=0\, ,\\ & k_{4}= \frac{f''f^{(4)}}{f'^{2}}+\Bigl(\frac{f'''}{f'}\Bigr)^{2} - \frac{11f''^{2}f'''}{2f'^{3}}-\frac{1}{2}f'f'''+\frac{27}{8}\Bigl(\frac{f''}{f'}\Bigr)^{4}-\frac{1}{8}f'^{4}+\frac{1}{4}f''^{2}\, , \ \text{etc.}
\end{split}
\ee
The symbol $\Sch[f]$ denotes the Schwarzian derivative of $f$,
\be\label{Schderdef} \Sch[f](\vartheta) = \{ f,\vartheta\} = \frac{f'''}{f'}-\frac{3}{2}\Bigl(\frac{f''}{f'}\Bigr)^{2} = \Bigl(\frac{f''}{f'}\Bigr)'- \frac{1}{2}\Bigl(\frac{f''}{f'}\Bigr)^{2}\, .\ee
Using the Gauss-Bonnet formula \eqref{GaussBonnet} for $(h,b)=(0,1)$ and $R=-2$, one then gets 
\be\label{Arearepansatz} A[g] = \ell - 2\pi +\frac{2\pi}{\ell} \int_{0}^{2\pi}\Sch\bigl[\tan\frac{f}{2}\bigr]\,\d\vartheta + O(\ell^{-3})\, .\ee
The partition function within the reparameterization ansatz approximation is then
\be\label{Zrepansatz} W_{\text{Sch}}(\beta_{\text S}) = \int_{\wig(\disk)}\d\mu (f) \, e^{-S_{\text{Sch}}[f]}\, ,\ee
for a ``Schwarzian'' action
\be\label{SSch} S_{\text{Sch}}[f] = -\frac{1}{4\beta_{\text S}}\int_{0}^{2\pi}\Sch\bigl[\tan\frac{f}{2}\bigr]\,\d\vartheta = \frac{1}{8\beta_{\text S}}\int_{0}^{2\pi}\Bigl[\Bigl(\frac{f''}{f'}\Bigr)^{2} - f'^{2}\Bigr]\,\d\vartheta\, ,\ee
where the parameter $\beta_{\text S}$ was defined in \eqref{Schlimit}. The formal measure $\d\mu(f)$ is postulated to be the Fadeev-Popov measure obtained after gauge-fixing the $\PslR$ symmetry from the natural formal left-invariant measure on $\diffSp$. Modulo a multiplicative factor of the form $e^{\zeta_{0}-\beta \varepsilon_{0}}$, where $\zeta_{0}$ and $\varepsilon_{0}$ are arbitrary constants, the Schwarzian partition function can be computed exactly and turns out to be one-loop exact \cite{SWloca}, 
\be\label{ZRAexact} W_{\text{Sch}}(\beta_{\text S}) = e^{\zeta_{0}-\beta_{\text S} \varepsilon_{0}}\beta_{\text S}^{-3/2} e^{\frac{\pi}{4\beta_{\text S}}}\, .\ee
The constants $\varepsilon_{0}$ and $\zeta_{0}$ can be interpreted as renormalized ground state energy and zero temperature entropy, see below. They cannot be predicted in the Schwarzian theory because they correspond to a priori arbitrary finite local counterterms one may add to the action, associated with a boundary cosmological constant and the topological Einstein-Hilbert terms respectively.\footnote{In principle, counterterms should be discussed in the UV-complete microscopic description, not in the effective Schwarzian description. Consistency between the two descriptions imply an interesting constraint, see Section \ref{nonzeroRSec}.} This formalism, called the ``Schwarzian theory,'' is at the basis of essentially all the analytic calculations made in JT gravity, see e.g.\ \cite{RAcalc} and references therein.

The Schwarzian partition function \eqref{ZRAexact} can be expressed in terms of a density
\be\label{Schrho}\rho_{\text{Sch}}(E) = \frac{2}{\pi}e^{\zeta_{0}}\sinh\sqrt{\pi (E-\varepsilon_{0})}\ee
as
\be\label{SchZdensity} W_{\text{Sch}}(\beta_{\text S}) = \int_{\varepsilon_{0}}^{\infty}\rho_{\text{Sch}}(E) e^{-\beta_{\text S} E}\,\d E\, .\ee
This may suggest a boundary quantum mechanical interpretation for which $\beta_{\text S}$ is an inverse renormalized temperature and $\rho_{\text{Sch}}$ the density of state for a boundary Hamiltonian $H$, such that 
\be\label{ZQMinter} W_{\text{Sch}}(\beta_{\text S}) \overset{?}{=} \tr e^{-\beta_{\text S} H}\, .\ee
We put a question mark on top of the equal sign because this formula cannot be strictly correct as such. Indeed, the partition function for a Hamiltonian with a continuous spectrum is always infinite\footnote{This is a consequence of the fact that the stationary states are not normalizable in the case of a continuous spectrum. A similar remark was also made in \cite{SWloca}.} and thus \eqref{ZQMinter} does not make sense. The state-of-the-art interpretation of the Schwarzian partition function, together with its generalizations, when multiple boundaries are present, is that it gives an ensemble average for a suitable random matrix model, the random matrix being the Hamiltonian itself \cite{SSS,SSSW}. Formula \eqref{ZQMinter} is thus replaced by
\be\label{ZEAinter} W_{\text{Sch}}(\beta_{\text S}) = \bigl\langle\tr e^{-\beta_{\text S} H}\bigr\rangle_{\text{disk}}\ee
where only the disk contribution in the genus expansion of the matrix model is kept on the right-hand side.\footnote{Higher genus contributions correspond to higher genus topologies on the JT gravity side.}

\paragraph{Discussion}

i) The Schwarzian partition function $W_{\text{Sch}}$ should be the limit of the ``microscopic'' partition function $W^{-}(\beta_{\text q},\La)$ defined in \eqref{qJT1}, when $\La\rightarrow -\infty$, with a particular scaling given by \eqref{Schlimit}. Recall that $\ell$ is a macroscopic length parameter, which, in principle, can be expressed in terms of the microscopic parameters $\beta_{\text q}$ and $\La$, see Section \ref{nonzeroRSec}.

However, it turns out that the partition function \eqref{ZRAexact}, with the coefficient of proportionality $\smash{\beta_{\text S}^{-3/2}}$, corresponds to a counting for which one marks a point on the disk boundary. The marking yields an additional factor of $\beta_{\text S}$ with respect to the standard quantum gravity conventions, that we are using in the present paper.\footnote{Marking a point on the boundary is not natural in quantum gravity, because all the markings are associated with the same metric.} Another way to state this is that the Schwarzian partition function corresponds to an integral over $\PslR\backslash\diffSp$ instead of $\wig(\disk)=\PslR\backslash\diffSp/\Sone$. The fact that the proportionality factor $\smash{\beta_{\text S}^{-3/2}}$ corresponds to a counting with a marked point on the boundary is also supported by the calculations presented in \cite{Loopcalc} and is consistent with what one finds in simplified models \cite{StanfordSAP,KitaevSuh,RDFollowup1}.


ii) The difficulties in interpreting the Schwarzian theory in a full-fledged quantum mechanical framework do not prevent the Schwarzian theory from capturing correctly some important features of genuine quantum mechanical systems in some limit. For example, the free energy predicted in the Schwarzian framework is of the form
\be\label{SchF} \beta_{\text S} F = -\ln W_{\text{Sch}} = \beta_{\text S} \varepsilon_{0} - \zeta_{0} - \frac{\pi}{4\beta_{\text S}} + \frac{3}{2}\ln\beta_{\text S}\, .\ee
In the SYK model (or in similar quantum mechanical tensor or vector-matrix models), the free energy has the general form \cite{Kitaev, KitaevSuh}
\be\label{FSYK} \beta F_{\text{SYK}} = \beta E_{0} - N\bigl(s_{0} + \frac{a_{1}}{\beta J} + \frac{a_{2}}{(\beta J)^{2}} + o(1/(\beta J)^{2})\bigr) + \frac{3}{2}\ln(c\beta J) +o(N^{0})\, .\ee
The parameters are the inverse temperature $\beta$, a dimensionful coupling constant $J$ that plays the role of a UV cut-off for the Schwarzian description and the number of degrees of freedom $N$; $a_{1}$, $a_{2}$, $c$ are dimensionless numerical constants.\footnote{In the SYK model, one often writes $a_{1}=2\pi^{2}\alpha_{\text S}$.} $E_{0}$ and $Ns_{0}$ represent the ground state energy and the zero temperature entropy. The expansion \eqref{FSYK} is valid at large $N$ and large dimensionless coupling $\beta J$. Interestingly, the ratio $\beta J/N$ may be of order one. The leading terms at large $N$ and $\beta J$ in \eqref{FSYK}, even when $\beta J/N$ is of order one, can be identified with \eqref{SchF} if we set
\be\label{JTvsSYKrel} \beta_{\text S} = \frac{2\ell}{|\La|} = \frac{\pi\beta J}{4 N a_{1}}\,\cdotp\ee

The Schwarzian description is also consistent with the thermodynamic properties of near-extremal black holes; in particular, it reproduces the linear growth of the heat capacity at low temperature. The emergence of a continuous spectrum is also in line with the idea that a large $N$ limit is taken. The near-conformal properties are associated with the large cut-off $\beta J\gg 1$.

iii) In spite of its great interest, the Schwarzian description is far from capturing the full complexity of the SYK model or of similar quantum mechanical models. It is an infrared effective description, that does not correctly account for the UV properties that are crucial to ensure the quantum consistency of the models. Simple conjectured extensions of the Schwarzian result, as reviewed below in Section \ref{JTptSec}, are also not able to reproduce the details of the large $\beta J$ expansion of the quantum mechanical models, that involves fractional powers of $1/(\beta J)$. This is of course well-known. What may not have been fully realized or emphasized until now is that the Schwarzian description has the same status with respect to JT gravity as it has with respect to SYK. It is an effective description of JT gravity that falls short of capturing essential features of the microscopic description, like, for instance, the fractal nature of the boundary. As a consequence, the one-loop exactness of the partition function \eqref{ZRAexact} does not have any fundamental meaning.

iv) It is interesting to pinpoint the various levels at which we make an approximation when we use the Schwarzian description.

\noindent$\bullet$ The description takes into account only a subset of measure zero in the space of constant negative curvature disk metrics. For instance, configurations like the ones depicted in Fig.\ \ref{notrepansatzfig} are not taken into account. Moreover, the boundary is assumed to be smooth, whereas it is a fractal at the microscopic level.

\noindent$\bullet$ The Schwarzian description is based on the use of an \emph{approximation} of the area term $\frac{\La}{16\pi} A$ in the action, valid on distance scales much larger than the curvature scale. It is easy to construct ``reparameterization ansatz'' configurations for which the area is finite and yet not given by the Schwarzian action. For instance, take
\be\label{fbreak1} f(\vartheta) = \vartheta +\frac{\alpha}{n^{2}}\cos (n\vartheta)\ee
where the frequency $n$ is of order $\ell$. This ansatz is such that $f'$ and $f''$ are of order one at large $\ell$, because the amplitude is chosen to go as $1/n^{2}$, but $f''' = O(\ell)$, $f^{(4)}= O(\ell^{2})$, etc. In this case, the local expansions \eqref{hexpgen} and \eqref{kexpgen}, and thus the formula \eqref{Arearepansatz} for the area, are not valid. However, the small amplitude in $1/n^{2}$ ensures that the condition
\be\label{areaconRA} A[g] - \ell + 2\pi = O(1/\ell)\, ,\ee
which yields the finiteness of the action after the usual counterterms have been subtracted, is satisfied. Note that it is also easy to construct finite action configurations that have many microscopic zig-zags or overlaps. 

%

\noindent$\bullet$ Last, but not least, the correct UV structure of the theory is not governed by the area term but by a completely different induced boundary gravitational action \`a la Liouville, as explained in \cite{ferrari,ferraJTconfgauge}. The induced gravitational action is essential to ensure the consistency of the theory on all scales.

\subsubsection{\label{edgeSec}Edge modes}

\paragraph{Basic idea}

The so-called \emph{edge modes} appear when gauge theories are formulated on manifolds with boundaries in such a way that the boundary conditions break part of the gauge symmetry on the boundary. The broken generators then become genuine dynamical degrees of freedom. A typical example is BF gauge theory on a disk with gauge group $G$. Naively, because the BF theory gauge connection is flat and the disk is contractible, there is no degree of freedom. But the standard boundary conditions break the gauge symmetry on the boundary. Edge modes are then described by maps $h:\partial\disk\rightarrow G$ and the model is equivalent to  the dynamics of a particle moving on $G$. A similar phenomenon occurs in the gauge theoretic description of JT gravity \cite{GTFer}. 

Gravitational edge modes are similarly associated with the partial breaking of the group of diffeomorphisms. They play a central role in three dimensional asymptotically hyperbolic or AdS gravity, where the asymptotic boundary conditions break part of the full group of diffeomorphisms \cite{edgegrav3d}. It is natural to study whether a similar description is possible in two dimensions as well \cite{junggi1}. Since imposing the constraint of constant  curvature eliminates the local metric degrees of freedom, it is legitimate to seek a formulation in terms of edge modes. The wiggling boundary picture seems consistent with this idea.

However, there are major differences between the two and the three dimensional cases and these differences seem to be at the origin of some confusion.\footnote{We would like to thank Junggi Yoon for a stimulating discussion on these topics. This discussion has motivated us to write the present subsection, the aim of which is to clarify certain subtle and potentially confusing aspects in a pedagogical way.} An important difference is that in two dimensions, there is a unique complete asymptotically hyperbolic geometry in two dimensions, which is hyperbolic space itself, $\Htwo=(\disk,\delta^{-})$. We thus work with finite cut-off geometries (finite quantum boundary length $\beta_{\text q}$ or finite smooth length $\ell$ if one deals with smooth boundaries). But in this case, the group of diffeomorphisms $\diff_{+}(\disk)$ is \emph{not} broken: it remains an \emph{exact} gauge symmetry of the model, for any finite value of the microscopic parameters, and this unbroken gauge symmetry includes the boundary reparameterizations. \emph{There is thus no standard edge mode in the fundamental gravitational description of the model.}

\paragraph{A construction in JT gravity and the space of embeddings}

Let us consider the group $\diff_{+}(\Htwo)$ of diffeomorphisms of hyperbolic space. This is of course a group of diffeomorphisms of the disk $\disk$, but, unlike for $\diff_{+}(\disk)$, the diffeomorphisms of $\diff_{+}(\Htwo)$ do not need to extend smoothly to $\partial\disk$, which is at infinite distance in $\Htwo$. The group $\diff_{+}(\Htwo)$ acts of the space of immersions $\imm(\disk,\Htwo)$ by left multiplication, $F\mapsto\psi\circ F$. Using $g=F^{*}\delta^{-}$, we get an action on the space of metrics $\met^{-}(\disk)$. The space $\met^{-}(\disk)$ thus decomposes into a disjoint union of orbits under $\diff_{+}(\Htwo)$.

One of the orbits corresponds exactly to the space of embeddings, $\emb(\disk,\Htwo)\subsetneq\imm(\disk,\Htwo)$, which is associated to the subspace of metrics
\be\label{SALmetdef} \salmet^{-}(\disk) = \Iso(\Htwo)\backslash\emb(\disk,\Htwo)/\diff_{+}(\disk)\subsetneq\met^{-}(\disk)\, .\ee
The notation $\salmet^{-}(\disk)$ is motivated by the fact that embeddings are in one-to-one correspondence with the boundary curves that are self-avoiding loops. Note that all the embeddings are in the same orbit because there is always a diffeomorphism that maps the interior of two distinct self-avoiding loops. 

To obtain an explicit description of this orbit, we can start picking a particular element of the orbit, i.e.\ a given embedding $F_{0}$, associated with a background metric $g_{0}=F_{0}^{*}\delta^{-}$. For instance, we can use a round disk, $F_{0}(z) = R z$ for some $R$, $0<R<1$. We denote by $\disk_{0}=\im F_{0}\subset\Htwo$ the associated distorted disk and by $\diff_{+}(\disk_{0})$ the subgroup of $\diff_{+}(\Htwo)$ that maps $\disk_{0}$ onto $\disk_{0}$ and $\partial\disk_{0}$ onto $\partial\disk_{0}$. Any $\psi\in \diff_{+}(\disk_{0})$ is such that $\psi_{|\bar\disk_{0}} = F_{0}\circ\varphi\circ F_{0}^{-1}$ for some $\varphi\in\diff_{+}(\disk)$. For $\psi\in\diff_{+}(\disk_{0})$, the embeddings $F_{0}$ and $\psi\circ F_{0}$ thus correspond to the same metric, since $\psi\circ F_{0} = F_{0}\circ\varphi$ for $\varphi\in\diff_{+}(\disk)$. The group $\diff_{+}(\disk_{0})$ is thus a genuine gauge symmetry, that realizes the group of diffeomorphisms of the source disk embedded in $\Htwo$ via $F_{0}$. In other words, $\diff_{+}(\disk_{0})$ is the isotropy group of the embedding $F_{0}$ under the action of $\diff_{+}(\Htwo)$ in the quotient \eqref{SALmetdef}. Overall, this implies that
\be\label{SALmetquot} \salmet^{-}(\disk) = \Iso(\Htwo)\backslash\diff_{+}(\Htwo)/\diff_{+}(\disk_{0})\, .\ee
This is an ``edge-mode-like'' description: one might say that the group $\diff_{+}(\Htwo)$ is broken and that the ``edge modes,'' that are the non-trivial diffeomorphisms with respect to the quotient \eqref{SALmetquot}, are the degrees of freedom of the theory. These ``edge modes'' are the diffeomorphisms that do not let the boundary $\partial\disk_{0}$ invariant. They are sometimes called ``radial diffeomorphisms.''

Let us emphasize that the ``edge mode'' terminology in the present context is dangerous, because it hides some crucial differences with the gauge theoretic models for which a fundamental edge mode description is possible. The differences are twofold.

i) First, the group $\diff_{+}(\Htwo)$ is not the group $\diff_{+}(\disk)$ of diffeomorphisms of the source manifold $\disk$. The group $\diff_{+}(\disk)$ is the fundamental gauge symmetry of the gravitational theory. The group $\diff_{+}(\Htwo)$ appears when one describes the metrics via immersions in $\Htwo$, as the group of diffeomorphisms of the target space $\Htwo$.

ii) Second, the space of metrics \eqref{SALmetquot} described in this formalism is only a small subset of the full set of metrics, associated with embeddings, whereas the general case involves immersions. 

Let us note that the reparameterization ansatz, which is itself a special case of embeddings, can be obtained from \eqref{SALmetquot} by considering infinitesimal elements of $\diff_{+}(\Htwo)$.

\paragraph{Some clarifications on the nature of the degrees of freedom in JT gravity}

The formulation in terms of the wiggling boundary and/or in terms of edge modes may lead to considerable confusion as to the nature of the degrees of freedom of the model and in the way these degrees of freedom have to be treated. One might think that the metric $\delta^{-}$ on the target space $\Htwo$, the wiggling boundary, and the metric $g$ on the source disk $\disk$, are three independent sets of variables, that must be varied independently in an action principle, or that must be integrated over in a path integral. This is incorrect. \emph{The only degrees of freedom are the constant curvature metrics $g$ on the source disk $\disk$, modulo the action of $\diff_{+}(\disk)$, as described in Sections \ref{dissurfrevSec} to \ref{immersionSec}.} The wiggling boundary is nothing more than a convenient trick to describe a subset of the metrics on $\disk$, using $g=F^{*}\delta^{-}$ and a set of special embeddings $F$. Varying the wiggling boundary amounts to varying $F$ and thus $g$. Note that $\delta^{-}$ is fixed. Thus, the wiggling boundary does not correspond to new degrees of freedom in the model. As for the metric $\delta^{-}$ on $\Htwo$, it is unique. Varying it or integrating over it does not make any sense.

\subsubsection{Approaches to the finite cut-off models}

The first proposal for a microscopic definition of JT quantum gravity, for the case of negative curvature, was made by Kitaev and Suh in \cite{KitaevSuh}. The proposal is based on two assumptions:

i) The degrees of freedom of the theory are in one-to-one correspondence with the shape of a boundary curve immersed in $\Htwo$.

ii) The boundary curve can be an arbitrary closed random walk. 

These assumptions lead to an equivalence between quantum JT gravity in negative curvature and the quantum theory of a particle in $\Htwo$. As argued in \cite{KitaevSuh}, the properties of the model so obtained are consistent with the results expected in the Schwarzian limit \eqref{Schlimit}.

However, if we assume, as we do, that JT gravity is a theory of random constant curvature metrics, the asumptions made in \cite{KitaevSuh} are inadequate. Indeed, on the basis of what we have already explained, it is clear that the vast majority of closed random walks do not bound a distorted disk and are therefore not associated with a metric. Simple examples of forbidden closed walks were depicted in Fig.\ \ref{forbidFig}. The rather subtle conditions under which a closed curve is associated with a metric will be explored in detail in Section \ref{bdviewSec}. So assumption ii) is ruled out by the metric axiom. Moreover, and this is one of the biggest surprise to emerge from the analysis of Section \ref{bdviewSec}, the very idea that the degrees of freedom are in one-to-one correspondence with the shape of the boundary is actually erroneous! Assumption i) must therefore also be called into question.

The conclusion is that the model proposed by Kitaev and Suh, even though perfectly well-defined, is not a theory of quantum gravity, if one is willing to insist on the fact that a theory of quantum gravity in two dimensions must be a diffeomorphism invariant theory of random metrics.

This being said, the Kitaev-Suh model is a very valuable tool to explore, at the qualitative level, some important physical features of JT gravity, like, for instance, the emergence of the Schwarzian description on long distance scales for large negative cosmological constant. A notable fact, that suggests that the Kitaev-Suh model is a sensible approximation to JT, is that the Hausdorff dimension of the Brownian paths is 2, matching the expected value for pure JT \cite{ferrari}. It can also be generalized to the flat and positive curvature cases, yielding interesting qualitative clues on the possible behaviour of JT gravity in these cases as well. Further discussion of the model will be given in Section \ref{randompathrev} and in \cite{RDFollowup1}.

Stanford and Yang have proposed in \cite{StanfordSAP} a very different approach. They do keep the assumption i) by Kitaev and Suh, that the degrees of freedom are in one-to-one correspondence with the shape of the boundary curve, but, instead of using arbitrary closed random walks, they propose that the boundary curves should be self-avoiding loops. As explained above, each self-avoiding loop corresponds to an embedding of the disk and is thus associated with a unique metric on the disk, belonging to the space $\salmet^{-}(\disk)$ defined in \eqref{SALmetdef}. However, this takes into account only a measure zero fraction of the full set of constant negative curvature metrics. In particular, the Hausdorff dimension of the a typical SAL boundary is 4/3 and not 2 as in JT. It is unlikely that the SAL model could provide a satisfactory approximation to JT gravity, but we found some qualitative aspects of the model and of the discussion in \cite{StanfordSAP} very instructive. For these reasons, we shall provide more information on the self-avoiding loop model in Section \ref{randompathrev}. 

Iliesu et al., in \cite{VerlindeSAP}, use the same basic assumptions as in \cite{StanfordSAP}, but instead of starting from the usual microscopic definition of random self-avoiding loops, which is based on the natural uniform measure in the discretized lattice formulation, and which yields in particular fractal boundaries, they develop the model directly in the continuum, using various assumptions on the path integral integration measure, etc., which are in line with the usual asumptions made in the Schwarzian description. In particular, they deal with smooth rectifiable boundaries. Of course, the results obtained in this way are very different from the results presented in \cite{StanfordSAP}. Instead of an attempt at a microscopic definition of the model, the work in \cite{VerlindeSAP}, and its interesting generalization in \cite{Griguolo}, may be better interpreted as being a proposal for a possible extension of the effective Schwarzian framework, trying to push its range of validity beyond the leading order in the near-hyperbolic limit \ref{Schlimit}, but by remaining within the framework of an effective long-wavelength description.

In the next subsection, we are going to show that the results in \cite{VerlindeSAP} yield a non-trivial conjecture on the form of the semi-classical expansion of JT gravity, that does not seem to have been noted previously. As we shall argue, this prediction appears to be hard to reconcile with the expected microscopic properties of the theory.

\subsubsection{\label{JTptSec}A conjecture to all orders in the large $|\La|$ expansion}

According to \cite{VerlindeSAP}, the finite cut-off disk partition function in negative curvature is given by\footnote{To compare with the formulas in \cite{VerlindeSAP}, note that, in our conventions, the cosmological constant is related to the boundary value $\phi_{\text b}$ of the dilaton in the usual formulation of JT gravity by $|\La| = -\La = 2\phi_{\text b}$.}
\be\label{IKTVZdensity} W_{\text{IKTV}}^{-}(\beta_{\text S}) \equiv \int_{\varepsilon_{0}}^{\infty}\rho_{\text{IKTV}}(E)\, e^{-\beta_{\text S} E}\,\d E\ee
for a density
\be\label{IKTVrho}\rho_{\text{IKTV}}(E) = \frac{2}{\pi}e^{\zeta_{0}}\Bigl(1-\frac{32\pi}{\La^{2}} \bigl(E-\varepsilon_{0}\bigr)\Bigr)\sinh\sqrt{\pi \bigl(E-\varepsilon_{0}\bigr)\Bigl(1-\frac{16\pi}{\La^{2}}\bigl(E-\varepsilon_{0}\bigr)\Bigr)}\ee
generalizing the Schwarzian density \eqref{Schrho}. The symbol $\equiv$ in Eq.\ \eqref{IKTVZdensity} means that the formula must be understood perturbatively, as explained  below. Indeed, if taken seriously as a strict equality, Eq.\ \eqref{IKTVZdensity} would predict an imaginary part in the partition function. Instead, one must imagine that there is a cut-off which is $o(|\La|^{2})$ in the integral over $E$. One can also give an expression in terms of a modified Bessel function,
\be\label{IKTVZexact} W_{\text{IKTV}}^{-}(\beta_{\text S}) \equiv e^{\zeta_{0}-\beta_{\text S}\varepsilon_{0}} \frac{\La^{2}}{8\ell}\frac{e^{-\frac{|\La|\ell}{16\pi}}}{1+(\frac{2\pi}{\ell})^{2}} I_{2}\Biggl[\frac{|\La|\ell}{16\pi}\sqrt{1+\bigl(\frac{2\pi}{\ell}\bigr)^{2}}\Biggr]\ee
or an alternative integral representation using the Schwarzian density \eqref{Schrho} 
\be\label{IKTVint2} W_{\text{IKTV}}^{-}(\beta_{\text S})\equiv
\int_{\varepsilon_{0}}^{\infty}\rho_{\text{Sch}}(E)\, e^{-\frac{\ell |\La|}{16\pi} \bigl[1-\sqrt{1-64\pi (E-\varepsilon_{0})/\La^{2}}\bigr]}\, \d E\, ,\ee
which is suggested by an approach via the one-dimensional $T\bar T$ deformation \cite{ttbar}.

The formulas \eqref{IKTVZdensity}-\eqref{IKTVrho}, \eqref{IKTVZexact} and \eqref{IKTVint2} all yield the same near-hyperbolic expansion, corresponding to the limit \eqref{Schlimit}, to all orders in $1/\ell$ at fixed $\beta_{\text S}$, see below. However, they contain imaginary parts and differ by terms that are exponentially small, of order $\smash{\exp(-\frac{|\La|\ell}{8\pi})}$. These non-perturbative pieces in the formulas are unphysical. The reference \cite{VerlindeSAP} discusses possible ways to constrain the non-perturbative contributions, and the non-perturbative structure from the point of view of resurgence is also studied in \cite{Griguolo}.

\paragraph{The near-hyperbolic expansion to all orders}

The near-hyperbolic expansion may be obtained straightforwardly by expanding the density \eqref{IKTVrho}. One can also use the known large $z$ asymptotic expansion of the Bessel function,
\be\label{I2Besselexp} I_{2}(z) = \frac{e^{z}}{\sqrt{2\pi z}}\sum_{k\geq 0} \frac{\Gamma(k-3/2)\Gamma(k+5/2)}{2^{k}\pi k! z^{k}}\,\cdotp\ee
Note that the (unphysical) non-perturbative corrections to \eqref{I2Besselexp} are proportional to $ie^{-z}$. One gets 
\be\label{ZIKTV1} \ln W^{-}_{\text{IKTV}} =\zeta_{0} - \beta_{\text S}\varepsilon_{0} + \frac{\pi}{4\beta_{\text S}} - \frac{3}{2}\ln\beta_{\text S} +\frac{\pi}{\beta_{\text S}} \sum_{k\geq 1}\Bigl(\frac{\pi}{\ell}\Bigr)^{2k} P_{k}(\beta_{\text S}/\pi)\ee
where $P_{k}$ is a polynomial of degree $k+1$,
\be\label{Pkexam}\begin{split} & P_{1}(x) = -\frac{1}{4} - 5 x - 15 x^2\, ,\quad P_{2}(x) = \frac{1}{2} + 10 x + 30 x^2 - 60 x^3\, ,\\ & P_{3}(x) = -\frac{5}{12}\bigl( 3 + 64 x + 216 x^2 - 576 x^3 + 432 x^4\bigr)\, ,\\& P_{4}(x) = \frac{7}{2} + 80 x + 300 x^2 - 960 x^3 + 1080 x^4 + 1440 x^5\, ,\\
& P_{5}(x) = -\frac{21}{2} - 256 x - 1050 x^2 + 3840 x^3 - 5400 x^4 - 11520 x^5 + 
 47520 x^6\, ,\quad\text{etc.}
\end{split}\ee
The first terms in \eqref{ZIKTV1} reproduce the Schwarzian partition function \eqref{ZRAexact}, which is the expected leading piece in the limit \eqref{Schlimit}. These leading terms get contributions only at tree-level and one-loop, as is well-known \cite{SWloca}. Interestingly, the above formulas also predict that the subleading corrections, at any given order $1/\ell^{2k}$, get contributions only up to $k+1$ loops, the loop counting parameter being here $\beta_{\text S}$. To summarize, the IKTV framework suggests two non-trivial properties:

i) The effective Schwarzian description can be extended to all orders in $1/\ell^{2}$.

ii) Within this effective description, a consistent loopwise expansion exists and there is a non-renormalization theorem, that generalizes the famous one-loop exactness at leading order, stating that the contributions at order $1/\ell^{2k}$ are $(k+1)$-loop exact, for all $k\geq 0$.

These properties are atypical for an effective, long-wavelength description. Usually, descriptions of this kind do not yield consistent loopwise expansions, since loop corrections probe arbitrarily short distance scales. And indeed, our discussion below will cast some doubts on the validity of these results, at least within our purely metric and two-dimensional point of view on JT gravity.

\paragraph{A semi-classical large $|\La|$, fixed $\ell$ expansion} 

The IKTV framework actually has even more powerful consequences than those we discussed above, and which do not seem to have been noted in the literature. By reordering the terms of the original large $\ell$, fixed $\beta_{\text S}=2\ell/|\La|$ expansion, it is indeed possible to get a new
\be\label{scVerlindeexp} \La\rightarrow -\infty\, ,\quad \ell\quad\text{fixed}\ee
expansion of the form
\be\label{ZIKTVLoop} \ln W^{-}_{\text{IKTV}} =\zeta_{0}-\beta_{\text S}\varepsilon_{0}+ |\La| f_{0}(\ell) + f_{1}(\ell,\La) + \sum_{L\geq 2}|\La|^{1-L}f_{L}(\ell)\, ,\ee
with
\begin{align} \label{treeJTneg} |\La|f_{0}(\ell) &= \frac{|\La|\ell}{16\pi}\Biggl[\sqrt{1+\Bigl(\frac{2\pi}{\ell}\Bigr)^{2}}-1\Biggr]\\
\label{oneloopJTZIKTV} f_{1}(\ell,\La) &= -\frac{3}{2}\ln\frac{2\ell}{|\La|} -\frac{5}{4}\ln\biggl[1+\Bigl(\frac{2\pi}{\ell}\Bigr)^{2}\biggr]\\
\label{highloopZIKTV} f_{L}(\ell) &= p_{L}\biggl[1+\Bigl(\frac{\ell}{2\pi}\Bigr)^{2}\biggr]^{\frac{1-L}{2}}\, ,\quad L\geq 2\, ,
\end{align}
with numerical constants $p_{L}$,
\be\label{pLlisthighloop}\begin{split} & p_{2} =-15\, ,\ p_{3} = -60\, ,\ p_{4}= -180\, ,\ p_{5} = 1440\, ,\ p_{6}=47520\, ,\ p_{7} = 921600\, ,\\ &
p_{8} = \frac{121953600}{7}\, , \ p_{9}= 373248000\, ,\ p_{10} = 9628761600\, ,\ \text{etc.}
\end{split}\ee
This expansion may be derived either from \eqref{IKTVint2} by using a saddle-point approximation or directly from \eqref{I2Besselexp}. Note that the usual expansion \eqref{ZIKTV1} can also be obtained from \eqref{ZIKTVLoop}, replacing $|\La|$ by $2\ell/\beta_{\text S}$, $\beta_{\text S}$ being fixed, using the fact that $|\La|^{1-L}f_{L}(\ell) \propto 1/\ell^{2L-2}$ for $L\geq 2$ and reordering the terms. The two expansions, Eq.\ \eqref{ZIKTV1} and Eq.\ \eqref{ZIKTVLoop}, thus contain exactly the same information, but presented in different ways.

The notable fact about the formulas \eqref{treeJTneg}, \eqref{oneloopJTZIKTV} and \eqref{highloopZIKTV} is that they show that the infinite series in $1/\ell^{2}$ contributing at a given order $|\La|^{1-L}$ according to IKTV have a finite radius of convergence and can thus be summed. In principle, we then obtain an information which is valid for finite values of $\ell$!

The term $\zeta_{0}-\beta_{\text S}\varepsilon_{0}$ in \eqref{ZIKTVLoop} corresponds to the arbitrary bulk Einstein-Hilbert and boundary cosmological constant counterterms, as already discussed in previous subsections. The leading order contribution \eqref{treeJTneg} matches with the renormalized classical action $\frac{|\La|}{16\pi}(A_{\text{cl}}-\ell + 2\pi)$. The contribution $-\ell + 2\pi$ can be absorbed in the counterterms $\zeta_{0}-\beta_{\text S}\varepsilon_{0}$ and $A_{\text{cl}}$ is the area of a round disk of boundary length $\ell$ embedded into hyperbolic space. This configuration corresponds to the maximal area metric and is thus the classical solution of JT gravity. This is all perfectly consistent with the idea that the expansion \eqref{ZIKTVLoop} could represent the standard quantum gravity semi-classical expansion, at fixed boundary length $\ell$, the semi-classical expansion parameter being $1/|\La|$, consistently with the fact that the classical action is proportional to $\La$; see also App.\ \ref{SemiclassApp}.

However, the extremely simple form of the result, Eqs.\ \eqref{oneloopJTZIKTV} and \eqref{highloopZIKTV}, appears to be hard to reconcile with some basic predictions of our microscopic definition of JT quantum gravity. Actually, the very existence of the semi-classical limit \eqref{scVerlindeexp} seems to be very difficult to make consistent with the microscopic formulation.

Indeed, as we have emphasized above, in particular in Section \ref{repaSec}, the smooth boundary length parameter $\ell$ is an effective, macroscopic parameter that makes sense only on length scales that are much larger than the curvature scale. Therefore, this parameter cannot enter, a priori, in formulas valid at strictly finite cut-off. Instead, one expects that loop corrections should generate an anomalous dimension for the length, revealing the fractal nature of the boundary. This is exactly what is found in the context of the semi-classical $c\rightarrow - \infty$ limit \`a la Zamolodchikov studied in \cite{Loopcalc}. Moreover, the UV structure of the theory, which is essential in making any loopwise expansion sensible, is not governed by the area term, but by an induced gravitational action that does not become semi-classical when $\La\rightarrow -\infty$ \cite{ferrari,ferraJTconfgauge}. Explicit evidence that this precludes the existence of a semi-classical expansion at $\La\rightarrow - \infty$ is given in \cite{Loopcalc} and general arguments in this direction will also be presented in Section \ref{continuumSec}.

Actually, a difficulty with the expansion \eqref{ZIKTVLoop} can be directly detected, by looking at the flat space limit of the theory. To discuss the flat space limit, it is convenient to introduce explicitly the curvature length scale, denoted by $L$, by writing the Ricci scalar $R=- 2/L^{2}$. The dimensionless parameters of the theory then become $\ell/L$ and $\La L^{2}$ and the flat space limit is $L\rightarrow\infty$ at $\ell$ and $\La$ fixed. In this limit, the tree-level term \eqref{treeJTneg} reduces to $\smash{\frac{|\La|\ell^{2}}{64\pi^{2}}}$, which is the expected result: the classical solution of flat JT gravity corresponds to a Euclidean disk of circumference $\ell$, which has area $\frac{\ell^{2}}{4\pi}$, and thus the classical action is $\smash{-\frac{|\La|}{16\pi}\times \frac{\ell^{2}}{4\pi} = -\frac{|\La|\ell^{2}}{64\pi^{2}}}$. However, at the one loop level, \eqref{oneloopJTZIKTV} yields a diverging result,
\be\label{Zzerooneloop} f_{1} = -\frac{3}{2}\ln\frac{2\ell}{|\La|L^{3}} -\frac{5}{4}\ln\biggl[1+\Bigl(\frac{2\pi L}{\ell}\Bigr)^{2}\biggr] =\frac{1}{2}\ln\frac{|\La|^{3}\ell^{2}L^{4}}{256\pi^{2}} + O\bigl(L^{-2}\bigr)\, .\ee
This seems to be a serious problem but should not come as a surprise. The macroscopic parameter $\ell$ does not make sense at all in flat space JT gravity. The flat space theory does not have a regime that can be described by a Schwarzian effective theory or by smooth boundary curves. It is impossible to obtain information about the flat space theory within the near-hyperbolic framework. The reliability of the would be one-loop result \eqref{oneloopJTZIKTV}, and, by extension, that of the would-be higher loop results \eqref{highloopZIKTV} or of the all-order near-hyperbolic expansion \eqref{ZIKTV1}, which indicate that the large $\ell$ series can be resummed to obtain results at finite $\ell$, must therefore be called into question.

%
%
%

%
\section{\label{bdviewSec}Curves that bound a distorted disk}

\subsection{\label{curvgeniSec}General introduction}

Models of (discretized) random paths are pervasive in theoretical physics and have also been studied extensively by mathematicians. The simplest and most important example is the random walk. The lattice version has a universal continuum limit called Brownian motion. The associated probability measure is the Wiener measure which is used, in particular, to define the path integral in Euclidean quantum mechanics. Other well-known examples of random paths are the Schramm-Loewner evolution processes, which are conjectured to describe the continuum limit of various models of physical interest: self-avoiding walks and polygons (SAW, SAP), percolation exploration paths, Ising model interfaces, etc. For some entry points in the literature on these subjects, see e.g.\ \cite{randpathlit} and the excellent textbook \cite{SAPtextbook}.

JT gravity is fundamentally a theory of random metrics. Yet, a description in terms of fluctuating boundaries seems very natural as well, at least when the base manifold is a disk. Actually, the fact that two equivalent dual formulations may exist, one in terms of random metrics and the other in terms of random paths, is conceptually non-trivial, since models of random geometries and models of random paths are usually associated with qualitatively different behaviours and different types of observables. In a sense, JT gravity seems to stand at a delicate equilibrium point between the two classes of models. This may be its distinctive and most interesting feature. 

Our goal in this Section, is to show that a formulation in terms of random closed paths is indeed possible and that this formulation is actually much more subtle than what has been envisioned so far. We are going to show that the relevant random path model is of a new kind, with unusual properties.

Focusing on the disk topology, for which there is only one boundary, the main features are as follows.

The set of allowed closed curves must obey a subtle constraint: \emph{only closed curves that bound a distorted disk are allowed}. The notion of distorted disk, both at the discretized level and in the continuum, has been introduced in Section \ref{diskviewSec}. Following a terminology found in the mathematical literature, we call the allowed curves \emph{self-overlapping polygons} (SOP) in the discretized case and simply self-overlapping curves (SOC) in the continuum.

The set of SOPs is strictly between the set of self-avoiding polygons SAPs and the set of arbitrary closed loops. This means that SAPs are SOPs, but a generic SOP is not a SAP; and that SOPs are closed loops (obviously), but a generic closed loop is not a SOP. As for SAPs, the constraint defining a SOPs is non-Markovian. However, whereas it is immediate to decide if a closed polygon is self-avoiding or not, the case of SOPs is much more subtle.

Fortunately, the problem of deciding whether a given closed curve is self-overlap\-ping or not has been studied extensively in the mathematical literature, from many different points of view (analytic, geometric, algorithmic), starting with the works of Titus and Blank \cite{titusblank}, as reviewed by Po\'enaru in \cite{Bourbaki}. The work of Blank completely solves the problem, using a nice method of ``cuts'' that will be illustrated below on examples. Motivated by possible applications in robotics and the design of integrated circuits, Shor and Van Wyk proposed an efficient algorithm which allows to determine if a curve is self-overlapping \cite{Shoralgo}. For a discretized curve of length $2n$, their algorithm runs in $O(n^{3}\ln n)$ time. A more recent algorithm was proposed in \cite{Mukherjeealgo}, with average running time $O(n^{2})$. There are also applications in the study of benzenoid hydrocarbons, see e.g.\ \cite{benzene} and references therein. Other references that we found very instructive include \cite{immersion}.

One of the most important message of \cite{ferrari} and the present paper is that the correct formulation of JT quantum gravity in terms of a fluctuating boundary is a \emph{theory of random self-overlapping polygons.} This provides an entirely new application of these objects and also emphasizes the importance of studying \emph{random} SOP, a subject on which virtually nothing is known.

The most surprising aspect of the model is that, in general, a given SOP does not determine a unique distorted disk; equivalently, it does not determine a unique metric. Since this is so important, and understandably very confusing, let us say it again: there exists cases where \emph{distinct} distorted disks, for which the associated disk metrics are physically different (which means that they are not equivalent modulo the action of disk diffeomorphisms), have exactly the \emph{same} boundary curve. This implies that the correct measure on the set of SOP is not the uniform measure: a non-trivial multiplicity index must be attached to each SOP, which counts the number of distinct metrics associated with it. This multiplicity can be computed, in principle, by using the Shor-Van Wyk or the Mukherjee algorithms \cite{Shoralgo,Mukherjeealgo}. At a conceptual level, this phenomenon implies that \emph{the degrees of freedom of JT quantum gravity are not in one-to-one correspondence with the shape of the boundary}.

\subsection{\label{bdcodeSec}SOPs, boundary code and symmetry factors} 

Consistently with Section \ref{diskviewSec}, we use a square lattice in our discussion; a similar treatment applies to triangular or hexagonal lattices. 

\subsubsection{Basic definitions}

A closed path, or polygon, on the square lattice, of length $\nu$, is a cyclic sequence of lattice points $(P_{0},P_{1},\ldots,P_{\nu})$, with $P_{k}=P_{k+\nu}$, such that $P_{k+1}$ is obtained from $P_{k}$ by walking along a single edge of the lattice (i.e.\ $P_{k}$ and $P_{k+1}$ are nearest neighbours), with the constraint that $P_{k+1}\not = P_{k-1}$ (we do not allow U-turns). Note that $\nu$ is necessarily even and thus we note $\nu=2n$. We have depicted in Fig.\ \ref{figdispath1} three typical polygons. 

\begin{figure}
\centerline{\includegraphics[width=6in]{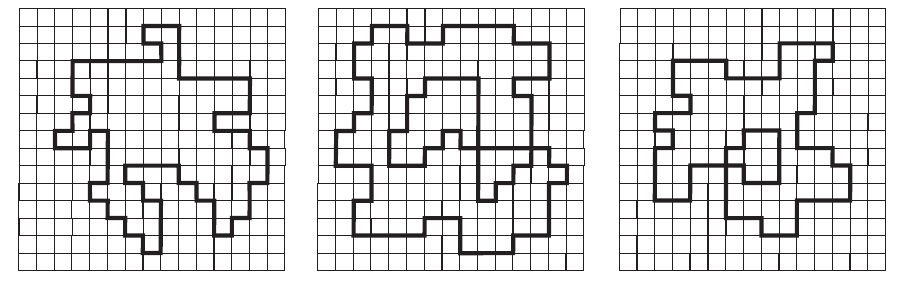}}
\caption{\label{figdispath1}Three instances of closed paths, or polygons, on the square lattice. The paths in the left and center insets are SOPs, but the path in the right inset is not, see the main text.}
\end{figure}

A path may be entirely coded by telling, at each step, in which direction we are going: Straight, Right, or Left.\footnote{For the hexagonal lattice, only Right and Left are available.} For instance, the ``path code'' for the simplest possible closed path, encircling a unique lattice tile, is $\text{LLLL}$. Of course, only the cyclic ordering of the letters in the code is important, since the starting point along the path may be chosen arbitrarily.\footnote{This is consistent with the fact, mentioned in Section \ref{diskviewSec}, that we do not have a marked point on the boundary.} The two possible orientations of a given path are considered to be equivalent, so two codes obtained by reversing the cyclic ordering and at the same time exchanging the L and R moves are identified. 

A self-overlapping polygon (SOP) is a closed path bounding a discretized flat disk (distorted disk) as defined in Section \ref{diskviewSec}. SOPs have a canonical ``anticlockwise'' orientation, which is such that the disk it bounds is on the left. The SOP code will always be chosen to be canonical. Note that the relation \eqref{EulergenflatJTdisk} is then automatically satisfied: the number of left turns minus the number of right turns is always equal to four. This rules out, for example, the polygon in the right inset of Fig.\ \ref{figdispath1}.

\subsubsection{\label{symfactSec}Counting and symmetry factors}

To define a theory of random SOPs, we need to define a measure on the set of SOPs. For now, let us limit ourselves to a few basic remarks. This discussion will be refined, in a subtle and surprising way, in Section \ref{MilnorSec}.

If we describe a SOP by its code, with the canonical orientation, two SOPs that coincide modulo lattice translations and rotations are automatically identified. Moreover, a reflection with respect to a vertical or horizontal axis acts by reversing the order of the letters in the path code. Because of this, we might consider the possibility of identifying a code and its reverse. Strictly speaking, this is the natural choice in the metric interpretation of the theory, but it turns out that it is not very convenient and in particular it does not match with the conventions used in Section \ref{Genfun1Sec}. Since we want to keep using the same conventions, we make the choice of not identifying a configuration with its image under a reflection. Moreover, we also count with an appropriate multiplicity SOPs that do not match under the action of lattice rotations. This amounts to counting any ``generic'' SOP path code with a factor of $4$. This choice will thus have a trivial effect in any well-defined continuum limit of the model.

Let us be a bit more precise. Let us pick a SOP of length $2n$. To each vertex $v$, we associate the ``rooted'' code $B_{v}$ at $v$, which by definition is the representative of the code for which the first letter corresponds to the vertex $v$. A SOP is \emph{generic} if the $2n$ rooted codes $B_{v}$, for any $v$, are all distinct. Generic SOPs are conventionally counted with a factor of 4.

More generally, pick a vertex $v$ and let $v'$ be the closest vertex following $v$ along the SOP such that $B_{v}=B_{v'}$. Assume that the distance between $v'$ and $v$ is $d$, which we note $v'=v+d$. Thus $B_{v}=B_{v+d}=B_{v+pd}$ for any integer $p$. The sequences of $d$ letters from $v+pd$ to $v+(p+1)d-1$ are the same for any $p$; let us note this sequence $\sigma$. Moreover, since $d$ is the smallest possible distance such that $B_{v}=B_{v+d}$, $d$ necessarily divides $2n$. We can thus write $2n = dr$ for some positive integer $r$. The code of the SOP is thus a repetition of $r$ times the sequence $\sigma$. If $v^{\sigma}_{\text L}$ and $v^{\sigma}_{\text R}$ represent the numbers of Left and Right vertices in the sequence $\sigma$, then, using \eqref{EulergenflatJT}, we get $v_{\text L}- v_{\text R} = 4 = r (v^{\sigma}_{\text L} - v^{\sigma}_{\text R})$. Thus $r$ divides 4. There are only three possibilities: either $r=4$, in which case the SOP has a $\mathbb Z_{4}$ symmetry corresponding to a 90 degree lattice rotation; or $r=2$ and the SOP has a $\mathbb Z_{2}$ symmetry corresponding to a 180 degree lattice rotation; or $r=1$ in which case the SOP is generic.

The precise counting rule is then that $\mathbb Z_{4}$-symmetric SOPs are counted once, $\mathbb Z_{2}$-symmetric SOPs are counted twice and generic SOPs and counted four times. 

An additional multiplicity factor, of a very different nature, will be introduced in Section \eqref{MilnorSec}.

\subsection{\label{SOPpropSec}General properties of self-overlapping curves}

We are now going to present criteria that allow to decide whether a closed curve is self-overlapping or not, by using information directly contained in the shape of the curve. This discussion allows to develop a better understanding of our problem and suggests some natural generalizations. Since the ideas apply equally well to discrete polygons and to continuous piecewise smooth curves, we deal with this latter case, which is more general. Note that in the continuum limit, which will be discussed in Section \ref{continuumSec}, these curves may become fractal. Strictly speaking, the discussion below applies before the continuum limit is taken.

Let us set up some notations and recall some of the results explained in Section \ref{immersionSec}. The source disk is as in Eq.\ \eqref{diskdef}, with coordinates 
\be\label{diskcoordapp} z = x^{1}+ i x^{2} = \rho e^{i\theta} \, .\ee
It is oriented by the volume form $\d x^{1}\wedge\d x^{2}$ (anti-clockwise orientation). The disk boundary $\partial\disk = \Sone$ is parameterized by $\theta\equiv\theta + 2\pi$. The target space $\mathbb R^{2}=\mathbb C$ is parameterized by Cartesian coordinates $(w^{1},w^{2})$ of by the complex coordinate $w = w^{1}+iw^{2}$.

We consider closed, continuous, piecewise smooth curves $\gamma:\Sone\rightarrow\mathbb R^{2}=\mathbb C$ such that $\gamma'\not = 0$, which is equivalent to saying that $\gamma$ is an immersion of $\Sone$ into $\mathbb C$.

The curve $\gamma$ is self-overlapping if and only if it can be extended into an immersion of $\disk$ into $\mathbb C$, that is to say, if and only if there exists an immersion $F:\disk\rightarrow\mathbb R^{2}$ such that $F_{|\partial\disk}=\gamma$. Without loss of generality, we always use orientation-preserving immersions, for consistency with the orientation conventions made up to now. We also know from Section \ref{immersionSec} that the immersion $F$ may be chosen to be holomorphic, in which case
\be\label{gamholoF} \gamma(\theta) = F(e^{i\theta})\, .\ee
We want to discuss necessary (and even, ideally, sufficient) conditions on $\gamma$ that ensure that the extension by an immersion $F$ exists. For the sake of simplicity, we assume that $\gamma$ is generic, in the sense that it has a finite number of self-intersection points and that the self-intersections are transverse and correspond to double points (the discussion can be straightforwardly adapted to the more general cases). 

Note that the most important points of the discussion in the present and the next subsection \ref{MilnorSec} are purely topological in nature and do not depend on the metric on the target space. They thus apply to the cases of zero, positive or negative curvature equally well.

\subsubsection{\label{WhitneySec}Whitney index}

Since the tangent vector $\gamma'$ never vanishes, it has a well-defined winding number
\be\label{Whitneyindex} \text W(\gamma) = \frac{1}{2i\pi}\int_{0}^{2\pi}\frac{\gamma''}{\gamma'}\, \d\theta\, .\ee
This is an integer, often called the Whitney index of the curve, which may also be seen as the winding number of the map from $\Sone$ to $\Sone$ defined by the normalized tangent vector $\gamma'/|\gamma'|$.\footnote{The normalized tangent vector $\gamma'/|\gamma'|$ is the unit tangent vector in the zero curvature case, when the target space is the Euclidean plane, but this is irrelevant for computing the winding number.} The Whitney index is an invariant of the curve modulo regular homotopies (homotopies $\gamma_{\la}$ that are immersions, $\gamma'_{\la}\not = 0$, for any homotopy parameter $\la$). Whitney theorem states that closed curves in the plane are classified by their Whitney index modulo regular homotopies.
\begin{proposition} Any self-overlapping curve with the anti-clockwise orientation has a  Whitney index $\text W(\gamma) = +1$.
\end{proposition}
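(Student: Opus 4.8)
The plan is to pass to the holomorphic representative of the extending immersion, for which the Whitney index collapses to a single winding number that the argument principle evaluates at once. By the definition of a self-overlapping curve and the results of Section \ref{immersionSec}, I may assume that $\gamma$ extends to a holomorphic immersion $F\in\himm(\disk)$ with $F'(z)\neq 0$ everywhere and $\gamma(\theta)=F(e^{i\theta})$, as in \eqref{gamholoF}. The genericity hypothesis (finitely many transverse double points) guarantees that $\gamma'$ never vanishes, so the Whitney index \eqref{Whitneyindex} is well defined and the argument is purely topological, hence valid in the zero, positive and negative curvature cases alike.

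First I would introduce the family of shrinking boundary curves $\gamma_{r}(\theta)=F(r e^{i\theta})$ for $r\in(0,1]$. Since $F'$ never vanishes, each $\gamma_{r}$ is an immersion of $\Sone$, so $\{\gamma_{r}\}$ is a regular homotopy from $\gamma_{\eps}$ to $\gamma=\gamma_{1}$. By the Whitney--Graustein theorem recalled above, $\text W$ is invariant under regular homotopy, so $\text W(\gamma_{r})$ is constant in $r$ and it suffices to evaluate it for a single convenient value. Differentiating gives $\gamma_{r}'(\theta)=ir e^{i\theta}F'(re^{i\theta})$, whence
\be
\frac{\gamma_{r}''}{\gamma_{r}'} = i + \frac{\d}{\d\theta}\log F'(re^{i\theta})\, ,
\ee
and integrating over $\theta\in[0,2\pi]$,
\be
\text W(\gamma_{r}) = \frac{1}{2i\pi}\int_{0}^{2\pi}\frac{\gamma_{r}''}{\gamma_{r}'}\,\d\theta = 1 + N_{r}\, ,
\ee
where $N_{r}$ is the winding number of $z\mapsto F'(z)$ as $z$ traverses $|z|=r$. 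By the argument principle $N_{r}$ equals the number of zeros of $F'$ in $|z|<r$, which vanishes precisely because $F$ is an immersion. Hence $\text W(\gamma_{r})=1$ for every $r$, and in particular $\text W(\gamma)=+1$.

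I expect the only genuine subtlety to be the boundary regularity of the holomorphic representative: the contour computation is cleanest on $|z|=r$ with $r<1$, where $F'$ is holomorphic and non-vanishing on the closed sub-disk, and I would then transport the result to $r=1$ via the regular-homotopy invariance rather than applying the argument principle directly on $\partial\disk$. As an independent cross-check, the conclusion also follows from Gauss--Bonnet \eqref{GaussBonnet}: since $F$ pulls back the flat metric, the extrinsic curvature of $\gamma$ in the plane equals the geodesic curvature of $\partial\disk$ for $g=F^{*}\delta$, and with $R=0$ one gets total turning $\oint k\,\d s = 2\pi(2-2h-b)=2\pi$ for the disk, which equals $2\pi\,\text W(\gamma)$ and forces $\text W(\gamma)=1$.
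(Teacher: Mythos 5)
Your proof is correct and follows essentially the same route as the paper: write $\text W(\gamma)=1+$ (winding of $F'$) using the holomorphic representative, and kill the second term because $F'$ is zero-free on the disk; your shrinking-circle family $\gamma_{r}$ is just a careful regularization of the paper's contour integral of $F''/F'$ over $\partial\disk$, and your Gauss--Bonnet cross-check parallels the paper's second, discretized proof via $v_{\text L}-v_{\text R}=4$. No gaps.
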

\begin{proof} A simple proof uses the representation \eqref{gamholoF} in terms of a holomorphic function $F$. One gets in this way
\be\label{Whitp1} \text W(\gamma) = 1 + \frac{1}{2i\pi}\oint_{\partial\disk}\frac{F''}{F'}\, \d z\, .\ee
Since $F'\not = 0$ everywhere in the disk by the immersion property, $F''/F'$ is holomorphic in $\disk$ and the contour integral over $\partial\disk$ vanishes.

Another interesting proof uses the discretization. An immersion $F:\disk\rightarrow\mathbb C$ being given, we can always assume, even if we are dealing with the positive or negative curvature models, that the target space is endowed with the flat Euclidean metric. We can then use the discretization in terms of which the curve $\gamma$ becomes a polygon on the square lattice. It is clear that the total winding angle of the tangent vector is $\frac{\pi}{2}(v_{\text L}- v_{\text R})$; the Whitney index is this divided by $2\pi$ and is thus equal to one by \eqref{EulergenflatJTdisk}.
\end{proof}
\begin{corollary} Any self-overlapping curve has an even number of self-intersections.
\end{corollary}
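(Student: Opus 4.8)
The plan is to combine the preceding proposition, which fixes $\text W(\gamma)=+1$ for any self-overlapping curve, with a parity relation between the Whitney index and the number of self-intersections. Write $d$ for the number of transverse double points of the generic curve $\gamma$ (these are exactly the self-intersections being counted). The key input is Whitney's formula, which expresses the rotation number of a generic closed plane curve as $\text W(\gamma)=\pm 1+\sum_i\epsilon_i$, the sum running over the double points with signs $\epsilon_i\in\{\pm 1\}$ fixed by an outermost base-point convention. Since each $\epsilon_i$ is odd and $\pm 1\equiv 1\pmod 2$, all the sign and base-point ambiguities wash out upon reduction modulo two, leaving the clean congruence

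\be\label{parityWhitney} \text W(\gamma)\equiv 1+d \pmod 2\, . \ee

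Feeding in $\text W(\gamma)=1$ from the proposition gives $d\equiv 0\pmod 2$, that is, $d$ is even.

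I would also record the conceptual reason behind \eqref{parityWhitney}, which doubles as an independent proof: the parity of $d$ is a regular homotopy invariant. By Whitney's classification theorem, already invoked above, $\gamma$ is regularly homotopic to the standard once-wound circle, which has $d=0$. It then suffices to check that $d\bmod 2$ cannot change along a generic regular homotopy. The only codimension-one degenerations encountered are self-tangencies, at which a pair of double points is created or annihilated so that $d$ jumps by $\pm 2$, and triple points, at which $d$ is unchanged. A lone double point can never be born or die in isolation, since that would require the birth of a cusp, altering the rotation number and hence being forbidden along a path that stays inside the space of immersions. Thus $d\bmod 2$ is constant along the homotopy and equals its value $0$ on the circle.

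The only genuinely delicate point is the justification of the parity relation itself, and it is mild either way. Citing Whitney's formula reduces the matter to bookkeeping: one must confirm that the signed crossing count reduces to $d$ modulo two independently of the base point and of the orientation conventions for the $\epsilon_i$, which it manifestly does once the signs are reduced mod two. Arguing via regular homotopy instead shifts the work to making the homotopy to the round circle transverse to the discriminant in the space of immersions and enumerating the codimension-one strata, so that double points are seen to appear only in pairs. I would therefore present the mod-two reduction of Whitney's formula as the main line, and relegate the regular homotopy argument to a remark, since it explains why the even parity is \emph{forced} rather than accidental.
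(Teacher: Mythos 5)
Your proof is correct, and in fact it contains the paper's own argument verbatim as your closing ``remark'': the paper proves the corollary precisely by observing that the number of double points mod 2 is a regular homotopy invariant (self-tangencies create or destroy double points in pairs, triple points leave the count unchanged), invoking the Whitney classification to deform any curve with $\text{W}(\gamma)=1$ to the round circle, and reading off parity zero there. Your chosen main line --- the mod-two reduction of Whitney's signed formula $\text{W}(\gamma)=\pm 1+\sum_i\epsilon_i$ --- is a genuinely different, and arguably cleaner, justification of the same parity relation: it makes the congruence $\text{W}(\gamma)\equiv 1+d\pmod 2$ explicit for \emph{every} Whitney index at once, rather than only for the class of the embedded circle, and it sidesteps the (mild but real) transversality bookkeeping needed to put the regular homotopy in general position with respect to the discriminant. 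The trade-off is that it imports Whitney's formula as a black box, whereas the homotopy argument is self-contained given the classification theorem the paper has already invoked for the preceding proposition. Either presentation is acceptable; your instinct to lead with the formula and relegate the homotopy argument to a remark is reasonable, though it inverts the paper's emphasis.
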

\begin{proof} This comes from the fact that the Whitney index and the number of self-intersections have opposite parities. Indeed, the number of self-intersections is always preserved modulo 2 by a regular homotopy. Since Whitney theorem ensures that any self-overlapping curve, which must have $\text W(\gamma) = 1$, is regular-homotopic to a circle, and since a circle has no self-intersection, we conclude.
\end{proof} 

We immediately get an interesting generalization to arbitrary topologies. Imagine we have an immersion $F:\mathscr M_{g,b}\rightarrow \mathbb R^{2}$ from a surface of genus $g$ with $b$ circle boundaries into the plane. Let $\gamma_{1},\ldots,\gamma_{b}$ be the images of the boundary components of $\mathscr M_{h,b}$ by $F$. Then \eqref{EulergenflatJT} yields
\be\label{Whitneygen} \sum_{i=1}^{b}\text W(\gamma_{i}) = 2-2h-b\, .\ee

\subsubsection{\label{overlapSec}Number of overlaps}

\begin{figure}
\centerline{\includegraphics[width=6in]{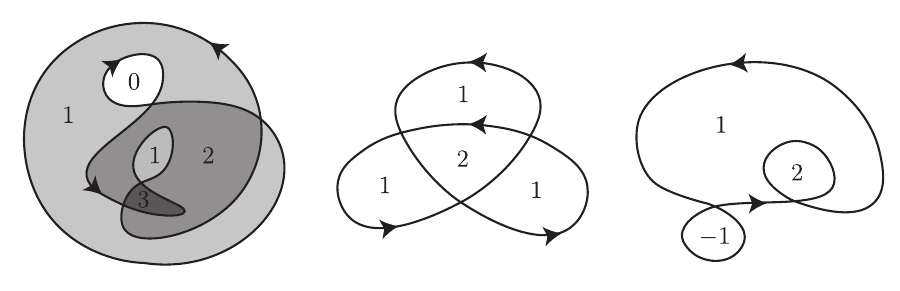}}
\caption{\label{overlapFig} Three examples of closed curves $\gamma$ for which we have indicated the winding numbers $\nu_{\gamma}$ in each bounded simply connected components of $\mathbb C\backslash\im\gamma$. The curve on the left is self-overlapping. It is a slightly more complicated version of the example in Fig.\ \ref{deformFig}. The curve in the center satisfies the constraints $\nu_{\gamma}(P)\geq 0$ everywhere but has an odd number of self-intersections (and Whitney index 2) and thus cannot be self-overlapping. It is actually an interior boundary, see Section \ref{BlankSec}. The curve on the right has Whitney index one, but there is a region for which $\nu_{\gamma}<0$ and thus it is not self-overlapping (nor an interior boundary).}
\end{figure}

Consider an arbitrary point $P$ in $\mathbb C\backslash\gamma$, with complex coordinate $w_{P}$. The winding number $\nu_{\gamma}(P)$ of $\gamma$ with respect to $P$ is defined in the usual way to be the winding number of the ray vector from $P$ to $\gamma$ when the curve is travelled all over once in the anti-clockwise direction,
\be\label{windingP} \nu_{\gamma}(P) = \frac{1}{2i\pi}\int_{0}^{2\pi}\frac{\gamma'}{\gamma - w_{P}}\, \d\theta\, .\ee
The set
\be\label{disunion} \mathbb C\backslash\im\gamma = \bigl(\cup_{i}\Omega_{i}\bigr)\cup \Omega_{\infty}\ee
is the disjoint union of simply connected open sets, which are all bounded except $\Omega_{\infty}$. The winding $\nu_{\gamma}(P)$ is integer-valued and constant in each components $\Omega_{i}$. Clearly, it is zero in the unbounded component.

If $\gamma$ is a self-overlapping curve, it bounds a distorted disk which may overlap with itself. The number of layers is a constant in each open set $\Omega_{i}$. We denote it by $\nu_{i}$. Clearly, this number is always positive and is zero in the unbounded component.

\begin{proposition}\label{windprop} Let $\gamma$ be self-overlapping. Then the number of layers $\nu_{i}$ in a bounded simply connected component $\Omega_{i}$ of $\mathbb C\backslash\im\gamma$ is equal to the winding number $\nu_{\gamma}(P)$ of $\gamma$ with respect to any $P\in\Omega_{i}$. In particular, $\nu_{\gamma}(P)\geq 0$ for any $P$.
\end{proposition}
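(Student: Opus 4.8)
The plan is to use the holomorphic representative $F$ of the immersion, whose existence is established in Section~\ref{immersionSec}, so that $\gamma(\theta)=F(e^{i\theta})$ as in \eqref{gamholoF}, and then to recognize the winding-number integral \eqref{windingP} as an instance of the argument principle. First I would substitute $\gamma(\theta)=F(e^{i\theta})$ into \eqref{windingP}. Setting $z=e^{i\theta}$, so that $\d z=i e^{i\theta}\d\theta$ and consequently $\gamma'(\theta)\,\d\theta=F'(z)\,\d z$ while $\gamma(\theta)-w_{P}=F(z)-w_{P}$, the defining integral turns into a contour integral over $\partial\disk$:

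\be
\nu_{\gamma}(P)=\frac{1}{2i\pi}\oint_{\partial\disk}\frac{F'(z)}{F(z)-w_{P}}\,\d z\,.
\ee

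This is precisely the logarithmic-derivative integral appearing in the argument principle applied to the holomorphic function $F(z)-w_{P}$ on the disk.

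The next step is to invoke the argument principle. Since $P\in\Omega_{i}$ lies off the curve, $w_{P}\notin\im\gamma=F(\partial\disk)$, so $F(z)-w_{P}$ has no zeros on $\partial\disk$ and the contour integral counts the number of zeros of $F(z)-w_{P}$ inside $\disk$, each weighted by its multiplicity. Because $F$ is an immersion we have $F'(z)\neq 0$ everywhere on $\disk$, so every zero of $F(z)-w_{P}$ is simple. Hence the integral equals the plain number of solutions $z\in\disk$ of $F(z)=w_{P}$, that is, the cardinality of $F^{-1}(P)$.

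The final step is to identify this cardinality with the number of layers $\nu_{i}$. Each preimage $z_{0}\in F^{-1}(P)$ is an interior point at which $F$, being an orientation-preserving immersion, is a local diffeomorphism onto a neighbourhood of $P$; the corresponding sheet of the distorted disk $F(\disk)$ covers $\Omega_{i}$ exactly once near $P$, and conversely every layer over $\Omega_{i}$ arises from such a preimage. Thus $|F^{-1}(P)|=\nu_{i}$, giving $\nu_{\gamma}(P)=\nu_{i}$; being a count of points, this is manifestly nonnegative, so $\nu_{\gamma}(P)\geq 0$.

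I expect the main obstacle to be the passage from the contour integral to the enumeration of interior preimages, which presupposes that $F$ is regular enough on $\bar\disk$ for the boundary integral to be literally the argument-principle integral and that $w_{P}$ is a regular value lying off $\gamma$. This is unproblematic here because, as emphasized just before the statement, we work with piecewise-smooth curves prior to any continuum limit, so $F$ extends smoothly to the boundary. A purely topological variant, in which one replaces the argument principle by the Brouwer degree of $F|_{\partial\disk}$ about $P$ and uses $\deg_{z_{0}}F=+1$ at each preimage (by orientation-preservation), yields the same conclusion and makes transparent why the result is insensitive to the target-space metric, as noted before the proposition.
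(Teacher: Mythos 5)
Your proof is correct and follows essentially the same route as the paper: rewrite $\nu_{\gamma}(P)$ as the contour integral $\frac{1}{2i\pi}\oint_{\partial\disk}F'(z)/(F(z)-w_{P})\,\d z$ and count the (simple, since $F'\neq 0$) solutions of $F(z)=w_{P}$ in $\disk$, identifying them with the layers over $\Omega_{i}$. The extra remarks on regularity at the boundary and the Brouwer-degree variant are sensible but not needed beyond what the paper already does.
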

\begin{proof} Using \eqref{gamholoF}, the integral \eqref{windingP} defining the winding number $\nu_{\gamma}(P)$ can be rewritten as
\be\label{nugamcont} \nu_{\gamma}(P) = \frac{1}{2i\pi}\oint_{\partial\disk}\frac{F'(z)}{F(z)-w_{P}}\,\d z\, .\ee
The contour integral picks a contribution 1 from each pole of the integrand, i.e.\ from each solution of $F(z)=w_{P}$ for $z\in\disk$. The number of solutions is the number of preimages of $P$ under $F$, which is precisely the number of layers at $P$. 
\end{proof}
This result is particularly interesting because the criterion $\nu_{\gamma}(P)\geq 0$ that any self-overlapping curve must satisfy is easy to check. It also yields a very simple way of counting the number of layers at any point without having to understand the details of the structure of the distorted disk bounded by the curve. This is actually quite non-trivial, as will become clear in the next subsection \ref{MilnorSec}.
\begin{corollary}\label{Areacorollary} Let $g$ be a constant curvature metric on $\disk$. Then the area $A[g]$ of $\disk$ with respect to $g$ depends only on the shape of the boundary of the associated distorted disk.
\end{corollary}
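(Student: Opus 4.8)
The plan is to write the area as the integral over the source disk of the pull-back of the target area form and then to apply the change-of-variables formula for an orientation-preserving immersion, which converts it into an integral over the target space weighted by the number of preimages. First I would use the results of Section \ref{immersionSec} to pick a holomorphic representative $F\in\himm(\disk)$ for the class of $g$, so that $g=F^{*}\delta$, with $\delta$ the Euclidean, spherical or hyperbolic metric according to the curvature. Since the Riemannian volume form of a pull-back metric is the pull-back of the target volume form, the area becomes
\be A[g] = \int_{\disk}\sqrt{g}\,\d^{2}x = \int_{\disk}F^{*}\omega_{\delta}\, ,\ee
where $\omega_{\delta}$ is the target area form (in the flat case this is simply $\int_{\disk}|F'(z)|^{2}\,\d^{2}x$). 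Because $F$ is holomorphic with $F'\neq 0$, it is an orientation-preserving local diffeomorphism, so its Jacobian is strictly positive and no sign or absolute-value subtleties arise.

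Second, I would invoke the degree-counting (area) formula for this immersion. Writing $N_{F}(w)=\#\{z\in\disk\mid F(z)=w\}$ for the number of preimages of a target point $w$, one has
\be A[g] = \int N_{F}(w)\,\d A_{\delta}(w)\, ,\ee
with $\d A_{\delta}$ the target area measure. The function $N_{F}$ is constant on each bounded component $\Omega_{i}$ of the complement of $\im\gamma$, where $\gamma=F_{|\partial\disk}$, and its value there is precisely the number of layers $\nu_{i}$ of the distorted disk over $\Omega_{i}$; the image curve $\im\gamma$ is a finite union of smooth arcs and hence is $\d A_{\delta}$-negligible, so it does not contribute. Therefore
\be A[g] = \sum_{i}\nu_{i}\,\mathrm{Area}_{\delta}(\Omega_{i})\, .\ee
By Proposition \ref{windprop}, the number of layers $\nu_{i}$ equals the winding number $\nu_{\gamma}(P)$ of the boundary curve for any $P\in\Omega_{i}$. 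Every ingredient on the right-hand side --- the decomposition of the complement of $\im\gamma$ into the regions $\Omega_{i}$, their target areas $\mathrm{Area}_{\delta}(\Omega_{i})$, and the winding numbers $\nu_{\gamma}$ --- is manifestly determined by the shape of $\gamma$ alone and makes no reference to the interior immersion $F$. This proves the claim, and in particular shows that two physically inequivalent metrics sharing the same boundary curve, as can happen when the multiplicity exceeds one (see Section \ref{MilnorSec}), necessarily have equal area.

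The only genuinely technical point, and the step I expect to require the most care, is the justification of the multiplicity change-of-variables formula for the non-injective immersion $F$, together with the verification that the fibres are finite and that $\im\gamma$ is negligible. For a non-constant holomorphic $F$ with non-vanishing derivative that extends piecewise-smoothly to $\partial\disk$ these are standard: finiteness and discreteness of the fibres follow from the open mapping property, and the stated formula is just the positive-Jacobian, degree-counting version of the transformation rule. The essential geometric input --- the identification of the fibre count with the winding number of $\gamma$ --- has already been supplied by Proposition \ref{windprop}, so the remaining work is genuinely routine.
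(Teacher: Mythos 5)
Your proof is correct and follows essentially the same route as the paper: the paper's proof consists precisely of the identity $A[g]=\sum_{i}\nu_{i}A[\Omega_{i}]$ together with Proposition \ref{windprop} identifying the layer numbers $\nu_{i}$ with the winding numbers of $\gamma$, which is exactly what your change-of-variables argument establishes. You merely make explicit the multiplicity (degree-counting) formula that the paper leaves implicit, which is a fair and routine filling-in rather than a different approach.
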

\begin{proof}
Writing as usual $\mathbb C\backslash\im\gamma = (\cup_{i}\Omega_{i})\cup\Omega_{\infty}$, we have
\be\label{areanuform} A[g] = \sum_{i}\nu_{i} A[\Omega_{i}]\, ,\ee
where $\nu_{i}$ is the number of layers of the distorted disk covering $\Omega_{i}$ and $A[\Omega_{i}]$ is the area of the component $\Omega_{i}$ computed with the relevant target space metric, which is given by \eqref{deltaE}, \eqref{deltaStwo} or \eqref{deltaHtwo} depending on whether we're considering the case of zero, positive or negative curvature.
\end{proof}
Three simple examples, with the winding numbers in each components explicitly indicated, are depicted in Fig.\ \ref{overlapFig}.

\noindent\emph{Remark 1}: a simple and convenient way to determine the winding around a point $P$ is to draw a generic oriented half-line from $P$ to infinity and to count the number of crossings of this half-line with the curve, with a plus sign if the curve crosses from right to left and a minus sign if the curve crosses from left to right.

\noindent\emph{Remark 2}: if $\gamma$ is an arbitrary closed curve, there is no obvious notion of ``area enclosed by $\gamma$.'' But we can always write $\mathbb C\backslash\im\gamma = (\cup_{i}\Omega_{i})\cup\Omega_{\infty}$ and define the winding numbers $\nu_{i}\in\mathbb Z$ with respect to the bounded simply connected components $\Omega_{i}$. This yields three natural notions of area,
\be\label{notionsofarea} A_{\text{L\'evy}}[\gamma] = \sum_{i}\nu_{i}A[\Omega_{i}] \, ,\quad A_{\text{wind.}}[\gamma] =  \sum_{i}|\nu_{i}| A[\Omega_{i}] \, ,\quad 
A_{\text{arith.}}[\gamma]= \sum_{i}A[\Omega_{i}]\ee
called the L\'evy (or algebraic, or stochastic), winding and arithmetic areas respectively. The algebraic area was studied long ago by Paul L\'evy in the context of Brownian motion \cite{Levy}. It plays an essential r\^ole in the definition of the Kitaev-Suh model \cite{KitaevSuh} and its generalizations, see Section \ref{randompathrev} and \cite{RDFollowup1}.

\subsubsection{\label{BlankSec}Blank cuts, Blank words, interior boundaries}

\begin{figure}
\centerline{\includegraphics[width=6in]{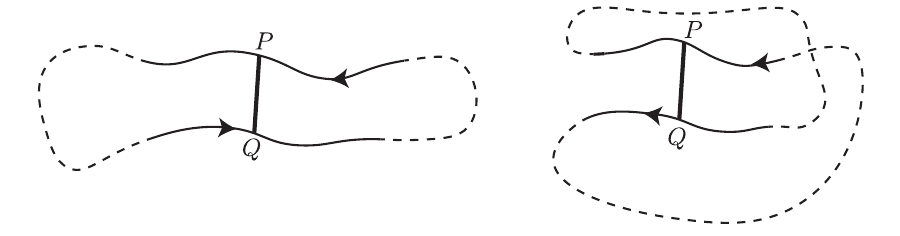}}
\caption{\label{cutFig} A Blank cut (left inset) and a non-Blank cut (right inset). The cut-path is in thick line.}
\end{figure}

A cut of an oriented closed curve $\gamma$ is a choice of two distinct points $P$ and $Q$ on $\gamma$ together with a cut-path from $P$ to $Q$ that crosses $\gamma$ transversally at $P$ and $Q$. To a cut of $\gamma$ is associated a decomposition of $\gamma$ into two closed oriented curves, one obtained by going from $P$ to $Q$ along $\gamma$ and then from $Q$ to $P$ along the cut-path, the other obtained by going from $Q$ to $P$ along $\gamma$ and then from $P$ to $Q$ along the cut-path. 

There are two types of cut. A \emph{Blank cut} is such that the curve crosses the cut-path from left to right at one of its endpoint and from right to left at the other endpoint. The two possibilities and the associated decomposition is shown in Fig.\ \ref{cutFig}. If $\gamma$ decomposes in $\gamma_{1}$ and $\gamma_{2}$, the Whitney indices are such that $\text W(\gamma_{1}) + \text W(\gamma_{2}) = \text W(\gamma) + 1$ for a Blank cut and $\text W(\gamma_{1}) + \text W(\gamma_{2}) = \text W(\gamma)$ otherwise.

It can be shown that a curve is self-overlapping if and only if a decomposition in terms of Blank cuts having the following properties exists. First, using a Blank cut, $\gamma=\gamma_{0}$ is decomposed into a simple (i.e.\ non self-intersecting) curve $\tilde\gamma_{0}$ and a new closed curve $\gamma_{1}$. If $\gamma_{1}$ is simple, the procedure stops. Otherwise, we use a new Blank cut to decompose $\gamma_{1}$ into a new simple curve $\tilde\gamma_{1}$ and $\gamma_{2}$. And we keep going in this way until, in the last step, the two curves produced by the decomposition are simple.

The intuition behind this approach is that the interiors of the simple curves $\tilde\gamma_{i}$ are pieces of the distorted disk that do not overlap, from which the full distorted disk can be  reconstructed by gluing along the cuts.

This procedure can be made systematic in the following way. In each bounded connected component $\Omega_{i}$ of $\mathbb C\backslash\im\gamma$, $1\leq i\leq m$, pick a point $P_{i}$ and draw an oriented half-line $L_{i}$ from $P_{i}$ to infinity in a generic way. Then build a word out of $m$ letters $a_{i}$ as follows. Go all over $\gamma$ starting from an arbitrary base point. Each time you cross a half-line $L_{i}$ from right to left, add the ``positive'' letter $a_{i}$ to the word. Each time you cross a half-line $L_{i}$ from left to right, add the ``negative'' letter $a_{i}^{-1}$ to the word. The result is called a Blank word. Since the base point on $\gamma$ is arbitrary, the Blank word is defined modulo cyclic permutations of the letters.

Note that the number of layers $\nu_{i}$ of the distorted disk on $\Omega_{i}$ is equal to the number of times the letter $a_{i}$ occurs minus the number of times $a_{i}^{-1}$ occurs in the Blank word.

A blank cut corresponds to a pair $(a_{i},a_{i}^{-1})$ in the word such that only positive letters appear between $a_{i}$ and $a_{i}^{-1}$ (or between $a_{i}^{-1}$ and $a_{i}$). A step in the Blank cut decomposition corresponds to ``reducing'' the word by removing $a_{i}$, $a_{i}^{-1}$ and the sequence of positive letters in between. The curve $\gamma$ is self-overlapping if and only if $\text W[\gamma]=1$ and the word can be reduced in such a way that at the end we are left with positive letters only. One can show that the conclusion does not depend on the set of generic points $P_{i}$ and half-lines $L_{i}$ we pick. The problem of recognizing a self-overlapping curve is thus reduced to a purely combinatorial analysis of the properties of the Blank word.

\begin{figure}
\centerline{\includegraphics[width=6in]{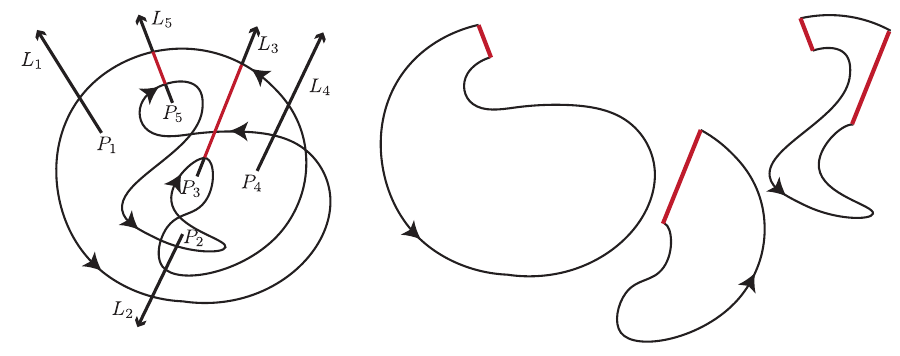}}
\caption{\label{BlankcutdecFig}Blank cut decomposition of a self-overlapping curve. The Blank word is $a_{5}a_{1}a_{2}a_{4}a_{3}a_{5}^{-1}a_{2}a_{3}^{-1}a_{2}a_{4}a_{3}$. It is reduced to $a_{2}$ by removing the sequences $a_{5}a_{1}a_{2}a_{4}a_{3}a_{5}^{-1}$ and then $a_{3}^{-1}a_{2}a_{4}a_{3}$. These sequences, together with the final word, represent the three simple curves depicted on the right. These simple curves bound pieces from which the distorted disk can be reconstructed, by gluing along the Blank cuts (outlined in red).}
\end{figure}

To illustrate the Blank word method, we have depicted in Fig.\ \ref{BlankcutdecFig} a Blank cut decomposition for the self-overlapping curve on the left of Fig.\ \ref{overlapFig}.

\begin{figure}
\centerline{\includegraphics[width=6in]{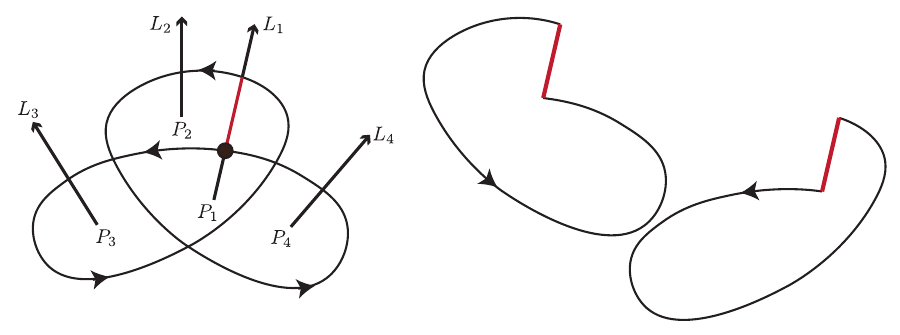}}
\caption{\label{interiorFig}An interior boundary with Blank word $a_{1}a_{2}a_{4}a_{1}a_{3}$. It can be reconstructed by gluing the interior of two simple curves, shown on the right, corresponding to the non-Blank cut $a_{1}a_{2}a_{4}a_{1}$. The associated map $F:\disk\rightarrow\mathbb C$ has a branch point, indicated by a black bullet.}
\end{figure}

An interesting generalization corresponds to the case when a decomposition in terms of simple curves is possible, but only if one allows both Blank and non-Blank cuts. This amounts to allow pairings of $a_{i}$ with itself (or of $a_{i}^{-1}$ with itself) in the Blank word. An example is depicted in Fig.\ \ref{interiorFig}. Curves with this property are called \emph{interior boundaries} in the mathematical literature. Note that the configurations depicted in the second and third insets of Fig.\ \ref{forbidFig} are interior boundaries, their boundary curve being of the type of the example depicted in the center of Fig.\ \ref{overlapFig}. Interior boundaries clearly satisfy the positivity constraint $\nu_{\gamma}(P)\geq 0$ on the winding numbers, but their Whitney index may be an arbitrary positive integer. One can show that interior boundaries of Whitney index $k$ may be splitted by removing recursively $k$ self-overlapping pieces corresponding to closed arcs of the curve starting and ending at a  self-intersection point. In particular, interior boundaries that have $\text W=1$ are self-overlapping.

The defining property of a self-overlapping curve $\gamma$ is that there exists an immersion $F:\disk\rightarrow\mathbb C$ such that $F_{|\partial\disk} = \gamma$. In the case of an interior boundary, one still has a map $F:\disk\rightarrow\mathbb C$ such that $F_{|\partial\disk} = \gamma$, but $F$ is only constrained to be open (the image of an open set is an open set), light (the preimage of a point is a collection of distinct points) and orientation-preserving. Modulo a disk diffeomorphism, one can always assume that $F$ is analytic. For a self-overlapping curve, $F'\not = 0$ everywhere, but for an interior boundary $F$ may have a finite number of branch points around which it behaves as $F\sim (z-z_{0})^{r}$ for $r\geq 2$.\footnote{See e.g.\ \cite{marx} for details.} Interior boundaries thus do not correspond to smooth metrics on the disk.

\subsection{  \label{MilnorSec}The Milnor curve and multiplicity}

\begin{figure}
\centerline{\includegraphics[width=6in]{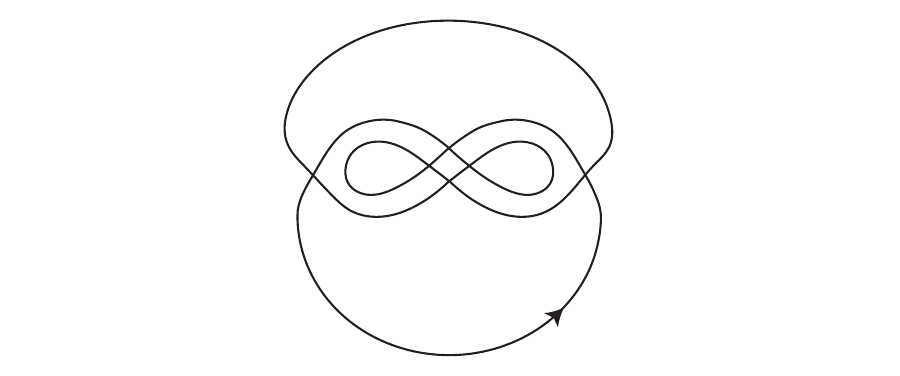}}
\caption{\label{figMilnor}The Milnor curve.}
\end{figure}

When the boundary curve is a self-avoiding loop, it separates the plane into two regions, the interior and the exterior. The associated distorted disk is then easily and uniquely identified with the interior region. When the boundary curve has self-intersections, it is no longer obvious, in general, to decide whether it bounds a distorted disk or not, and if it does, to reconstruct the distorted disk explicitly. A systematic procedure to solve this problem, using Blank cuts and words, has been outlined in the preceding subsection.

Let us now consider the closed curve depicted in Fig.\ \ref{figMilnor}. We call this curve the Milnor curve.\footnote{It seems to be acknowledged that this curve was first found by Milnor, even though his discovery was never published.} This is a typical example where a simple visual inspection is not enough to immediately decide whether it bounds a distorted disk or not. But we can straightforwardly compute a Blank word and look at its possible reductions. We let the reader do this very instructive exercice carefully. 

\begin{figure}
\centerline{\includegraphics[width=6in]{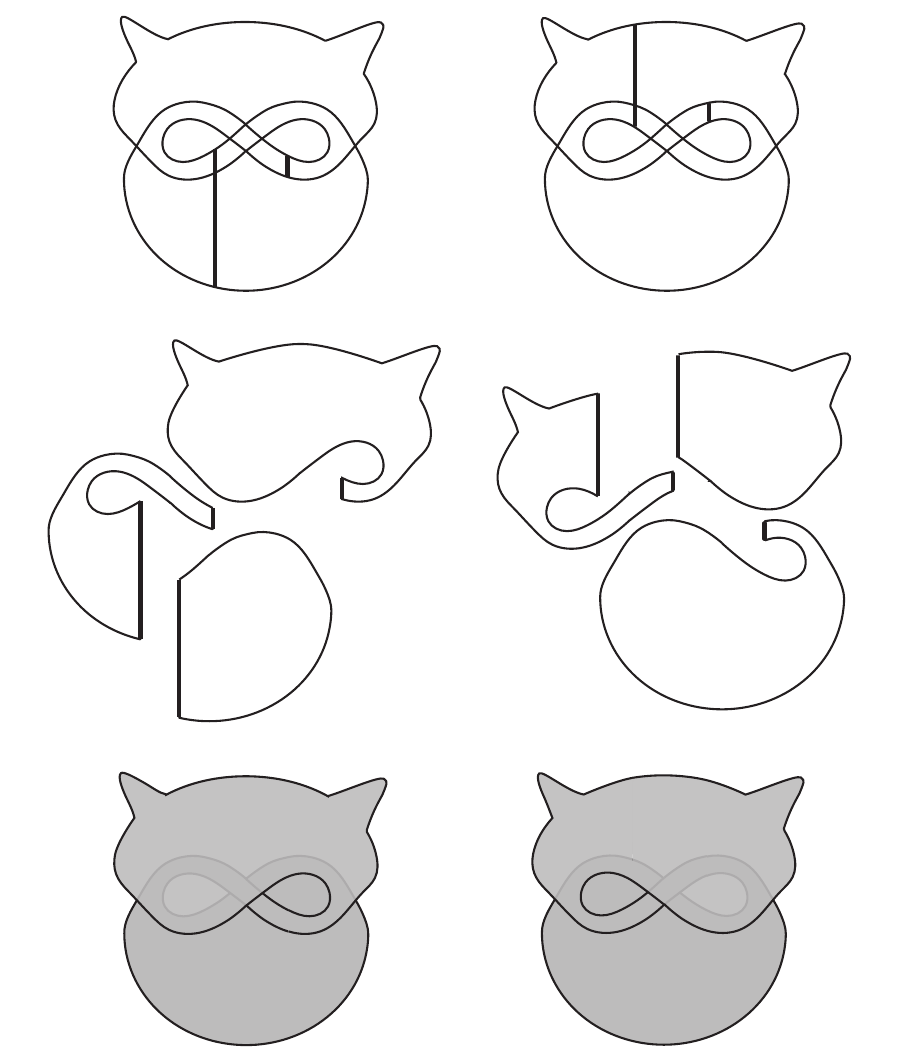}}
\caption{\label{figMilnor2}The Milnor curve bounds two distinct distorted disks, depicted on the left and on the right. Upper insets: the Milnor curves with two distinct sets of Blank cuts. Central insets: the resulting simple curves. Lower insets: reconstruction of the disks by gluing the simple pieces along the cuts.}
\end{figure}

The result is extremely interesting: one finds that the curve is self-overlapping, but \emph{it actually bounds two distinct distorted disks!}

The two disks correspond to two distinct reductions of a Blank word, using distinct set of cuts. This is depicted in Fig.\ \ref{figMilnor2}. It is important to realize that the pieces making up the two distorted disks are genuinely different and could not be made identical by cutting in different ways. Mathematically, this means that the two  immersions $F_{i}:\disk\rightarrow\mathbb C$, $i=1$ or 2, corresponding to the two distorted disks, are inequivalent in the sense that there is no diffeomorphism $\phi$ of the disk such that $F_{1} = F_{2}\circ\phi$. In particular, and this is the most relevant point for us, the two disk metrics $g_{1}=F_{1}^{*}\delta$ and $g_{2}=F_{2}^{*}\delta$ are physically distinguishable.

This fundamental example shows that the correspondence between self-overlapping curves and disk metrics is not one-to-one. In JT gravity language, this is equivalent to saying that \emph{the degrees of freedom of the theory (the space-time metrics) are not in one-to-one correspondence with the shape of the allowed (self-overlapping) boundaries}.

\begin{definition}\label{multDef}
The number of distorted disks bounded by a closed curve $\gamma$ is called the multiplicity of $\gamma$ and is denoted by $\mu_{\gamma}$.
\end{definition}
The integer $\mu_{\gamma}$ can always be determined systematically by counting the number of distinct reductions of a Blank word of the curve or, equivalently, by running the algorithm of Shor and Van Wyk.

Let us mention two general and useful qualitative features.

i) If $\mu_{\gamma}>1$, then one can show that there must exist a simply connected component of $\mathbb C\backslash\im\gamma$ for which the number of layers is $\nu\geq 3$. This result was conjectured in \cite{Shoralgo} and proven by Graver and Cargo \cite{immersion}; see also \cite{benzene}. In the case of the Milnor curve, Fig.\ \ref{figMilnor} or \ref{figMilnor2}, the small central connected component of $\mathbb C\backslash\im\gamma$, roughly square in shape, is covered three times by any of the two associated distorted disks.

ii) As explained in Section \ref{overlapSec}, the number of layers $\nu_{i}$ in each simply connected components of $\mathbb C\backslash\im\gamma$ is entirely determined by the shape of the boundary curve only. The sets of integers $\nu_{i}$ thus match for all the distinct distorted disks bounded by a given curve. Suppose that $\mu_{\gamma}\geq 2$ and denote by $g_{1},\ldots,g_{\mu_{\gamma}}$ the corresponding  metrics. It then follows from Corollary \ref{Areacorollary}, or equivalently from Eq.\ \eqref{areanuform}, that the areas of the disk calculated with these different metrics must all be equal,
\be\label{Areaid} A[g_{1}]=A[g_{2}]=\cdots = A[g_{\mu_{\gamma}}]\, .\ee
This is non-trivial because the metrics $g_{i}$ are not equivalent under the action of a diffeomorphism.

\noindent\emph{Remarks}

i) The Milnor curve can be generalized in a rather straightforward way to construct examples that have an arbitrary integer multiplicity, see Fig.\ 3 of \cite{Bourbaki} and the discussion in Section \ref{continuumSec}.

ii) In the case of interior boundaries, there will typically be a moduli space associated with a given curve, since the positions of the branching points depend on the particular choice of Blank cuts. An example of this is given by the configurations in the second and third insets of Fig.\ \ref{forbidFig}.

\subsection{\label{Genfun2Sec}Generating functions and generalizations}

\subsubsection{Generating functions}

We now have all the ingredients we need to define the JT gravity models from the point of view of the boundary curve. We focus here on the zero curvature case. The cases of negative and positive curvatures will be briefly discussed in Section \ref{posnegcurvesSec} and  \ref{nonzeroRSec}.

The generating (or partition) function is
\be\label{Zdefbdcurve} W(t,g) = \sum_{\gamma}\mu_{\gamma} t^{F(\gamma)}g^{v(\gamma)} = \sum_{n,p\geq 1} W_{2n,p}t^{p}g^{2n}\, .\ee
The sum is over all self-overlapping polygons, counted as explained in Section \ref{bdcodeSec}, with the additional multiplicity factor $\mu_{\gamma}$ defined in Section \ref{MilnorSec}. $F(\gamma)$ is the number of faces of the associated distorted disks, which depends only on $\gamma$, according to \eqref{Areaid}, and $v(\gamma)$ is the boundary length or equivalently the number of letters in the boundary code.

The definition \eqref{Zdefbdcurve} of the generating function has been carefully designed to be equivalent with the definition in Section \ref{Genfun1Sec}.

\begin{figure}
\centerline{\includegraphics[width=6in]{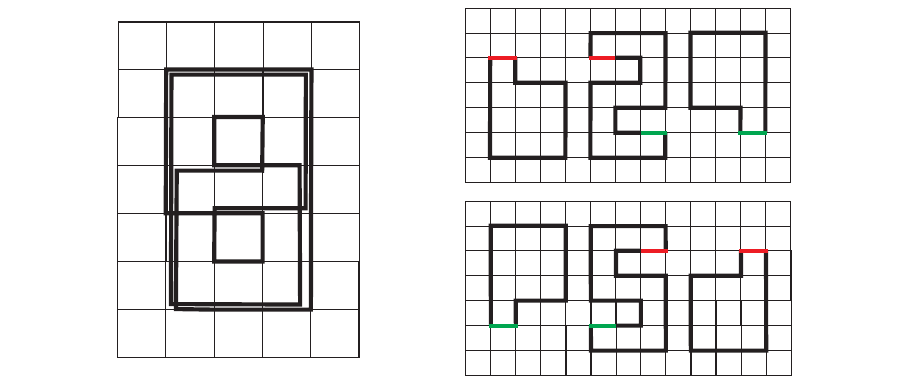}}
\caption{\label{figMilnormin}The smallest SOP with multiplicity two on a square lattice, with boundary code $\text{SSSSLSSLSSLSRRRRSLSSLSSLSSSSLSSLSSLSRRRRSLSSLSSL}$, $2n=48$ and $p=31$. This configuration is $\mathbb Z_{2}$-symmetric (invariant under a 180 degree rotation) and has multiplicity 2. It is thus counted $2\times 2 = 4$ times in the generating function. Similar Milnor configurations with no symmetry can be easily obtained, for instance by adding a face at the bottom, with new boundary code $\text{SSSSLSSLSSLSRRRRSLSSLSSLSSSSLSSLSSLSRRRRSLSSL}\textbf{RLLR}\text{L}$ (the added sequence is in boldface). Such a configuration must be counted 8 times in the generating function.}
\end{figure}

As already mentioned in \ref{Genfun1Sec}, at low orders, the generating function $W$ matches with the standard generating function for self-avoiding polygons. The counting starts to deviate at order $g^{18}t^{8}$, for which there exists a SOP that is not a SAP (see Fig.\ \ref{figlowdisk}). Moreover, the lowest order at which a SOP with higher multiplicity appears is $t^{31}g^{48}$.\footnote{We have not tried to prove this fact rigorously. For an analysis on the hexagonal lattice, see \cite{hexamin}.} This configuration is a Milnor-like polygon, depicted in Fig.\ \ref{figMilnormin}. 

The fact that higher multiplicities first appear at relatively high orders in the expansion of the generating function may convey the impression that higher multiplicity curves are very rare, but this is misleading. A simple argument given in Section \ref{continuumSec} shows that the number of SOPs of length $2n$ that have a higher multiplicity is exponentially large in $n$. We shall also see that, for a given SOP, the multiplicity index can also be exponentially large in $n$.\footnote{I would like to thank Peter Shor for pointing this out to me.}

The generating function \eqref{Zdefbdcurve} can be generalized to a more refined version in a natural way. Let $\mathscr W_{v_{\text L},v_{\text S},v_{\text R},p,\mu}$ be the number of SOPs such that the boundary code contains $v_{\text L}$, $v_{\text S}$ and $v_{\text R}$ letters L, S and R respectively, of total area $p$ and of multiplicity $\mu$. The generalized generating function is
\be\label{Zdefgeneralized} \mathscr W (t,g_{\text L},g_{\text S},g_{\text R},u) =  \sum_{\substack{p\geq 1,\, v_{\text L}\geq 4\\v_{\text S},v_{\text R},\mu\geq 0}} \mathscr W_{v_{\text L},v_{\text S},v_{\text R},p,\mu} t^{p}g_{\text L}^{v_{\text L}}g_{\text S}^{v_{\text S}}g_{\text R}^{v_{\text R}}u^{\mu}\, .\ee
Note that, due to the constraint \eqref{EulergenflatJT}, there is a trivial redundacy in the parameters.

\subsubsection{Open self-overlapping curves}

It is possible to define a natural notion of open self-overlapping curves, joining two distinct points $w_{1}$ and $w_{2}$ in $\mathbb C$. We consider a fixed oriented curve $\gamma_{21}$ from $w_{2}$ to $w_{1}$. An obvious possibility is to choose a straight line, which would correspond to ``chordal self-overlapping curves,'' but more generally $\gamma_{21}$ may be an arbitrary continuous curve from $w_{2}$ to $w_{1}$. We then say that an oriented open curve $\gamma$ from $w_{1}$ to $w_{2}$ is self-overlapping if and only if the closed curve $\gamma_{21}\cup\gamma$ is self-overlapping.

This notion is not directly relevant for the quantum JT gravity application that we study in this paper, but investigating random open self-overlapping curves may present some mathematical interest.

\subsubsection{Remarks on other topologies}

\begin{figure}
\centerline{\includegraphics[width=6in]{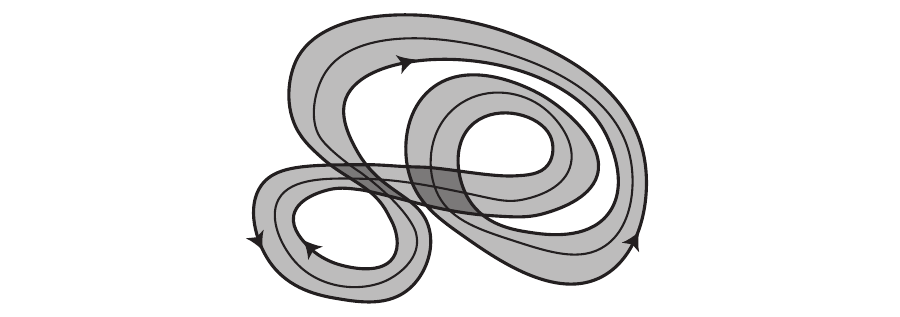}}
\caption{\label{cylFig}A distorted annulus obtained by the thickening of a non self-overlapping closed curve, which is depicted in thin blackline.}
\end{figure}

The concept of distorted disks explained in Section \ref{diskviewSec} can be straightforwardly generalized to the concept of distorted surfaces of arbitrary topologies, with genus $h$ and $b$ boundary components. In the continuum, distorted surfaces are described by immersions $F:\mathscr M_{h,b}\rightarrow\mathbb C$ that, by pullback, yield constant curvature metrics on $\mathscr M_{h,b}$.

For higher topologies, however, the point of view in terms of the boundary curves becomes extremely subtle, even more so than the disk case discussed above. The picture that emerges seems staggering. 

Even for the annulus, which seems to be the simplest case beyond the disk, the ideas developed for the disk are no longer applicable. For intance, one can readily see that boundary curves of arbitrary shapes are allowed, by ``thickening,'' see Fig.\ \ref{cylFig}. At a deeper level, it was shown in \cite{EppsteinMumford} that the problem of determining whether a curve bounds an arbitrary immersed surface or not is NP-complete, in sharp contrast with the case of the disk which is polynomial \cite{Shoralgo}.

The case of higher topologies will be briefly discussed again in Sections \ref{mmSec} and \ref{OutSec}.

\subsubsection{\label{posnegcurvesSec} The cases of non-zero curvature, light version}

The discretized formulation of flat JT gravity presented in Section \ref{diskviewSec} or above in Section \ref{bdviewSec} is based on the use of a regular tesselation of the flat Euclidean space, for instance the square lattice. This can be straightforwardly generalized to constant negative or positive curvatures, by using tessellations of hyperbolic space or the two-sphere, but these generalizations comes with interesting subtleties and limitations.

In negative curvature, there is an infinite number of regular tessellations $\{p,q\}$ made of regular $p$-gons and degree $q$ vertices. The edges and vertices associated with these tessellations form the so-called regular hyperbolic lattices. The only constraint that the integers $p$ and $q$ must satisfy is
\be\label{hyperbolictesscons} (p-2)(q-2)>4\, ,\ee
which follows from the formula for the curvature given in \eqref{Rpqform}. The hyperbolic lattices are commonly used to discretize/approximate hyperbolic space, see e.g.\ \cite{hyplatticeapp} for recent interesting applications. Random self-avoiding walks have also been studied on these lattices \cite{SAWhyper,SAWhyperball}. A natural approach to JT quantum gravity in negative curvature is thus to consider models of random self-overlapping polygons, counted with multiplicity, on the regular hyperbolic lattices. As will be explained in Section \ref{nonzeroRSec}, these models may have a very interesting physical interpretation, but they are not suitable to define JT gravity in a UV-complete way.

In positive curvature, regular tessellations correspond to the five Platonic solids. In particular, they have a finite number of tiles. Such discretizations are clearly inadequate to approximate JT gravity in any useful way. More general irregular tesselations of the two-sphere must be used to obtain an interesting model.

We refer to Section \ref{nonzeroRSec} for a more detailed discussion of the models in non-zero curvature.

\section{\label{continuumSec}On random self-overlapping curves}

The aim of this section is to provide a basic introduction to the study of random self-overlapping curves, a subject on which nothing seems to have ever been published. Our goal will be to outline some interesting expected properties of the model, emphasizing, in particular, the continuum limit. We focus on the flat theory but we also discuss, mainly at the qualitative level, the cases of negative and curvature. The microscopic properties are the same, whether the curvature in positive, zero or negative, but the macroscopic features will differ, even qualitatively, between the three cases.

We begin the section with two review subsections on some aspects of standard Liouville gravity and random curves models. The problem we are studying is subtly lying somewhere in between these two classes of models and having a synthetic presentation of some relevant features at hand is very useful (and very hard to find in the existing literature). Our goal is to put our analysis into an appropriate context, introduce some important concepts, refine our intuition and motivate some of the ideas and conjectures we will propose.  References include \cite{Liouvillesuccess2,FZZT,gravityreviews,Liourevref} on the Liouville side and \cite{Levy,Brownian1} and \cite{SAPcounting,randpathlit,SAPtextbook} for Brownian and self-avoiding curves respectively, as well as many additional references therein.

\subsection{\label{RevLSec}Review of Liouville gravity}

\subsubsection{\label{pureLiouSec}Pure Liouville gravity on the sphere}

\paragraph{Discretized formulation}

The standard two-dimensional gravity is a theory of random metrics on a two-dimensional surface of fixed topology.\footnote{One may also consider the possibility to sum over all the possible topologies, but this aspect is not discussed in the present paper.} It can be defined rigorously by using the ideas reviewed in Section \ref{dissurfrevSec}, starting from the set of discretized metrics obtained by quadrangulating the surface, and then taking an appropriate continuum limit.\footnote{The continuum limit is universal and does not depend on the type of polygons we use. We consider squares for simplicity.} The counting of the possible configurations was performed long ago by Tutte \cite{Tutte} by using combinatorial arguments. The counting can also be made elegantly by solving a quartic matrix model in the large $N$ limit \cite{BIPZ}.

The generating function
\be\label{WLioudef} \mathsf W^{(\text s)}(t) = \sum_{p\geq 0}\mathsf W^{(\text s)}_{p}t^{p}\ee
encodes the numbers $\mathsf W^{(\text s)}_{p}$ of quadrangulations of the sphere with $p$ squares.\footnote{The precise definition of $\mathsf W^{(\text s)}_{p}$ involves symmetry factors attached to each discretized surface, according to their automorphism group. This is similar to the discussion in Section \ref{symfactSec}, see also Section \ref{mmSec}. In the continuum limit, these symmetry factors play a trivial r\^ole.} In terms of a formal matrix integral over an $N\times N$ Hermitian matrix $X$, we have
\be\label{formalMM1} e^{N^{2}\mathsf W^{(\text s)}(t)} = \int\!\d X\, e^{-N(\frac{1}{2}\tr X^{2}-\frac{t}{4}\tr X^{4})}\, ,\ee
where it is understood that only planar diagrams are kept.\footnote{Note that we use here the ``dual'' point of view, as defined at the beginning of Section \ref{dissurfrevSec}, the matrix model vertex $\tr X^{4}$ being associated with a square of the sphere quadrangulation.} The series \eqref{WLioudef} must clearly diverge when $t=1$, since there is an infinite number of possible quadrangulations of the sphere. One finds
\be\label{WpspeTutte} \mathsf W_{p}^{(\text s)} = \frac{3^{p}(2p)!}{2 p(p+1)(p+2)p!^{2}}\,\cdotp\ee
When $p\rightarrow\infty$, which corresponds to large surfaces,
\be\label{WspasymnLiou} \mathsf W^{(\text s)}_{p} \underset{p\rightarrow\infty}{\sim} \frac{1}{2\sqrt{\pi}}\frac{12^{p}}{p^{7/2}}\, \cdotp\ee
This implies in particular that the series \eqref{WLioudef} has a strictly positive radius of convergence  $t_{*}$ such that
\be\label{tcritLiou1} \lim_{p\rightarrow\infty}\bigl(\mathsf W^{(\text s)}_{p}\bigr)^{1/p} = \frac{1}{t_{*}}=12\, .\ee

\paragraph{Continuum limit}

This $p\rightarrow\infty$ behaviour allows to define a continuum limit as follows. Consider the generating function at fixed area $W^{(\text s)}_{p}t^{p}$. Introducing the Planck length $\ell_{0}$, the area is 
\be\label{ArealPlanck} A=p \ell_{0}^{2}\, .\ee
The factor $t^{p}$ can be written as 
\be\label{tpcc} t^{p} = e^{p\ln t} = e^{-\frac{\La_{0}}{16\pi} A}\, ,\ee
where $\La_{0} = \frac{16\pi}{\ell_{0}^{2}} \ln\frac{1}{t}$ plays the r\^ole of a bare cosmological constant.\footnote{The factors of $16\pi$ are included to be consistent with the usual conventions for the cosmological constant used in the context of JT gravity, see e.g.\ Eq.\ \eqref{qJT1}.} A non-trivial continuum limit
\be\label{climit1} p\rightarrow\infty\, ,\quad \ell_{0}\rightarrow 0\, ,\quad A = p\ell_{0}^{2}\quad \text{fixed}\ee
is then obtained by adjusting the bare cosmological constant in such a way that
\be\label{climit2} t\rightarrow t_{*}^{-}\quad \text{and}\quad \La = \frac{16\pi}{\ell_{0}^{2}}\Bigl(1-\frac{t}{t_{*}}\Bigr)\geq 0\quad \text{is fixed.}\ee
The intuition behind this scaling is simple: when $t\rightarrow t_{*}$ from below, one can easily check that the average number of squares in the quadrangulations contributing to \eqref{WLioudef} diverges as $1/(t_{*}-t)$. The scaling thus ensures a finite macroscopic area. 

Note that the renormalized cosmological constant obtained is this way is always positive, because the limit has to be taken from below, $t<t_{*}$. There is no getting around this fact: the generating function \eqref{WLioudef} is ill-defined for $t>t_{*}$ because the number of quadrangulations of the sphere is infinite. As a consequence, one cannot define a continuum limit for which $t\rightarrow t_{*}$ from above. This is not surprising: for strictly negative $\La$, the classical action is not bounded from below and one expects the theory to be inconsistent.

Note also that the precise value of $t_{*}$ is irrelevant in the continuum limit. This is consistent with the fact that this value is not universal and depends for example on the type of polygons we use to discretize the surface. The only universal property of $t_{*}$ is that $0<t_{*}<1$. On the other hand, the exponent $7/2$ in \eqref{WspasymnLiou} is universal. The generating function at fixed area \eqref{WspasymnLiou} scales as\footnote{Here and below, the proportionality sign $\propto$ indicates that the generating functions in the continuum limit are fixed up to a trivial, parameter-independent, multiplicative constant, that may involve powers of the cut-off $\ell_{0}$.}
\be\label{Wsclsphere} \mathsf W^{(\text s)}(A) \propto A^{-7/2}e^{-\frac{\La}{16\pi} A}\, .\ee
This is a famous result, which can be generalized to random closed surfaces of arbitrary genus $h$,
\be\label{WgenclLiou} \mathsf W^{(h)}(A) \propto A^{-1 + (\gamma_{\text{str}} -2)(1-h)}e^{-\frac{\La}{16\pi} A}\, .\ee
The exponent $\gamma_{\text{str}}$, called the string susceptibility, is $\gamma_{\text{str}}=-1/2$ in the case of pure gravity and takes the more general KPZ form
\be\label{gammaKPZ} \gamma_{\text{str}} = \frac{1}{12}\Bigl(c-1-\sqrt{(c-1)(c-25)}\Bigr)\ee
when two-dimensional gravity is coupled to an arbitrary matter CFT of central charge $c < 1$. There is a so-called barrier at $c=1$, beyond which Eq.\ \eqref{gammaKPZ} predicts a non-sensical imaginary string susceptibility. At $c=1$, logarithmic corrections to the scaling law \eqref{WgenclLiou} are believed to be present \cite{barrierrefs}, with
\be\label{barrierscaling} W^(\text s)(A)\propto A^{-3}(\ln A)^{-2}e^{-\frac{\La}{16\pi}A}\, .\ee

\subsubsection{\label{LiouclimitSec}Pure Liouville gravity on the disk}

To come closer in spirit to the problem we are studying in the present paper, let us make one more step in the framework of Liouville gravity and review the case of the disk topology. On the disk, we can define in Liouville gravity the exact analogues of the JT gravity generating functions \eqref{Zdef1} and \eqref{Zfixed}. To avoid confusion and for the sake of consistency, we continue to use the notation $W$ for JT gravity and use instead $\mathsf W$ for Liouville gravity. The difference between $\mathsf W$ and $W$ is that, in the first case, we sum over all possible discretized disk metrics, whereas in the second case we restrict ourselves to the metrics of zero curvature.\footnote{Or more generally to constant curvature, see Section \ref{posnegcurvesSec} and \ref{nonzeroRSec}.} 

Let us mention right away that thinking that JT has to be simpler than Liouville because we have to take into account less metrics is na\"\i ve and probably misleading: problems with constraints are typically more difficult.\footnote{A famous illustration of this ``principle'' is the Ising model. The exact Onsager solution of the model on a fixed flat lattice is much more difficult than the solution on a random lattice \cite{Isingrandom}.}

\paragraph{Discretized formulation}

From direct combinatorial counting \`a la Tutte or from the solution of the matrix model, the Liouville disk generating function can be computed exactly, with only slightly more effort than in the case of the sphere. The number of quadrangulations of the disk of area $p$ and boundary length $2n$ is found to be
\be\label{WpndiskL} \mathsf W_{2n,p} = 3^{p}\frac{(2n-1)!(2p+n-1)!}{n!(n-1)!(p+n+1)!p!}\,\cdotp\ee
As a consistency check, note that $\mathsf W_{4,p-1}$ counts the number of quadrangulations of the sphere with $p-1$ unmarked and one marqued squares, the marked square corresponding to the boundary of length four. This implies that $\mathsf W_{4,p-1}=p \mathsf W_{p}^{(\text s)}$, where $\mathsf W_{p}^{(\text s)}$ is given in Eq.\ \eqref{WpspeTutte}. This is indeed satisfied.

The generating function at fixed boundary length is found to be
\be\label{W2ndL}\mathsf W_{2n}(t) = \sum_{p\geq 1}\mathsf W_{2n,p}t^{p} = \frac{(2n-1)!}{2^{2n-1}n!(n+2)!}a^{2n}(t)\bigl(n+1- \frac{1}{8}n a^{2}(t)\bigr) - \frac{(2n-1)!}{n!(n+1)!}\, \cvp\ee
where
\be\label{adefL} a^{2}(t) = \frac{2}{3t}\bigl(1-\sqrt{1-12 t}\bigr)\ee
is a solution of $3ta^{4}-4a^{2}+16 = 0$.\footnote{The interval $[-a,a]$ represents the support of the planar matrix model eigenvalue distribution.} 

From the point of view of the matrix model \eqref{formalMM1}, it is interesting to note that the generating function $\mathsf W_{2n}(t)$ is a simple expectation value,
\be\label{W2ndLMM} \mathsf W_{2n}(t) = \frac{1}{2n}\bigl\langle\frac{1}{N}\tr X^{2n}\bigr\rangle\, .\ee
Indeed, in the dual Feynman graph point of view, the insertion of $\tr X^{2n}$ corresponds to a marked face, which is removed to get the disk topology. The factor $1/(2n)$ takes into account the fact that all the edges of the boundary are equivalent (there is no marked edge on the boundary).\footnote{For the same reason, there is a factor $1/4$ in the interaction term in the matrix integral \eqref{formalMM1}.} One can also show straightforwardly that the full generating function
\be\label{ZfullLM} \mathsf W(t,g) = \sum_{n\geq 1}\mathsf W_{2n}(t)g^{2n}\ee
satisfies
\be\label{ZfulldZid} g\frac{\partial\mathsf W}{\partial g} = \sum_{n\geq 1}2n\mathsf W_{2n}(t)g^{2n} = \frac{1}{2g^{2}}\biggl[1-\frac{t}{g^{2}} + t\Bigl(\frac{1}{g^{2}}+\frac{a^{2}(t)}{2}-\frac{1}{t}\Bigr)\sqrt{1-a^{2}(t)g^{2}}\biggr]\, .\ee
These results show that $\mathsf W_{2n}(t)$ has a finite radius of convergence $t_{*}=1/12$, matching with the radius of convergence of the sphere generating function.\footnote{This fact and its generalizations are easily understood in the matrix model framework.} Similarly, one may check that $\tilde{\mathsf W}_{p}(g) = \sum_{n\geq 1}\mathsf W_{2n,p}g^{2n}$ has a radius of convergence $g_{*}=1/2$ and that $\sum_{n\geq 1}\mathsf W_{2n}(t)g^{2n}$ has a radius of convergence $g_{*}(t) = 1/a(t)$, etc. 

\paragraph{Continuum limit}

As in the case of the sphere, we can define a universal continuum limit. Let us work with the generating function at fixed boundary length $\mathsf W_{2n}(t)g^{2n}$. In terms of the Planck length $\ell_{0}$, the boundary length is
\be\label{bdlengthPlanck} \ell = 2n \ell_{0}\, .\ee
The factor $g^{2n}$ plays the r\^ole of a bare boundary cosmological constant,
\be\label{g2ncc} g^{2n} = e^{2n\ln g} = e^{-\la_{0}\ell}\, ,\ee
with $\la_{0} = (\ln\frac{1}{g})/\ell_{0}$. The continuum limit is
\be\label{climit3} n\rightarrow\infty\, ,\quad \ell_{0}\rightarrow 0\, ,\quad \ell = 2n\ell_{0}\ \text{fixed,}\ee
adjusting at the same time the bare bulk cosmological constant $t$ as in \eqref{climit2}, which is equivalent to setting
\be\label{diskLclf} \frac{t}{t_{*}} = 1-\frac{\ell^{2}\La}{64\pi}\frac{1}{n^{2}}\,\cdotp\ee
In this limit, $\mathsf W_{2n}(t)$ behaves as
\be\label{WenLiosc34} \mathsf W_{2n}(t) \underset{n\rightarrow\infty}{\sim} \frac{1}{\sqrt{\pi}}\frac{8^{n}}{n^{7/2}}e^{-\frac{\ell\sqrt{\La}}{8\sqrt{\pi}}}\biggl(1+\frac{\ell\sqrt{\La}}{8\sqrt{\pi}}\biggr)\, .\ee
The exponential growth $8^{n}$ can be cancelled, as usual, by adjusting the bare boundary cosmological constant $g$ to its critical value $g_{*}(t_{*}) = g_{*} = 1/a(t_{*}) = 1/(2\sqrt{2})$,
\be\label{climit4} g\rightarrow g_{*}^{-}\, ,\quad \la = \frac{1}{\ell_{0}}\Bigl(1-\frac{g}{g_{*}}\Bigr)\geq 0\quad \text{fixed.}\ee
We then get from \eqref{WenLiosc34} that $\mathsf W_{2n}(t)g^{2n}$ scales as
\be\label{WdiskLcont}\mathsf W(\ell,\La) \propto \ell^{-7/2}e^{-\la\ell-\frac{\sqrt{\La}}{8\sqrt{\pi}}\ell}\biggl(1 + \frac{\ell\sqrt{\La}}{8\sqrt{\pi}}\biggr)\, .\ee
We could absorb the $\La$-dependent contribution in the exponential by redefining $\la$, but we prefer to keep the formula in this way, with the parameter $\La$ directly coupling to the area, see below.

Note that, as we have already noticed in the case of the sphere topology, the theory makes sense only if $\La \geq 0$. For $\La<0$, even on a disk of fixed boundary length, the theory is unstable. As explained in Section \ref{stab1Sec} and further discussed in Section \ref{nonzeroRSec} and \cite{RDFollowup1}, this is similar to the expected behaviour of JT gravity in the positive curvature case, but in sharp contrast to the case of negative curvature, for which JT gravity is believed to exist for all positive and negative values of the cosmological constant. The behaviour in zero curvature is probably intermediate, the theory existing for all positive and at least for some range of negative values of the cosmological constant.

\paragraph{Bulk and boundary scalings}

An important and interesting feature of Liouville gravity is that the relative scaling of the bulk area and the boundary length is ``classical'' in the continuum limit, in the sense that the area scales as the square of the boundary length. For instance, working at fixed boundary length as above, the existence of the moments of the bulk area
\be\label{momareap} \langle A^{k}\rangle = \frac{\ell_{0}^{2}}{\mathsf W_{2n}(t)}\frac{\partial^{p}\mathsf W_{2n}(t)}{\partial\ln t^{p}} =\ell_{0}^{2k}\langle p^{k}\rangle \ee
in the continuum limit \eqref{climit3}, \eqref{diskLclf} implies that the expectation values $\langle p^{k}\rangle$ scale as $n^{2k}$ in this limit, for any integer $k\geq 1$. This can be checked explicitly from Eq.\ \eqref{W2ndL}. The expectation values $\langle A^{k}\rangle$ can actually be computed straightforwardly in the continuum from \eqref{WdiskLcont}, for instance
\be\label{areaexpLiou} \langle A\rangle = -\frac{16\pi}{\mathsf W}\frac{\partial\mathsf W}{\partial\La} = \frac{\ell^{2}}{8\bigl(1+\frac{\ell\sqrt{\La}}{8\sqrt{\pi}}\bigr)}\, \cdotp\ee
One could also take the point of view of the generating function for which both the area and the boundary length are fixed, Eq.\ \eqref{WpndiskL}. The above discussion implies that there exists a non-trivial continuum limit $\ell_{0}\rightarrow 0$ for which both the area and the boundary length are finite. In other words, we take $n\rightarrow\infty$, $p\rightarrow\infty$, with 
\be\label{arealengthc1} \frac{p}{(2n)^{2}} = \frac{A}{\ell^{2}}\quad\text{fixed.}\ee
Dealing with the usual exponential divergence by renormalizing the bare bulk and boundary cosmological constant using \eqref{climit2} and \eqref{climit4}, we get the continuum partition function at fixed $\ell$ and $A$,
\be\label{fixedAlLioucont} t^{p}g^{2n}\mathsf W_{2n,p}\underset{\substack{n,p\rightarrow\infty\\p=4n^{2}A/\ell^{2}}}{\propto}\mathsf W(A,\ell) = \frac{e^{-\frac{\La}{16\pi} A - \la \ell - \frac{\ell^{2}}{16 A}}}{\ell^{1/2}A^{5/2}}\, .\ee
By integration, 
\be\label{WlLWAlLiou} \mathsf W(\ell,\La) = \int_{0}^{\infty}\mathsf W(A,\ell)\,\d A\, ,\ee
we find again \eqref{WdiskLcont}.

In the continuum formulation of Liouville gravity, the bulk and boundary cosmological constant operators are represented by renormalized bulk $:\! e^{\gamma_{\text B}\varphi}\!:$ and boundary $:\! e^{\gamma_{\text b}\varphi}\!:$ exponentials of the Liouville field $\varphi$ for which the boundary constant $\gamma_{\text b}$ is equal to half the bulk constant $\gamma_{\text B}$, $\smash{\gamma_{\text b} = \frac{1}{2}\gamma_{\text B}}$.\footnote{This condition comes from the fact that the bulk operator $:\! e^{\gamma_{\text B}\varphi}\!:$ must be of conformal dimension 2 whereas the boundary operator $:\! e^{\gamma_{\text b}\varphi}\!:$ must be of conformal dimension 1.} This implies the classical scaling $A\propto \ell^{2}$.

As we have already emphasized in Section \ref{zeroImmSecbis} and in part of the discussion in Section \ref{earlySec}, the situation in JT gravity is completely different. The boundary cosmological operator is still represented by a renormalized boundary exponential $:\! e^{\gamma_{\text b}\varphi}\!:$ of the boundary Liouville field, with an associated renormalized quantum boundary length $\beta_{\text q} = \int :\! e^{\gamma_{\text b}\varphi}\!:\d\theta$, but the bulk operator is no longer renormalized, due to the constant curvature constraint. This yields a quantum-corrected scaling $A\propto \beta_{\text q}^{2\nu}$, for a non-trivial critical exponent $\nu = 1/d_{\text H}$ which is the inverse of the Hausdorff dimension of the boundary \cite{ferrari,ferraJTconfgauge}.

\paragraph{Generalization}

Based on the fact $\ell\sqrt{\La}$ is the only non-trivial dimensionless parameter in the model at fixed boundary length, one expects that the scaling limit of the generating function is of the general form
\be\label{WdiskLiougen} \mathsf W(\ell,\La) \propto \ell^{\alpha-3}e^{-\la\ell}f\biggl(\frac{\ell\sqrt{\La}}{8\sqrt{\pi}}\biggr)\, ,\ee
for an exponent $\alpha$ and a function $f$ which are a priori unknown. For pure gravity, Eq.\ \eqref{WdiskLcont} yields $\alpha = -1/2$ and $f(z)=e^{-z}(1+z)$. In spite of its very simple form, it is actually quite challenging to derive this result from a direct continuum approach, within Liouville theory \cite{Liouvillesuccess2,FZZT,gravityreviews,Liourevref}. Strong arguments \cite{FZZT} show that, in the more general case for which gravity is coupled to an arbitrary CFT of central charge $c<1$, $\alpha$ matches with the string susceptibility exponent $\gamma_{\text{str}}$ given by \eqref{gammaKPZ} and
\be\label{exactWLioudiskc} \mathsf W(\ell,\La) \propto \ell^{\gamma_{\text{str}}-3}e^{-\la\ell}\bigl(\ell\sqrt\La\bigr)^{1-\gamma_{\text{str}}} K_{1-\gamma_{\text{str}}}\Biggl[\frac{\ell\sqrt{\La/(16\pi)}}{\sqrt{\sin\frac{\pi}{1-\gamma_{\text{str}}}}}\Biggr]\, ,\ee
where $K$ is the modified Bessel function of the second kind.

For pure gravity, \eqref{exactWLioudiskc} is consistent with \eqref{WdiskLcont}, modulo a multiplicative constant in the definition of $\ell\sqrt{\La}$,
\be\label{idsqLsubtle} \bigl(\ell\sqrt{\La}\bigr)_{\text{sq.}} = \frac{2\sqrt{2}}{3^{1/4}}\bigl(\ell\sqrt{\La}\bigr)_{\text{cont.}}\, .\ee
The subscript ``sq.'' on the left-hand side refers to the quantity defined within the discretized approach using quadrangulations, Eq.\ \eqref{ArealPlanck} and \eqref{bdlengthPlanck}, whereas the subscript ``cont.'' on the right-hand side refers to the quantity defined directly within the continuum Liouville approach used in \cite{FZZT}. A finite multiplicative renormalization as in \eqref{idsqLsubtle} is harmless, but its origin might be confusing, so let's discuss it briefly. The problem comes from the fact that when one considers discretized paths, the length defined as in \eqref{bdlengthPlanck} will usually not match the length defined in the continuum, even when the discrete path has a well-defined length in the continuum limit. For instance, on a square lattice, the shortest path between two points of coordinates $(x_{i} = p_{i}\ell_{0}, y_{i}=q_{i}\ell_{0})$, $i=1,2$, has length $|x_{2}-x_{1}| + |y_{2}-y_{1}|$, whereas if we start directly in the continuum, the usual Euclidean distance would yield $\sqrt{(x_{2}-x_{1})^{2}+(y_{2}-y_{1})^{2}}$. Mathematically, the notion of distance obtained by taking the continuum limit of the square lattice and the usual Euclidean distance are said to be equivalent, in the sense that their ratio is bounded above and below by finite constants. In a random model, this subtlety has the effect of introducing finite multiplicative renormalization constants in the definition of the length, and this explains \eqref{idsqLsubtle}. The crucial point is that the length exists and is finite in the continuum limit.

Note that this problem does not arise for the definition of the area. Equation \eqref{idsqLsubtle} thus relates the length $\ell_{\text{sq.}}$ defined in terms of the continuum limit of quadrangulations in pure Liouville gravity to the length $\ell_{\text{cont.}}$ defined in the continuous version of the theory as
\be\label{idsqLsubtle2} \ell_{\text{sq.}} = \frac{2\sqrt{2}}{3^{1/4}}\ell_{\text{cont.}}\, .\ee
Bearing in mind the above discussion, it is reassuring to note that $2\sqrt{2}/3^{1/4}>1$.

\paragraph{Semi-classical limit}

The existence of a semi-classical regime in which the fluctuations of the geometry are tamed and the physics is dominated by a smooth classical geometry is a desirable feature of any interesting quantum gravity model. 

Pure Liouville quantum gravity does not have a semi-classical limit. On the disk, one may want to consider $\La\rightarrow +\infty$ at fixed $\ell$, but this limit is dominated by singular geometries for which the disk has a branched polymer structure. As for the limit $\La\rightarrow -\infty$, it does not make sense since the model is unstable for $\La <0$. Nevertheless, as noticed by Zalomodchikov in \cite{Zamolod82}, when Liouville gravity is coupled to conformal matter of central charge $c$, the limit $c\rightarrow -\infty$ does correspond to a semi-classical limit. Indeed, at large $|c|$, the Liouville action, that governs the gravitational dynamics, is multiplied by a constant proportional to $-c$. To obtain an interesting limit, one must also rescale the cosmological constant in such a way that the cosmological constant term in the action is proportional to $|c|$ as well. The semi-classical limit of Liouville gravity on the disk thus corresponds to 
\be\label{Lioucllimit} c\rightarrow - \infty\, ,\quad \La\rightarrow +\infty\, ,\quad \La/|c| = \frac{2}{3}\mu  \ \text{and}\  \ell\ \text{fixed.}\ee
The factor $2/3$ in the definition of $\mu$ is added because it allows to simplify some  formulas below. The semi-classical loopwise expansion of $\mathsf W(\ell,\La)$ takes the form
\be\label{WLioZamoexp} \ln \mathsf W(\ell,\La) =-\la\ell + (\gamma_{\text{str}}-3)\ln\frac{\ell}{2\pi} +\sum_{L\geq 0} |c|^{1-L}\mathsf f_{L}\biggl(\frac{\ell\sqrt{\mu}}{2\pi}\biggr)\, ,\ee
with, according to \eqref{gammaKPZ}, 
\be\label{gammaexpZ} \gamma_{\text{str}}-3 = -\frac{|c|}{6} - \frac{25}{6}+\frac{6}{|c|} + O\bigl(1/|c|^{2}\bigr)\, .\ee
The contributions $\mathsf f_{L}$ at any loop order can be straightforwardly obtained from the exact formula \eqref{exactWLioudiskc} and the asymptotic expansion of the Bessel function for large argument,
\be\label{KBesselexp} K_{\nu}(\nu z) \underset{\nu\rightarrow\infty}{=} \sqrt{\frac{\pi}{2\nu}}\frac{e^{-\nu h(z)}}{(1+z^{2})^{1/4}}\sum_{k\geq 0}(-1)^{k}U_{k}\Bigl(\frac{1}{\sqrt{1+z^{2}}}\Bigr)\nu^{-k}
\ee
where
\be\label{hdefBesselK} h(z) = \sqrt{1+z^{2}} + \ln\frac{z}{1+\sqrt{1+z^{2}}}\ee
and the $U_{k}$ are polynomials of degree $3k$ that can be computed recursively, for example
\be\label{UkBesselpol} U_{0}(x) = 1\, ,\quad U_{1}(x) = \frac{1}{24}\bigl(3x-5x^{3}\bigr)\, ,\quad \text{etc.}\ee
Normalizing $\mathsf W$ in such a way that
\be\label{WLioudisknorm} \mathsf W(\ell=2\pi,\mu=0,c) = e^{-\la\ell}\, ,\ee
we find, up to two loops,
\begin{align}\label{WLioZloop0} & \mathsf f_{0}(z) = \frac{1}{6}\ln\frac{1+\sqrt{1+z^{2}}}{2}-\frac{1}{6}\bigl(\sqrt{1+z^{2}}-1\bigr)\, ,\\
\label{WLioZloop1} & \mathsf f_{1}(z) = \frac{13}{6}\ln\frac{1+\sqrt{1+z^{2}}}{2}-\frac{13}{12}\bigl(\sqrt{1+z^{2}} -1\bigr)-\frac{1}{4}\ln\bigl(1+z^{2}\bigr)
\, ,\\\label{WLioZloop2} 
\begin{split} &
\mathsf  f_{2}(z) = -6 \ln\frac{1+\sqrt{1+z^{2}}}{2} + \frac{313 z^{4}+759 z^{2}+506}{48(1+z^{2})^{3/2}}-\frac{253}{24}\\ &\hskip 6cm +\frac{13}{4}\frac{z^{2}}{1+z^{2}}-\frac{\pi^{2}}{2}\bigl(\sqrt{1+z^{2}} -1\bigr)\, .
\end{split}
\end{align}

One can check explicitly that the leading terms in \eqref{WLioZamoexp}, corresponding to the contribution $-|c|/6$ in \eqref{gammaexpZ} and to $\mathsf f_{0}$ in \eqref{WLioZloop0}, precisely match with the expected result obtained from a classical disk geometry, see App.\ \ref{SemiclassApp}. In principle, the higher loop contributions can be obtained from the path integral following the same logic as in \cite{Zamolod82}. This is checked explicitly in \cite{Loopcalc}, where the analogue of the limit \eqref{Lioucllimit} is also studied for JT gravity.

\subsection{\label{randompathrev} Review of random paths models}

Our goal in this subsection is to summarize the standard framework in which models of random polygons, which yield theories of random closed paths in the continuum limit, are usually analysed. We thus consider a general model of random polygons on the square lattice. As we shall see, there are similarities, but also crucial differences, with the Liouville models of random metrics reviewed in the previous subsection. Understanding this physics in detail is important in order to elucidate which aspects, some more metric-like, others more path-like, are likely to be relevant for the new model of self-overlapping polygons (SOP) we are primarily interested in, see Section \ref{SOPJTSec}. Indeed, as already emphasized many times, the most characteristic property of this model is to have dual interpretations, either in terms of random polygons or in terms of random metrics.

Our discussion is illustrated on the examples of arbitrary random polygons (RP), self-avoiding polygons (SAP) and convex polygons (CP), which may all be seen as toy approximate models for flat JT gravity. The RP model is the flat version of the model proposed by Kitaev and Suh in negative curvature \cite{KitaevSuh}. The SAP model is the one studied in Stanford and Yang in \cite{StanfordSAP}. The CP model shares some properties with the one-loop exact reparameterization ansatz used in the near-hyperbolic limit of negative curvature JT gravity.

\subsubsection{\label{GenRPmodSec}Generalities}

Let us first note that, for an arbitrary polygon, there is no unique canonical notion of area, see e.g.\ Eq.\ \eqref{notionsofarea}. For this reason, the case of the RP model is a bit special, see Section \ref{RPexSec}. For the time being, we limit ourselves to models like CP, SAP or SOP for which the notion of area is unambiguous.

We note $W^{\text P}_{2n,p}$ the number of polygons of length $2n$ and area $p$ in the polygon model under consideration. The numbers $W^{\text P}_{2n,p}$ may include a multiplicity factor, as in the case of SOP, see Section \ref{Genfun2Sec}. We introduce, as usual, the generating functions at fixed length and area
\be\label{WPndef} W_{2n}^{\text P}(t) =\sum_{p\geq 1}W_{2n,p}^{\text P}t^{p}\, ,\quad\tilde W_{p}^{\text P}(g) = \sum_{n\geq 1}W_{2n,p}^{\text P}g^{2n}\ee
and the ``grand canonical'' generating function
\be\label{WgenPdef} W^{\text P}(t,g) = \sum_{n\geq 1}W_{2n}^{\text P}(t)g^{2n} = \sum_{p\geq 1}\tilde W_{p}^{\text P}(g)t^{p}\, .\ee
The parameter $t$, or more precisely its logarithm, plays the r\^ole of a ``bare bulk cosmological constant.'' Note that in the context of random polygon models, one may rather talk about ``pressure,'' with
\be\label{pressureccrel} P = -\frac{\La}{16\pi}\,\cdotp\ee
This terminology is more natural when one uses the polygons to model pressurized vesicles, cell membranes or polymer chains.

To develop our analysis, we assume that the following properties are satisfied.
\begin{hypothesis}\label{hyp1} There exists strictly positive constants $a$ and $b$ such that $W^{\text P}_{2n,p}=0$ unless $2b n \leq p\leq a (2n)^{2}$. 
\end{hypothesis}
In our models, this follows from the isoperimetric inequality, which ensures that for a given $n$ there is a maximum possible value $p=p_{\text{max}}(n) = [\frac{1}{4}n^{2}]$, and from Eq.\ \eqref{nprel}, which ensures that for a given $p$ there is a maximum value $n=n_{\text{max}}(p) = p+1$.

A simple important consequence is that $W^{\text P}_{2n}(t)$ is a polynomial in $t$. In particular, the number
\be\label{wnP} w_{2n}^{\text P} = W^{\text P}_{2n}(1) = \sum_{p\geq 1} W_{2n,p}^{\text P}\ee
of polygons of a given length is finite. \emph{These properties, as well as other crucial consequences described below, are not valid in Liouville gravity.}

An important physical feature is the typical ``size'' of the polygons. This can be measured by studying the area distribution, whose expected moments are given in terms of the derivatives of the generating function as
\be\label{areamomentdef} \langle p^{r}\rangle_{2n}(t) = \frac{1}{W_{2n}^{\text{P}}}\frac{\d^{r}W_{2n}^{\text{P}}}{\d(\ln t)^{r}}\,\cdotp\ee
Let
\be\label{w2nr} w_{2n;r}^{\text P} = \sum_{p\geq 1}p^{r} W^{\text P}_{2n,p}\, ,\ee
generalising \eqref{wnP}. We assume that, at large $n$,
\be\label{w2nras} w_{2n;r}^{\text P}\underset{n\rightarrow\infty}{\sim} a_{r} g_{*}^{-2n}(2n)^{2\nu (r-\vartheta)-1}\, ,\ee
for all $r\geq 0$, for certain strictly positive constants $a_{r}$ and $g_{*}$, and critical exponents $\nu$ and $\vartheta$. In particular,
\be\label{areaasymr} \langle p^{r}\rangle_{2n}(t=1) = \frac{w_{2n;r}^{\text P}}{w_{2n}^{\text P}}  \underset{n\rightarrow\infty}{\sim} \frac{a_{r}}{a_{0}}(2n)^{2\nu r}\, .\ee
The asymptotic scaling \eqref{w2nras} is found in many models, by analytical methods or numerically. We conjecture that this large $n$ scaling will be valid for SOP as well, modulo possible logarithmic corrections.

Observables other than the area can also be used to measure the size of polygons. A widely used quantity is the radius of gyration $R_{\gamma}$. For an arbitrary polygon $\gamma$ made of $\mathscr N$ points located at $w_{i}\in\mathbb C$, noting $w_{\gamma}= \frac{1}{\mathscr N}\sum_{i}w_{i}$ the ``center of mass,'' it is defined by
\be\label{gyrraddef} R_{\gamma}^{2} = \frac{1}{\mathscr N}\sum_{i=1}^{\mathscr N}|w_{i}-w_{\gamma}|^{2} = \frac{1}{\mathscr N^{2}}\sum_{i<j}|w_{i}-w_{j}|^{2}\, ,\ee
where the second equality can be checked by an elementary calculation. One could also consider the diameter $D_{\gamma}=\max_{i,j}|w_{i}-w_{j}|$ and several other quantities of the same type. It is natural to expect that the moments of these quantities will have scalings similar to \eqref{areaasymr}, e.g.\
\be\label{gyrscaling} \langle R^{2r}\rangle\underset{n\rightarrow\infty}{\propto} (2n)^{2\nu' r}\, .\ee
If the notion of size provided by the area and the radius of gyration are the same, then $\nu'=\nu$, which, as far as we know, happens in all known models for which both exponents have been studied.

From the above, one can easily derive a few important general properties that the generating functions must satisfy. 

i) The constant $g_{*}$ appearing in \eqref{w2nras} is the radius of convergence of the series 
$\sum_{n\geq 1}w_{2n}^{\text P}g^{2n}$. It is such that $0<g_{*}\leq 1$. The quantity $1/g_{*}$ is usually called the \emph{connective constant} of the model. Its precise value is not universal (it depends on the lattice) and is thus not of great interest in physics, but computing it is a challenging and interesting problem in mathematics, see e.g.\ \cite{duminilconnective} for the case of SAP.\footnote{For SAP on a square lattice, $1/g_{*}\simeq 2.638$.}

ii) For $t\leq 1$, the series $\sum_{n\geq 1}W^{\text P}_{2n}(t)g^{2n}$ has a strictly positive radius of convergence $g_{*}(t)$, which is a decreasing function of $t$, with $g_{*}(t=1) = g_{*}$. One shows that $g_{*}(t)\geq g_{*}t^{-b}$, where $b$ is the constant appearing in Hypothesis \ref{hyp1}. In our models, with $n_{\text{max}}(p) = p+1$, we actually have $g_{*}(t)\geq g_{*}/\sqrt{t}$. In particular, $\lim_{t\rightarrow 0}g_{*}(t) = +\infty$. When $t>1$, the radius of convergence is $g_{*}(t)=0$.

iii) The series $\sum_{p\geq 1}\tilde W_{p}^{\text P}(g_{*})t^{p}$, as well as the derivatives of this series with respect to $t$, to any order, has radius of convergence $t_{*}=1$ and, moreover, $\lim_{t\rightarrow 1^{-}}\partial_{t}^{r}W^{\text P}(t,g_{*}) = \partial_{t}^{r}W^{\text P}(1,g_{*})$ for any $r\geq 0$.

iv) The critical exponent $\nu$ is the inverse of the Hausdorff fractal dimension of the curves, $d_{\text H}=1/\nu$, and satisfies
\be\label{nuineq} \frac{1}{2}\leq\nu\leq 1\, .\ee
For instance, the values $\nu=1/2$, $\nu=3/4$ and $\nu=1$ corresponds to purely random loops, self-avoiding loops and smooth loops (realized for convex polygons), respectively. More generally, $\text{SLE}_{\kappa}$ curves have $\nu = \max \bigl((1+\kappa/8)^{-1},1/2\bigr)$. For JT quantum gravity coupled to a matter CFT of central charge $c\leq 0$, the continuum formulation developed in \cite{ferrari,ferraJTconfgauge} implies that
\be\label{critexponentsJT} \nu = \frac{1}{2}\Biggl[1+\sqrt{\frac{c}{c-24}}\Biggr]\, .\ee
This formula predicts that the exponent is a strictly decreasing function of $c$ that goes to one when $c\rightarrow -\infty$, consistently with the idea that this limit is semiclassical \cite{Zamolod82,Loopcalc}, and is equal to $1/2$ for the pure gravity theory, $c=0$, on which we focus in the present paper. So the SOP model has the same critical exponent $\nu$, or equivalently the same Hausdorff dimension, as the random walks.

\subsubsection{\label{phasessubSec}Phases}

\begin{figure}
\centerline{\includegraphics[width=6in]{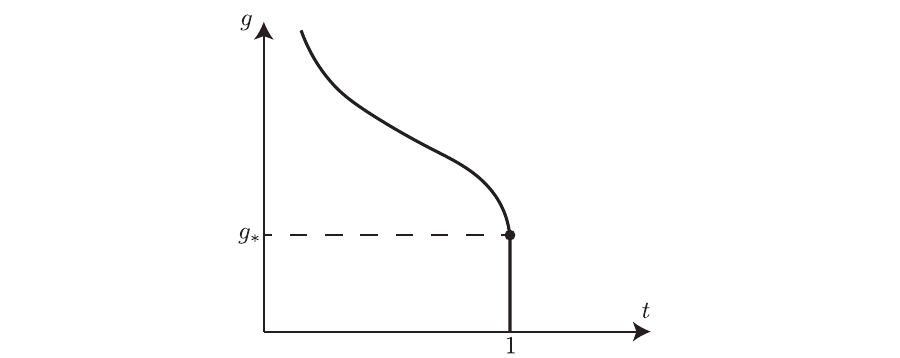}}
\caption{Phase diagram of a random closed paths model in the grand canonical ensemble (from \cite{SAPtextbook}). The representative curve of the function $g_{*}(t)$ is plotted. Above and on the right of the curve, the grand canonical generating function $W^{\text P}(t,g)$ does not exist. The thick dot represents the critical point $(t_{*}=1,g_{*})$ which plays a crucial r\^ole in defining the continuum limit of the model.\label{phasepathFig}}
\end{figure}

The phase diagram in the $(t,g)$-plane is depicted in Fig.\ \ref{phasepathFig}. If we work at fixed length $n$, we can distinguish three regimes.

i) The \emph{deflated regime} occurs when $t<1$ (positive cosmological constant, negative pressure) and corresponds to a large $n$ linear growth of the free energy $\ln W_{2n}^{\text P}(t)\sim -2n\ln g_{*}(t)$. In particular, the average area
\be\label{deflatedarea} \langle p\rangle_{2n}(t) = \frac{\d\ln W^{\text P}_{2n}}{\d\ln t} \underset{n\rightarrow\infty}{\sim} -2n\frac{tg_{*}'(t)}{g_{*}(t)}\ee
is proportional to $n$. This indicates that the typical polygon is tree-like or, equivalently, that the deflated regime is a branched polymer phase. 

ii) The \emph{critical point} occurs at $t=1$ (zero cosmological constant or pressure). It is characterized by the scaling laws \eqref{w2nras}, \eqref{areaasymr}. As we shall see below, the critical point is instrumental in defining the continuum limit and the continuum limit is entirely determined by the properties at the critical point.

iii) The \emph{inflated regime} occurs when $t>1$ (negative cosmological constant, positive pressure). Using Hypothesis \ref{hyp1}, one finds that, if $t>1$,
\be\label{inflatedineq} t^{a(2n)^{2}}\leq W_{2n}^{\text P}(t) \leq w^{\text P}_{2n}t^{a(2n)^{2}}\, .\ee
Using \eqref{w2nras} for $r=0$ (or the weaker condition $\ln w_{2n}^{\text P} = o(n^{2})$) we get
\be\label{inflfreeen} \lim_{n\rightarrow\infty}\frac{1}{(2n)^{2}}\ln W^{\text P}_{2n}(t) = a\ln t\, .\ee
In particular, recalling that $a=1/16$ for our models,
\be\label{areainflated}\langle p\rangle_{2n}(t) \underset{n\rightarrow\infty}{\sim} 4a n^{2} = \frac{n^{2}}{4}\ee
matches with the maximum area allowed by the isoperimetric inequality: the polygons are inflated ``balloons'' in this phase.

The very simple result \eqref{inflfreeen} is valid when the limit $n\rightarrow\infty$ is taken at fixed $t>1$. However, if $t-1$ is scaled appropriately with $n$ at large (but finite) $n$, a more interesting behaviour is generally expected, based, for example, on numerical simulations, see e.g.\ \cite{Leibler} for a study of the case of SAP. It is useful to distinguish two ranges of parameters in the inflated phase:

a) The region $0<n^{2\nu}(t-1)\ll n^{2\nu -1}$ corresponds to an inflated ``crumpled'' phase. If $\nu>1/2$, one can distinguish a weakly inflated regime, corresponding to $n^{2\nu}(t-1)$ of order one, and a strongly inflated regime, for $1\ll n^{2\nu}(t-1)\ll n^{2\nu -1}$.

b) For $\nu>1/2$ and $n^{2\nu}(t-1)\geq n^{2\nu -1}$ we have a fully inflated phase, characterized by $\langle p\rangle \propto n^{2}$.

The existence of the inflated crumpled phase plays an important r\^ole, see the next subsection on the continuum limit. The fact that the case $\nu=1/2$ is special also has important qualitative consequences, see \ref{RPexSec} for an example and the conjecture is Section \ref{areaSOPcomnjSec}.

\noindent\emph{Remark:} the transition at $t=t_{*}=1$ is reminiscent of a deconfining or of a Hagedorn phase transition, in the sense that the large $n$ scaling of the free energy changes, here from $\ln W_{2n}^{\text P}(t)\propto n$ for $t<1$ to $\ln W_{2n}^{\text P}(t)\propto n^{2}$ for $t>1$.

\subsubsection{\label{pathcontSec}Continuum limit}

A very important property of the random paths models is that their continuum limit can be defined starting from the critical point at $t=t_{*}=1$, i.e.\ for zero cosmological constant, and then possibly deform away from it, by turning on the renormalized cosmological constant.

\paragraph{Critical theory}

To define the continuum limit, we introduce as usual a cut-off $\ell_{0}$. The dimensionful area is then given by $A=p\ell_{0}^{2}$ and Eq.\ \eqref{areaasymr} yields
\be\label{areamomc1} \langle A^{r}\rangle = \frac{a_{r}}{a_{0}} \bigl(2n\ell_{0}^{1/\nu}\bigr)^{2\nu 
r}\, .\ee
The continuum limit is then defined by taking
\be\label{contpathdef} n\rightarrow\infty\, ,\quad \ell_{0}\rightarrow 0\, , \quad 2n \ell_{0}^{1/\nu} = \beta_{\text q}\quad \text{fixed.}\ee
The parameter $\beta_{\text q}$ is the ``quantum length'' of the polygons in the continuum limit, to which we have already referred several times in previous sections. Note that it is proportional to the microscopic length $n$, and yet its engineering length dimension is $1/\nu$. The scaling \eqref{contpathdef} yields well-defined moments for the area,
\be\label{areamoments} \langle A^{r}\rangle =  \frac{a_{r}}{a_{0}}\beta_{\text q}^{2\nu r}\, .\ee
Under mild assumptions, see the remarks at the end of this subsection, these moments determine uniquely the area distribution law at zero cosmological constant, noted $\rho^{\text P}_{\beta_{\text q}}$, which is such that
\be\label{Akrhoform} \langle A^{r}\rangle = \int_{0}^{\infty}A^{r}\rho^{\text P}_{\beta_{\text q}}(A)\d A\, .\ee

The continuum generating function at zero cosmological constant, $W^{\text P}(\beta_{\text q},\La=0)$, is determined by the asymptotics of $w_{2n}^{\text P}$, given by \eqref{w2nras} for $r=0$, together with the scaling \eqref{contpathdef}. Considering, as usual (compare with Eq.\ \eqref{climit4}), $g^{2n}w_{2n}^{\text P}$, for a bare parameter $g$ adjusted in such a way that
\be\label{clpath22} g\rightarrow g_{*}\, ,\quad \la = \frac{1}{\ell_{0}^{1/\nu}}\Bigl(1-\frac{g}{g_{*}}\Bigr)\ \text{fixed,}\ee
and discarding an overall factor $\ell_{0}^{2\vartheta+1/\nu}$ of the cut-off, we get
\be\label{WPgenzeroL} W^{\text P}(\beta_{\text q},\La=0) = a_{0}e^{-\beta_{\text q}\la}\beta_{\text q}^{-1-2\nu\vartheta}\, .\ee
The contribution $e^{-\beta_{\text q}\la}$ is a ``boundary cosmological counterterm'' that may be adjusted as will. It is important to emphasize here that the logarithm of this term is proportional to the \emph{quantum} length $\beta_{\text q}$. In particular, the boundary cosmological constant $\la$ has dimension of length to the power $-1/\nu$. This observation may be unremarkable from the point of view of a standard random path model, but it is non-trivial for the SOP model, which has a gravitational interpretation: it shows that the naive gravitational boundary length counterterm, which is ill-defined in the quantum theory because the boundary is fractal, should be replaced by a quantum boundary length counterterm.

\paragraph{Turning on $\La$} The generating function for non-zero cosmological constant is obtained by using the area distribution as
\be\label{Wconpathg} W^{\text P}(\beta_{\text q},\La) = W^{\text P}(\beta_{\text q},0)\int_{0}^{\infty}e^{-\frac{\La}{16\pi}A}\rho_{\beta_{\text q}}(A)\d A\, .\ee
This can be put in a form similar to \eqref{WdiskLiougen},
\be\label{Wconpathg2} W^{\text P}(\beta_{\text q},\La) = a_{0}e^{-\beta_{\text q}\la}\beta_{\text q}^{-1-2\nu\vartheta} f_{\text P}\Bigl(\frac{\La\beta_{\text q}^{2\nu}}{16\pi}\Bigr)\, ,\ee
for 
\be\label{fPath} f_{\text P}(z) = \frac{1}{a_{0}}\sum_{r\geq 0}\frac{(-1)^{r}a_{r}}{r!} z^{r}\, .\ee
The continuum generating function $W^{\text P}(\beta_{\text q},\La)$ may also be obtained directly from $g^{2n}W^{\text P}_{2n}(t)$ defined in \eqref{WPndef}, in a way similar to what is done in Liouville theory, by using the scalings \eqref{contpathdef}, \eqref{clpath22} and
\be\label{clpath23} t\rightarrow t_{*}=1\, ,\quad \La = \frac{16\pi}{\ell_{0}^{2}}\bigl(1-t\bigr)\quad \text{fixed,}\ee
which is exactly as in \eqref{diskLclf}).

The representation \eqref{Wconpathg} shows that the models are always well-defined for positive cosmological constant, $\La\geq 0$. On the other hand, the existence for negative cosmological constant is not a priori ensured. It depends on the asymptotic behaviour of the area probability distribution $\rho_{\beta_{\text q}}(A)$ for large $A$.

The discussion at the end of subsection \ref{phasessubSec}, which applies to the regime $\La<0$ for models defined on a flat lattice, shows that when $1/2<\nu<1$ the continuum limit selects the crumpled phase. It is then natural to conjecture that such models will exist for all values of $\La$, positive or negative. When $\nu=1/2$, the situation is less clear and one may expect an upper bound on the possible value of $-\La$; see, for example, the example in \ref{RPexSec} below. 

In the case of JT gravity, the asymptotic behaviour of the density $\rho_{\beta_{\text q}}$ and thus the existence of the models for negative $\La$ is expected to strongly depend on whether the curvature of space-time is positive, zero or negative, see Sections \ref{areaSOPcomnjSec}, \ref{nonzeroRSec} and \cite{RDFollowup1}.

\paragraph{Grand canonical ensemble and scaling form} 

It is also useful to discuss the grand canonical continuum generating function $W^{\text P}(\la,\La)$, which is the continuum limit of the generating function $W^{\text P}(t,g)$ defined in \eqref{WgenPdef}. One has\footnote{Note that the factor $e^{-\beta_{\text q}\la}$ is conventionally included in $W^{\text P}(\beta,\La)$, see Eq.\ \eqref{Wconpathg2}. Note also that the sum over $n$ should be replaced by an integral over $\beta$ with the measure $\smash{\d\beta/(2\ell_{0}^{1/\nu})}$, according to the definition \eqref{contpathdef}. We also discard any overall power of $\ell_{0}$, as usual.}
\be\label{WPathgrandcano} W^{\text P}(\la,\La) = \frac{1}{2}\int_{2\ell_{0}^{1/\nu}}^{\infty} W^{\text P}(\beta_{\text q},\La)\,\d\beta_{\text q}\, .\ee
We have reintroduced the cut-off $\ell_{0}$ in \eqref{WPathgrandcano}\footnote{The way we do it is natural because, from \eqref{contpathdef},  $2\ell_{0}^{1/\nu}$ is the minimal value of $\beta$ in the microscopic theory.} in order to regulate the divergence of the integral at $\beta_{\text q} = 0$ which, from \eqref{Wconpathg2}, occurs when $\nu\vartheta\geq 0$. From \eqref{Wconpathg2} and \eqref{fPath}, we get
\be\label{WPGCcont} W^{\text P}(\la,\La) = \frac{1}{2}\Bigl(\frac{\La}{16\pi}\Bigr)^{\vartheta}\mathcal F\biggl(\Bigl(\frac{16\pi}{\La}\Bigr)^{\frac{1}{2\nu}}\la\Bigr)\biggr) + \text{diverging piece,}\ee
where
\be\label{Fcaldef} \mathcal F (z) = \sum_{r\geq 0}\mathcal F_{r}z^{-2\nu(r-\vartheta)}\, ,\quad \mathcal F_{r} = \frac{(-1)^{r}}{r!}\Gamma\bigl(2\nu (r-\vartheta)\bigr)a_{r}\, .\ee
The diverging piece can be straightforwardly evaluated; if $\alpha$ is not a negative integer, it is a polynomial in $\la$ of degree $[2\nu\vartheta]$. Using \eqref{clpath22} and \eqref{clpath23}, one sees that Eq.\ \eqref{WPGCcont} follows from the scaling form,
\be\label{scalingdeflate} W^{\text P}(t,g) = W^{\text P}_{\text{reg.}}(t,g) + \frac{1}{2}(1-t)^{\vartheta}\mathcal F\Bigl(\frac{1-g/g_{*}}{(1-t)^{\frac{1}{2\nu}}}\Bigr)\ee
which is valid near the critical point $(t,g) = (1,g_{*})$, when $t\rightarrow 1^{-}$ and $g\rightarrow g_{*}^{-}$. The regular part is a non-universal piece that plays no role in the continuum limit. Using the expansion \eqref{Fcaldef} in \eqref{scalingdeflate}, we also get the critical behaviour
\be\label{partialrtWPath} \frac{1}{r!}\frac{\partial^{r}W^{\text P}}{\partial t^{r}}\bigl(t=1,g\bigr)\underset{g\rightarrow g_{*}^{-}}{\sim}\frac{\frac{1}{2}(-1)^{r}\mathcal F_{r}}{(1-g/g_{*})^{2\nu (r-\vartheta)}}\,\cdotp\ee
If $\vartheta$ is a positive integer and $r=\vartheta$, the above power-law behaviour is replaced by a logarithmic singularity.

\paragraph{Remarks}

i) Necessary and sufficient conditions for the moments of a random variable to fix uniquely the associated probability distribution are well-known, see e.g.\ \cite{momentref} and references therein. A simple sufficient condition is the existence of the theory at negative $\La$.

ii) In the case of the Liouville theory, it is straightforward to check from \eqref{areaexpLiou} that the moments $\langle A^{r}\rangle$ all diverge at the critical point $\La=0$, except for $r=1$. In particular, there is no convergent expansion \eqref{Wconpathg2} in Liouville gravity. Physically, this is related to the fact that Liouville gravity exists only for positive cosmological constant, as the exact results \eqref{areaexpLiou} or \eqref{exactWLioudiskc} clearly show. A similar qualitative behaviour is expected for the positive curvature JT gravity theory, see Section \ref{nonzeroRSec} and \cite{RDFollowup1}.

\subsubsection{\label{RPexSec} Example 1: purely random polygons} 

As a first example, let us discuss the case of the RP model: general, unconstrained, random polygons, with the uniform measure. This is the flat space version of the model studied by Kitaev and Suh in negative curvature \cite{KitaevSuh}.

For general polygons there is no canonical notion of area. We use L\'evy's algebraic area defined in Eq.\ \eqref{notionsofarea} to define the model. This notion matches the ordinary notion of area for CP, SAP and SOP. For arbitrary polygons, the algebraic area may be positive or negative and there is a symmetry $A\mapsto -A$.

The model was solved as soon as 1950 by Paul L\'evy himself \cite{Levy} and the results were rediscovered later by physicists using the link with the quantum mechanics of a particle in a uniform imaginary magnetic field \cite{Brownian1}.

On the square lattice, the numbers \eqref{wnP} are given by
\be\label{wnRPform} w_{2n}^{\text{RP}} = \frac{1}{2n}\frac{(2n)!^{2}}{n!^{4}}\underset{n\rightarrow\infty}{\sim}\frac{4^{2n}}{2\pi n^{2}}\,\cdotp\ee
The prefactor $1/(2n)$ comes from the fact that the polygons are unrooted, that is to say, there is no marked (privileged) point along the polygon. Comparing with \eqref{w2nras} and using the well-known fact that $\nu=1/2$ for random walks, we get $\vartheta=1$ and $a_{0}=2/\pi$. In the continuum, with $\beta_{\text q}$ given as in \eqref{contpathdef}, the model is described by the particle action
\be\label{SpartRP} S = \frac{1}{2}\int_{0}^{\beta_{\text q}}\dot{\mathbf r}^{2}\,\d\tau + \frac{\La}{32\pi}\oint\bigl( x \d y - y\d x\bigr)\, ,\ee 
where $\mathbf r = (x,y)$ is the particle position vector. The area distribution is found to be
\be\label{rhobetaRP}\rho_{\beta_{\text q}}^{\text{RP}}(A) = \frac{\pi}{2\beta_{\text q}}\frac{1}{\cosh^{2}\frac{\pi A}{\beta_{\text q}}}\ee
and the generating function at fixed $\beta_{\text q}$ is
\be\label{WRPf1} W^{\text{RP}}(\beta_{\text q},\La) =W^{\text{RP}}(\beta_{\text q},0)\int_{-\infty}^{+\infty}e^{-\frac{\La}{16\pi}A}\rho^{\text{RP}}_{\beta_{\text q}}(A)\d A = \frac{2e^{-\la\beta_{\text q}}}{\pi\beta_{\text q}^{2}}\frac{\frac{\La\beta_{\text q}}{32\pi}}{\sin\frac{\La\beta_{\text q}}{32\pi}}\,\cdotp\ee

It is important to realize that the peculiar symmetry $A\mapsto -A$, or equivalently $\La\mapsto - \La$, of the model, implies that we are always in the ``inflated crumpled phase,'' described at the end of Section \ref{phasessubSec}, for the critical exponent $\nu=1/2$. In particular, this model does not allow to probe the $\La>0$ regime of the models for which the area is positive. However, it is plausible that it can capture, at least at the qualitative level, some physical features of flat JT gravity in the negative cosmological constant regime. 

With this in mind, it is very interesting to highlight the following property, which illustrates nicely, on a toy model, the discussion we made in Section \ref{JTptSec}. 

Naively, the limit of large dimensionless parameter $|\La|\beta_{\text q}$ is a semiclassical limit, akin to the $\La\rightarrow - \infty$ limit at fixed $\ell$ discussed in JT gravity in Section \ref{JTptSec}. In this limit, the area term dominates in the action \eqref{SpartRP} and one may expect, in the same way as this might have been naively expected in JT, to have small fluctuations around a circle configuration, which maximizes the area.\footnote{On the square lattice, it is the square configuration that maximizes the area. In the continuum formulation based on the action \eqref{SpartRP}, it is the circle.} But this is \emph{not} what happens. The exact formula \eqref{WRPf1} for the generating function shows that, actually, there is a maximal value
\be\label{LamaxRP} \bigl(|\La|\beta_{\text q}\bigr)_{\text{max}} = 32\pi^{2}\, .\ee
There is thus no large $|\La|\beta_{\text q}$ limit in this model! Instead, the loop ``explodes'' at the critical value \eqref{LamaxRP}, for which the average area
\be\label{meanareaRP} \langle A\rangle = -16\pi\frac{\partial\ln W^{\text{RP}}}{\partial\La} =\frac{\beta_{\text q}}{2}\Biggl[\frac{1}{\tan\frac{\La\beta_{\text q}}{32\pi}} - \frac{32\pi}{\La\beta_{\text q}}\Biggr]\ee
diverges.

The divergence of the expected area in the model at fixed quantum boundary length $\beta_{\text q}$ is a nice illustration of the fact that the isoperimetric inequality gives no constraint in the continuum limit, because the closed polygon becomes a fractal in this limit, of infinite geometric length. Let us also mention that if, instead of taking the strict continuum limit, we simply consider $n$ large but finite, the sharp bound \eqref{LamaxRP} becomes a crossover scale between the crumpled regime for which $\langle A\rangle\propto n^{2\nu}\ell_{0}^{2}=n\ell_{0}^{2}$, which characterizes the continuum theory, to a fully inflated ``smooth'' regime with $\langle A\rangle\propto n^{2}\ell_{0}^{2}$. As we shall argue in Section \ref{nonzeroRSec}, this crossover is similar to the transition from the full-fledged microscopic description to the effective Schwarzian description at large negative $\La$ in the negative curvature JT theory; see also \cite{RDFollowup1}.

As a last comment, let us mention that it is also possible to compute the average area $A_{\nu}$ of the sectors that are enclosed by the loop and that have a winding number $\nu$, as defined in Eq.\ \eqref{windingP}, see Comtet et al.\ in \cite{Brownian1} and \cite{trujillo}. The result is
\be\label{AnRPf}\langle A_{0}\rangle = \frac{\pi\beta}{30}\, \cvp\quad\langle A_{\nu}\rangle = \frac{\beta}{2\pi \nu^{2}}\quad \text{if $\nu\not = 0$.}\ee
In particular, the average arithmetic area defined in Eq.\ \eqref{notionsofarea} is
\be\label{arithareaav}\langle A_{\text{arith.}}\rangle = \sum_{\nu=-\infty}^{+\infty}\langle A_{\nu}\rangle = \frac{\pi\beta}{5}\ee
whereas the average winding area
\be\label{windareaav}\langle A_{\text{wind.}}\rangle = \sum_{\nu=-\infty}^{+\infty}|\nu|\langle A_{\nu}\rangle = +\infty\ee
diverges. This is an interesting piece of information, to put into perspective with the model of self-overlapping polygons, for which the winding area coincides with the actual area of the associated distorted disk.

\subsubsection{\label{SAPclimitSec} Example 2: random self-avoiding polygons} 

The model of self-avoiding polygons (polygons that do not self-intersect), with the uniform measure, was proposed by Stanford and Yang in \cite{StanfordSAP} as a possible microscopic definition of JT. We have explained in previous Sections that such a model takes into account only a measure zero subset of the allowed configurations in JT and is thus expected be a rather mediocre minisuperspace approximation to the full theory. Yet, as for the simple RP model, it is illuminating to describe its main features.

Note that relatively little is actually proven rigorously for this model, in spite of decades of intense study. In particular, the numbers $W_{2n,p}^{\text{SAP}}$ counting SAPs of length $2n$ bounding a distorted disk of area $p$ are not known for arbitrary values of $n$ and $p$. Explicit formulas that would allow to determine these numbers recursively are not known either. But general properties can be derived and well-motivated conjectures exist about many important quantities \cite{SAPcounting,randpathlit,SAPtextbook}. In particular, the critical exponents $\nu$ and $\theta$ are
\be\label{nuthetaSAP} \nu = \frac{3}{4}\, \cvp\quad\vartheta = 1\quad\text{for SAP.}\ee
The (non universal) connective constant that governs the exponential growth in \eqref{w2nras} is $1/g_{*}\simeq 2.638$ on the square lattice.\footnote{See \cite{duminilconnective} for an exact calculation on the hexagonal lattice.} The parameter $a_{0}$ that enters in the asymptotic law \eqref{w2nras} for $r=0$ is not know exactly but can be determined numerically,
\be\label{azeroSAP} a_{0} \simeq 0.562301\, .\ee
On the square lattice, the numbers \eqref{wnP} are thus given by
\be\label{wnSAPform} w_{2n}^{\text{SAP}}\underset{n\rightarrow +\infty}{\sim}a_{0}g_{*}^{-2n}(2n)^{-5/2}\simeq 0.562\times\frac{2.638^{2n}}{(2n)^{5/2}}\,\cdotp\ee

The most impressive result available for the model is a conjecture for the exact solution in the continuum limit \cite{conjscaSAP}. This conjecture is supported by analogies with the simpler, exactly solvable, model of staircase polygons, and has been confirmed to a very high degree of precision by numerical analysis. To present the result, we introduce the logarithmic derivative of the Airy function
\be\label{Airylogder} \varphi(x) = -\frac{\Ai'(x)}{\Ai(x)}\,\cdotp\ee
This function has an asymptotic large $x$ expansion of the form
\be\label{Airylogasy} \varphi(x) = \sum_{r\geq 0}\varphi_{r}x^{\frac{1-3r}{2}}\, .\ee
Using the differential equation $\Ai''(x) = x\Ai(x)$, which implies that $\varphi$ satisfies a Ricatti equation, one finds that the coefficients $\varphi_{r}$ can be computed recursively from
\be\label{varphirrecurs} \varphi_{0}=1\, ,\ \varphi_{1} = \frac{1}{4}\, ,\varphi_{2}=-\frac{5}{32}\, ,\ \varphi_{r+1} = -\frac{3r}{4}\varphi_{r} - \frac{1}{2}\sum_{r'=2}^{r-1}\varphi_{r'}\varphi_{r-r'+1}\ \text{for $r\geq 2$.}\ee
The conjecture states that the derivative of the scaling function defined in \eqref{Fcaldef} is given by
\be\label{SAPconjf1} \mathcal F'(z) = (2a_{0})^{2/3}\varphi\bigl((2a_{0})^{2/3}\pi z\bigr)\, .\ee
The parameter $a_{0}$ is given in Eq.\ \eqref{azeroSAP}. The coefficients $a_{r}$ can be straightforwardly deduced from \eqref{SAPconjf1} by using \eqref{Fcaldef},
\be\label{arcoeffSAP} a_{r} = 2\sqrt{\pi}a_{0}\frac{(-1)^{r+1}r!}{\Gamma\bigl(\frac{3}{2}(r-\frac{1}{3})\bigr)}\frac{\varphi_{r}}{(2\pi^{3/2}a_{0})^{r}}\,\cdotp\ee
Equations \eqref{Wconpathg2} and \eqref{fPath} then yield the exact SAP generating function,
\be\label{WSAPexact} W^{\text{SAP}}(\beta_{\text q},\La) = -2\sqrt{\pi}a_{0}\frac{e^{-\beta_{\text q}\la}}{\beta_{\text q}^{5/2}}\sum_{r=0}^{\infty}\frac{\varphi_{r}}{\Gamma\bigl(\frac{3}{2}(r-\frac{1}{3})\bigr)}\biggl(\frac{\La\beta_{\text q}^{3/2}}{32\pi^{5/2}a_{0}}\biggr)^{r}\, .\ee
The series on the right-hand side of the above equation has an infinite radius of convergence, see Appendix \ref{SAPasymp}. The continuum SAP generating function is thus an entire function, well-defined for any value, positive or negative, of the cosmological constant $\La$.

It is interesting to study the limit for which the dimensionless parameter $\La\beta_{\text q}^{3/2}\rightarrow -\infty$. In contrast to the case of random polygons discussed in the previous subsection, this limit is well defined, since the generating function \eqref{WSAPexact} is an entire function. We show in Appendix \ref{SAPasymp} that\footnote{This result does not seem to have been derived before.}
\be\label{WSAPasym1} W^{\text{SAP}}(\beta_{\text q},\La) \underset{\La\beta_{\text q}^{3/2}\rightarrow -\infty}{=}
\frac{e^{-\beta_{\text q}\la}}{\beta_{\text q}^{5/2}}\,\frac{\La^{2}\beta_{\text q}^{3}}{2^{10}\pi^{5}a_{0}}\,
e^{\frac{\La^{2}\beta_{\text q}^{3}}{3\cdot 2^{12}\pi^{5}a_{0}^{2}}}\,\Bigl(1 + O\bigl(1/(\La^{2}\beta_{\text q}^{3})\bigr)\Bigr)\, .\ee
In particular, the average area is 
\be\label{SAPareaasymp} \langle A\rangle\underset{\La\beta_{\text q}^{3/2}\rightarrow -\infty}{=}
\frac{|\La|\beta_{\text q}^{3/2}}{3\cdot 2^{7}\pi^{4}a_{0}^{2}}\,\beta_{\text q}^{3/2}\biggl[1+\frac{3\cdot 2^{12}\pi^{5}a_{0}^{2}}{\La^{2}\beta_{\text q}^{3}} + O\bigl(1/(\La^{4}\beta^{6})\bigr)\biggr]\, .\ee
We see that there is a well-defined $\La\rightarrow -\infty$ expansion, but is it not a standard semi-classical expansion around a dominant configuration. In particular, the expected area diverges when $\La\rightarrow-\infty$. As in the case of ordinary random polygons reviewed in the previous subsection, this is possible because the microscopic length $\ell = 2n\ell_{0} =\beta/\ell_{0}^{1/3}$ diverges in the continuum limit and consequently the isoperimetric inequality implies no constraint. We can also note that the term governing the exponential growth in \eqref{WSAPasym1} is proportional to $\La^{2}$, not to $\La$, as might have been suggested by the fact that the cosmological constant term is proportional to $\La$, and the corrections are governed by the expansion parameter $1/(\La^{2}\beta^{3})$, not $1/(\La\beta^{3/2})$. 

\subsubsection{\label{RCexSec} Example 3: random convex polygons} 

As a last example, let us present the model of convex polygons. Convex polygons form a subset of SAP polygons for which the intersections of the associated distorted disk with horizontal rows and vertical columns of aligned occupied lattice tiles are all connected. For instance, the polygon on the left of Fig.\ \ref{disdiskbasicFig} is convex, but the polygon in the center of this figure is not. The model is thus an approximation-to-the-approximation from the point of view of JT, but it has the advantage of being exactly solvable \cite{CPsol,SAPtextbook}. Morevoer, it displays qualitative features reminiscent of those expected in the Schwarzian limit of negative curvarture JT.

The universal critical exponents $\nu$ and $\vartheta$ are
\be\label{nuthetaCP} \nu = 1\, ,\quad \vartheta = -1\ee
whereas the non-universal connective constant and coefficient $a_{0}$ are, on the square lattice, $1/g_{*}=2$ and $a_{0} = \frac{1}{256}$. The number of convex polygons thus grows as
\be\label{wnCPform} w_{2n}^{\text{CP}}\underset{n\rightarrow +\infty}{\sim} \frac{n 2^{2n}}{128}\, \cdotp\ee
The moments of the area at zero cosmological constant are given by
\be\label{areamomCP} \langle A^{r}\rangle = \frac{r!^{2}}{2^{2r}(2r+1)!}\beta_{\text q}^{2r}\, .\ee
Equivalently, the exact area distribution law is
\be\label{rhoCP}\rho^{\text{CP}}_{\beta_{\text q}}(A) = \frac{8}{\beta_{\text q}^{2}}\frac{1}{\sqrt{1-16 A/\beta_{\text q}^{2}}}\,\cdotp\ee
One can check that this very simple distribution matches the distribution obtained in the continuum limit of random rectangles, which is elementary to solve.\footnote{Of course, the asymptotic law \eqref{wnCPform} is not valid for rectangles, which simply have $w_{2n}=n-1$. But, in the continuum limit, the area moments \eqref{areamomCP} are the same for both rectangles and convex polygons. This kind of unexpected simplification, or universality, in the limit distributions, is at the basis of the conjecture for the solution of the SAP model, discussed in the previous subsection.}

The upper bound on the area, $A_{\text{max}} = (\beta_{\text q}/4)^{2}$, corresponds to the area of a square of perimeter $\beta_{\text q}$. The existence of the bound is a consequence of the isoperimetric inequality on the square lattice. Unlike the examples of RP and SAP reviewed in the previous subsections, this inequality applies to convex polygons in the continuum because the value of the critical exponent $\nu=1$ implies that the polygon length $\ell = 2n\ell_{0} = \beta_{\text q}$ has a finite limit in the continuum; the ``quantum length'' is simply the usual geometric ``classical'' length in this example. 

By using Eq.\ \eqref{Wconpathg}--\eqref{fPath}, the fixed length continuum generating function is found to be
\be\label{WCPex} W^{\text{CP}}(\beta_{\text q},\La) = \frac{1}{256}e^{-\beta_{\text q}\la}\beta_{\text q} e^{-\frac{\La\beta_{\text q}^{2}}{256\pi}}\int_{0}^{1}e^{\frac{\La\beta_{\text q}^{2}}{256\pi}x^{2}}\d x\, .\ee
It is then straightforward to study the $\La\beta_{\text q}^{2}\rightarrow - \infty$ limit. One finds
\be\label{WCPasymp} W^{\text{CP}}(\beta_{\text q},\La) \underset{\La\beta_{\text q}^{2}\rightarrow - \infty}{=} 
e^{-\beta_{\text q}\la}\frac{\pi}{16|\La|^{1/2}}e^{\frac{|\La|\beta_{\text q}^{2}}{256\pi}}\Bigl[ 1 + O\bigl(e^{-\frac{|\La|\beta_{\text q}^{2}}{256\pi}}\bigr)\Bigr]\, .\ee
In particular, the average area is
\be\label{CPareaasymp} \langle A\rangle \underset{\La\beta_{\text q}^{2}\rightarrow - \infty}{=} 
\frac{\beta_{\text q}^{2}}{16}\biggl[1-\frac{128\pi}{|\La|\beta_{\text q}^{2}}+ O\bigl(e^{-\frac{|\La|\beta_{\text q}^{2}}{256\pi}}\bigr)\biggr]\, .\ee
These results show that the $\La\beta_{\text q}^{2}\rightarrow - \infty$ limit in the convex polygon model is a standard ``semi-classical'' limit. It is dominated by the classical configuration maximising the area, which is the square of perimeter $\beta_{\text q}$ and area $ A_{\text{max}}=(\beta_{\text q}/4)^{2}$ on the square lattice. The leading exponential in \eqref{WCPasymp} is obtained by evaluating the ``classical action'' $S_{\text{cl}}=\frac{\La}{16\pi} A$ on this configuration. Corrections to the leading ``classical'' term are governed by $1/(|\La|\beta_{\text q}^{2})$, as expected for a standard semi-classical expansion. We also observe that the perturbative series stops at one loop: there is no term of order $1/(|\La|\beta_{\text q}^{2})^{L}$ for $L\geq 2$ in the asymptotic expansion \eqref{WCPasymp} of the generating function. 

This behaviour is clearly reminiscent of the effective Schwarzian description of the negative curvature JT model, including the perturbative one-loop exactness. This may have been expected, since convex loops are similar to the smoothly wiggling boundaries described by the reparameterization ansatz. Note, however, that even for this very simple model of convex loops, the model has non-perturbative corrections of order $e^{-\frac{|\La|}{16\pi}A_{\text{max}}}$.

\subsubsection{\label{RPSAPCPcompSec} Model comparison and summary}
\begin{figure}
\centerline{\includegraphics[width=5.25in]{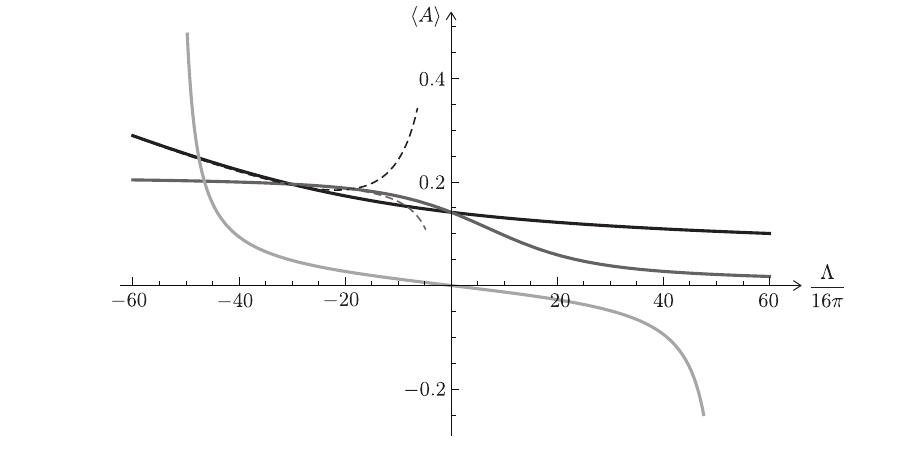}}
\caption{Area expectation values in the SAP (black line), CP (grey line) and R (light grey line) models, as a function of the cosmological constant. Dashed lines correspond to the asymptotic forms \eqref{SAPareaasymp} and \eqref{CPareaasymp}. The parameter $\beta_{\text q}$ for SAP is chosen to be $\beta_{\text{q,\,SAP}}=1$. The parameter $\beta_{\text q}=\beta_{\text{q,\,CP}}=\frac{6}{\pi a_{0}}\simeq 1.843$ for CP is chosen in such a way that the expected area for $\La=0$ matches with SAP. The parameter $\smash{\beta_{\text q} = \beta_{\text{q,R}}=\frac{\sqrt{10-3\pi}}{2\pi^{3/2}a_{0}} \simeq 0.1211}$ for R is chosen in such a way that the quadradic fluctuations $\Delta A$ for $\La=0$ matches with SAP.\label{areaFig}}
\end{figure}

We have reviewed three models of closed random paths. In Fig.\ \ref{areaFig}, the expected area as a function of the cosmological constant is represented for each model, to illustrate some similarities and differences between the three models. 

The R model of purely random Brownian paths is likely to be the best toy model for pure JT quantum gravity when the cosmological constant $\La$ is negative (it does not allow to probe the positive cosmological constant regime). Even though it is not a model of random metrics, the ``boundary'' (``boundary'' in quotation marks, because a purely random loop does not bound a disk in general) has the same Hausdorff dimension as for pure JT. The negative curvature version of this model was studied by Kitaev and Suh in \cite{KitaevSuh}. In zero curvature, there is a maximum allowed value for $|\La|$ beyond which the loop ``explodes''. The model can also be studied in positive curvature, yielding interesting insights for positive curvature JT as well, see Section \ref{nonzeroRSec} and \cite{RDFollowup1}.

The SAP model of self-avoiding loops has the advantage to describe random metrics on the disk. It was proposed by Stanford and Yang in \cite{StanfordSAP} as a microscopic definition of JT gravity. It takes into account only a measure zero subset of the space of constant curvature metrics on the disk. It predicts that the boundary is a non-trivial fractal, but with the wrong Hausdorff dimension. The model does not have a $\La\rightarrow -\infty$ semi-classical expansion and the area for a given $\beta_{\text q}$ is not bounded above, features shared by the SOP model describing JT gravity.

The CP model of convex loops is the simplest of all three models we have reviewed. It has the advantage of being exactly solvable. It describes random metrics on the disk, but an even smaller subset than that taken into account in the SAP model. The boundaries in the continuum limit of the convex loop model are smooth. It is a rather terrible model for flat space JT but shares some basic properties with the Schwarzian description of negative curvature JT in the limit $\La\rightarrow -\infty$.

\subsection{\label{SOPJTSec}Random self-overlapping polygons}

We now focus on the model of random self-overlapping curves, which provides the correct microscopic definition of JT quantum gravity on the disk. We shall use the intuition developped in the previous subsections. Our presentation, and the conjectures we propose, should be seen as a springboard for further research in the future. In particular, we provide only very few proofs, and only of elementary facts.

There are actually many random SOP models one may wish to consider. The model most relevant to us is the one corresponding to the definition \eqref{Zdefbdcurve}, which counts SOP with the multiplicity factor $\mu_{\gamma}$. This model provides the microscopic definition of pure JT gravity. When we talk about \emph{the} SOP model, we refer to this one. One could also consider models associated with the coupling of JT gravity with a bulk QFT, which amounts to counting SOPs with a much more complicated measure. For instance, precise predictions for the case of a coupling to an arbitrary CFT were made in \cite{ferrari}. Another obvious model to contemplate is the model with the uniform measure, for which we count each SOP once without taking into account the multiplicity factor $\mu_{\gamma}$. The generating function, noted with a hat, is then
\be\label{ZdefSOPprime} \hat W (t,g) = \sum_{\gamma\in\text{SOP}}t^{F(\gamma)}g^{v(\gamma)} = \sum_{n,p\geq 1}\hat W_{2n,p}t^{p}g^{2n}\, .\ee
We call this model \SOPu. It is not very natural from the metric point of view, since it amounts to identifying metrics that are not diffeomorphism-equivalent, but, as a model of random paths, \SOPu is interesting in its own right and some properties are easier to prove for \SOPu than for SOP. We discuss SOP and \SOPu in parallel below.

\subsubsection{Basic results and conjectures}

A crucial basic property we would like to understand is the exponential growth of the numbers $w_{2n}$ and $\hat w_{2n}$ of self-overlapping polygons counted with and without multiplicity, respectively. This amounts to proving the existence of the connective constant $1/g_{*}$ for these models. A simple proof, adapting well-known ideas, can be made in the case of the \SOPu model.

\begin{theorem}\label{connectiveTh} Let $\hat w_{2n}$ be the number of self-overlapping polygons of length $2n$ in the \SOPu model (for which polygons are counted with the uniform measure). Then the connective constant
\be\label{conncstSOPhat} \lim_{n\rightarrow\infty}\bigl(\hat w_{2n}\bigr)^{\frac{1}{2n}} = 1/\hat g_{*}\ee
exists and is finite.
\end{theorem}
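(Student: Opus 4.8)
The plan is to prove existence and finiteness of the limit by the classical Fekete (sub/superadditivity) route, the same philosophy used for the connective constant of self-avoiding walks and polygons. Writing $a_{n} = \ln \hat w_{2n}$, it suffices to establish two facts: that $a_{n}$ grows at most linearly, which controls the limit from above and guarantees finiteness; and that $a_{n}$ is \emph{approximately superadditive}, i.e.\ that there exist fixed constants $c_{0}\in\mathbb N$ and $C_{0}>0$, independent of $n_{1},n_{2}$, with $a_{n_{1}}+a_{n_{2}}\le a_{n_{1}+n_{2}+c_{0}}+C_{0}$. A standard variant of Fekete's lemma for almost-superadditive sequences then yields that $a_{n}/(2n)$ converges to its supremum, and the upper bound forces this supremum to be finite. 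The convergence of $a_{n}/(2n)=\ln(\hat w_{2n})^{1/(2n)}$ is exactly the claimed statement, the limit being $\ln(1/\hat g_{*})$.

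The upper bound is immediate: every SOP of length $2n$ is in particular a closed non-reversing path on the square lattice, and the number of rooted such paths based at a fixed vertex is at most $4\cdot 3^{2n-1}$ (four choices for the first edge, three for each subsequent one, since U-turns are forbidden). Taking into account the bounded symmetry and orientation factors of Section \ref{symfactSec}, one gets $\hat w_{2n}\le \text{const}\cdot 3^{2n}$, hence $\limsup_{n}(\hat w_{2n})^{1/(2n)}\le 3<\infty$; together with $\hat w_{2n}\ge 1$ for $n\ge 2$ this also keeps the limit strictly positive. For the superadditivity I would use a \emph{gluing}, or boundary-connected-sum, construction on the distorted disks themselves, exploiting the freedom provided by the immersion picture of Section \ref{immersionSec}. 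Given two SOPs $\gamma_{1},\gamma_{2}$ bounding distorted disks realized by immersions $F_{1},F_{2}:\disk\to\mathbb R^{2}$, I would translate the image of $F_{2}$ so that it lies entirely to the right of the image of $F_{1}$, attach a short straight ``neck'' of fixed combinatorial length $c_{0}$ joining the rightmost boundary edge of the first disk to the leftmost boundary edge of the second, and use the fact that the boundary connected sum of two disks is again a disk. The combined map is then an immersion whose boundary is a single SOP of length $2(n_{1}+n_{2})+c_{0}$, the degree constraint $v_{\text L}-v_{\text R}=4$ being automatically restored by the neck, consistently with the Whitney-index computation of Section \ref{WhitneySec}.

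The hard part will be making this gluing \emph{canonical and injective}, so that the inequality $\hat w_{2n_{1}}\hat w_{2n_{2}}\le C_{0}\,\hat w_{2(n_{1}+n_{2})+c_{0}}$ genuinely holds. This is the analogue of the passage from walks to ``bridges'' in the Hammersley--Welsh argument: one must pin down preferred base points (for instance the lexicographically extremal lattice point of each \emph{image}, which is well defined even when the disk overlaps itself, since it refers only to the target-plane figure) and a preferred attaching edge, so that from the glued SOP one can unambiguously locate the neck, cut it, and recover the ordered pair $(\gamma_{1},\gamma_{2})$. The subtlety special to self-overlapping polygons, absent for self-avoiding ones, is that the neck must be placed so as not to create spurious new overlaps or self-intersections that would obstruct the decomposition, and that it is the extremal point of the image rather than of the abstract polygon that is the correct marker; verifying that the reconstruction map is well defined and at most boundedly many-to-one is where the real work lies. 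Once this injectivity is secured, the finitely many bookkeeping factors coming from the symmetry conventions of Section \ref{symfactSec} and from the choice of neck are absorbed into the constant $C_{0}$, approximate superadditivity holds, and Fekete's lemma concludes the proof.
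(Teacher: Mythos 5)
Your overall strategy --- a trivial exponential upper bound plus a superadditivity-and-Fekete argument based on gluing two distorted disks into one --- is exactly the philosophy of the paper's proof, and the upper bound $\hat w_{2n}\leq 4\times 3^{2n}$ you give is the same one the paper uses. Where you diverge is in the gluing itself, and the divergence matters. The paper glues $\gamma_{1}$ and $\gamma_{2}$ along a single \emph{upper} horizontal boundary edge of $\gamma_{1}$ and a single \emph{lower} horizontal boundary edge of $\gamma_{2}$, inserting exactly one new square face between them (Fig.~\ref{glueFig}); the boundary lengths then add \emph{exactly}, $n_{\gamma_{1}\sqcup\gamma_{2}}=n_{\gamma_{1}}+n_{\gamma_{2}}$, so one gets strict super-multiplicativity $\hat w_{2n_{1}}\hat w_{2n_{2}}\leq \hat w_{2(n_{1}+n_{2})}$ and the plain Fekete lemma applies, with no neck, no offset $c_{0}$, and no almost-superadditive variant. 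Injectivity of this gluing is immediate: for a fixed choice of attaching edges the distorted disk of $\gamma_{1}$ lies entirely below that of $\gamma_{2}$ in the glued configuration, and since $2n_{1}$ is part of the data of the map, the pair $(\gamma_{1},\gamma_{2})$ is recovered unambiguously. This sidesteps precisely the difficulties you flag --- aligning a neck between images that may sit at different heights, avoiding spurious interference, and choosing canonical base points on the image rather than on the abstract polygon.

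As written, your proposal is therefore not yet a proof: the one step that carries all the weight --- showing that your neck construction is well defined, respects the degree constraints at the attachment corners, has a uniformly bounded neck length $c_{0}$ independent of how the two extremal edges are vertically offset, and admits a boundedly many-to-one reconstruction --- is explicitly deferred (``where the real work lies''). Note also that your worry about the neck ``creating spurious new overlaps'' is a red herring for this model: overlaps are allowed for self-overlapping polygons, and a boundary connected sum of two distorted disks is automatically a distorted disk; the only genuine issue is the injectivity of the counting map. If you replace the neck by the paper's one-face edge gluing, all of these concerns evaporate and your argument closes.
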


Note that the connective constants of the various models we have studied automatically satisfy
\be\label{connectcstineq} 1/g_{*}^{\text R}\geq  1/\hat g_{*}\geq 1/g_{*}^{\text{SAP}}\geq 1/g_{*}^{\text{CP}}\, ,\ee
simply because there are more arbitrary polygons than SOP, more SOP than SAP, and more SAP than CP. These inequalities are true on any lattice, in spite of the fact that the precise values of the connective constants depend on the lattice.\footnote{We recall that on the square lattice $1/g_{*}^{\text R}=4$, $1/g_{*}^{\text{SAP}}\simeq 2.638$ and $1/g_{*}^{\text{CP}} =2$. It is natural to conjecture that \eqref{connectcstineq} is valid with strict inequalities.}

In the case of SOP, we conjecture that a similar result is true, but we shall not try to provide a proof.

\begin{conjecture}\label{connectiveConj} Let $w_{2n}$ be the number of self-overlapping polygons of length $2n$ in the SOP model, i.e.\ counted with the multiplicity factor $\mu_{\gamma}$. Then the connective constant
\be\label{conncstSOP} \lim_{n\rightarrow\infty}\bigl(w_{2n}\bigr)^{\frac{1}{2n}} = 1/g_{*}\ee
exists and is finite.
\end{conjecture}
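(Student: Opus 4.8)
The plan is to reduce the conjecture to the two ingredients that, by Fekete's lemma, together force both the existence and the finiteness of the limit: a supermultiplicativity estimate for the sequence $w_{2n}$, and a crude exponential upper bound $w_{2n}\le C^{2n}$. This is the same architecture as the classical proof that the self-avoiding walk possesses a connective constant, and presumably as the proof of Theorem \ref{connectiveTh} for the uniform model \SOPu. The whole difficulty is that the multiplicity weight $\mu_\gamma$ is a \emph{global}, non-local invariant, so neither estimate is automatic, in contrast to the purely local self-avoidance constraint.

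First I would establish the upper bound. Write $w_{2n}=\sum_{\gamma}\mu_\gamma$, the sum running over SOPs of length $2n$ with the symmetry conventions of Section \ref{symfactSec}. The number of closed lattice loops of length $2n$ (with no U-turns) is at most $4\cdot 3^{2n-1}$, so the number of distinct SOP shapes is bounded by $3^{2n+1}$. It then remains to bound the multiplicity of a single curve. By \eqref{degreeW} any distorted disk bounded by $\gamma$ has area $p\le [n^{2}/4]$, and the self-intersections of $\gamma$ number at most $O(n)$ (a lattice loop of length $2n$ passes through each visited vertex at most twice, so the double points $d$ satisfy $d\le 2n$). Via Blank's description of Section \ref{BlankSec}, each distorted disk corresponds to a reduction of the Blank word, and the number of admissible reductions is controlled by the number of self-intersections; the Shor--Van Wyk analysis \cite{Shoralgo} should then give $\mu_\gamma\le C^{n}$ for a universal constant $C$. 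Hence $w_{2n}\le (3C')^{2n}$ and $\limsup_{n}w_{2n}^{1/2n}<\infty$.

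Next I would prove supermultiplicativity by concatenation. Given SOPs $\gamma_1,\gamma_2$ of lengths $2m,2n$, I would glue them along a boundary edge chosen, on each curve, to lie in a simply connected complementary component with a single layer, $\nu_\gamma(P)=1$ in the notation of Proposition \ref{windprop}. In such a region the immersion is locally an embedding, so the gluing is unambiguous, the resulting curve $\gamma_{12}$ again bounds a distorted disk, and the constraint \eqref{EulergenflatJTdisk} is preserved, since $v_{\text L}-v_{\text R}$ is additive away from the surgery and the surgery is arranged to be neutral. Crucially, each of the $\mu_{\gamma_1}$ disks of $\gamma_1$ can be glued to each of the $\mu_{\gamma_2}$ disks of $\gamma_2$, and because the gluing edge sits in single-layer regions the resulting $\mu_{\gamma_1}\mu_{\gamma_2}$ disks should be pairwise non-diffeomorphic with the decomposition recoverable. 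Tracking boundary lengths would then yield an inequality of the form $w_{2m}\,w_{2n}\le K\,w_{2(m+n)+b}$ with fixed constants $K,b$, a generalized supermultiplicativity.

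Fekete's lemma applied to $\ln w_{2n}$ then gives the existence of $\lim_{n\to\infty} w_{2n}^{1/2n}=1/g_*$ as a supremum; the exponential upper bound makes it finite, and since every SAP is a SOP one has $w_{2n}\ge w_{2n}^{\text{SAP}}$, forcing $1/g_*\ge 1/g_*^{\text{SAP}}>1$, so that $g_*$ is a genuine, strictly positive radius of convergence. The main obstacle, and the reason this is stated only as a conjecture, is the control of $\mu_\gamma$ under surgery: one must guarantee both that the multiplicities genuinely \emph{multiply} (distinctness of the glued disks and injectivity of the gluing across different pairs $(\gamma_1,\gamma_2)$) and that the cut neither creates nor destroys distorted disks of $\gamma_{12}$. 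Because multiplicity is governed by the non-local Blank-word combinatorics rather than a local exclusion rule, making this surgery estimate rigorous --- together with the quantitative bound $\mu_\gamma\le C^{n}$, which I have only sketched from the algorithmic literature --- is where the real work lies.
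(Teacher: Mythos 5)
You should first register that the paper contains \emph{no proof} of this statement: it is presented as a conjecture, and the sentence introducing it says explicitly that no proof will be attempted. What the paper does prove is the analogous Theorem~\ref{connectiveTh} for the uniform model, and the supermultiplicative half of your argument for the weighted count: the gluing construction of Eqs.~\eqref{superadd3}--\eqref{glueid1} joins a \emph{topmost} boundary edge of $\gamma_{1}$ to a \emph{bottommost} boundary edge of $\gamma_{2}$, which makes the map injective (the disk of $\gamma_{1}$ sits entirely below that of $\gamma_{2}$, so the pair is recoverable) and gives $\mu_{\gamma_{1}\sqcup\gamma_{2}}=\mu_{\gamma_{1}}\mu_{\gamma_{2}}$ and $n_{\gamma_{1}\sqcup\gamma_{2}}=n_{\gamma_{1}}+n_{\gamma_{2}}$ with no length correction; summing yields $w_{2(n_{1}+n_{2})}\geq w_{2n_{1}}w_{2n_{2}}$ exactly. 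Your variant, gluing in an arbitrary single-layer complementary component, is weaker on precisely the points you worry about (recoverability of the pair, behaviour of the multiplicity), and note that for the supermultiplicative \emph{inequality} only $\mu_{\gamma_{1}\sqcup\gamma_{2}}\geq\mu_{\gamma_{1}}\mu_{\gamma_{2}}$ is needed, so the question of whether the glued curve acquires extra disks is irrelevant to this direction. Fekete's lemma then gives existence of the limit in \eqref{conncstSOP}, possibly equal to $+\infty$; up to here your plan coincides with the paper's.

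The genuine gap is the finiteness, i.e.\ the exponential upper bound on the multiplicity, and here your argument does not close. The claim $\mu_{\gamma}\leq C^{2n}$ is exactly the paper's separate Conjecture~\ref{multconj}, which is open; citing the Shor--Van Wyk algorithm establishes that $\mu_{\gamma}$ is \emph{computable} in polynomial time, not that it is exponentially bounded. Your intermediate step that a lattice loop of length $2n$ has $O(n)$ double points is also false for self-overlapping polygons: their boundaries may retrace lattice edges and revisit vertices many times (cf.\ Fig.~\ref{figMilnormin}), so even the length of the Blank word is not linearly controlled a priori, and the number of Blank-word reductions is not obviously $C^{n}$. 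The paper shows the multiplicity genuinely does grow exponentially ($2^{n/24}$ by concatenating Milnor blocks), and the only rigorous bound it records is $\ln w_{2n}=O(n^{2})$ from the count of all disk quadrangulations, Eq.~\eqref{roughineq}, which is far too weak: it allows $w_{2n}^{1/2n}$ to diverge. Until Conjecture~\ref{multconj} is proved, your argument, like the paper's, establishes only that the limit exists in $[1/\hat g_{*},+\infty]$, not that it is finite.
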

Note that the inequality
\be\label{connectcstineq2} 1/g_{*}\geq  1/\hat g_{*}\ee
is then trivially satisfied.

The proof of Th.\ \ref{connectiveTh} is based on the super-multiplicativity of the sequence $(\hat w_{2n})_{n\geq 2}$. More generally, one has the following. 

\begin{proposition} If $\mathcal W_{2n,p,\mu}$ denotes the number of SOP of boundary length $2n$, area $p$ and multiplicity $\mu$, then
\be\label{superadd3} \mathcal W_{2(n_{1}+n_{2}),p_{1}+p_{2}+1,\mu_{1}\mu_{2}}\geq \mathcal W_{2n_{1},p_{1},\mu_{1}}\mathcal W_{2n_{2},p_{2},\mu_{2}}\, .\ee
Similarly, the sequences $(\hat W_{2n,p})_{n\geq 2,p\geq 1}$ and $(\hat w_{2n})_{n\geq 2}$ satisfy
\be\label{superadd12}\hat W_{2(n_{1}+n_{2}),p_{1}+p_{2}+1}\geq \hat W_{2n_{1},p_{1}}\hat W_{2n_{2},p_{2}}\, ,\quad \hat w_{2(n_{1}+n_{2})}\geq \hat w_{2n_{1}}\hat w_{2n_{2}}\, .\ee
\end{proposition}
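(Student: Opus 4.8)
The plan is to prove all three inequalities at once by exhibiting a single explicit, injective \emph{boundary connected sum} that glues two self-overlapping polygons into a larger one, in the spirit of the classical supermultiplicativity arguments for self-avoiding polygons. Given SOPs $\gamma_1$ and $\gamma_2$ with their canonical anticlockwise orientation (Section~\ref{bdcodeSec}), placed in canonical positions by a lattice isometry, I would join them with one \emph{bridge tile}: a single unit square whose left edge is identified with a distinguished boundary edge $e_1$ of $\gamma_1$ and whose right edge is identified with a distinguished boundary edge $e_2$ of $\gamma_2$, so that $e_1$ and $e_2$ become internal and the two remaining bridge edges join the new boundary. Tracing the boundary of the union shows that the $2n_1-1$ surviving edges of $\gamma_1$, the two free bridge edges, and the $2n_2-1$ surviving edges of $\gamma_2$ assemble into a single closed curve $\gamma$ of length $(2n_1-1)+1+(2n_2-1)+1=2(n_1+n_2)$, while the total face count is $p_1+p_2+1$. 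This is precisely the index shift in the statement.

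First I would check that $\gamma$ is again an SOP. Combinatorially, the glued surface is obtained by attaching a disk to a square, and the result to a second disk, each gluing being along a single boundary edge; hence it is again a disk, now with $p_1+p_2+1$ quadrangular faces and boundary $\gamma$, so $\gamma$ bounds a distorted disk. The only genuine point is that no forbidden boundary vertex of degree $\ge 5$ is created: the bridge raises the degree of each of its four corners by exactly one (each corner acquires one new free bridge edge), so it suffices to pick $e_1$, and likewise $e_2$, whose two endpoints both have degree $\le 3$, i.e.\ are left-turn or straight vertices in the sense of \eqref{k4qform}. Such an edge always exists: if no two cyclically consecutive boundary vertices were both non-right-turns, the non-right-turn vertices would form an independent set in the boundary cycle, forcing $v_{\text L}+v_{\text S}\le n$ and hence $v_{\text R}\ge n$, so that $v_{\text L}=v_{\text R}+4\ge n+4$ by \eqref{EulergenflatJTdisk} and $2n=v_{\text L}+v_{\text S}+v_{\text R}\ge 2n+4$, a contradiction. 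Since by Corollary~\ref{Areacorollary} the area depends only on the shape of the boundary, $\gamma$ has area exactly $p_1+p_2+1$ and, by construction, length $2(n_1+n_2)$.

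For the two uniform inequalities \eqref{superadd12} only injectivity of the map $(\gamma_1,\gamma_2)\mapsto\gamma$ is required. I would obtain it by rooting the boundary word of $\gamma$ canonically at the bridge (which contributes a recognizable local turn pattern) and reading off the two factor words; restricting to fixed areas gives the inequality for $\hat W_{2n,p}$, and summing over areas yields $\hat w_{2(n_1+n_2)}\ge\hat w_{2n_1}\hat w_{2n_2}$. For the multiplicity-weighted inequality \eqref{superadd3} I must in addition show $\mu_\gamma=\mu_1\mu_2$ \emph{exactly}, so that $\gamma$ falls in the bin indexed by $\mu_1\mu_2$. The bound $\mu_\gamma\ge\mu_1\mu_2$ is immediate, since gluing any of the $\mu_1$ disks bounded by $\gamma_1$ to any of the $\mu_2$ disks bounded by $\gamma_2$ across the bridge produces $\mu_1\mu_2$ manifestly distinct distorted disks (a diffeomorphism of two glued surfaces restricts to diffeomorphisms of the two lobes separated by the neck). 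For the reverse bound I would use the Blank-word formalism of Sections~\ref{BlankSec} and~\ref{MilnorSec}, in which $\mu_\gamma$ equals the number of distinct complete reductions of the Blank word: the bridge realizes a Blank cut splitting the word of $\gamma$ into the words of $\gamma_1$ and $\gamma_2$, and I would argue that every reduction factorizes through this cut into independent reductions of the two factors, giving a bijection between reductions of $\gamma$ and pairs of reductions.

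I expect this exact multiplicity count to be the main obstacle. The delicate issue is that the number of layers over the bridge tile is fixed by $\gamma$ alone (Proposition~\ref{windprop}) and may exceed one when other sheets of the two disks happen to cover it, so the neck need not be a single clean sheet; one must exclude the possibility that this interaction generates \emph{spurious} reductions of the Blank word not descending from the factors, or, dually, fillings that fail to decompose along the neck. Controlling this — presumably by choosing the attaching edges so that the bridge cut is forced and commutes with every other Blank cut, so that the order of reduction is irrelevant — is the heart of the argument. By contrast, the existence of the attaching edge and the length and area bookkeeping are elementary, and the uniform inequalities \eqref{superadd12} require none of this finer analysis.
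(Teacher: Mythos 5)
Your bridge-tile strategy is exactly the paper's, and your length/area bookkeeping, the lower bound $\mu_{\gamma}\geq\mu_{1}\mu_{2}$, and the injectivity-via-rooting idea are all sound. But the step you yourself single out as ``the main obstacle'' --- proving $\mu_{\gamma}=\mu_{1}\mu_{2}$ exactly, i.e.\ excluding fillings of $\gamma$ that do not decompose along the neck --- is a genuine gap that your degree-based choice of attaching edges does not close. If $e_{1}$ and $e_{2}$ are chosen only so that their endpoints have degree $\leq 3$, nothing prevents the two glued polygons (and their distorted disks) from overlapping each other elsewhere in the plane, covering the bridge tile with extra sheets; in that situation neither the factorization of the multiplicity nor even the clean recovery of the pair $(\gamma_{1},\gamma_{2})$ from $\gamma$ is evident, and your proposed Blank-word commutation argument is left as a hope rather than a proof.

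The missing idea is the \emph{choice of gluing sites}: the paper glues an \emph{uppermost} horizontal edge of $\gamma_{1}$ (one attaining the maximal $y$-coordinate) to a \emph{lowermost} horizontal edge of $\gamma_{2}$. This forces $\gamma_{1}$ and every distorted disk it bounds to lie in the closed half-plane below the bridge, and $\gamma_{2}$ and its disks to lie above it, so the two factors cannot interact: the bridge tile has winding number one, every filling of $\gamma$ has a single sheet over the neck and splits there into a filling of $\gamma_{1}$ and one of $\gamma_{2}$, giving $\mu_{\gamma}=\mu_{1}\mu_{2}$ at once; injectivity of $(\gamma_{1},\gamma_{2})\mapsto\gamma$ is immediate because the $\gamma_{1}$-disk sits entirely below the $\gamma_{2}$-disk and the lengths $2n_{i}$ fix the splitting. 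As a bonus, the boundary code of an extremal boundary piece is necessarily of the form $\text{LS}\cdots\text{SL}$, so the endpoints of an extremal edge have degree $2$ or $3$ automatically and your separate counting argument (which is correct but unnecessary) can be dropped. With this one modification your proof becomes complete and coincides with the paper's.
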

\begin{proof} We adapt a well-known gluing argument used to prove the super-multiplicativity property in the case of SAP.

\begin{figure}
\centerline{\includegraphics[width=6in]{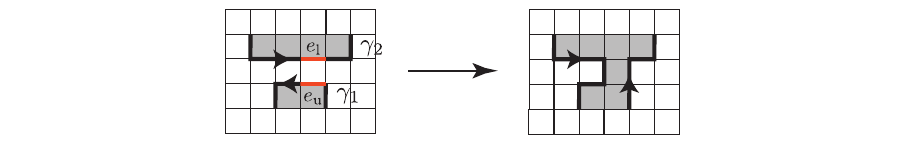}}
\caption{\label{glueFig}Gluing of two SOP $\gamma_{1}$ and $\gamma_{2}$ along an upper boundary edge $e_{\text u}$ of $\gamma_{1}$ and a lower boundary edge $e_{\text l}$ of $\gamma_{2}$ (outlined in orange) to form a new SOP $\gamma_{1}\sqcup\gamma_{2}$. Some faces of the distorted disks bounded by $\gamma_{1}$, $\gamma_{2}$ and $\gamma_{1}\sqcup\gamma_{2}$ are in grey. This construction is used in the proof of the superadditivity inequalities \eqref{superadd12}.}
\end{figure}

We introduce a Cartesian coordinate system $(x,y)\in\mathbb Z^{2}$ on the square lattice. For any self-overlapping polygon on the lattice, we call ``upper boundary piece'' any segment of the polygon that contains horizontal edges having the maximum possible value of $y$ and we call similarly ``lower boundary piece'' any segment of the polygon that contains horizontal edges having the minimum possible value of $y$.\footnote{Note that distinct lower or upper boundary pieces may be located on top of each other, as in the example of Fig.\ \ref{figMilnormin}.} The code for a upper or a lower boundary piece is necessarily of the form $\text{LS}\cdots\text{SL}$, with $q\geq 0$ Straight vertices appearing in between the two left turns. We call a boundary edge upper (or lower) if it a horizontal edge belonging to a upper (lower) boundary piece.\footnote{Note that we could similarly introduce the notions of rightmost and leftmost boundary pieces and rightmost and leftmost edges.} Note that any polygon has at least one upper and one lower edge.

We then define a notion of gluing two self-overlapping polygons $\gamma_{1}$ and $\gamma_{2}$ along an upper boundary edge $e_{\text u}$ of $\gamma_{1}$ and a lower boundary edge $e_{\text l}$ of $\gamma_{2}$ as depicted in Fig.\ \ref{glueFig}. The new polygon obtained after the gluing is noted $\gamma=\gamma_{1}(e_{\text u})\sqcup\gamma_{2}(e_{\text l})$. We also use the  ambiguous notation $\gamma=\gamma_{1}\sqcup\gamma_{2}$ when the precise choice of upper and lower boundary edges is irrelevant.

Crucially, if $\gamma_{1}$ and $\gamma_{2}$ are self-overlapping, the new polygon $\gamma=\gamma_{1}\sqcup\gamma_{2}$ is self-overlapping. This is obvious from the definition given in Section \ref{diskviewSec}: if $\gamma_{1}$ and $\gamma_{2}$ bound distorted disks, then $\gamma$ also bounds a distorted disk. Moreover, if we note $2n_{\gamma}$, $p_{\gamma}$ and $\mu_{\gamma}$ the boundary length, area and multiplicity of a SOP $\gamma$, then clearly
\be\label{glueid1}  n_{\gamma_{1}\sqcup\gamma_{2}} = n_{\gamma_{1}}+n_{\gamma_{2}}\, ,\quad p_{\gamma_{1}\sqcup\gamma_{2}} = p_{\gamma_{1}}+p_{\gamma_{2}}+1\, ,\quad \mu_{\gamma_{1}\sqcup\gamma_{2}} = \mu_{\gamma_{1}}\mu_{\gamma_{2}}\, .\ee
If we denote by $\text{SOP}_{2n,p,\mu}$ the set of SOP of length $2n$, area $p$ and multiplicity $\mu$, we can use the gluing to construct a map
\be\label{mapsuperadd1} \sqcup:\text{SOP}_{2n_{1},p_{1},\mu_{1}}\times \text{SOP}_{2n_{2},p_{2},\mu_{2}}\rightarrow \text{SOP}_{2(n_{1}+n_{2}),p_{1}+p_{2}+1,\mu_{1}\mu_{2}}\, .\ee
There are of course many possible maps, depending on the detailed choice of upper and lower boundary edges used in the gluing of each pair of elements of $\text{SOP}_{2n_{1},p_{1},\mu_{1}}\times \text{SOP}_{2n_{2},p_{2},\mu_{2}}$. We work with a given choice. This map is injective. This is immediately checked by noting that in $\gamma_{1}\sqcup\gamma_{2}$, the distorted disk associated with $\gamma_{1}$ is entirely below the distorted disk associated with $\gamma_{2}$. The equality $\gamma_{1}\sqcup\gamma_{2} =   \gamma'_{1}\sqcup\gamma'_{2}$, together with the fact that the lengths of $\gamma_{i}$ and $\gamma_{i}'$ are given by $2n_{i}$ and are thus the same, then implies that $\gamma_{i}=\gamma_{i}'$. The existence of the injective map $\sqcup$ implies the inequality \eqref{superadd3}. Moreover, denoting by $\text{SOP}_{2n,p}$ and $\text{SOP}_{2n}$ the set of SOP of length $2n$ and area $p$, or the set of SOP of boundary length $2n$, respectively, the gluing restricts to injective maps
\begin{align}\label{mapsuperadd2} & \sqcup:\text{SOP}_{2n_{1},p_{1}}\times \text{SOP}_{2n_{2},p_{2}}\rightarrow \text{SOP}_{2(n_{1}+n_{2}),p_{1}+p_{2}+1}\, ,\\
\label{mapsuperadd3} & \sqcup:\text{SOP}_{2n_{1}}\times \text{SOP}_{2n_{2}}\rightarrow \text{SOP}_{2(n_{1}+n_{2})}\, .
\end{align}
The existence of these maps imply the inequalities \eqref{superadd12}. 

By Fekete's lemma, the super-multiplicativity property of the sequence $(\hat w_{2n})$ implies that the limit in \eqref{conncstSOPhat}. A priori, the limit may be infinite. However, we have the trivial bound $\hat w_{2n}\leq 4\times 3^{2n}$ by the total number of possible boundary codes times the factor of four due to lattice rotations. Theorem \ref{connectiveTh} is thus proved. 
\end{proof}

The SOP model is more subtle to analyse, because the measure is not uniform and, moreover, we do not know the behaviour of the multiplicity factor. It is not difficult to construct SOP for which the multiplicity grows exponentially with the boundary length.\footnote{I would like to thank Peter Shor for pointing out this possibility.} For example, by gluing $r$ times the smallest Milnor SOP depicted in Fig.\ \ref{figMilnormin}, we get a SOP of length $2n = 48 r$ that has multiplicity $2^{r}=2^{n/24}$. It is natural to conjecture that the multiplicity cannot grow faster than an exponential of the length.

\begin{conjecture}\label{multconj} The multiplicity of a self-overlapping polygon of length $2n$, that is to say, the number of distinct distorted disks it bounds, is bounded above by $A B^{2n}$ for some strictly positive constants $A$ and $B$.
\end{conjecture}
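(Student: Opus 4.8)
The plan is to control the multiplicity $\mu_\gamma$ by the number $s$ of self-intersection points of $\gamma$ rather than by its area $p$, and then to bound $s$ linearly in the boundary length $2n$. The linear bound on $s$ is elementary: regarding the length-$2n$ polygon $\gamma$ as a $4$-valent plane graph (the self-intersections are transverse double points, hence degree-$4$ vertices, while every other visited lattice point has degree $2$), the handshake identity gives $4s + 2 v_2 = 2\cdot 2n = 4n$, where $v_2$ is the number of degree-$2$ vertices; hence $s \le n$. By Euler's formula the number of bounded complementary regions is then $m = s+1 \le n+1$. The point of passing to $s$ is that, by Corollary \ref{Areacorollary} together with the layer-counting of Proposition \ref{windprop}, every distorted disk bounded by $\gamma$ shares the \emph{same} set of layer numbers $\nu_i$ over the regions $\Omega_i$; distinct disks therefore differ only in the \emph{combinatorics of the gluing} of the sheets along the arcs of $\gamma$, and this gluing datum is localised at the $s$ self-intersections rather than spread over the $p$ faces. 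A naive count over the $\sum_i\nu_i = p$ faces would only give a bound exponential in $p\le n^2/4$, which is too weak, so concentrating the freedom at the crossings is the essential move.

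Second, I would show that the number of distinct fillings is at most $C^{s}$ for an absolute constant $C$. The cleanest route uses the Blank-cut description of \ref{BlankSec}: each distorted disk arises from a complete system of Blank cuts reducing $\gamma$ to simple pieces, the cut endpoints sitting on the arcs delimited by the self-intersection points. Because all fillings share the same $\nu_i$, one may encode a filling by the way the two strands through each double point are threaded into the decomposition, a choice drawn from a fixed finite alphabet of local resolutions compatible with an orientation-preserving immersion. The global disk is then reconstructed crossing-by-crossing with boundedly many admissible continuations at each step, so the number of complete, consistent resolutions is at most $C^{s}$. Equivalently, one can feed $\gamma$ to the Shor--Van Wyk dynamic program \cite{Shoralgo}: its state is polynomial in the data at the crossings, and the number of accepting computations, which equals $\mu_\gamma$, is bounded by the product of the per-stage branchings, again of the form $C^{s}$.

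Combining the two steps, $\mu_\gamma \le C^{s} \le C^{n} = B^{2n}$ with $B = \sqrt{C}$, and any subexponential prefactor is absorbed into the constant $A$, which proves the conjectured bound. The main obstacle is the second step: one must prove rigorously that the freedom in assembling the sheets is genuinely \emph{constant per crossing} and does not accumulate globally, i.e.\ that the global gluing is pinned down by local resolution data at the double points. This is precisely the content that the Blank--Titus--Marx theory (and the correctness proof underlying the Shor--Van Wyk algorithm) makes available, and I expect the honest proof to consist in extracting from that theory a clean ``number of admissible resolutions $\le C^{s}$'' statement. Finally, the exponential lower bound obtained by iterating the Milnor configuration of Fig.\ \ref{figMilnormin}, giving multiplicity $2^{n/24}$, shows that such a constant-per-crossing estimate is the best one can hope for and fixes the qualitative form $B^{2n}$.
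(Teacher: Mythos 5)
First, be aware that the paper offers no proof of this statement: it is stated as a conjecture and left open. The only rigorous bound derived in the surrounding text is the much weaker $\ln w_{2n}\leq O(n^{2})$, obtained by majorizing the number of distorted disks with boundary length $2n$ by the total number of disk quadrangulations $\mathsf W_{2n,p}$ summed over the allowed areas $p\leq [n^{2}/4]$. So your proposal has to be judged on whether it actually closes the gap, and it does not. Your first step is fine: since $\sum_{P}m_{P}=2n$ over visited lattice points, at most $n$ points are visited more than once, so a generic curve has $s\leq n$ transverse double points and $m=s+1\leq n+1$ bounded complementary regions (the genericity caveat for lattice polygons, where points can be visited more than twice and edges retraced, is a minor technicality).

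The gap is exactly where you locate it, but it is larger than you suggest: the assertion that the number of inequivalent fillings is at most $C^{s}$ because the gluing freedom is ``constant per crossing'' is not something the Blank--Titus--Marx theory hands you; it is essentially a restatement of the conjecture. Concretely: (i) a distorted disk is not assembled from independent local resolutions at the double points --- admissibility is a global, non-local condition (this is precisely why SOP recognition is nontrivial and why the Milnor curve is subtle), so even granting a finite alphabet of local resolutions you must still show that the number of \emph{globally consistent} ones does not exceed $C^{s}$, which is the whole problem; (ii) the Blank-word route does not obviously deliver this, because the length of the Blank word is not controlled by $s$: each of the up to $s+1$ rays $L_{i}$ can cross $\gamma$ order $n$ times, so the word can have length of order $ns$, and the number of maximal reductions of a word is a priori exponential in its length, not in $s$; (iii) the appeal to the Shor--Van Wyk algorithm would require a precise statement that $\mu_{\gamma}$ is bounded by the number of accepting derivations of an interval dynamic program over $O(n)$ boundary positions with bounded per-cell branching --- if such a statement could be extracted from the correctness proof in \cite{Shoralgo} it would give $\mu_{\gamma}\leq C^{2n}$, which suffices (you do not even need the sharper $C^{s}$), but that extraction is the entire content of the would-be proof and is not carried out. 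Your closing observation, that iterating the Milnor block of Fig.\ \ref{figMilnormin} yields multiplicity $2^{n/24}$ and so fixes the exponential form as optimal, reproduces a remark already made in the paper.
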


This would imply in particular that $w_{2n}\leq A B^{2n}\hat w_{2n}\leq 4A (3B)^{2n}$, i.e.\ $\ln w_{2n}\leq O(n)$. Such a bound is natural for random path models but rather non-trivial  for a random geometry model. From the metric point of view, we can make the following reasoning. The total number of quadrangulations of the disk, with boundary length $2n$ and $p$ faces, is $\mathsf W_{2n,p}$, given in Eq.\ \eqref{WpndiskL}. It is easy to check that $\mathsf W_{2n,p+1}>\mathsf W_{2n,p}$. Moreover, quadrangulations corresponding to flat disks necessarily have $n-1\leq p \leq[n^{2}/4]$, see e.g.\ the discussion after Hyp.\ \ref{hyp1} in Section \ref{GenRPmodSec}. Thus
\be\label{roughineq} w_{2n}\leq\sum_{p=n-1}^{[n^{2}/4]}\mathsf W_{2n,p}\leq \bigl([n^{2}/4]-n+2\bigr)\mathsf W_{2n,[n^{2}/4]}\, .\ee
The large $n$ asymptotics of $\mathsf W_{2n,[n^{2}/4]}$ then yields the bound $\ln w_{2n}\leq O(n^{2})$. This is much weaker than the bound $\ln w_{2n}\leq O(n)$ that follows from the Conjecture \ref{multconj}.

More generally, we conjecture that the SOP model fits in the general framework discussed in Section \ref{randompathrev}. In particular,
\begin{conjecture} The critical exponents $\nu$ and $\vartheta$, which govern the way the continuum limit is defined, Eq.\ \eqref{contpathdef}, and the partition function at zero cosmological constant, Eq.\ \eqref{WPgenzeroL}, are given by
\be\label{nuthetaSOP} \nu = \frac{1}{2}\,\cvp\quad\vartheta = 2\, .\ee
\end{conjecture}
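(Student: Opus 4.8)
The plan is to treat the two exponents separately, since $\nu$ is governed by how the bulk area grows with the boundary length while $\vartheta$ is governed by the sub-exponential growth of the multiplicity-weighted number of configurations, and these are controlled by quite different inputs. Throughout I would work inside the scaling framework of Section \ref{randompathrev}, assuming the asymptotic law \eqref{w2nras}; justifying that law for the non-uniform SOP measure is itself part of the problem, but the superadditivity inequalities \eqref{superadd3}--\eqref{superadd12} together with Fekete's lemma already secure the connective constant $1/g_*$ of Conjecture \ref{connectiveConj}, i.e.\ the leading exponential, so the task reduces to pinning down the two sub-leading exponents.

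For $\nu$, I would argue $\nu = 1/2$ directly from the relation $\nu = 1/d_{\text H}$ of point iv) in Section \ref{GenRPmodSec}, using the result of \cite{ferrari} that the boundary of a typical pure-JT disk is a fractal of Hausdorff dimension $d_{\text H} = 2$; this is exactly the value read off from \eqref{critexponentsJT} at $c = 0$, so given that formula the identification is essentially immediate. The combinatorial avatar is that the expected area grows linearly with the length, $\langle p\rangle_{2n} \propto n^{2\nu} = n$, just as for a Brownian loop where the area plays the role of diffusion time. A useful self-contained check is that the area is a non-negative functional of the boundary curve \emph{alone} (Proposition \ref{windprop}, Corollary \ref{Areacorollary} and the identity \eqref{Areaid}), so the moments $\langle A^r\rangle \propto \beta_{\text q}^{\,2\nu r}$ can be read from the curve ensemble without reconstructing the disk; matching $\langle A\rangle \propto \beta_{\text q}$ then fixes $2\nu = 1$. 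I would keep in mind the caveat, already flagged after \eqref{w2nras}, that possible logarithmic corrections may dress this power.

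For $\vartheta$ the target is the entropy exponent: with $\nu = 1/2$ the claim $\vartheta = 2$ is equivalent to $w_{2n} \sim a_0\, g_*^{-2n}(2n)^{-2\nu\vartheta - 1} = a_0\, g_*^{-2n}(2n)^{-3}$, one extra power of $1/n$ relative to the purely random loop ensemble (where $\vartheta = 1$, exponent $-2$). The cleanest route I see is to identify the disk generating function of the SOP model with the disk amplitude of the flat-curvature matrix model of Section \ref{mmSec} (dually weighted graphs with all bulk vertices of degree four), and to read $\vartheta$ off the universal critical behaviour of that amplitude near its double-scaling point; concretely one wants the singular part of the grand-canonical generating function to behave as $(1-t)^{\vartheta}$ in the scaling form \eqref{scalingdeflate}. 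Alternatively, and more in the spirit of \eqref{critexponentsJT}, I would extract $\vartheta$ from the continuum conformal-gauge formulation of \cite{ferraJTconfgauge}, which should supply a formula $\vartheta = \vartheta(c)$ reducing to $2$ at $c = 0$, and cross-check it against numerical enumeration of SOP via the Shor--Van~Wyk and Mukherjee algorithms \cite{Shoralgo,Mukherjeealgo}.

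The \emph{main obstacle} is precisely the determination of $\vartheta$, and within it the multiplicity factor $\mu_\gamma$. Unlike the plain loop, self-avoiding or convex ensembles, the SOP measure is non-uniform and weighted by $\mu_\gamma$, which can itself grow exponentially in the length (the Milnor-gluing construction of Section \ref{Genfun2Sec} and Conjecture \ref{multconj}); establishing even the polynomial correction $(2n)^{-3}$ therefore requires controlling how $\mu_\gamma$ is distributed over the ensemble, for which no exact enumeration or generating-function identity is currently available — closed formulas for the analogous SAP numbers are not known either. A secondary difficulty is that the framework of Section \ref{randompathrev}, including the form \eqref{w2nras} and the uniqueness of the limiting area law, must be justified for a measure carrying this multiplicity rather than simply assumed; and since the flat-bulk constraint places the SOP disk outside the pure-gravity Liouville universality class, the matrix-model identification must be carried out for the genuinely constrained, dually weighted model.
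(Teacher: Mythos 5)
This statement is presented in the paper as a \emph{conjecture}, not a theorem: the only justification offered is the single sentence that it ``is motivated by the continuum approach to JT gravity introduced in \cite{ferrari},'' together with the caveat about possible logarithmic corrections. There is therefore no proof in the paper to compare against, and your proposal should be judged as a research plan rather than checked against an existing argument. With that understood, your route for $\nu$ coincides exactly with the paper's motivation: $\nu=1/d_{\text H}$ with $d_{\text H}=2$ for pure JT, which is the $c=0$ value of Eq.\ \eqref{critexponentsJT}; this is all the paper itself invokes. One caution on your ``self-contained check'': Corollary \ref{Areacorollary} does let you read the area off the boundary curve alone, but it does not by itself give $\langle A\rangle\propto\beta_{\text q}$ --- that scaling is an input from the continuum theory, not a consequence of the curve description. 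Indeed the Brownian analogy you lean on is delicate here, since for unconstrained Brownian loops the relevant winding area has divergent expectation (Eq.\ \eqref{windareaav}); the finiteness and linear scaling of $\langle A\rangle$ for SOP is a nontrivial feature of the constrained, multiplicity-weighted ensemble. For $\vartheta$ your proposed routes (the dually weighted matrix model of Section \ref{mmSec}, or the conformal-gauge formulation of \cite{ferraJTconfgauge}, cross-checked numerically via \cite{Shoralgo,Mukherjeealgo}) go beyond anything carried out in the paper, which simply asserts $\vartheta=2$ on the strength of \cite{ferrari}; your identification of the multiplicity factor $\mu_\gamma$ and the justification of the scaling form \eqref{w2nras} for a non-uniform measure as the principal obstacles is accurate and is consistent with the paper's own admission (via Conjecture \ref{multconj} and the surrounding discussion) that these points are open.
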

This conjecture is motivated by the continuum approach to JT gravity introduced in \cite{ferrari}. Note that logarithmic corrections to the asymptotics given in Eq.\ \eqref{w2nras} cannot be excluded and might be suggested by the fact that pure JT gravity seems to share similarities with Liouville theory coupled to $c=1$ matter \cite{ferrari}. For instance, 
\begin{conjecture} The large $n$ asymnptotic behaviour of the number $w_{2n}$ of SOP, counted with multiplicity, is of the form
\be\label{wSOPconjasy} w_{2n} \underset{n\rightarrow\infty}{\sim} a_{0} g_{*}^{-2n}(2n)^{-3}(\ln n)^{\varphi}\ee
for a connective constant $1/g_{*}>1/g_{*}^{\text{SAP}}$ and a new critical exponent $\varphi$ that may be non-zero.
\end{conjecture}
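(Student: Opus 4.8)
The statement bundles three logically distinct pieces of information — the exponential rate $1/g_{*}$, the power-law correction $(2n)^{-3}$, and the logarithmic factor $(\ln n)^{\varphi}$ — and a complete argument must establish each on a different footing. The plan is to treat the exponential growth by soft subadditivity methods, which are within reach, and the subexponential corrections by a singularity analysis of the multiplicity-weighted generating function built from the matrix model of Section \ref{mmSec}, which is where the genuine difficulty lies.

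First I would secure the existence of the connective constant. Since $w_{2n}=\sum_{\gamma}\mu_{\gamma}$ is simply the total number of distorted disks of boundary length $2n$, and the gluing map on distorted disks is injective with multiplicities multiplying (eqs.\ \eqref{glueid1}, \eqref{superadd3}), the sequence $(w_{2n})$ is super-multiplicative, $w_{2(n_{1}+n_{2})}\geq w_{2n_{1}}w_{2n_{2}}$. Hence $\lim_{n\to\infty}w_{2n}^{1/(2n)}$ exists by Fekete's lemma and equals $\sup_{n}w_{2n}^{1/(2n)}$. Finiteness of this limit — the content of Conjecture \ref{connectiveConj} — reduces exactly to the multiplicity bound of Conjecture \ref{multconj}: the unconditional estimate \eqref{roughineq} only gives $\ln w_{2n}=O(n^{2})$, whereas $\mu_{\gamma}\leq AB^{2n}$ combined with $\hat w_{2n}\leq 4\cdot 3^{2n}$ upgrades this to $\ln w_{2n}=O(n)$. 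So I would first attack Conjecture \ref{multconj}, attempting to bound the number of distinct Blank-cut reductions of a word of length $2n$; the Milnor-gluing family (length $48r$, multiplicity $2^{r}$) shows the bound, if true, is sharp up to constants and that any proof must tolerate genuinely exponential multiplicity. The strict inequality $1/g_{*}>1/g_{*}^{\text{SAP}}$ would then follow by exhibiting a positive entropy density of configurations that are either not self-avoiding or carry multiplicity $>1$ — for instance by inserting the smallest overlap gadget (first appearing at order $g^{18}t^{8}$), or the smallest Milnor block of Fig.\ \ref{figMilnormin}, at a linear number of independent positions along an otherwise self-avoiding backbone — so that the extra combinatorial or multiplicity freedom strictly raises the Fekete supremum above its self-avoiding value.

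The subexponential factors are far more delicate, and I would route them through the continuum description. The exponent $-3$ is equivalent, via the scaling ansatz \eqref{w2nras} at $r=0$ (which gives $(2n)^{-2\nu\vartheta-1}$), to the pair of critical exponents $\nu=1/2$, $\vartheta=2$ of \eqref{nuthetaSOP}; these are inaccessible to subadditivity, so I would instead extract them from the singularity of $\sum_{n}w_{2n}g^{2n}$ at its radius of convergence $g_{*}$, computed as a macroscopic loop amplitude in the dually weighted matrix model of Section \ref{mmSec}. A branch point of type $(1-g^{2}/g_{*}^{2})^{2}$ produces precisely the $(2n)^{-3}$ law, while the factor $(\ln n)^{\varphi}$ would arise from a marginal $c=1$-type singularity $(1-g^{2}/g_{*}^{2})^{2}\ln(1-g^{2}/g_{*}^{2})$, in exact parallel with the logarithmic scaling violations \eqref{barrierscaling} at the $c=1$ Liouville barrier. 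Consistency with the KPZ value \eqref{gammaKPZ} and with the continuum prediction of \cite{ferrari} — that pure JT behaves like Liouville dressed by marginal $c=1$ matter — is what would fix both the exponent $2$ and the presence of the log, so the analytic target is to show that the multiplicity weighting drives the matrix-model spectral curve to exactly this critical type.

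The main obstacle is the multiplicity weight itself. For the uniform model $\text{SOP}_{\text u}$ the generating function is a comparatively tame object, but $\mu_{\gamma}$ is a profoundly non-local invariant — computing it already requires the full Blank/Shor–Van Wyk analysis — and there is no transfer matrix, no local recursion, and no presently solvable matrix model that resolves it. Thus, even granting Conjecture \ref{multconj}, I do not expect a closed-form generating function; the realistic route is to prove that the weighted model lies in the same universality class as the continuum construction of \cite{ferrari} and then import the exponents. The genuinely hard step — and the reason the statement is posed as a conjecture rather than a theorem — is controlling the marginal log correction: detecting $(\ln n)^{\varphi}$ with $\varphi\neq 0$ requires a rigorous handle on the $c=1$ barrier, which remains open even in ordinary Liouville gravity. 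I would therefore expect to prove the pure power law $w_{2n}\sim a_{0}g_{*}^{-2n}(2n)^{-3}$ conditionally long before settling the value, or even the existence, of $\varphi$.
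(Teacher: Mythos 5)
The statement you are asked to establish is stated in the paper as a \emph{conjecture}, and the paper supplies no proof of it --- only motivation: the exponent $-3$ is read off from the scaling ansatz \eqref{w2nras} at $r=0$ using the conjectured values $\nu=1/2$, $\vartheta=2$ of \eqref{nuthetaSOP}, which themselves come from the continuum analysis of \cite{ferrari}; the possible $(\ln n)^{\varphi}$ factor is motivated by the analogy with Liouville gravity at the $c=1$ barrier, Eq.~\eqref{barrierscaling}; and the existence and strictness of the connective constant are asserted without argument (Conjecture \ref{connectiveConj}). Your proposal reconstructs exactly this chain of motivation, and it correctly identifies the logical dependencies: the super-multiplicativity $w_{2(n_{1}+n_{2})}\geq w_{2n_{1}}w_{2n_{2}}$ does follow from the injectivity of the gluing map together with $\mu_{\gamma_{1}\sqcup\gamma_{2}}=\mu_{\gamma_{1}}\mu_{\gamma_{2}}$ (Eqs.~\eqref{glueid1}, \eqref{superadd3}), so Fekete gives existence of the limit, and finiteness indeed hinges on the multiplicity bound of Conjecture \ref{multconj} exactly as you say. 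Your honest separation of what is provable now (the exponential rate, conditionally) from what is genuinely open (the subexponential corrections) matches the status of the problem in the paper, and your pattern-insertion strategy for the strict inequality $1/g_{*}>1/g_{*}^{\text{SAP}}$ is a sensible addition that the paper does not discuss.

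One concrete technical slip worth fixing: your singularity-analysis dictionary is off by one power of the logarithm. A pure $(1-g^{2}/g_{*}^{2})^{2}$ term is analytic (a polynomial) and contributes nothing to the large-$n$ coefficient asymptotics, since $1/\Gamma(-\alpha)=0$ for $\alpha=2$; it is precisely the singularity $(1-g^{2}/g_{*}^{2})^{2}\ln(1-g^{2}/g_{*}^{2})$ that produces the bare $(2n)^{-3}$ law (this is the same phenomenon the paper records below Eq.~\eqref{partialrtWPath}, where integer $\vartheta$ forces a logarithmic singularity in the scaling function). To generate an additional factor $(\ln n)^{\varphi}$ with $\varphi\neq 0$ you would need a singularity of the type $(1-g^{2}/g_{*}^{2})^{2}\bigl(\ln(1-g^{2}/g_{*}^{2})\bigr)^{1+\varphi}$, or a comparable marginal correction. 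This does not damage the overall program --- the exponent $\varphi$ is left undetermined in the conjecture precisely because this layer is not under control --- but the target singularity type you would need to extract from the dually weighted matrix model should be stated correctly, since the integer value of $\vartheta$ is what makes the logarithm unavoidable rather than optional.
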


\subsubsection{Observables}

Beyond the standard boundary length, area\footnote{Note that for SOP, the notions of L\'evy and winding areas defined in \eqref{notionsofarea} coincide and match with the area of the disk computed with the associated metric.} and radius of gyration discussed before, there are several interesting observables to study. A non-exhaustive list is as follows (many variants are possible).

i) The arithmetic area $A_{\text{arith.}}$, defined in \eqref{notionsofarea}, which provides an intuitive measure of the size of the SOP drawn on the Euclidean plane. This is expected to yield a notion of size which is equivalent to the radius of gyration, with a scaling exponent $\nu'$, see Eq.\ \eqref{gyrscaling}. It is interesting to recall that, in the case of the Brownian path model, the arithmetic area and the winding area behaves very differently, see Eqs.\ \eqref{arithareaav} and \eqref{windareaav}. In the case of the SOP model it is natural to conjecture that $\nu'=\nu$.

ii) The number $\iota_{\gamma}$ of bounded simply connected components of $\mathbb R^{2}\backslash\gamma$ or, equivalently, the number of self-intersections $\iota_{\gamma}-1$ of the boundary curve.\footnote{In the continuum limit, a generic curve will self-intersect at double points only. Such a curve drawn on the plane thus yields a four-regular planar graph, whose number of vertices, which equals the number of self-intersections, must be the number of faces, given by $\iota_{\gamma}+1$, minus 2.} The average area of the bounded simply connected components is $a=\frac{1}{\iota_{\gamma}}A_{\text{arith.}}$.

iii) The average number of overlaps $\nu_{\gamma} = \frac{1}{\iota_{\gamma}}\sum_{i}\nu_{i}$, where $\nu_{i}$ is the number of overlaps in the $\smash{i^{\text{th}}}$ bounded simply connected component of $\smash{\mathbb R^{2}\backslash\gamma}$, as defined in Section \eqref{overlapSec}.

iv) The hull of the SOP is the closed curve that bounds the region occupied by the bounded simply connected components of $\mathbb R^{2}\backslash\im\gamma$. In the case of Brownian loops, a famous conjecture by Mandelbrot \cite{Mandelbrot,Mandelbrotproof} states the the hull is the continuum limit of the SAP model, with Hausdorff dimension $4/3$. It is natural to conjecture that the hull of ther SOP model will also be a $\text{SLE}_{\kappa}$ curve, for a certain value of $\kappa$.

iv) The multiplicity $\mu_{\gamma}$.

It is interesting to study all these observables in the SOP and in the \SOPu models as well as in the natural generalizations of SOP associated with the coupling to a CFT of arbitrary central charge $c$.

The most thought-provoking observable of all is undoubtedly the multiplicity, which is a very subtle and surprising feature of the quantum gravity model, that has no counterpart in other random path models. Are the SOP and \SOPu models equivalent in the continuum limit, with the same critical exponents, partition functions, etc.? This would be the case if the space of SOP having non-trivial multiplicities were of measure zero in the continuum limit. Or, on the contrary, does multiplicity play a crucial role in the quantum gravity models, the set of configurations having a non-trivial multiplicity being of measure strictly greater than zero? Is it actually essential to ensure, for instance, the consistency with quantum mechanics? Aesthetically speaking, the possibility that an effect as subtle could play a fundamental role in quantum gravity is, of course, extremely attractive. 

\subsubsection{\label{areaSOPcomnjSec}The area distribution law}

The area distribution law $\rho_{\beta_{\text q}}^{\text{SOP}}$ of the SOP model in the continuum limit governs the $\La$-dependence of the partition function, Eq.\ \eqref{Wconpathg}. Its large area asymptotics provides an important qualitative information on the theory and governs, in particular, the behaviour of the theory when $\La$ is negative and large. We conjecture the following.
\begin{conjecture}\label{areazeroconj1}
The area distribution of the SOP model, i.e.\ pure JT gravity in zero curvature, decays exponentially at large area, with 
\be\label{rhodecayzero} \ln\rho^{\text{SOP}}_{\beta_{\text q}}(A)\underset{A\rightarrow +\infty}{\sim} -\kappa A/\beta_{\text q}\, ,\ee
where $\kappa$ is a strictly positive numerical constant.
\end{conjecture}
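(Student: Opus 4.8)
\emph{Strategy: reduce the tail to a singularity in the cosmological constant.} The first move is to translate the statement \eqref{rhodecayzero} into a statement about the analytic structure of the partition function in $\La$. By \eqref{Wconpathg}, up to the prefactor $W^{\text{SOP}}(\beta_{\text q},0)$ the quantity $W^{\text{SOP}}(\beta_{\text q},\La)$ is the Laplace transform $\int_{0}^{\infty}e^{-\frac{\La}{16\pi}A}\rho^{\text{SOP}}_{\beta_{\text q}}(A)\,\d A$. A purely exponential tail $\ln\rho^{\text{SOP}}_{\beta_{\text q}}(A)\sim-\kappa A/\beta_{\text q}$ is then equivalent, by the standard correspondence between the decay rate of a density and the abscissa of convergence of its Laplace transform, to the existence of a finite negative critical value
\be\label{kappalc} \La_{\text c}=-\frac{16\pi\kappa}{\beta_{\text q}}\, ,\ee
below which $W^{\text{SOP}}(\beta_{\text q},\La)$ ceases to exist. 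Using $\nu=\tfrac12$, $\vartheta=2$ from \eqref{nuthetaSOP}, the moment representation \eqref{areamoments} shows this is the same as factorial growth $a_{r}\asymp r!\,\kappa^{-r}$ of the continuum moments, i.e.\ the scaling function \eqref{fPath} having radius of convergence exactly $\kappa$ with a dominant singularity on the negative real axis. The task therefore splits into proving (A) $\La_{\text c}<0$, that the tail decays at least exponentially; (B) $\La_{\text c}>-\infty$, that it decays no faster; and (C) that the decay is purely exponential, so that $\lim_{A\to\infty}(\beta_{\text q}/A)\ln\rho^{\text{SOP}}_{\beta_{\text q}}(A)$ exists. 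In the continuum limit the dimensionless variable is just the lattice ratio, $A/\beta_{\text q}=p/(2n)$, since $\nu=\tfrac12$ gives $\beta_{\text q}=2n\ell_{0}^{2}$ and $A=p\ell_{0}^{2}$, so the large-$A$ tail is the regime $n\ll p\ll n^{2}$.

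\emph{The key comparison and the source of exponential decay.} The cleanest evidence for direction (A) is a comparison with the exactly solved flat random-polygon (RP) model. By Proposition~\ref{windprop} and Corollary~\ref{Areacorollary}, the proper area of a self-overlapping polygon equals its L\'evy (algebraic) area \eqref{notionsofarea}, because all its winding numbers are non-negative; hence, viewed merely as a closed lattice curve, every SOP of area $p$ is an RP of L\'evy area $p$. L\'evy's exact law \eqref{rhobetaRP}, with its $\cosh^{-2}$ profile decaying like $e^{-2\pi A/\beta_{\text q}}$ and the corresponding finite threshold $(\abs{\La}\beta_{\text q})_{\max}=32\pi^{2}$ of \eqref{WRPf1}--\eqref{LamaxRP}, then provides the archetype of the behaviour conjectured for SOP and suggests $\kappa>0$. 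Conceptually, this is the expected outcome of the framework of Section~\ref{SOPJTSec}: the value $\nu=\tfrac12$ places SOP in the inflated crumpled phase discussed at the end of Section~\ref{phasessubSec}, exactly the regime for which one expects an upper bound on $-\La$, in sharp contrast to the $\nu>\tfrac12$ models (SAP, CP) whose super-exponential tails admit all $\La$.

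\emph{Building the two bounds.} For direction (B) I would exhibit, for each fixed large $x$, a family of at least $g_{*}^{-2n}e^{-\kappa_{2}x}$ distinct SOPs of length $2n$ and area $\simeq 2nx$, obtained by decorating a typical full-entropy SOP with area-boosting excursions and controlling the resulting weight through the gluing super-multiplicativity \eqref{superadd3}--\eqref{superadd12}; an independent, coarser confirmation that $\La_{\text c}>-\infty$ is available from the inflated-phase free energy \eqref{inflfreeen}, which already exhibits substantial weight at macroscopic area for $t>1$. For direction (A) the heuristic inclusion above must be upgraded: the naive set bound $\hat W_{2n,p}\leq\#\{\text{RP of L\'evy area }p\}$ loses a factor exponential in $n$ (RP and SOP have different connective constants, \eqref{connectcstineq}), which swamps the exponential-in-$x$ effect one is after, so a tighter, entropy-matched comparison, or a direct surgery estimate bounding $W_{2n,p+\Delta}/W_{2n,p}$ uniformly in the tail, is required. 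Passing from the uniform count $\hat W$ to the multiplicity-weighted count $W$ throughout relies on Conjecture~\ref{multconj}, together with the plausible but unproven statement that large multiplicity does not systematically concentrate on the most inflated configurations.

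\emph{Main obstacle.} The crux, and the reason this remains a conjecture, is that the exponential tail lives entirely in the \emph{subexponential-in-$n$ prefactor} of $W_{2n,p}$: for every ratio $x=p/(2n)$ of order one the leading growth is the same connective factor $g_{*}^{-2n}$, so $e^{-\kappa x}$ is a second-order correction invisible to any leading large-deviation (Fekete) rate and to any comparison that is lossy at the exponential-in-$n$ level. Capturing it amounts to a local-limit theorem with tail control for a model carrying the highly non-local multiplicity weight, which is genuinely hard. Granting two-sided exponential bounds, step (C) should follow from an approximate super-multiplicativity of $\rho^{\text{SOP}}_{\beta_{\text q}}$ in the area inherited from \eqref{superadd3}, modulo subexponential prefactors; but pinning the exact constant $\kappa$ — plausibly a simple multiple of $\pi$ by analogy with $\kappa_{\text{RP}}=2\pi$ — and excluding or identifying the multiplicative logarithmic corrections hinted at by the $c=1$ analogy mentioned after \eqref{rhodecayzero} appear to lie beyond these combinatorial methods, and would likely require the continuum conformal-gauge formulation of \cite{ferrari,ferraJTconfgauge}.
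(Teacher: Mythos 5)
The statement you were asked to prove is labelled a \emph{conjecture} in the paper, and the paper offers no proof of it: its entire justification is a short motivation paragraph invoking (i) the observation at the end of Section \ref{phasessubSec} that for $\nu=\tfrac12$ the inflated crumpled phase exists only for a finite range of $n^{2\nu}(t-1)$, suggesting a finite range of negative $\La$ in the continuum, and (ii) the exactly solved flat random-polygon model, Eq.\ \eqref{rhobetaRP}, which realizes precisely this behaviour with $\kappa=2\pi$. Your proposal rests on exactly these two heuristics, so on the points where a comparison is possible you and the paper agree. Beyond that, your write-up is a genuine elaboration the paper does not attempt: the clean reduction of the tail statement to the abscissa of convergence of the Laplace transform \eqref{Wconpathg} (equivalently to factorial growth of the moments $a_{r}$, equivalently to the finite $\La_{\text c}=-16\pi\kappa/\beta_{\text q}$ that the paper states only as a consequence), the split into the lower bound, upper bound and existence-of-the-limit steps, the use of Proposition \ref{windprop} to embed SOPs into the RP ensemble at fixed L\'evy area, and the appeal to the gluing super-multiplicativity \eqref{superadd3} for step (C). You also correctly identify the two places where this program genuinely breaks down as a proof: the exponential-in-$x$ tail is a subexponential-in-$n$ correction to the connective-constant growth, so any comparison that is lossy at the level of $g_{*}^{-2n}$ (such as the naive SOP $\subset$ RP inclusion) cannot see it; and the passage from $\hat W$ to the multiplicity-weighted $W$ requires Conjecture \ref{multconj} plus control of where the multiplicity concentrates. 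Your honest conclusion that the statement remains a conjecture is the correct one; just be aware that nothing in your text (or in the paper) constitutes a proof, and the paper itself does not even claim the Laplace-transform equivalence or the super-multiplicativity route you sketch.
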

This conjecture implies that JT gravity in zero curvature is defined only for $\La>\La_{\text c}$, for a strictly negative critical cosmological constant $\La_{\text c} = -16\pi\kappa/\beta_{\text q}$.

The motivation for this conjecture comes from the general discussion in Section \ref{phasessubSec}. At the end of this subsection, it is explained that, when $\nu=1/2$, which is the expected value for SOP, the so-called inflated crumpled phase a priori exist only for a finite range of the parameter $n^{2\nu}(t-1)$. In the continuum limit, this suggests that the theory exists only for a finite range of negative cosmological constant. This is also observed in the purely random polygon model, which has $\kappa = 2\pi$, see Eq.\ \eqref{rhobetaRP}.

When JT gravity is coupled to $c<0$ conformal matter, the exponent $\nu$ was predicted in \cite{ferrari} to be as in Eq.\ \eqref{critexponentsJT}, which implies that $1/2<\nu<1$, the limit $\nu=1$ being reached in the semi-classical regime $c\rightarrow -\infty$ studied in \cite{Loopcalc}. In these cases, based on the discussion of Section \ref{phasessubSec}, we expect the theory in flat space to exist for arbitrary values of the cosmological constant. We  make the following conjecture.
\begin{conjecture}\label{areazeroconj2} The area distribution of JT gravity in zero curvature coupled to $c\leq 0$ conformal matter, which has a critical exponent $\nu$ given by Eq.\ \eqref{critexponentsJT}, has the asymptotic behaviour
\be\label{rhodecayzeroc} \ln\rho^{0}_{\beta_{\text q}}(A)\underset{A\rightarrow +\infty}{\sim} - \kappa_{\nu} \bigl(A/\beta_{\text q}^{2\nu}\bigr)^{\chi_{\nu}}\, ,\ee
where $\kappa_{\nu}$ is a strictly positive numerical constant and $\chi_{\nu}$ a critical exponent, which is a strictly increasing function of $\nu$ satisfying $\lim_{\nu\rightarrow \frac{1}{2}^{+}}\chi_{\nu}=1$ and $\lim_{\nu\rightarrow 1^{-}}\chi_{\nu} = +\infty$.
\end{conjecture}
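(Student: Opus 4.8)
The plan is to reduce the claimed area tail to the much better understood tail of the \emph{spatial extent} of a fractal random curve, and then to fix the exponent $\chi_{\nu}$ by a scaling argument anchored to the three exactly solvable reference models. The starting observation is that, at fixed quantum length $\beta_{\text q}$, forcing the enclosed area $A$ to be atypically large is a large-deviation event in which the curve must span a region of linear size $R\sim\sqrt{A}$ that is much bigger than the typical extent $R_{0}\sim\beta_{\text q}^{\nu}$ of a critical configuration (recall $\beta_{\text q}\sim R^{d_{\text H}}=R^{1/\nu}$). First I would argue that in this inflated regime the dominant configurations become ``filled,'' so that the enclosed area scales as the square of the spatial extent, $A\sim R^{2}$; this is the geometric content of the crossover toward smooth, convex-like shapes discussed in Section \ref{phasessubSec}, and it is precisely what distinguishes the inflated crumpled phase from the critical fractal.

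The key analytic input is the Fisher-type estimate for the probability that a random curve of quantum length $\beta_{\text q}$ and Hausdorff dimension $d_{\text H}=1/\nu$ spans a size $R\gg R_{0}$, namely
\be
\ln P_{\beta_{\text q}}(R)\underset{R\rightarrow\infty}{\sim}-c_{\nu}\Bigl(R/\beta_{\text q}^{\nu}\Bigr)^{\frac{1}{1-\nu}}\, ,
\ee
with stretched-exponential exponent $\delta_{\nu}=1/(1-\nu)$. This is a theorem for self-avoiding walks and is expected, by universality, to hold for the whole family of models with exponent $\nu$ given by \eqref{critexponentsJT}. Substituting $R\sim A^{1/2}$ then gives directly
\be
\ln\rho^{0}_{\beta_{\text q}}(A)\underset{A\rightarrow+\infty}{\sim}-\kappa_{\nu}\Bigl(A/\beta_{\text q}^{2\nu}\Bigr)^{\chi_{\nu}}\, ,\qquad \chi_{\nu}=\frac{1}{2(1-\nu)}\, ,
\ee
which is manifestly strictly increasing in $\nu$, with $\chi_{\nu}\rightarrow 1$ as $\nu\rightarrow\tfrac12^{+}$ and $\chi_{\nu}\rightarrow+\infty$ as $\nu\rightarrow 1^{-}$, exactly as claimed.

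I would then cross-check this prediction against the three solvable cases, which pin down $\chi_{\nu}$ at $\nu=\tfrac12,\tfrac34,1$. For the purely random (and pure-JT) case $\nu=\tfrac12$ one has $\chi_{1/2}=1$, reproducing the exact exponential decay $\ln\rho^{\text{RP}}_{\beta_{\text q}}\sim-2\pi A/\beta_{\text q}$ of \eqref{rhobetaRP}, and hence Conjecture \ref{areazeroconj1}. For $\nu=\tfrac34$ one gets $\chi_{3/4}=2$; a saddle-point (Legendre) transform of $\rho$ through \eqref{Wconpathg} then predicts $\ln W\sim|\La|^{1/(2\nu-1)}\beta_{\text q}^{2\nu/(2\nu-1)}=|\La|^{2}\beta_{\text q}^{3}$, matching the exact SAP asymptotics \eqref{WSAPasym1} in both the power of $\La$ and of $\beta_{\text q}$. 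For $\nu=1$, $\chi_{\nu}\rightarrow\infty$ turns the stretched exponential into a hard cutoff, reproducing the compactly supported convex-polygon law \eqref{rhoCP}. The same saddle-point relation shows that $\chi_{\nu}>1$ (i.e.\ $\nu>\tfrac12$) makes the $A$-integral converge for all $\La$, so the coupled theory exists at any cosmological constant, whereas $\chi_{\nu}=1$ leaves only a finite window $\La>\La_{\text c}$, consistently with Conjecture \ref{areazeroconj1}.

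The hardest step, and the reason the statement remains a conjecture, is the Fisher tail estimate for the general fractal curve: while it is rigorous for self-avoiding walks, there is no proof that the extent large-deviation function retains the universal form $\delta_{\nu}=1/(1-\nu)$ for the SOP model or its matter-coupled generalizations, nor a rigorous derivation of the inflated-phase relation $A\sim R^{2}$ controlling the dominant large-area configurations. The latter is especially delicate because large area can in principle be generated either by spatial spanning or by high winding and stacking of overlapping layers, and one must show the former mechanism dominates the tail. Establishing either ingredient---ideally by constructing the inflated-phase scaling function directly, as is known only in the convex and self-avoiding cases---would be the crux, and would also be what is needed to fix the non-universal constants $\kappa_{\nu}$.
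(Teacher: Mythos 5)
Your argument is sound as a heuristic for a statement that the paper itself only states as a conjecture, and it reaches the same conclusion by a genuinely different route. The paper motivates Conjecture \ref{areazeroconj2} by combining (a) the phase-structure discussion of Section \ref{phasessubSec} (the inflated crumpled phase persists for all negative $\La$ precisely when $\nu>1/2$, forcing $\chi_{\nu}>1$), (b) consistency with the pure-gravity limit $\nu\rightarrow\frac{1}{2}^{+}$ of Conjecture \ref{areazeroconj1}, and (c) the semi-classical $c\rightarrow-\infty$ computation of \cite{Loopcalc}, which through the saddle-point relation based on Eq.\ \eqref{Wconpathg} fixes $\chi_{\nu}\sim|c|/12\sim 1/(2(1-\nu))$ near $\nu=1$; the exact formula $\chi_{\nu}=1/(2(1-\nu))$ is only promoted to a separate, subsequent conjecture. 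You instead derive that formula for all $\nu$ in one stroke from the Fisher stretched-exponential law for the spanning probability of a fractal curve together with the filled-configuration relation $A\sim R^{2}$ in the inflated regime. What your route buys is a single mechanism that simultaneously explains the three solvable anchors at $\nu=\frac{1}{2},\frac{3}{4},1$ (Eqs.\ \eqref{rhobetaRP}, \eqref{WSAPasym1} and \eqref{rhoCP}) and makes the monotonicity and the two limits of $\chi_{\nu}$ automatic rather than separately postulated; what the paper's route buys is an anchor in an actual one-loop computation at the $c\rightarrow-\infty$ end that is independent of any assumption about which configurations dominate the tail. Your two flagged gaps are exactly the right ones, and the second is the more delicate for the SOP model specifically: the relevant area is the winding area of Section \ref{overlapSec}, so a large-$A$ event could in principle be produced by stacking many layers over a region of modest spatial extent rather than by spanning, a channel the Fisher estimate does not control. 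The fact that your heuristic nevertheless reproduces the exact $e^{-2\pi A/\beta_{\text q}}$ tail of the L\'evy area law \eqref{rhobetaRP}, in a model where winding contributions are known to be important (cf.\ Eq.\ \eqref{windareaav}), is encouraging but not a substitute for an argument that the spanning mechanism dominates.
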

The fact that $\chi_{\nu}>1$ for $\nu>1/2$ ensures the existence of the theory for all $\La$. The conjectured limit when $\nu\rightarrow\frac{1}{2}^{+}$ is consistent with Eq.\ \eqref{rhodecayzero}. The conjectured limit when $\nu\rightarrow 1^{-}$ is consistent with the semi-classical analysis in \cite{Loopcalc}. Actually, a straightforward saddle-point analysis, using Eq.\ \eqref{Wconpathg}, shows that Eq.\ \eqref{rhodecayzeroc} is equivalent to the following asymptotic behaviour of the JT gravity partition function,
\be\label{Wzeroasympconj} \ln W^{0}(\beta_{\text q},\La) \underset{\La\beta_{\text q}^{2\nu}\rightarrow -\infty}{\sim} \frac{\chi_{\nu}-1}{\chi_{\nu}} (\chi_{\nu}\kappa_{\nu})^{-\frac{1}{\chi_{\nu}-1}}
\biggl(\frac{|\La|\beta_{\text q}^{2\nu}}{16\pi}\biggr)^{\frac{\chi_{\nu}}{\chi_{\nu}-1}}\, ,\ee
or, equivalently, of the expected area,
\be\label{zeroexpareaconj} \langle A\rangle \underset{\La\beta_{\text q}^{2\nu}\rightarrow -\infty}{\sim} 
\biggl(\frac{|\La|\beta_{\text q}^{2\nu}}{16\pi\chi_{\nu}\kappa_{\nu}}\biggr)^{\frac{1}{\chi_{\nu}-1}} \beta_{\text q}^{2\nu}\, .\ee
Comparing with the $c\rightarrow -\infty$ results in \cite{Loopcalc}, we find that
\be\label{chizeroasympsc} \chi_{\nu}\underset{c\rightarrow -\infty}{\sim} \frac{|c|}{12}\underset{\nu\rightarrow 1^{-}}{\sim}\frac{1}{2(1-\nu)}\,\cdotp\ee
It is tempting to conjecture that this asymptotic law as a function of $\nu$ is actually exact.
\begin{conjecture} The exponent $\chi_{\nu}$ governing the asymptotic behaviour of the area distribution, Eq.\ \eqref{rhodecayzeroc}, is given by 
\be\label{chinyconjf} \chi_{\nu} = \frac{1}{2(1-\nu)}\,\cdotp\ee
\end{conjecture}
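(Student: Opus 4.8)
The plan is to determine $\chi_\nu$ by a Fisher--des Cloizeaux-type matching argument that relates the stretched-exponential tail posited in Conjecture \ref{areazeroconj2} to the behaviour of the microscopic counting $W_{2n,p}$ at the two ends of its support. First I would work at the lattice level, before taking the continuum limit, using the dictionary $A/\beta_{\text q}^{2\nu}\sim p/n^{2\nu}$ that follows from $A\sim p\ell_{0}^{2}$ and $\beta_{\text q}^{2\nu}\sim n^{2\nu}\ell_{0}^{2}$ (Eq.\ \eqref{contpathdef}). Under this dictionary the conjectured tail \eqref{rhodecayzeroc} becomes the statement that, for $p$ in the range $n^{2\nu}\ll p\lesssim p_{\text{max}}$, the area distribution is suppressed as $-\ln\bigl(W_{2n,p}/w_{2n}\bigr)\sim\kappa\,(p/n^{2\nu})^{\chi_\nu}$, where $\kappa>0$.

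Next I would extract the two boundary data that pin the exponent down. The typical area scale is $p_{\text{typ}}\sim n^{2\nu}$, which is the content of \eqref{areaasymr} and is precisely the definition of $\nu$; it fixes the normalising ratio $p/n^{2\nu}$ inside the scaling function. The upper edge of the support is the isoperimetric bound $p_{\text{max}}=[n^{2}/4]$ recorded in Eq.\ \eqref{degreeW} and Hypothesis \ref{hyp1}. The key physical input is that the area-maximising configuration is essentially rigid: a polygon saturating the isoperimetric inequality must be the maximal ``balloon'' (the square on the lattice), so the number of SOPs with $p=p_{\text{max}}$ is subexponential in $n$. Hence $-\ln\bigl(W_{2n,p_{\text{max}}}/w_{2n}\bigr)\sim\ln w_{2n}\sim 2n\ln(1/g_{*})=O(n)$, where I used that the connective constant governs the leading exponential growth $w_{2n}\sim a_{0}g_{*}^{-2n}(2n)^{-1-2\nu\vartheta}$ of Eq.\ \eqref{w2nras}, valid in all the models under discussion. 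Evaluating the stretched-exponential ansatz at the same endpoint gives $-\ln\bigl(W_{2n,p_{\text{max}}}/w_{2n}\bigr)\sim\kappa\,(n^{2}/n^{2\nu})^{\chi_\nu}=\kappa\,4^{-\chi_\nu}\,n^{2(1-\nu)\chi_\nu}$. Matching the power of $n$ on the two sides forces
\be
2(1-\nu)\chi_\nu=1\,,\qquad\text{i.e.}\qquad \chi_\nu=\frac{1}{2(1-\nu)}\cvp
\ee
which is the claimed formula. I would then record the consistency checks already in the paper as supporting evidence: $\nu=\tfrac12$ gives $\chi=1$, matching the exact random-polygon tail \eqref{rhobetaRP}; $\nu=\tfrac34$ gives $\chi=2$, matching the SAP growth $\ln W^{\text{SAP}}\propto\La^{2}\beta_{\text q}^{3}$ in \eqref{WSAPasym1} through the saddle-point relation \eqref{Wzeroasympconj}; and $\nu\to1$ gives $\chi\to\infty$, matching both the hard area cutoff of the convex-polygon distribution \eqref{rhoCP} and the semi-classical limit \eqref{chizeroasympsc}. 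An equivalent route, worth presenting in parallel, is the span argument: the enclosed area scales as the square of the linear size, $A\sim R^{2}$, while the large-deviation tail of the span of a fractal loop of dimension $d_{\text H}=1/\nu$ is a stretched exponential in $R/R_{\text{typ}}$ with exponent $\delta=1/(1-\nu)$, so that $\chi_\nu=\delta/2=1/(2(1-\nu))$.

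The hard part will be justifying, rather than assuming, the two structural ingredients on which the matching rests. The more serious gap is the stretched-exponential form \eqref{rhodecayzeroc} itself, together with the claim that a single scaling regime with exponent $\chi_\nu$ extends from the typical scale $p\sim n^{2\nu}$ all the way up to a neighbourhood of $p_{\text{max}}$; a priori the phase analysis of Section \ref{phasessubSec} allows a distinct ``fully inflated'' regime near the isoperimetric edge whose rate function could carry a different exponent and thereby spoil the matching. One would need to show that the crossover to that regime occupies only a parametrically thin window $p_{\text{max}}-p\ll p_{\text{max}}$, too small to change the leading power of $n$. The second, model-specific, obstacle is the multiplicity: since $\mu_\gamma$ can grow exponentially in $n$ (Conjecture \ref{multconj} only bounds it by $AB^{2n}$), one must exclude that near-maximal-area SOPs carry exponential multiplicity, which would replace $\ln w_{2n}$ by a smaller quantity and alter $\chi_\nu$. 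Here I expect the geometry to cooperate: a near-maximal-area distorted disk is nearly convex and overlap-free, and a configuration possessing a component covered three or more times is incompatible with saturating the isoperimetric bound, so by the criterion of Section \ref{MilnorSec} such SOPs have multiplicity one. The counting with and without multiplicity then agree at the upper edge, and the argument applies verbatim to both the SOP and \SOPu models.
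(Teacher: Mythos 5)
Your argument reaches the right formula but by a genuinely different route from the paper's. The paper obtains \eqref{chinyconjf} by matching against the semi-classical $c\to-\infty$ computation of \cite{Loopcalc}: there one finds $\chi_{\nu}\sim|c|/12$, which combined with the relation \eqref{critexponentsJT} between $\nu$ and $c$ gives $\chi_{\nu}\sim 1/(2(1-\nu))$ as $\nu\to 1^{-}$, and the conjecture is simply that this asymptotic law in $\nu$ is exact; the checks at $\nu=1/2$ (against \eqref{rhobetaRP}) and $\nu=3/4$ (against \eqref{WSAPasym1}) are then quoted as supporting evidence, exactly as you do. Your Fisher-type matching at the isoperimetric edge is independent of \cite{Loopcalc} and fixes the exponent for all $\nu$ in one stroke, at the price of assuming (i) that a single stretched-exponential regime with exponent $\chi_{\nu}$ extends from $p\sim n^{2\nu}$ up to $p\sim p_{\max}=[n^{2}/4]$ and (ii) that near-maximal-area polygons carry only subexponential entropy and multiplicity --- assumptions you correctly flag as the real gaps. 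Two points in favour of your route: differentiating your rate function at $p\sim n^{2}$ gives a conjugate fugacity $\ln t\sim 1/n$, which is precisely where Section \ref{phasessubSec} places the crossover to the fully inflated phase, so the distinct edge regime you worry about indeed occupies only the window $t-1\gtrsim 1/n$ and cannot change the leading power of $n$ in the matching; and the same argument run with the \emph{linear} isoperimetric bound $A\lesssim L\ell\propto n$ of negative curvature reproduces $\chi_{\nu}^{-}=1/(1-\nu)$, i.e.\ Eq.\ \eqref{chinyconjf2}, so your mechanism unifies the zero- and negative-curvature conjectures, something the paper's presentation does not make explicit. What the paper's route buys instead is an anchor in a controlled one-loop computation rather than in an unproven extreme-value ansatz. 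Since the statement is a conjecture, neither derivation is a proof; yours is best read as independent corroborating evidence, and it would be worth recording alongside the paper's.
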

This is consistent with the limit $\smash{\lim_{\nu\rightarrow\frac{1}{2}^{+}}\chi_{\nu} = 1}$. Moreover, the formula predicts that $\chi_{3/4}=2$ and thus $\langle A\rangle \propto |\La|$ in the limit $\La\rightarrow -\infty$ in this case. This is the behaviour found for the SAP model, see Eq.\ \eqref{SAPareaasymp}. It is natural to expect that flat space JT for $\nu=3/4$ (which corresponds to $c=-8$) and the SAP model will have similar properties in the large geometry, $\La\rightarrow - \infty$, limit.

\section{\label{mmSec}The matrix model formulation}

\subsection{\label{MMintSec}Introduction}

Matrix model techniques are extremely useful in the case of Liouville gravity and it is thus natural to look for a matrix model formulation of JT gravity. We construct such a matrix model in this section, focusing on flat JT. The possible relevance of the construction to the cases of positive and negative curvature will be discussed in Section \ref{nonzeroRSec}.

To avoid confusion, let us immediately emphasize that the matrix model are discussing below is conceptually distinct from the matrix model proposed by Saad, Shenker and Stanford in \cite{SSS}. Our matrix model is designed to compute the generating function $W(t,g)$ or, more generally, the generalized generating function $\mathscr W$ defined in \eqref{Zdefgeneralized}. It is thus a reformulation of the combinatorial problem of counting self-overlapping curves, taking into account the multiplicity index, or more generally discretized flat metrics on an arbitrary surface with boundaries. This is along the lines of the matrix model formulation of Liouville gravity, but it is very different from \cite{SSS}. The matrix model in \cite{SSS} is supposed to describe a random ensemble of quantum Hamiltonian to which JT gravity is dual in a certain limit. 

The basic idea underlying our construction, the so-called dually weighted graphs, dates back to a remarkable work of Di Francesco and Itzykson \cite{difrancescoitzykson}. Our contribution is to explain how to correctly introduce boundaries in this picture, which amounts to constructing the JT gravity loop operators. 

Models of dually weighted graphs are much more difficult to study than the usual Hermitian one matrix model or other similar models, but they have been solved in some interesting cases \cite{KSW}. The model relevant to JT gravity cannot be solved straightforwardly with the existing technology, but one may hope that a solution will eventually be found by appropriately extending the powerful formalism developed in \cite{KSW}. 

As a preparation for what follows, let us recall that the disk generating function at fixed boundary length in Liouville gravity is obtained by considering the expectation value of the so-called macroscopic loop operator $\tr X^{2n}$ in the matrix integral, Eq.\ \eqref{W2ndLMM}. By summing, the full generating function \eqref{ZfullLM} is thus 
\be\label{Wbgene11} \mathsf W(t,g) = \frac{1}{N}\Bigl\langle \ln\frac{1}{\det (\mathbb I - gX)}\Bigr\rangle =\frac{1}{N}
\ln\Bigl\langle \frac{1}{\det (\mathbb I - gX)}\Bigr\rangle\quad \text{when}\ N\rightarrow\infty\, .\ee
The second equality is valid because we consider only the leading large $N$ approximation, which is given by a saddle-point for the matrix $X$. It is then natural to introduce a pair of complex vector variables $(\tilde Q_{i},Q^{i})_{1\leq i\leq N}$ and consider the free energy $\mathsf F$ defined by
\be\label{Fgenwithvect} e^{\mathsf F(t,g)} = \frac{\int\d X\d\tilde Q\d Q\, e^{-N\tr (\frac{1}{2}X^{2}-\frac{t}{4}X^{4})-N\tilde Q(\mathbb I - g X) Q}}{\int\d X \,e^{-N\tr(\frac{1}{2}X^{2}-\frac{t}{4}X^{4})}}\, \cdotp\ee
Using the identity
\be\label{detvectorid} \frac{1}{\det (\mathbb I - gX)} = \int\!\d\tilde Q\d Q\, e^{-N\tilde Q(\mathbb I - g X) Q}\, ,\ee
we get the large $N$ expansion
\be\label{freeenergy} \mathsf F(t,g) = N^{2} W^{(\text s)}(t) + N\mathsf W(t,g) + O(N^{0})\, ,\ee
where $W^{(\text s)}(t)$ is the sphere generating function defined in \eqref{WLioudef}. 

So boundaries can be introduced either by inserting macroscopic loop operators $\tr X^{2n}$ or by introducing vector variables.

\subsection{\label{duallywSec}Dually weighted graphs}

\subsubsection{Basic idea}

When we write \eqref{Wbgene11}, we use the dual point of view, for which an interaction term $\tr X^{4}$ in the matrix model action yields a discretized surface built out of squares and the insertion of $\tr X^{2n}$ corresponds to a marked face of length $2n$. More generally, with an action
\be\label{mmActgen1} S(X;\{t\}) = N\tr\Bigl(\frac{1}{2} X^{2} - \sum_{p}\frac{t_{p}}{p}X^{p}\Bigr)\, ,\ee
we get discretized surfaces built out of $p$-gons, the number of $p$-gons being counted by the power of the coupling $t_{p}$ in the corresponding Feynman graph.

To build a matrix model for flat JT gravity, we want to fix the bulk curvature of the discretized surfaces to zero. According to Eq.\ \eqref{Rpqform}, we thus need to have control on both the order of the vertices and on the lengths of the faces that can appear in the matrix model graphs. Associated with the three regular tesselations of the flat Euclidean plane, we have three possibilities: if we use square tiles, $p=4$, only faces of length 4 are allowed; if we use hexagonal tiles, $p=6$, only faces of length 3 are allowed; and if we use triangular tiles, $p=3$, only faces of length 6 are allowed. 

The trick to do this is to introduce a new set of coupling constants $\tilde t_{q}$ in the model, in such a way that each face of length $q$ in the Feynman graph produces a factor $\tilde t_{q}$, in exactly the same way as each vertex of degree $p$ produces a factor $t_{p}$. This is the idea of ``dually weighted graphs,'' developed by Di Francesco, Itzykson \cite{difrancescoitzykson} and Kostov. One considers an external background Hermitian $N\times N$ matrix $A$ and sets
\be\label{tildetAdef} \tilde t_{q} = \frac{1}{N}\tr A^{q}\, .\ee
The action \eqref{mmActgen1} is generalized to
\be\label{mmActgen2} S(X;\{t\}, \{\tilde t\}) = N\tr\Bigl(\frac{1}{2} X^{2} - \sum_{p}\frac{t_{p}}{p}(AX)^{p}\Bigr)\, .\ee
It is then straightforward to check that each face of a Feynman graph now produces a factor of $\tr A^{q} = N\tilde t_{q}$ instead of the usual factor of $N$, exactly as looked for. If
\be\label{ZDFIdef} e^{\mathcal F(\{t\},\{\tilde t\})} = \int\!\d X\, e^{-S(X;\{t\}, \{\tilde t\})}\ee
one has the large $N$ expansion
\be\label{Wttildetexp} \mathcal F(\{t\},\{\tilde t\}) = \sum_{h\geq 0}N^{2-2h}\sum_{\{p_{i},q_{j}\}}\mathcal W_{\{p_{i},q_{j}\}}^{(h)} \prod_{i,j}t_{i}^{p_{i}}\tilde t_{j}^{q_{j}}\, ,\ee
where $\mathcal W_{\{p_{i},q_{j}\}}^{(h)}$ counts the number of genus $h$ Feynman graphs with $p_{i}$ vertices of degree $i$ and $q_{j}$ faces of length $j$ or, equivalently, in the dual picture, the number of discretized surfaces of genus $h$ made of $p_{i}$ $i$-gons with $q_{j}$ vertices of degree $j$.\footnote{As usual, the matrix integrals are normalized to one when the couplings are zero and the counting is made modulo the order of the automorphism groups of the graphs. See below for a discussion of these symmetry factors in the cases of interest for JT gravity.}

It is natural to introduce another background Hermitian matrix $\tilde A$ and write
\be\label{ttildeAdef} t_{p} = \frac{1}{N}\tr\tilde A^{p}\, .\ee
The free energy $\mathcal F$ is then a function of the external matrices $A$ and $\tilde A$. Standard manipulations show that
\be\label{dualWform}
\begin{split}
e^{\mathcal F(A,\tilde A)} & = \int\!\d X\d V\d V^{\dagger}\, e^{-\frac{N}{2}\tr X^{2} - N\tr V^{\dagger} V + N\tr V\tilde A V^{\dagger}AX}\\
& = \int\!\d \tilde X\d V\d V^{\dagger}\, e^{-\frac{N}{2}\tr \tilde X^{2} - N\tr V^{\dagger} V + N\tr V^{\dagger}A V\tilde A\tilde X}\\
& = \int\!\d V\d V^{\dagger}\, e^{-N\tr V^{\dagger}V+\frac{N}{2}\tr (V^{\dagger}AV\tilde A)^{2}}\, ,
\end{split}
\ee
where $V$ is a complex matrix and $\tilde X$ a ``dual'' Hermitian matrix. These formulas make manifest the duality 
\be\label{dualityact} A\leftrightarrow \tilde A\, ,\ X\leftrightarrow \tilde X\, ,\ V\leftrightarrow V^{\dagger}\ee
which exchanges vertices and faces. One has
\be\label{duality1} \mathcal F(A,\tilde A) = \mathcal F(\tilde A, A)\, .\ee
Eq.\ \eqref{dualWform} also makes manifest an underlying $\text U(N)\times\widetilde{\text U(N)}$ covariance,
\be\label{UNtimesUN} \mathcal F (A,\tilde A) = \mathcal W(\Omega A\Omega^{\dagger}, \tilde\Omega \tilde A\tilde\Omega^{\dagger})\, ,\ee
under which the variables transform as
\be\label{UNUNtransform} A\mapsto \Omega A\Omega^{\dagger}\, ,\ \tilde A\mapsto \tilde\Omega \tilde A \tilde\Omega^{\dagger}\, ,\ X\mapsto \Omega A\Omega^{\dagger}\, ,\
\tilde X\mapsto \tilde\Omega \tilde X\tilde\Omega^{\dagger}\, ,\ V\mapsto \Omega V \tilde\Omega^{\dagger}\, ,\ V^{\dagger}\mapsto \tilde\Omega V^{\dagger}\Omega^{\dagger}\, ,\ee
where $\Omega$ and $\tilde\Omega$ are independent unitary matrices. The two independent $\uN$ symmetries are associated with the strands forming the faces in the direct and dual representations of the Feynman graphs, respectively.

Let us remark that at finite $N$, the definitions \eqref{tildetAdef} and \eqref{ttildeAdef} imply that only $(\tilde t_{1},\ldots,\tilde t_{N})$ and $(t_{1},\ldots,t_{N})$ are independent couplings. But in the formal large $N$ expansion of the matrix integrals we are interested in, it is consistent to consider that all the couplings are independent. This is important for our purposes, since to impose the flatness condition in the bulk, we choose
\be\label{quadrangchoice} t_{p} = \delta_{p,p_{0}}t\, ,\ \tilde t_{q} = \delta_{q,q_{0}}\quad  \text{for}\quad (p_{0},q_{0}) = (4,4),\ (6,3)\ \text{or}\ (3,6)\, .\ee

\subsubsection{The Di Francesco-Itzykson formula}

The integral \eqref{ZDFIdef} can be expressed explicitly as a sum over Young tableaux, or partitions \cite{difrancescoitzykson}. Explicitly, for a partition $\la = (\la_{1},\ldots,\la_{N})$, $\la_{1}\geq\la_{2}\geq\cdots\geq\la_{N}\geq 0$, we define $h_{i} = N-i+\la_{i}$. We note $\La_{N}$ the set of partitions such that: if $N=2n$ is even, $n$ of the $h_{i}$s are even (noted $h_{i}^{\text e}$) and $n$ are odd (noted $h_{i}^{\text o}$); if $N=2n-1$ is odd, $n$ of the $h_{i}$s are even and $n-1$ are odd. We then have
\be\label{DIfIform} e^{\mathcal F(A,\tilde A)} = (-1)^{n(n-1)/2}\sum_{\la\in\La_{N}}
\frac{\prod_{i}(h_{i}^{\text e}-1)!!\prod_{j}h_{j}^{\text o}!!}{\prod_{i,j}(h_{j}^{\text o}-h_{i}^{\text e})} s_{\la}(A) s_{\la}(\tilde A)\, ,\ee
where $s_{\la}$ is the Schur polynomial. One can show that summing over the partitions of a given size $|\la| = \sum_{i}\la_{i}$ yields the sum over the Feynman graphs that have a given number of edges $E=|\la|/2$.

The main advantage of this formula is that the number of degrees of freedom is reduced from $N^{2}$ in the original matrix integral to $N$ in the sum over partitions $\la=(\la_{1},\ldots,\la_{N})$. This is technically similar to the usual diagonalization method, the partitions replacing the eigenvalues. It is then tempting to try to evaluate \eqref{DIfIform} in the large $N$ limit using a saddle point method, assuming that a very large partition will dominate the sum. This idea is at the basis of the results presented in \cite{KSW}.

Let us emphasize, however, that the assumption that the sum is dominated by a single partition is very non-trivial in itself, in particular because the sum in \eqref{DIfIform} involves terms of positive and negative signs. In relation to this, there are cases for which the free energy is easy to evaluate directly from the Feynman graphs, whereas the matrix integral, or the sum \eqref{DIfIform}, seem to be very complicated to compute directly. For instance, if we use flat quadrangulated surfaces, $(p_{0},q_{0}) = (4,4)$, then by Gauss-Bonnet we know that only the torus topology can contribute to $\mathcal F$ and one finds
\be\label{DiFIgenusone} e^{\mathcal F} = \frac{1}{\prod_{k\geq 1}(1-t^{k})^{1/4}}\ee
by examining the possible Feynman graphs. We have checked that this formula matches with \eqref{DIfIform} up to order $t^{12}$ in the small $t$ expansion, but we do not know a simple derivation of the identity \eqref{DiFIgenusone} starting from \eqref{DIfIform}, nor do we know how to obtain this result from a saddle point.

\subsubsection{\label{nogoSec}No-go theorems}

The flatness condition in the bulk is enforced by choosing the couplings as in \eqref{quadrangchoice}. The resulting models are rather trivial, as Eq.\ \eqref{DiFIgenusone} shows. To make the link with JT gravity, we need to introduce boundaries. 

It is very natural to try to do so by inserting macroscopic loop operators, which is the standard tool in Liouville gravity as we have reviewed in Section \ref{LiouclimitSec}. There are four natural loop operators that one may consider in the models, which are also the observables studied in \cite{KSW}:
\be\label{possibleloops} W_{L}^{(1)}(t) = \frac{1}{L}\bigl\langle\frac{1}{N}\tr X^{L}\bigr\rangle\, , \ W_{L}^{(2)}(t) = \frac{1}{L}\bigl\langle\frac{1}{N}\tr (AX)^{L}\bigr\rangle\, ,\ee
together with their duals. The aim of this subsection is to show that these standard loop operators are trivial in our case.

\begin{proposition} $W_{L}^{(1)}(t)$ is independent of $t$ and is thus equal to the $t=0$, Gaussian matrix model value
\be\label{GaussDII1} W_{2n}^{(1)}(t) = \frac{(2n)!}{2n(n+1)n!^{2}}\,\cvp\quad W_{2n+1}^{(1)}(t) = 0\, .\ee
\end{proposition}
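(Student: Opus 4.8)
The plan is to organise the computation as an expansion in the number $V$ of interaction vertices, equivalently in powers of the coupling $t$, since with the choice \eqref{quadrangchoice} each $p_0$-gon carries a single factor of $t$. First, for odd $L=2n+1$ the interaction $\tr(AX)^{p_0}$ and the Gaussian weight are both even in $X$, so the symmetry $X\mapsto -X$ forces $\langle\tfrac1N\tr X^{L}\rangle=0$ to all orders in $t$; this gives $W^{(1)}_{2n+1}(t)=0$ at once. For even $L=2n$ the term $V=0$ is the purely Gaussian moment, whose leading large $N$ (planar) value is the Catalan number $C_n=(2n)!/(n!(n+1)!)$, so that $W^{(1)}_{2n}(0)=C_n/(2n)$, which is exactly the claimed value. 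It therefore remains only to show that every contribution with $V\ge 1$ vanishes.

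To this end I would pass to the dual, surface description used throughout Section \ref{duallywSec}. A connected planar graph contributing at order $t^{V}$ is, dually, a discretised surface of disk topology built out of $V$ $p_0$-gons, whose single boundary is the marked face produced by the insertion of $\tr X^{L}$. The crucial point is that the loop operator $\tr X^{L}$ carries no factor of the external matrix $A$: each face of the fat graph is weighted by $\tilde t_{q}=\delta_{q,q_0}$ according to its $A$-content $q$ as in \eqref{tildetAdef}, and this $A$-content equals the number of $p_0$-gons incident to the corresponding dual vertex. Consequently, every interior vertex must be incident to exactly $q_0$ polygons, hence be flat by \eqref{Rpqform}, while every boundary vertex is incident to a number of polygons equal to the $A$-content of its face.

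The contradiction then comes from Gauss--Bonnet. A boundary vertex of a surface made of polygons is incident to at least one $p_0$-gon, so its $A$-content is $q\ge 1$; the weight $\tilde t_{q}=\delta_{q,q_0}$ forces $q=q_0$, i.e.\ the vertex has degree $q_0+1$, and by \eqref{Kpqform} together with the flatness relation $(p_0-2)(q_0-2)=4$ its extrinsic curvature is $k_{p_0,q_0+1}=-\pi$. Since the surviving bulk is flat, $\int R=0$, so \eqref{GaussBonnet} for the disk ($h=0$, $b=1$) reduces to $\oint k=2\pi$. But the surviving boundary vertices each contribute $-\pi$, so $\oint k\le -\pi<0$, a contradiction. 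Hence no graph with $V\ge 1$ has nonzero weight, $\langle\tfrac1N\tr X^{L}\rangle$ is independent of $t$, and equals its Gaussian value. Equivalently, in the language of \eqref{EulergenflatJTdisk}, a flat disk must have $v_{\text L}-v_{\text R}=4$ and therefore at least four left-turning (degree-two) boundary vertices; but a degree-two boundary vertex touches a single polygon, $q=1$, and is killed by $\tilde t_{1}=0$.

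The main obstacle I anticipate is making the boundary bookkeeping rigorous when the immersed boundary is not simple. For a self-overlapping polygon the boundary curve may pass several times through the same lattice vertex, so the naive identities ``degree $=$ (incident polygons) $+1$'' and ``turning $=k_{p_0,q}$'' must be applied per passage rather than per vertex; I would handle this by working with the rooted boundary walk, assigning the curvature $-\pi$ to each passage whose surviving weight is non-vanishing, after which the total turning remains strictly negative and the Whitney-index-one property (total turning $2\pi$ for a self-overlapping boundary, established in Section \ref{WhitneySec}) again yields the contradiction. A secondary point to make precise is that the face weight is $\tilde t_{q}=\delta_{q,q_0}$ for $q\ge 1$ while $\tilde t_{0}=\tfrac1N\tr A^{0}=1$, so that only the degenerate $V=0$ configurations, in which the marked face pairs with itself into Gaussian arcs, escape the argument; these reproduce precisely the Catalan counting of the first paragraph.
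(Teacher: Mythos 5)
Your reduction to the Gaussian value at $V=0$ and the identification $W^{(1)}_{2n}(0)=C_n/(2n)$ are fine, but the core step --- the claimed Gauss--Bonnet contradiction for $V\ge 1$ --- does not go through. The problem is the faces of the Feynman graph whose edges are all incident to the marked vertex $\tr X^{L}$: these have $A$-content zero, hence carry the weight $\tr A^{0}=N$ and are \emph{not} suppressed, and they exist for $V\ge 1$ as well (any self-contraction of two adjacent legs of $\tr X^{L}$ produces one). Dually they are boundary vertices touching no tile, and each passage of the boundary through such a vertex contributes $+\pi$ to $\oint k\,\d s$, not $-\pi$. Likewise, a surviving boundary vertex with $A$-content $q_{0}$ has total tile angle $q_{0}(p_{0}-2)\pi/p_{0}=2\pi$, so if the boundary visits it $k$ times its total turning is $(k-2)\pi$, which is nonnegative for $k\ge 2$; your ``per passage'' fix does not restore $-\pi$ per passage, since individual wedges can carry zero tiles. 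Once these contributions are included, Gauss--Bonnet yields no contradiction at all: it reduces to a counting identity equivalent to the one the paper derives from \eqref{brelp1}--\eqref{brelp3}, namely $f_{\bullet}=L/2+1$ in \eqref{fbulletform} (the number of faces passing only through the marked vertex). In other words, your curvature bookkeeping, done correctly, reproduces the paper's Euler-formula computation and stops exactly where the real work begins.

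What is missing is the combinatorial lemma that converts $f_{\bullet}=n+1$ into $V=0$: the paper shows via \eqref{indeqbep} that at least two of these faces have length one and then removes them recursively to conclude that the planar graph consists of the marked vertex alone. Without that step (or an equivalent argument excluding the degenerate ``spike'' configurations), the proposition is not established. Two smaller points: your closing remark that a degree-two boundary vertex is killed by $\tilde t_{1}=0$ is correct but only rules out \emph{non-degenerate} flat disks with simple boundaries, not the general planar graphs contributing to $\langle\tr X^{L}\rangle$; and the $X\mapsto -X$ symmetry argument for odd $L$ fails for the triangulation case $p_{0}=3$, where the interaction is odd in $X$ --- there the parity statement follows instead from the integrality of $f_{\bullet}=L/2+1$.
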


\begin{proof} Let us analyse the possible graphs contributing to $W_{L}^{(1)}(t)$. We work in the direct representation. We have $V$ vertices of degree $p_{0}$, one marked vertex of degree $L$, $E$ edges and $F$ faces. It is useful to decompose $F=f+f_{\bullet} + f'_{\bullet}$, where $f$ is the number of faces that do not go through the marked vertex, $f_{\bullet}$ the number of faces that go only through the marked vertex and $f'_{\bullet}$ the number of faces that go through the marked vertex and through other vertices as well. We also denote by $f_{\bullet, k}$ the number of faces of length $k$ that go only through the marked vertex and by $f'_{\bullet, k}$ the number of faces that go through the marked vertex $k$ times and through other vertices as well. Note that the faces that do not go through the marked vertex must all have the same length $q_{0}$ and that the faces that go through the marked vertex $k$ times and also through other vertices must have length $k+q_{0}$. Standard arguments yield the relations
\begin{align}\label{brelp1}& 2 E = p_{0} V + L = q_{0}f + \sum_{k\geq 1 }kf_{\bullet,k}+\sum_{k\geq 1 }(q_{0}+k) f'_{\bullet,k}\\
& \label{brelp2}\sum_{k\geq 1} k(f_{\bullet,k}+f'_{\bullet,k}) = L\\
& \label{brelp3} F-E+V = 1\quad \text{(Euler)}\, .
\end{align}
Combining \eqref{brelp1} and \eqref{brelp2} yields
\be\label{brelp4} q_{0}(F-f_{\bullet}) = p_{0} V\, .\ee
We then get, using \eqref{brelp3},
\be\label{brelp5} \frac{4-(p_{0}-2)(q_{0}-2)}{2q_{0}} V + f_{\bullet} = \frac{L}{2}+1\, .\ee
But the prefactor in front of $V$ is proportional to the curvature \eqref{Rpqform} and thus vanishes for the allowed values of $p_{0}$ and $q_{0}$. This yields
\be\label{fbulletform} f_{\bullet} = \frac{L}{2} + 1\, .\ee
This formula proves that $L$ is even, $L=2n$, and that the number of faces that go through the marked vertex is $n+1$. Prop.\ \eqref{GaussDII1} then follows from the following elementary lemma.
\begin{lemma} Assume that a connected planar graph has a vertex of degree $2n$ such that there exists $n+1$ distinct faces that go through this vertex and only through this vertex. Then the planar graph has only this vertex.
\end{lemma}
The key to this result is to realize that there is always at least one face that goes only through the marked vertex and that has length one. Indeed, from \eqref{brelp2}, we get
\be\label{indeqbep} 2n \geq \sum_{k\geq 1}kf_{\bullet,k}\geq f_{\bullet,1}+2\sum_{k\geq 2}f_{\bullet, k} = -f_{\bullet,1}+2f_{\bullet}=-f_{\bullet,1}+2n+2\ee
and thus $f_{\bullet,1}\geq 2$. One then proves the lemma straightforwardly from a recursion on $n$, by using a simple operation of deletion of one of the faces of length one.
\end{proof}

\begin{proposition} $W_{L}^{(2)}(t)=0$.
\end{proposition}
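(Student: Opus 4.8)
We need to show that the loop operator $W_L^{(2)}(t) = \frac{1}{L}\langle\frac{1}{N}\tr(AX)^L\rangle$ vanishes identically. The operator $\tr(AX)^L$ counts graphs in which the marked face has length $L$ and is built from the dressed propagator $AX$, so in the dual picture it corresponds to a marked vertex of degree $L$ rather than a marked face. The plan is to repeat the combinatorial bookkeeping used in the previous proposition, now tracking the topological data associated with a marked vertex of degree $L$ in a graph all of whose ordinary vertices have degree $p_0$ and whose faces have length $q_0$, with the flatness constraint $4-(p_0-2)(q_0-2)=0$.

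\emph{Approach.} First I would set up the Euler and edge-counting relations exactly as in \eqref{brelp1}--\eqref{brelp3}, being careful about how the insertion of $\tr(AX)^L$ modifies the count. Because of the extra factors of $A$ in the dressed propagator, the effect of the loop operator $W_L^{(2)}$ is dual to that of $W_L^{(1)}$: here the marked object behaves as a vertex of degree $L$ in the face-counting, so the roles of $p_0$ and $q_0$ are interchanged relative to the previous proof. I would therefore write out the analogue of \eqref{brelp4}--\eqref{brelp5}, again isolating the coefficient proportional to the curvature $4-(p_0-2)(q_0-2)$, which vanishes identically by \eqref{Rpqform} for the allowed pairs $(p_0,q_0)$. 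This should produce a constraint forcing the marked-vertex contribution to be degenerate.

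\emph{Key step and the expected obstacle.} The crux is that, after the flatness cancellation, one obtains a constraint that cannot be satisfied by any admissible graph (unlike in the $W_L^{(1)}$ case, where the analogous constraint \eqref{fbulletform} was solvable and forced the trivial single-vertex graph). Concretely, I expect the marked vertex of degree $L$ in the $(AX)$-representation to require a face or vertex of nonpositive length or a parity/sign contradiction, so that no nonzero graph contributes at any order in $t$; the operator is then exactly its $t=0$ Gaussian value. Thus the main obstacle is the careful identification of which combinatorial quantity plays the role of $f_\bullet$ in the dual setting and showing that the resulting balance equation has no solution. Once the bookkeeping is correctly dualized, the vanishing follows because the contribution already vanishes in the Gaussian ($t=0$) model: in $\langle\frac{1}{N}\tr(AX)^L\rangle$ the free Wick contractions of $X$ against $X$ leave an uncontracted factor of $\tr A^L$ paired with odd powers, or more directly the $A$-dressing together with the flatness condition leaves no face structure consistent with a closed planar graph, so every coefficient is zero. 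I would finish by invoking the same deletion-of-length-one-face recursion, or its dual, to confirm that no graph survives, giving $W_L^{(2)}(t)=0$ identically.
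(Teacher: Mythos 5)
You have correctly identified the relevant mechanism --- the $A$-dressing of the inserted operator is what distinguishes $W_L^{(2)}$ from $W_L^{(1)}$ and is what kills it --- but the proposal stops exactly where the proof has to happen. You say you ``expect'' a face of nonpositive length or a parity/sign contradiction, and you offer two candidate reasons for the vanishing (``an uncontracted factor of $\tr A^L$ paired with odd powers,'' or ``no face structure consistent with a closed planar graph''), the first of which is not correct and the second of which is an assertion, not an argument. The decisive computation is never carried out. Moreover, the framing is off: the insertion of $\tr(AX)^L$ is \emph{not} dual to that of $\tr X^L$ with the roles of $p_0$ and $q_0$ interchanged. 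In the direct representation both insertions produce a marked vertex of degree $L$; the only difference is that for $\tr(AX)^L$ the corners of the marked vertex carry factors of $A$, so that \emph{every} face of the graph, including those passing through the marked vertex, must have length exactly $q_0$ to survive the choice $\tilde t_q=\delta_{q,q_0}$. (For $\tr X^L$ the faces hugging the marked vertex escape this constraint, which is why $W_L^{(1)}$ reduces to its Gaussian value rather than to zero.)

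Once this is in place, the paper's argument is one line and requires none of the $f_\bullet$ bookkeeping of the previous proposition: the graph is a closed planar surface with all faces of length $q_0$ and all vertices of degree $p_0$ except the marked one of degree $L$, so the marked vertex is a localized curvature defect $R_{q_0,L}$ given by \eqref{Rpqform}, and the Gauss--Bonnet formula \eqref{GaussBonnet} forces $(2-q_0)L=2q_0$, i.e.\ $L=-p_0$ after using $(p_0-2)(q_0-2)=4$. This is impossible for any $L\geq 1$, so no graph contributes at any order in $t$, including the Gaussian order. You should supply this (or the equivalent Euler-relation count) explicitly; without it the claimed contradiction is only conjectured.
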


\begin{proof} Working again in the direct representation, one observes that the faces of the graphs potentially contributing to $W_{L}^{(2)}$ all have length $q_{0}$ and that all the vertices have degree $p_{0}$ except one that has degree $L$. This marked vertex introduces a potential curvature defect on the otherwise flat discretized surface, with local Ricci scalar $R_{q_{0},L}$ given by \eqref{Rpqform}. The Gauss-Bonnet formula then yields $(2-q_{0}) L = 2q_{0}$ or equivalently $L=-p_{0}$, a condition that cannot be satisfied.
\end{proof}

\subsection{\label{VectorSec} JT gravity matrix models}

\subsubsection{Vector degrees of freedom}

As explained at the end of Section \ref{MMintSec}, in ordinary Liouville gravity, boundaries may be introduced either by using loop operators or by adding vector variables. The no-go theorems of the previous section show that the usual loop operators are not adequate for JT gravity. The simplest and most pedagogical way to proceed is then to introduce vector degrees of freedom. This will also produce automatically the correct loop operators.

It is convenient to work in the direct formulation, in which the Feynman graphs of the model correspond to the discretized surfaces.\footnote{Strict consistency with the notations used so far would require to put tildes on top of our matrix variable $X$, but this would be unnecessarily cumbersome.} The degrees of freedom are a $N\times N$ Hermitian matrix $X^{i}_{\ j}$ and a complex $N$-vector $(Q^{i},Q^{*}_{i})$. We introduce a polynomial
\be\label{Ppoldef} P(x) = \sum_{s\geq 2} g_{s}x^{s-2}\ee
and define the free energy as
\be\label{PFnewMM} e^{F(A,\tilde A,\{g_{q}\})} = \int\!\d X\d Q\d Q^{*}\, e^{- S( X, Q, Q^{*};A,\tilde A)}\, ,\ee
for an action
\be\label{actsnewMM} S( X, Q, Q^{*};A,\tilde A,\{g_{s}\}) = 
N\tr\Bigl(\frac{1}{2} X^{2} - \sum_{q}\frac{\tilde t_{q}}{q}(\tilde A X)^{q}\Bigr) + N Q^{*}\bigl(\mathbb I - P(\tilde A X)\tilde A\bigr) Q\, .\ee
The external matrices $A$ and $\tilde A$ are as in Eqs.\ \eqref{tildetAdef} and \eqref{ttildeAdef}, with the couplings \eqref{quadrangchoice}.

The matrix propagators form the bulk edges and the vector propagators form the boundary edges. The bulk vertices correspond to the interactions $\tr (\tilde A X)^{q}$ and the boundary vertices correspond to $ Q^{*} (\tilde A X)^{s-2}\tilde A Q$. The background matrix $\tilde A$ is inserted precisely in such a way that a face of length $p$ is weighted by the factor $\smash{t_{p}=\frac{1}{N}\tr\tilde A^{p}}$, both in the case of ``bulk'' faces, whose edges are all made of matrix propagators, and in the case of faces touching the boundary, that have one or more edges made of vector propagators. 

For the choices \eqref{quadrangchoice}, the graphs correspond to discretized surfaces for which there is no curvature at the bulk vertices and for which, according to \eqref{Kpqform}, the boundary has extrinsic curvature $k_{p_{0},q}$ at a vertex $Q^{*}(\tilde A X)^{q-2}\tilde A Q$. This is precisely what we need to build the distorted disk contributing to the JT gravity theory.

\begin{figure}
\centerline{\includegraphics[width=6in]{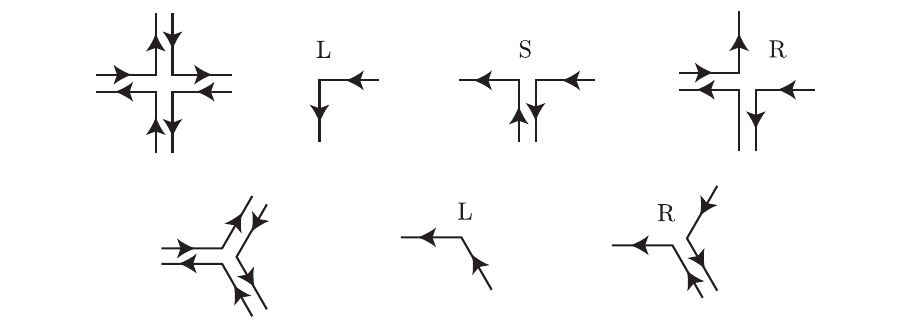}}
\caption{\label{FeynmanMMFig}Feynman rules for the $(p_{0},q_{0})=(4,4)$ and $(p_{0},q_{0})=(6,3)$ models \eqref{actMM4} and \eqref{actMM6}. Double lines correspond to the matrix degrees of freedom $X$ and simple lines, that form the boundary, to the vector degrees of freedom $(Q,Q^{*})$. The orientation is such that the faces that form the discretized surface are to the left. The vertices are drawn isometrically. For each model, there is a unique bulk vertex (to the left). The $(4,4)$ model has three boundary vertices L, S and R (upper right). The $(6,3)$ model has two boundary vertices L and R (lower right). At leading large $N$ order, the associated Feynman graphs are distorted disks, as in Fig.\ \ref{disdiskbasicFig}.}
\end{figure}

Explicitly, if we use quadrangulations, the generalized disk partition function $\mathscr W$ defined in \eqref{Zdefgeneralized}, in the case where the variable $u$ counting the multiplicity is set to one, is given by the leading large $N$ term in the large $N$ expansion of the free energy $F$,
\be\label{Fexp1} F = \frac{1}{4}N\mathscr W(t,g_{\text L},g_{\text S},g_{\text R},1) + O(N^{0})\, ,\ee
for the model with action
\be\label{actMM4} S = N\tr\Bigl(\frac{1}{2}  X^{2} - \frac{1}{4}(\tilde A X)^{4}\Bigr) + N Q^{*}\Bigl(\mathbb I -  \bigl(g_{\text L} + g_{\text S}\tilde A X + g_{\text R}(\tilde A X)^{2}\bigr)\tilde A\Bigr)  Q\ee
and background matrix $\tilde A$ satisfying
\be\label{tildeAex1} \tr\tilde A^{p} = Nt\, \delta_{p,4}\, .\ee
The factor $1/4$ in \eqref{Fexp1} comes from the fact that, by convention, we count configurations modulo lattice translations but not modulo lattice rotations, as explained in Sections \ref{Genfun1Sec} and \ref{symfactSec}. See also the next subsection for a discussion of the symmetry factors.

If we use a honeycomb lattice instead, $p_{0}=6$, we have only two types of boundary vertices, Left or Right, and Eq.\ \eqref{k4qform} is replaced by
\be\label{k6qform} k_{6,q} = 
\begin{cases}
\pi/3 & \text{if $q=2$ (left turn)},\\
-\pi/3  &\text{if $q=3$ (right turn)}.
\end{cases}
\ee
The factor $1/4$ in \eqref{Fexp1} is then replaced by a factor $1/6$ and the action is 
\be\label{actMM6} S = N\tr\Bigl(\frac{1}{2}  X^{2} - \frac{1}{3}(\tilde A X)^{3}\Bigr) + N Q^{*}\Bigl(\mathbb I -  \bigl(g_{\text L} + g_{\text R}\tilde A X \bigr)\tilde A\Bigr) Q\ee
with a background matrix $\tilde A$ satisfying
\be\label{tildeAex2} \tr\tilde A^{p} = Nt\, \delta_{p,6}\, .\ee
One may similarly consider the case of triangulations, $p_{0}=3$. The Feynman rules are depicted in Fig.\ \ref{FeynmanMMFig}.

Finally, note that considering the case of a general polynomial $P$ amounts to allowing branch points. These are clearly interesting models to consider as well, with potentially a variety of critical points and associated continuum limits. As briefly discussed in Section \ref{bdviewSec}, these models correspond to a generalization for which the boundaries are not standard self-overlapping curves but more general ``interior boundaries.''

\subsubsection{\label{SfactMMSec}Symmetry factors}

Symmetry factors associated to a Feynman graph correspond to the order of the automorphism group of the graph. We would like to show that this matches precisely with the counting rules explained in Sections \ref{Genfun1Sec} and \ref{bdcodeSec}. 

Let us deal with the case of quadrangulations and assume we have a graph with $v_{\text L}$, $v_{\text S}$ and $v_{\text R}$ Left, Straight and Right boundary vertices respectively and $V_{\text B}$ bulk vertices. The contribution of such a graph to the free energy $F$ is computed by adding up all the possible Wick contractions of the monomial
\be\label{Wicksymmono} \frac{g_{\text L}^{v_{\text L}}g_{\text S}^{v_{\text S}}g_{\text R}^{v_{\text R}}}{v_{\text L}!v_{\text S}!v_{\text R}!}\frac{1}{V_{\text B}!4^{V_{\text B}}}
\bigl(Q^{*}\tilde A Q\bigr)^{v_{\text L}}\bigl(Q^{*}\tilde A X \tilde A Q\bigr)^{v_{\text S}}
\bigl(Q^{*}\tilde A X \tilde A X\tilde A Q\bigr)^{v_{\text R}}\bigl(\tr (\tilde A X)^{4}\bigr)^{V_{\text B}}\ee
that produce the graph under consideration. The prefactor comes from the expansion of $e^{-S}$, where $S$ is given by \eqref{actMM4}, at the desired order.

Let us first consider the Wick contractions of the vector variables. The structure of the possible contractions is completely determined by the path code of the boundary, as defined in Section \ref{bdcodeSec}. Concretely, imagine that we start from a given Left term $Q^{*}\tilde A Q$ in \eqref{Wicksymmono}. Let's call this term the ``root'' (note that there are always at least four such terms, see Eq.\ \eqref{EulergenflatJTdisk}). This left term must correspond to a particular left vertex in the code. Let us pick one, that we denote by $v$. As explained in \ref{bdcodeSec}, to $v$ is associated a rooted code $B_{v}$. The allowed contractions are clearly completely fixed by the rooted code. At each step along the boundary, we must choose which vertex we contract with. This yields a total of $(v_{\text L}-1)!v_{\text S}!v_{\text R}!$ possibilities. Taking into account the $1/(v_{\text L}!v_{\text S}!v_{\text R}!)$ prefactor in \eqref{Wicksymmono}, we get a contribution $1/v_{\text L}$. According to Section \ref{bdcodeSec}, there are then three possibilities. In the generic case, the marked codes associated with the $v_{\text L}$ left vertices are all distinct. This means that the root term $Q^{*}\tilde A Q$ we started with can be associated with $v_{\text L}$ distinct patterns of contractions all yielding the same unrooted path code. Overall, we get a contribution $\frac{1}{v_{\text L}}\times v_{\text L} = 1$. In another case, the boundary code has a $\mathbb Z_{2}$ symmetry, which means that there are only $v_{\text L}/2$ distinct patterns of contractions and we get the symmetry factor $1/2$. In the last possible case, the boundary code has a $\mathbb Z_{4}$ symmetry, with only $v_{\text L}/4$ distinct patterns of contractions, yielding a symmetry factor $1/4$.

It remains to contract the matrix variables. Geometrically speaking, the procedure amounts to picking the vertices $\tr (\tilde A X)^{4}$ one by one and attaching them to the pre-existing boundary and bulk vertices, building up the distorted disk. Since the $V_{\text B}$ vertices $\tr (\tilde A X)^{4}$ are identical, we get $V_{\text B}!$ possible contraction patterns that yield precisely the same distorted disk. Moreover, each vertex can be attached in four equivalent ways, since they have a $\mathbb Z_{4}$ rotational symmetry. This yields an additional $4^{V_{\text B}}$ factor. We thus get a $V_{\text B}!4^{V_{\text B}}$ factor that precisely cancels the corresponding prefactor in \eqref{Wicksymmono}.

In conclusion, our matrix-vector model counts the distorted disks, or equivalently the SOP that form the boundaries, using precisely the same rules as those explained in Sections \ref{Genfun1Sec} and \ref{bdcodeSec}.

\subsubsection{\label{loopJTMMSec}JT gravity loop operators and arbitrary topologies}

The construction presented above works perfectly well for the disk topology. What about more general topologies?

The large $N$ expansion of the free energy defined in \eqref{PFnewMM} is of the form
\be\label{Fnewgenexp} F = \sum_{k\geq 1} N^{2-k} F_{k}\, ,\ee
where $F_{k}$ gets contributions from Feynman graphs of genus $h$ and $b$ boundaries, with 
\be\label{khbrel} k = 2 h +b\, .\ee
The leading term selects unambiguously the disk topology, $(h,b)=(0,1)$, with, according to Eq.\ \eqref{Fexp1}, $F_{1} = \frac{1}{4}\mathscr W$. However, the higher order terms pick contributions from several topologies. For example, at $k=2$, we get contributions from the torus, $(h,b) = (1,0)$, and from the annulus, $(h,b) = (0,2)$; at $k=3$, from $(h,b)=(0,3)$ and $(h,b)=(1,1)$; etc. In the case $k=2$, we can subtract the known torus contribution, Eq.\ \eqref{DiFIgenusone}, to single out the annulus contribution. But even in this simple case, we do not get the most general annulus partition function in this way, because the couplings $g_{\text L}$, $g_{\text S}$ and $g_{\text R}$ counts the vertices on both boundaries without distinction. This implies, for instance, that we cannot extract the partition function for which the length of the two boundaries are fixed independently of each other. When $k\geq 3$, the situation is even less favourable, because we contributions from various topologies cannot be disentangled.

This problem is easily solved in the case of Liouville theory, by using loop operators. For instance, generalizing Eq.\ \eqref{W2ndLMM}, the number $\mathsf W^{(h)}_{k_{1},\ldots,k_{b},p}$ of quadrangulations of genus $h$ surfaces with $b$ boundaries of lengths $k_{1},\ldots,k_{b}$ and area $p$ is obtained by computing the connected correlator of macroscopic loop operators,
\be\label{Wbgene} \Bigl\langle\frac{1}{k_{1}}\tr X^{k_{1}}\cdots \frac{1}{k_{b}}\tr X^{k_{b}}\Bigr\rangle_{\text c} = \sum_{h\geq 0}N^{2-2h-b}
\mathsf W^{(h)}_{k_{1},\dots,k_{b}}(t) = \sum_{h\geq 0}N^{2-2h-b}\sum_{p\geq 0} \mathsf W^{(h)}_{k_{1},\ldots,k_{b},p}t^{p}\ee
in the one-matrix matrix model with action $\smash{N\tr(\frac{1}{2}X^{2} - \frac{t}{4}X^{4})}$. 

For JT gravity, we have explained in \ref{nogoSec} that the naive ansatz for the loop operators do not work. But we now have all the ingredients to construct the correct loop operators.

To motivate the construction, let us start from the matrix-vector integral \eqref{PFnewMM} and let us integrate out the vector variables explicitly. For concreteness, we focus on the case of quadrangulations, with action \eqref{actMM4} and $\tr\tilde A^{p} = N t\delta_{p,4}$ and $\tr A^{p} = N\delta_{p,4}$ (the case of hexagons or triangles is very similar). We get
\begin{multline}\label{PFn2} e^{F} = \int\!\d X\, \frac{1}{\det (\mathbb I - P(\tilde A X)\tilde A)} e^{-N\tr[\frac{1}{2} X^{2}-\frac{1}{4}(\tilde A X)^{4}]}
\\= \int\!\d X\, e^{-N\tr[\frac{1}{2} X^{2}-\frac{1}{4}(\tilde A X)^{4}] + \sum_{k\geq 1}\frac{1}{k}\tr \phi(X)^{k}}
\end{multline}
where $\phi(X)$ is the Hermitian matrix defined by
\be\label{phidef} \phi(X) = P(\tilde A X)\tilde A = g_{\text L}\tilde A + g_{\text S}\tilde A X\tilde A + g_{\text R}\tilde A X\tilde A X\tilde A\, .\ee
At leading large $N$ order, we can write
\begin{multline}\label{PFn3} e^{NF_{1}+ O(N^{0})} = \frac{\int\!\d X\, e^{-N\tr[\frac{1}{2} X^{2}-\frac{1}{4}(\tilde A X)^{4}] + \sum_{k\geq 1}\frac{1}{k}\tr \phi(X)^{k}}}{\int\!\d X\, e^{-N\tr[\frac{1}{2} X^{2}-\frac{1}{4}(\tilde A X)^{4}]}}
\\ = \Bigl\langle e^{\sum_{k\geq 1}\frac{1}{k}\tr \phi^{k}}\Bigr\rangle
= e^{\sum_{k\geq 1}\frac{1}{k}\langle\tr \phi(X)^{k}\rangle + O(N^{0})}\, ,
\end{multline}
where in the last equality we have used the standard large $N$ factorization of correlation functions, $\langle \tr\phi^{k_{1}}\tr\phi^{k_{2}}\rangle = \langle \tr\phi^{k_{1}}\rangle\langle \tr\phi^{k_{2}}\rangle + O(1/N)$. This yields
\be\label{loopJT1} F_{1} = \frac{1}{4}\mathscr W = \sum_{k\geq 1}\frac{1}{k}\bigl\langle\frac{1}{N}\tr\phi(X)^{k}\bigr\rangle\ee
at leading large $N$ order. The disk generating function at fixed boundary length is thus
\be\label{Wtkmm} \frac{1}{4} \mathscr W_{k} = \frac{1}{k}\bigl\langle\frac{1}{N}\tr\phi(X)^{k}\bigr\rangle\quad \text{when}\ N\rightarrow\infty\, .\ee
This is the exact analogue of \eqref{W2ndLMM}, with a loop operator for JT gravity identified with $\frac{1}{k}\tr\phi(X)^{k}$. 

More generally, we can write the analogue of Eq.\ \eqref{Wbgene} for JT gravity,
\be\label{WbgenJT} \Bigl\langle\frac{1}{k_{1}}\tr \phi_{1}(X)^{k_{1}}\cdots \frac{1}{k_{b}}\tr \phi_{b}(X)^{k_{b}}\Bigr\rangle_{\text c} = \sum_{h\geq 0}N^{2-2h-b}
\mathscr W^{(h)}_{k_{1},\dots,k_{b}}\, , \ee
where $\mathscr W^{(h)}_{k_{1},\dots,k_{b}}$ is the generating function for the number of connected flat discretized surfaces of genus $h$ with $b$ boundaries. The operators $\phi_{\alpha}(X)$, $1\leq \alpha\leq b$, are given by
\be\label{phidefa} \phi_{\alpha}(X) = g_{\text L\alpha}\tilde A + g_{\text S\alpha}\tilde A X\tilde A + g_{\text R\alpha}\tilde A X\tilde A X\tilde A\, .\ee
If we do not want to keep track of the number of Left, Straight and Right boundary vertices, but only on the lengths of the boundary components, Eqs.\ \eqref{WbgenJT} and \eqref{phidefa} simplify to 
\be\label{WbgenJT2} \Bigl\langle\frac{1}{k_{1}}\tr \varphi(X)^{k_{1}}\cdots \frac{1}{k_{b}}\tr \varphi(X)^{k_{b}}\Bigr\rangle_{\text c} = \sum_{h\geq 0}N^{2-2h-b}
 W^{(h)}_{k_{1},\dots,k_{b}}(t)\, , \ee
for an operator
\be\label{phidef2} \varphi(X) = \tilde A + \tilde A X\tilde A + \tilde A X\tilde A X\tilde A\, .\ee
In the case of the honeycomb lattice, we work with $\tr\tilde A^{p} = Nt\delta_{p,6}$, $\tr A^{p} = N\delta_{p,3}$ and a simpler loop operator $\varphi = \tilde A + \tilde A X\tilde A$. 


\subsubsection{A matrix formula for the multiplicity}

An interesting application of the loop operator formalism is to derive a matrix integral formula for the multiplicity index $\mu_{\gamma}$ associated with an arbitrary polygon $\gamma$. Recall that $\mu_{\gamma}$ is a positive integer that counts the number of distinct distorted disks bounded by $\gamma$, see Section \ref{MilnorSec}. In particular, $\mu_{\gamma}$ is non-zero if and only of $\gamma$ is a self-overlapping polygon.

As usual, for concreteness, we work with the square lattice, the generalization to the hexagonal and triangular lattices being straightforward. We introduce 
\be\label{loopopmult} \varphi_{\text L} = \tilde A\, ,\quad \varphi_{\text S} = \tilde A X\tilde A\, ,\quad \varphi_{\text R} = \tilde A X\tilde A X\tilde A\, .\ee
Let $\gamma$ be an an arbitrary polygon, characterized by a certain path code, as defined in Section \ref{bdcodeSec}. Call $\gamma_{v}$ the polygon $\gamma$ rooted at a certain vertex $v$, with associated rooted code $B_{v}$, as in \ref{symfactSec}. To $\gamma_{v}$, we associate the matrix $\varphi_{\gamma_{v}}$, which is obtained by taking the product of $\varphi_{\text L}$, $\varphi_{\text S}$ and $\varphi_{\text R}$, ordered according to the rooted path code. For instance, for $B_{v}=\text{LSLSLLRL}$, we have $\varphi_{\gamma_{v}} = \varphi_{\text L}\varphi_{\text S}\varphi_{\text L}\varphi_{\text S}\varphi_{\text L}\varphi_{\text L}\varphi_{\text R}\varphi_{\text L}$. The operator $\tr\varphi_{\gamma_{v}}$ does not depend on the marked vertex, thank's to the invariance of the trace under cyclic permutations.
\begin{definition} The loop operator associated with a polygon $\gamma$ is defined to be $\tr\varphi_{\gamma_{v}}$, where $v$ is an arbitrary marked vertex on $\gamma$. Since the loop operator does not depend on the choice of $v$, we denote it by $\tr\varphi_{\gamma}$.
\end{definition}
\begin{theorem} We consider the large $N$ Hermitian matrix model with action 
\be\label{mmactmult} S = N\tr\Bigl(\frac{1}{2}X^{2} - \frac{1}{4}(\tilde A X)^{4}\Bigr)\, ,\ee
where the external Hermitian matrix $\tilde A$ is such that $\tr\tilde A^{p} = Nt\delta_{p,4}$. The expectation value of the loop operator associated with a polygon $\gamma$, satisfying the constraint $v_{\text L}-v_{\text R}=4$, Eq.\ \eqref{EulergenflatJTdisk}, in this matrix model, is given by 
\be\label{multasvev}  \langle\tr\varphi_{\gamma}\rangle = N \mu_{\gamma} t^{p_{\gamma}}\, ,\ee
where $\mu_{\gamma}$ and $p_{\gamma}$ are the multiplicity of $\gamma$ and the area of the distorted disks it bounds, respectively (recall from Section \ref{overlapSec}, Corrolary \ref{Areacorollary}, that the area depends only on $\gamma$, even in the cases where $\mu_{\gamma}>1$).
\end{theorem}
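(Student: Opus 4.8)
The plan is to compute $\langle\tr\varphi_{\gamma}\rangle$ directly from its Wick expansion and then translate the result into the language of discretized surfaces of Section \ref{dissurfrevSec}. First I would expand $e^{\frac{N}{4}\tr(\tilde A X)^{4}}$ in powers of the quartic coupling and contract all the matrices $X$ — both those coming from the bulk vertices $\tr(\tilde A X)^{4}$ and those contained in the loop operator $\tr\varphi_{\gamma}$ — with the Gaussian propagator $\langle X^{i}{}_{j}X^{k}{}_{l}\rangle=\frac{1}{N}\delta^{i}_{l}\delta^{k}_{j}$. Each complete contraction produces a fat (ribbon) graph, and since $\tr\varphi_{\gamma}$ is a fixed cyclic word in $\varphi_{\text L},\varphi_{\text S},\varphi_{\text R}$ dictated by the path code of $\gamma$, it forces the single marked face of the graph to carry the letters $\text L,\text S,\text R$ in the prescribed cyclic order, with $\varphi_{\text S}$ and $\varphi_{\text R}$ supplying the one and two internal $X$-legs needed to realize straight and right-turning boundary vertices. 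By the dictionary of Section \ref{dissurfrevSec}, each such graph is a discretized flat surface bounded by $\gamma$, every bulk vertex being a square and every face contributing a factor $\frac{1}{N}\tr\tilde A^{4}=t$.

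Next I would apply standard 't Hooft power counting. A connected ribbon graph carrying the single boundary face of $\tr\varphi_{\gamma}$ and of genus $h$ scales as $N^{1-2h}$, so the leading contribution is the disk, $h=0$, at order $N^{1}$, all higher genera being suppressed by $1/N^{2}$. At this order the contributing graphs are precisely the planar fat graphs whose boundary is $\gamma$, that is, the distorted disks bounded by $\gamma$ in the isometric picture of Sections \ref{dismetSec} and \ref{MilnorSec}. Since by Corollary \ref{Areacorollary} the area (number of faces) of every distorted disk bounded by $\gamma$ is the same integer $p_{\gamma}$, each contributing graph carries the common weight $t^{p_{\gamma}}$, which therefore factors out of the sum.

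It then remains to show that the number of contributing planar graphs, each counted with weight one, equals the multiplicity $\mu_{\gamma}$. For this I would invoke the correspondence between planar ribbon graphs with prescribed boundary and immersions $F:\disk\rightarrow\mathbb C$ with $F_{|\partial\disk}=\gamma$ modulo disk diffeomorphisms: two contraction patterns yield the same distorted disk exactly when the associated immersions are diffeomorphism-equivalent, and the number of inequivalent such immersions is $\mu_{\gamma}$ by Definition \ref{multDef}. The coefficient one per distorted disk is the content of the symmetry-factor computation of Section \ref{SfactMMSec}: the prefactor $1/(V_{\text B}!\,4^{V_{\text B}})$ from expanding $e^{-S}$ is cancelled exactly by the $V_{\text B}!$ relabellings of the identical bulk vertices together with the $4^{V_{\text B}}$ rotations of each square, while the boundary contributes no residual symmetry factor because $\tr\varphi_{\gamma}$ already specifies a definite cyclic word. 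Assembling the three factors gives $\langle\tr\varphi_{\gamma}\rangle=N\,\mu_{\gamma}\,t^{p_{\gamma}}$.

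The main obstacle is the correspondence used in the last paragraph: one must argue carefully that the diagrammatic enumeration of planar fat graphs with boundary $\gamma$ coincides with the geometric enumeration of diffeomorphism-inequivalent distorted disks, so that genuinely distinct immersions (such as the two disks bounded by the Milnor curve of Section \ref{MilnorSec}) are counted as distinct contractions while diffeomorphic ones are identified. One should also check that no degenerate or lower-dimensional contraction contaminates the leading $N^{1}$ order, and that the internal overlaps of a distorted disk — which are invisible in the planar drawing of the abstract graph but essential in the isometric immersion — are correctly reproduced by the preimage-counting already used in Proposition \ref{windprop}, so that the face count of the ribbon graph matches the layered area of the immersed disk.
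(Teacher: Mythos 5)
Your proposal is correct and follows essentially the same route as the paper: a Wick-contraction argument in which the planar fat graphs attached to the fixed cyclic word $\tr\varphi_{\gamma}$ are identified with the distorted disks bounded by $\gamma$, each weighted by $t^{p_{\gamma}}$ via Corollary \ref{Areacorollary}, with the $V_{\text B}!\,4^{V_{\text B}}$ prefactor cancelling exactly as in the symmetry-factor analysis of Section \ref{SfactMMSec}. The only difference is organizational: the paper reads the answer off the contribution of the boundary code $\gamma$ to the free energy of the matrix-vector model, so the rotational symmetry factor $\sigma_{\gamma}$ appears on both sides of the identification and cancels, whereas you evaluate $\langle\tr\varphi_{\gamma}\rangle$ directly in the pure matrix model; both reduce to the same combinatorial dictionary between planar contraction patterns and diffeomorphism-inequivalent distorted disks.
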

\begin{proof} The proof is based on a slight refinement of the discussion made in Section \ref{SfactMMSec}. A boundary code $\gamma$ being given, it was shown that the number of Wick contractions of the vector variables in \eqref{Wicksymmono} associated with $\gamma$ is $v_{\text L}!v_{\text S}!v_{\text R}!/\sigma_{\gamma}$, where $\sigma_{\gamma}$ is the symmetry factor associated with $\gamma$ (one for a generic polygon, two or four for a $\mathbb Z_{2}$ or $\mathbb Z_{4}$ symmetric polygon). The new piece of information we need is that these contractions precisely yield the loop operator $\tr\varphi_{\gamma}(X)$. This is straightforward to check. The subsequent contractions of the matrix variables and sum over the number $V_{\text B}$ of $\tr (\tilde AX)^{4}$ insertions amounts, on the one hand, to computing $\frac{1}{\sigma_{\gamma}}\langle\tr\varphi_{\gamma}\rangle$, and, on the other hand, to reconstructing all the distorted disks bounded by $\gamma$, yielding the contribution $Nt^{p_{\gamma}}\mu_{\gamma}/\sigma_{\gamma}$ to the free energy $F$. We conclude.
\end{proof}
Eq.\ \eqref{multasvev} is exact at any order in the large $N$ expansion; it thus implies that the expectation value of the loop operator gets contributions only from the disk topology. If we waive the constraint $v_{\text L}-v_{\text R}=4$, Eq.\ \eqref{EulergenflatJT} predicts that $\langle\tr\varphi_{\gamma}\rangle = 0$ if $v_{\text L}-v_{\text R}$ is not of the form $4(1-2h)$ for a positive integer $h$, and gets contribution only at genus $h$ if $v_{\text L}-v_{\text R}=4(1-2h)$. Similarly, we could compute connected correlators $\langle\tr\varphi_{\gamma_{1}}\cdots\tr\varphi_{\gamma_{b}}\rangle_{\text c}$ in the matrix model to decide whether the curves $\gamma_{i}$ bound a flat surface.

Investigating these questions and more generally finding a solution to JT gravity, on the disk and possibily on higher topologies as well, using the matrix model, is clearly a challenging and fascinating problem for the future.

\section{\label{nonzeroRSec}The cases of non-zero curvature}

We have already briefly discussed the case of non-zero curvature in Section \ref{posnegcurvesSec}. We can now go into more detail and attempt to paint a general qualitative picture of the properties expected for negative and positive curvatures.

\subsection{Qualitative features}

There are two important properties that we would like to emphasize.

\paragraph{The microscopic structure of the models cannot depend on the curvature of space-time.} 

If $L$ is the curvature length scale, defined by $R=\pm 2/L^{2}$, then both the positive and the negative curvature models will be well-approximated by the flat model on distance scales much less than $L$. When $\La\geq 0$, this will be the case as soon as the quantum length of the boundary $\beta_{\text q}$ is much less than $L^{2}$.\footnote{Recall that $\beta_{\text q}$ has the dimension of a length square, because the exponent $\nu=1/2$ in pure JT gravity, see Eq.\ \eqref{critexponentsJT} and \cite{ferrari}.} In the negative curvature model, if $\La$ is negative, one must also impose that $\beta_{\text q}|\La|$ is not too large. We shall argue that the positive curvature model is actually well-defined only for $\La\geq 0$. In terms of the partition functions, which were formally defined in Eq.\ \eqref{qJT1}, one must have
\be\label{flatlimit} \lim_{L\rightarrow +\infty} W^{\pm}(\beta_{\text q},\La,L) = W^{0}(\beta_{\text q},\La)\, ,\ee
the limit being taken for $\La\geq 0$ in positive curvature or for either $\La \geq 0$ or $\La<0$ and $\beta_{\text q}|\La|$ not too large in negative curvature.

The fact that all three models have the same short distance behaviour is equivalent to saying that they have the same UV-completion. 

\paragraph{The models in negative, zero and positive curvature differ in very important ways on large distance scales.}

These differences will be made very precise below, in particular by offering conjectures on the large area asymptotics of the area probability densities. At a qualitative level, we can understand the very different behaviour of the three models at long distances by referring to the existence, or non-existence, of isoperimetric inequalities. In negative curvature, for a disk having a smooth boundary of length $\ell$, the area is bounded above by the area of the round disk,
\be\label{isoneg} A\leq  2\pi L^{2}\Biggl[-1+\sqrt{1+\frac{\ell^{2}}{4\pi^{2}L^{2}}}\Biggr]\quad\text{when $R=-2/L^{2}$.}\ee
For large configurations, the bound grows linearly with the boundary length, $A\leq L\ell$. In zero curvature, the upper bound is still given by the round disk,
\be\label{isozero} A\leq  \frac{\ell^{2}}{4\pi}\quad\text{when $R=0$.}\ee
It now grows quadratically with the boundary length. In positive curvature, there is no bound. This will be illustrated explicitly below, in Section \ref{poscurvlastSec}. 

If the boundaries were smooth in the quantum theory, the above bounds would imply that the area probability densities in negative and zero curvature would have compact support. Instead, the boundary is a fractal, with a fixed quantum boundary length $\beta_{\text q}$ but an infinite smooth length $\ell$. There is thus no bound on the possible areas in the quantum theory, and the probability densities can have tails that extend to infinity, even in negative or zero curvature. It is natural to expect that the densities will decay much faster in negative curvature than in zero curvature, since the smooth bound as a function of the boundary length is stricter in negative curvature, and much faster in zero curvature than in positive curvature, since there is no smooth bound in the latter case. The decay laws of the area probability densities can be interpreted as quantum versions of the classical isoperimetric inequalities. Conjectures about their explicit form are given below, see \ref{areazeroconj1}, \ref{areazeroconj2} and \ref{areanegconj}, \ref{areaposconj}. 

The rapid decay in negative curvature implies that large areas are strongly suppressed; this ensures the existence of the theory for any value of the cosmological constant, including the regime of large negative $\La$ where the effective Schwarzian description is emerging. In zero curvature, we conjecture an exponential decay, implying that the pure gravity model exists for a range of cosmological constant $\La\geq\La_{\text c}$, where $\La_{\text c}$ is finite and negative. In positive curvature, we conjecture a power-law decay. Large area configurations are then ubiquitous and the theory exists only for positive cosmological constants.

\subsection{\label{negcurvlastSec}Negative curvature}

\subsubsection{Area distribution law} 

\begin{conjecture}\label{areanegconj} The area distribution of JT gravity in negative curvature coupled to $c \leq 0$ conformal matter, with critical exponent $\nu$ given by Eq.\ \eqref{critexponentsJT}, has the asymptotic behaviour
\be\label{rhodecaynegc} \ln\rho^{-}_{\beta_{\text q}}(A)\underset{A\rightarrow +\infty}{\sim} - \kappa_{\nu}^{-} \biggl(\frac{A}{\beta_{\text q}^{\nu} L}\biggr)^{\chi^{-}_{\nu}}\, ,\ee
where $\kappa_{\nu}^{-}$ is a strictly positive numerical constant and $\chi_{\nu}^{-}$ a critical exponent, which is a strictly increasing function of $\nu$ satisfying $\lim_{\nu\rightarrow \frac{1}{2}^{+}}\chi_{\nu}^{-} =2$ and $\lim_{\nu\rightarrow 1^{-}}\chi_{\nu}^{-} = +\infty$.
\end{conjecture}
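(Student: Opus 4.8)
The plan is to reduce the large-area asymptotics to a single universal ``perimeter cost'' that is shared by all three curvature models, with the target curvature entering only through the classical isoperimetric inequality. The key structural input is that $\rho^{-}_{\beta_{\text q}}$ and its flat counterpart $\rho^{0}_{\beta_{\text q}}$ are expectation values of $\delta(A-A[\gamma])$ over the \emph{same} ensemble of self-overlapping curves of quantum length $\beta_{\text q}$ (uniform up to the multiplicity weight at $\La=0$); only the area functional $A[\gamma]$ differs, being computed with $\delta^{-}$ instead of $\delta$. The large-$A$ tail is therefore a large-deviation event for the area functional, and the two cases differ only in how the enclosed area relates to the perimeter of a near-optimal enclosing loop. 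First I would fix, via the isoperimetric inequalities \eqref{isozero} and \eqref{isoneg}, the minimal perimeter $P(A)$ needed to enclose area $A$: in flat space $P(A)\sim\sqrt{4\pi A}$ so $A\sim P^{2}$, while in hyperbolic space the linear growth of the bound for large configurations gives $P(A)\sim A/L$, so $A\sim LP$.

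The second step is a saddle-point analysis of \eqref{Wconpathg}: I would argue that, at large $A$, the dominant configurations are the near-isoperimetric-optimal (round-ish, ``inflated'') disks that saturate the bound rather than configurations that waste smooth length on fractal wiggling, so that the tail is controlled by the cost of producing a loop whose coherent perimeter is $P(A)$. Since a typical loop of quantum length $\beta_{\text q}$ has linear size of order $\beta_{\text q}^{\nu}$ (by definition of the exponent $\nu$, cf.\ \eqref{gyrscaling} with $\nu'=\nu$), producing a coherent perimeter $P\gg\beta_{\text q}^{\nu}$ is a rare, ballistic organisation of the boundary shape. I would encode this in a rate function
\be
\ln\Pi_{\beta_{\text q}}(P)\underset{P/\beta_{\text q}^{\nu}\rightarrow\infty}{\sim}-\Psi\!\left(\frac{P}{\beta_{\text q}^{\nu}}\right)\, ,
\ee
with $\Pi_{\beta_{\text q}}(P)$ the probability of coherent perimeter at least $P$. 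The crucial claim is that $\Psi$ is \emph{universal}: it measures the combinatorial cost of organising the boundary code over the scale $P$, a curvature-independent boundary (extrinsic) quantity, and all three theories share the same UV completion. The target geometry enters only through the map $A\leftrightarrow P$, which cleanly sidesteps the fact that the relevant hyperbolic configurations have $P\gg L$.

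The form of $\Psi$ is then fixed by the flat-space result. Inserting the flat relation $A\sim P^{2}$ into the conjectured tail \eqref{rhodecayzeroc} gives $\ln\rho^{0}\sim-\bigl(P^{2}/\beta_{\text q}^{2\nu}\bigr)^{\chi_{\nu}}=-\bigl(P/\beta_{\text q}^{\nu}\bigr)^{2\chi_{\nu}}$, so $\Psi(y)\propto y^{2\chi_{\nu}}$. Feeding this back through the hyperbolic relation $P(A)\sim A/L$, so that $P/\beta_{\text q}^{\nu}\sim A/(\beta_{\text q}^{\nu}L)$, reproduces exactly the claimed form \eqref{rhodecaynegc} with the correct dimensionless argument and, using the conjectured exact value \eqref{chinyconjf},
\be
\chi^{-}_{\nu}=2\chi_{\nu}=\frac{1}{1-\nu}\, .
\ee
This exponent is manifestly strictly increasing in $\nu$, with $\lim_{\nu\rightarrow\frac{1}{2}^{+}}\chi^{-}_{\nu}=2$ and $\lim_{\nu\rightarrow1^{-}}\chi^{-}_{\nu}=+\infty$, as required. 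As cross-checks I would verify the endpoints: at $\nu=\frac{1}{2}$ (pure JT) the prediction is the Gaussian-like decay $\ln\rho^{-}\sim-\bigl(A/(\beta_{\text q}^{1/2}L)\bigr)^{2}$, consistent with the super-exponential suppression that guarantees a theory well-defined for all $\La$ and with the toy Kitaev--Suh model \cite{KitaevSuh}, while the $\nu\rightarrow1^{-}$ regime should match the semiclassical $c\rightarrow-\infty$ analysis of \cite{Loopcalc}.

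The hard part will be Step two: rigorously establishing that the perimeter rate function $\Psi$ is genuinely independent of the target curvature, i.e.\ that the suppression of large geometries is a microscopic effect fully captured by the flat model. Tightly bound to this is the saddle-point reduction of Step one, which requires proving that the dominant large-area configurations indeed saturate the isoperimetric bound rather than achieving large area by some other route, and that the coherent-perimeter variable is the right contraction of the full shape large-deviation functional. Finally, the flat-space input \eqref{chinyconjf} is itself conjectural. The honest outcome is therefore conditional: modulo the flat-space area-law conjecture and the curvature-independence of the microscopic perimeter cost, the negative-curvature exponent is forced to be $\chi^{-}_{\nu}=1/(1-\nu)$.
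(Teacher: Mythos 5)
The statement is a conjecture, and the paper supports it with evidence rather than a proof, so the right comparison is between your heuristic and the paper's. The paper's route is different from yours: it anchors the $\nu\rightarrow 1^{-}$ behaviour and the asymptotic law \eqref{chiminas} in the explicit semiclassical $c\rightarrow -\infty$ computation of \cite{Loopcalc}, anchors the $\nu\rightarrow\frac{1}{2}^{+}$ endpoint ($\chi^{-}_{1/2}=2$, Gaussian decay in $A$) in the tractable hyperbolic Brownian-loop model of \cite{KitaevSuh,RDFollowup1}, and cross-checks against the ballistic/Schwarzian-limit analysis of Section \ref{areaSchSec}, where $\epsilon=1/(\chi_{\nu}^{-}-1)$ ties the conjecture to the relation between the emergent length $\ell$ and the microscopic parameters. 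Your isoperimetric-transfer argument does not appear in the paper. Its value is that it derives in one stroke the structural relation $\chi_{\nu}^{-}=2\chi_{\nu}$ between the flat and hyperbolic exponents — which indeed holds between the paper's conjectured formulas \eqref{chinyconjf} and \eqref{chinyconjf2}, and is consistent with $\chi_{\nu}\sim|c|/12$ versus $\chi_{\nu}^{-}\sim|c|/6$ — and that it explains why the dimensionless argument changes from $A/\beta_{\text q}^{2\nu}$ to $A/(\beta_{\text q}^{\nu}L)$; it upgrades the purely qualitative use of \eqref{isozero} and \eqref{isoneg} at the beginning of Section \ref{nonzeroRSec} to a quantitative statement. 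What the paper's evidence buys in exchange is independence from the flat-space conjecture \eqref{chinyconjf}, on which your derivation is explicitly conditional.

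One caution on the step you rightly flag as the hard one. The tail configurations extend over scales $\gg L$, so the ``same UV completion'' argument does not by itself make the perimeter rate function $\Psi$ curvature-independent; worse, the generic ballisticity of random paths on hyperbolic lattices could naively suggest that large coherent perimeters are \emph{cheaper} there, not equally costly. The claim you actually need is more local: the large-deviation cost of forcing the boundary to trace out a prescribed arc length $P$ over quantum length $\beta_{\text q}$ is a metric, kinetic-term quantity (for $\nu=1/2$ it is the Gaussian cost $e^{-cP^{2}/\beta_{\text q}}$, manifestly curvature-blind), while the curvature enters only through the area-per-unit-perimeter conversion. Phrased this way the assumption is exactly what makes the Brownian toy model reproduce $\chi^{-}_{1/2}=2$, and stating it in that form — rather than appealing to UV universality, which does not apply at these scales — would make your argument considerably more convincing.
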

The fact that $\chi_{\nu}^{-}\geq 2>1$ ensures the existence of the theory for all $\La$, including in the case of the pure gravity theory, for which $\chi_{\nu}^{-}$ is conjectured to be 2. The limit when $\nu\rightarrow 1^{-}$ follows from the semiclassical analysis of \cite{Loopcalc}. The limit $\nu\rightarrow \frac{1}{2}^{+}$ is motivated in part by a similar behaviour found for purely random Brownian paths in hyperbolic space \cite{RDFollowup1}, see also the discussion in \ref{areaSchSec} below.

In a very similar way to the discussion in Section \ref{areaSOPcomnjSec}, we observe that Eq.\ \eqref{rhodecaynegc} is equivalent to the asymptotic behaviour
\be\label{Wminasympconj} \ln W^{-}(\beta_{\text q},\La,L) \underset{\La\rightarrow -\infty}{\sim} \frac{\chi_{\nu}^{-}-1}{\chi_{\nu}^{-}} \bigl(\chi_{\nu}^{-}\kappa_{\nu}^{-}\bigr)^{-\frac{1}{\chi_{\nu}^{-}-1}}
\biggl(\frac{|\La|\beta_{\text q}^{\nu}L}{16\pi}\biggr)^{\frac{\chi_{\nu}^{-}}{\chi_{\nu}^{-}-1}}\, ,\ee
for the JT gravity partition function in negative curvature or, similarly, to the asymptotic behaviour
\be\label{minusexpareaconj} \langle A\rangle \underset{\La\rightarrow -\infty}{\sim} 
\biggl(\frac{|\La|\beta_{\text q}^{\nu}L}{16\pi\chi_{\nu}^{-}\kappa_{\nu}^{-}}\biggr)^{\frac{1}{\chi_{\nu}^{-}-1}} \beta_{\text q}^{\nu} L\ee
for the expected area.

These formulas as consistent with the $c\rightarrow -\infty$ analysis presented in \cite{Loopcalc}, with
\be\label{chiminas} \chi_{\nu}^{-}\underset{c\rightarrow -\infty}{\sim} \frac{|c|}{6} 
\underset{\nu\rightarrow 1^{-}}{\sim}\frac{1}{1-\nu}\,\cdotp\ee
They are also consistent with the discussion in Section \ref{areaSchSec} below. We conjecture that the asymptotic law \eqref{chiminas}, as a function of $\nu$, is exact.
\begin{conjecture} The exponent $\chi_{\nu}^{-}$ governing the asymptotic behaviour of the area distribution, Eq.\ \eqref{rhodecaynegc}, is given by 
\be\label{chinyconjf2} \chi_{\nu}^{-} = \frac{1}{1-\nu}\,\cdotp\ee
\end{conjecture}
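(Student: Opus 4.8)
The plan is to reduce the conjecture to a single universal large-deviation exponent controlling the suppression of large \emph{smooth} boundary lengths, and then to fix that exponent from the two regimes in which it is already known. The starting point is the saddle-point relation implicit in Eq.~\eqref{Wconpathg}: for $\La\rightarrow-\infty$ the area integral is dominated by configurations of large area $A$, and the stretched-exponential tail \eqref{rhodecaynegc} is precisely what produces the asymptotics \eqref{Wminasympconj}, \eqref{minusexpareaconj}. Thus it suffices to control the large-$A$ tail of $\rho^{-}_{\beta_{\text q}}$. The key move is to introduce the geometric (smooth) boundary length $\ell$ of a configuration as an intermediate variable, and to organise the tail as a competition between the entropic cost of producing a boundary of large $\ell$ at fixed quantum length $\beta_{\text q}$ and the area such a boundary can enclose.

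First I would argue that the \emph{joint} law of $(\beta_{\text q},\ell)$ is a short-distance quantity, shared by all three curvature models, since they have the same UV-completion (the flat microscopic ensemble of Sections \ref{diskviewSec}--\ref{bdviewSec}). I therefore posit a curvature-independent inflation cost
\be\label{inflationcost} \ln P\bigl(\ell\mid\beta_{\text q}\bigr) \underset{\ell\rightarrow\infty}{\sim} -c_{\nu}\,\bigl(\ell/\beta_{\text q}^{\nu}\bigr)^{\psi_{\nu}}\, ,\ee
where $\psi_{\nu}$ depends only on $\nu$ and the ratio $\ell/\beta_{\text q}^{\nu}$ is dimensionless. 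Curvature then enters \emph{only} through the isoperimetric relation between $\ell$ and the maximal enclosed area: the large-$A$ saddle saturates the round-disk bound, $A\sim\ell^{2}/(4\pi)$ in the flat case \eqref{isozero} and $A\sim L\ell$ in the negative-curvature case \eqref{isoneg}. Substituting the inverse isoperimetric relation into \eqref{inflationcost} gives, for the flat model, $\ln\rho^{0}_{\beta_{\text q}}(A)\sim-(A/\beta_{\text q}^{2\nu})^{\psi_{\nu}/2}$ and, for the negative-curvature model, $\ln\rho^{-}_{\beta_{\text q}}(A)\sim-(A/(\beta_{\text q}^{\nu}L))^{\psi_{\nu}}$, matching the scaling forms of \eqref{rhodecayzeroc} and \eqref{rhodecaynegc} with
\be\label{chipsi} \chi_{\nu} = \tfrac{1}{2}\psi_{\nu}\, ,\qquad \chi_{\nu}^{-} = \psi_{\nu}\, ,\ee
hence the central identity $\chi_{\nu}^{-}=2\chi_{\nu}$.

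It then remains to determine $\psi_{\nu}$, and two exact data points fix it. In the limit $\nu\rightarrow 1^{-}$, which corresponds via \eqref{critexponentsJT} to the semiclassical coupling $c\rightarrow-\infty$, the boundary becomes smooth and the geometry classical; the analysis of \cite{Loopcalc} gives $\chi_{\nu}^{-}\sim|c|/6\sim 1/(1-\nu)$, i.e.\ $\psi_{\nu}\sim 1/(1-\nu)$. At $\nu=\tfrac{1}{2}$ the model is governed by purely random (Brownian) paths; the exact flat distribution \eqref{rhobetaRP} has $\ln\rho\sim -2\pi A/\beta_{\text q}$, so $\chi_{1/2}=1$ and $\psi_{1/2}=2$, while the hyperbolic random-path analysis of \cite{RDFollowup1} gives $\chi_{1/2}^{-}=2$. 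The simplest monotone function meeting both constraints, $\psi_{\nu}=1/(1-\nu)$, reproduces the flat conjecture $\chi_{\nu}=1/(2(1-\nu))$ of Eq.~\eqref{chinyconjf} through \eqref{chipsi}, and yields
\be\label{finalchi} \chi_{\nu}^{-}=\psi_{\nu}=\frac{1}{1-\nu}\, ,\ee
with the required endpoints $\chi_{1/2}^{-}=2$ and $\chi_{1}^{-}=+\infty$ of Conjecture \ref{areanegconj} and the asymptotics \eqref{chiminas}.

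The hard part is everything feeding into \eqref{inflationcost} and the interpolation. Establishing that the inflation cost is genuinely curvature-independent, that it takes a pure stretched-exponential form, and that the large-$A$ saddle really saturates the isoperimetric bound rather than being spread over sub-extremal shapes, all require a controlled large-deviation principle for the fluctuating boundary that is unavailable at the combinatorial level. The natural arena is the continuum conformal-gauge formulation of \cite{ferrari,ferraJTconfgauge}, in which the boundary is governed by an induced Liouville-type gravitational action; $\psi_{\nu}$ should emerge as a large-deviation exponent of that action, and computing it for all $\nu$ --- rather than only at the two endpoints --- is what would upgrade the simplest-interpolation step to a genuine derivation and settle the conjecture.
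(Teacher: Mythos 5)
This statement is a conjecture, and the paper does not prove it either: its entire justification consists of the two endpoint checks --- the $c\rightarrow-\infty$ asymptotics $\chi_{\nu}^{-}\sim|c|/6\sim 1/(1-\nu)$ of Eq.~\eqref{chiminas} taken from \cite{Loopcalc}, and the value $\chi^{-}_{1/2}=2$ motivated by the Brownian model in hyperbolic space \cite{RDFollowup1} --- followed by the declaration that the asymptotic law in $\nu$ is exact. Your argument uses the same two data points and ends with the same ``simplest interpolation'' leap, so it is neither more nor less of a proof than the paper's. What is genuinely different, and worth keeping, is the intermediate structure: by positing a curvature-independent large-deviation cost \eqref{inflationcost} for the coarse-grained boundary length and letting curvature enter only through the isoperimetric dictionary ($A\sim\ell^{2}$ in flat space, $A\sim L\ell$ in hyperbolic space, Eqs.~\eqref{isozero} and \eqref{isoneg}), you obtain the relation $\chi^{-}_{\nu}=2\chi_{\nu}$, which ties \eqref{chinyconjf2} to the flat conjecture \eqref{chinyconjf} instead of treating the two as independent guesses. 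The paper's formulas satisfy this relation but the paper never isolates it; your route also supplies a nontrivial cross-check at $\nu=1/2$, where $\chi_{1/2}=1$ (from \eqref{rhobetaRP}) and $\chi^{-}_{1/2}=2$ indeed differ by the predicted factor of two, consistent with the fact that the leading Gaussian tail of the hyperbolic heat kernel at large geodesic distance coincides with the flat one. Two caveats deserve emphasis. First, the curvature independence of \eqref{inflationcost} for general $\nu$ is a genuine assumption, superficially in tension with the ballistic behaviour of Section~\ref{areaSchSec} (typical boundaries in hyperbolic space reach macroscopic extent ``for free''); it is defensible only because what enters here is the far tail of the distribution rather than its typical behaviour, and this distinction should be made explicit. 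Second, the claim that the large-$A$ saddle saturates the isoperimetric bound with a round coarse-grained profile is unproven, and for fractal boundaries the very definition of the effective $\ell$ requires the emergent smooth-length discussion of Section~\ref{areaSchSec}. You flag both gaps honestly; neither is resolved by the paper.
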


\subsubsection{\label{areaSchSec}Area distribution, ballistic behaviour and the Schwarzian limit}

As already mentioned repeatedly, e.g.\ in Section \ref{repaSec}, one expects that the negative curvature JT gravity theory has an effective description in the $\La\rightarrow -\infty$ limit, valid on scales much larger than the curvature length scale $L$, in terms of near-hyperbolic geometries described by the reparameterization ansatz and the Schwarzian action. These effective geometries have a smooth boundary, with a fixed length $\ell$. This length is an effective macroscopic parameter which can be expressed, in principle, to the microscopic parameters $\beta_{\text q}$, $\La$ and $L$. We now have the necessary ingredients to derive this relation.

But first things first, in order to get an intuitive understanding of why and how the Schwarzian description may emerge, it is useful to recall the behaviour of Brownian motion in hyperbolic space, see e.g.\ \cite{hyperbrownian}. If we let a Brownian particle move for a diffusion (``quantum'') time $\beta_{\text q}$, then the average geodesic distance between the particle and its initial position will start to grow as the square root of $\beta_{\text q}$, as it does in flat space. Precisely, with the conventions used in Section \ref{RPexSec} to define the Brownian bridge, this average distance is $\smash{\sqrt{\pi\beta_{\text q}/2}}$. The square root behaviour is directly related to the fact that the exponent $\nu$ is $1/2$ for Brownian motion. In flat space, this behaviour remains valid for all $\beta_{\text q}$. But in hyperbolic space, the situation is drastically different. The square root growth is valid as long as $\beta_{\text q}\ll L^{2}$, but when $\beta_{\text q}\gg L^{2}$ one finds a completely different ``ballistic'' regime for which the average geodesic distance grows linearly, as $\beta_{\text q}/(2 L)$. The linear growth with $\beta_{\text q}$ indicates that, at large distances, an effective description emerges, with an effective exponent $\nu_{\text{eff}}=1$ corresponding to a smooth trajectory.

We claim that the transition from the flat space description to the Schwarzian description in negative curvature JT gravity is of a similar nature.\footnote{And a similar mechanism is undoubtedly responsible for the fact that the Kitaev-Suh model for JT gravity, which is based on Brownian motion in hyperbolic space, also has a Schwarzian description at long distances \cite{KitaevSuh,RDFollowup1}.} The long distance, large geometry limit is obtained when the cosmological constant $\La\rightarrow -\infty$. The emergent macroscopic length parameter $\ell$ can be naturally defined by the asymptotic expected area, consistently with the leading term in Eq.\ \eqref{Arearepansatz},
\be\label{schareaminas} \langle A\rangle \underset{\La\rightarrow -\infty}{\sim} 
\ell L\, .\ee
The fact that the boundary will be effectively smooth means that $\ell$ will be proportional to $\beta_{\text q}$, exactly as for the ballistic regime of Brownian motion. Assuming a simple power-law dependence in $\La$ and $L$, dimensional analysis implies that
\be\label{ellans1} \ell = a \beta_{\text q}|\La|^{\epsilon}L^{1-1/\nu+2\epsilon}\, ,\ee
where $a$ is a numerical constant of proportionality and $\epsilon$ an exponent.
The results of \cite{Loopcalc} yield the $c\rightarrow -\infty$ behaviour
\be\label{eps12sc}  \epsilon =\frac{6}{|c|} + O\bigl(|c|^{-2}\bigr) \, .\ee

The linearity of $\ell$ with $\beta_{\text q}$ also has the following nice consequence. At the microscopic level, there is a boundary cosmological constant counterterm that is proportional to the quantum boundary length $\beta_{\text q}$.\footnote{At a fundamental level, this comes from the exponential growth of the number of configurations with the discretized boundary length $n$ and the fact that $\beta_{\text q}$ is proportional to $n$, see Section \ref{pathcontSec} and in particular Eq.\ \eqref{contpathdef}.} This means that the logarithm of the partition function of the theory is defined modulo the addition of terms that are proportional to $\beta_{\text q}$. In the effective Schwarzian description of the theory, such a boundary cosmological counterterm is usually used to cancel the diverging leading $\ell L$ term in the Schwarzian area expansion, Eq.\ \eqref{Arearepansatz}. But a term proportional to the macroscopic length parameter $\ell$ in the effective description is actually an allowed counterterm from the fundamental, microscopic point of view, only if $\ell$ is itself proportional to $\beta_{\text q}$. The ballistic property $\ell\propto\beta_{\text q}$ ensures that this is indeed the case.

Note that Eqs.\ \eqref{schareaminas} and \eqref{ellans1} are consistent with \eqref{minusexpareaconj}, with $\epsilon = 1/(\chi_{\nu}^{-} - 1)$. Using the conjecture \eqref{chinyconjf2}, we get the prediction
\be\label{epsminexpconj}\epsilon = \frac{1}{\nu}-1\, .\ee
In the pure gravity case, $\nu=1/2$, $\epsilon = 1$ and Eq.\ \eqref{ellans1} yields $\ell\propto \beta_{\text q}|\La|L$. In particular, the dimensionless Schwarzian limit parameter $\beta_{\text S}$ defined in Eq.\ \eqref{Schlimit} turns out to be simply proportional to $\beta_{\text q}/L^{2}$, in pure gravity.

\subsubsection{Lattice formulation and the Schwarzian limit}

Up to now, we have (extensively) discussed the lattice formulation of JT gravity only in zero curvature. It is of course natural to look for a similar formulation with non-zero curvature and in particular for negative curvature, which is the case we are interested in in the present section. This turns out to be more subtle and interesting than one might have expected, and sheds an unexpected light on the Schwarzian regime.

A straightforward generalization of the lattice formulation in zero curvature, that uses regular tesselations of the Euclidean plane, is to consider the self-overlapping polygon model on regular tesselations of hyperbolic space, also called ``hyperbolic lattices.'' As in flat space, these tesselations are characterized by the number of edges $p$ of the regular polygons forming the lattice tiles and the degree $q$ of the vertices at which the tiles meet. The integers $p$ and $q$ must satisfy the inequality \eqref{hyperbolictesscons}, already mentioned in Section \ref{posnegcurvesSec}, that follows from the formula \eqref{Rpqform} for the curvature and the constraint of negative curvature.

A crucial difference with the case of flat space is that the regular tessellations of hyperbolic space cannot be scaled down to take a continuum limit. The edge length and polygon area are indeed fixed in terms of the curvature length scale and the values of $p$ and $q$. For instance, in the construction of Section \ref{diskviewSec}, for which the polygons are flat, of edge length $\ell_{0}$, and the curvature is localized at the vertices, the area of a regular $p$-gon is
\be\label{Anot} A_{0} = \frac{p\ell_{0}^{2}}{4\tan(\pi/p)}\ee
and the area per vertices is $q A_{0}/p$, since a given vertex belongs to $q$ different $p$-gons. If we denote by $-2/L^{2}$ the average curvature, the integral of the curvature per vertex is thus $-2 q A_{0}/(p L^{2})$. On the other hand, this integral is also given by $R_{p,q}$, as defined in \eqref{Rpqform}. This yields
\be\label{Anot2} A_{0} = \frac{\pi L^{2}}{q}\bigl((p-2)(q-2)-4\bigr)\, .\ee
Thus the area of the basic building block of the lattice is proportional to $L^{2}$ and a continuum limit in which this area goes to zero does not exist. Of course, the edge length $\ell_{0}$ itself can be made very small when $p$ is very large at fixed $q$ and $L$, Eqs.\ \eqref{Anot} and \eqref{Anot2} implying that $\ell_{0}^{2}\sim 4\pi^{2}\frac{q-2}{q}\frac{1}{p}L^{2}$, but at the same time $A_{0}$ goes to infinity, proportionally to $p$. Thus, this is not suitable for a continuum limit. It is actually easy to check that the polygon area is bounded below by $\frac{1}{7}\pi L^{2}$, a bound saturated for $(p,q)=(3,7)$.

The conclusion is that a lattice model for negative curvature JT gravity based on regular tesselations (or similar tesselations, like the semi-regular tesselations, for which the microscopic length and area are also fixed by $L$) cannot yield a microscopic description of the theory, valid on arbitrarily small length scales, but at best an effective description, valid on length scales much greater than $L$. Obviously, we have a very natural candidate for this effective theory, for which the ``UV'' cut-off is $L$: it is the Schwarzian theory.
\begin{conjecture} The Schwarzian effective description of negative curvature JT gravity is the continuum limit of lattice models of self-overlapping polygons, defined on regular or semi-regular tesselations of hyperbolic space. This continuum limit corresponds to taking the curvature length scale $L\propto\ell_{0}$ to zero and the polygon length $n$ to infinity, with the combination $n L$ held fixed and proportional to the macroscopic length parameter $\ell$. In particular, the critical exponent $\nu$ and the Hausdorff dimension $1/\nu$ are one in this case.
\end{conjecture}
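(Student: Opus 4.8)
The plan is to prove the conjecture by establishing, in three stages, a precise correspondence between the scaling limit of the hyperbolic-lattice self-overlapping polygon model and the continuum reparameterization-ansatz description of Section \ref{repaSec}. The unifying principle is that on a $\{p,q\}$ tessellation the tile area is proportional to $L^{2}$ by \eqref{Anot2}, so the curvature scale $L$ acts as an effective ultraviolet cut-off; the only available continuum limit is then the coarse one, $\ell_{0}\propto L\to 0$ with $n\to\infty$ and $nL\propto\ell$ fixed, in which structure below $L$ is invisible and the boundary is resolved only on scales $\gg L$.

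First I would establish the critical-exponent claim $\nu=1$. The natural route is through the radius of gyration within the framework of Section \ref{GenRPmodSec}: one must show that, for SOP on a hyperbolic lattice, $\langle R_{\gamma}^{2}\rangle\propto n^{2}$ rather than $\propto n$, i.e.\ that the effective Hausdorff dimension is one. The mechanism is the ballistic behaviour of random paths in negative curvature recalled in \ref{areaSchSec} and established for hyperbolic Brownian motion in \cite{hyperbrownian,RDFollowup1}: negative curvature supplies an effective outward drift that makes the geodesic displacement grow linearly in the quantum time once $\beta_{\text q}\gg L^{2}$. I would make this quantitative for the constrained SOP ensemble by controlling the generating function $W_{2n}(t)$ on the lattice and extracting $\nu$ from the area asymptotics \eqref{w2nras}; alternatively, a super- and sub-multiplicativity comparison against the unconstrained hyperbolic random-polygon model, whose exponent is one in the ballistic regime, should pin $\nu=1$.

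Second, granting $\nu=1$, the dominant configurations have boundaries that are smooth on scales $\gg L$ and, after the identification of the distorted disk with a cut-out region of $\Htwo$, are exactly the gently wiggling curves of the reparameterization ansatz, parameterised by $f\in\diffSp$. A crucial simplification arises here: reparameterization-ansatz configurations are \emph{embeddings} (remark i of Section \ref{repaSec}), so the multiplicity is $\mu_{\gamma}=1$ throughout the limit and the SOP and \SOPu measures coincide, self-intersections occurring only at the cut-off scale $L$ and hence with vanishing probability. After the standard renormalization of the bare bulk cosmological constant, the discrete area weight converges to $e^{-\frac{\La}{16\pi}A[g]}$, and the expansion \eqref{Arearepansatz} turns this into $e^{-\frac{\La}{16\pi}(\ell-2\pi)}\,e^{-S_{\text{Sch}}[f]}$ with $\beta_{\text S}=2\ell/|\La|$ as in \eqref{Schlimit}; the prefactor is absorbed into the boundary counterterm, consistent with the ballistic relation $\ell\propto\beta_{\text q}$ of \ref{areaSchSec}. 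This stage is essentially the Gauss-Bonnet area expansion and is not where the difficulty lies.

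The main obstacle is the third stage: showing that the lattice counting measure, in the scaling limit, reproduces the Fadeev-Popov measure $\d\mu(f)$ on $\wig(\disk)=\PslR\backslash\diffSp/\Sone$ postulated below \eqref{SSch}. Concretely one must (i) define a map from lattice SOP to circle diffeomorphisms and show that it becomes a bijection onto $\diffSp$ modulo $\PslR$ and rigid rotations in the limit, and (ii) compute the entropy factor, namely the number of lattice polygons coarse-graining to a given $f$, and verify that it converges to the left-invariant measure on $\diffSp$ after gauge-fixing $\PslR$. This is a genuine random-geometry convergence statement, analogous to and at least as hard as proving that random quadrangulations converge to Liouville quantum gravity, and I expect it to require either a transfer-matrix construction adapted to the hyperbolic lattice or a direct tightness-and-identification argument for the induced measure on $\diffSp$. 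The assertion that the Hausdorff dimension $1/\nu$ equals one is then immediate from stage one, and the smoothness of the limiting boundary is a by-product of stage three.
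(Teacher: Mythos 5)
The statement you are addressing is presented in the paper only as a conjecture: there is no proof to compare against. The paper's supporting discussion is heuristic, resting on (i) the observation that the tile area of a $\{p,q\}$ hyperbolic tessellation is pinned to $L^{2}$ by Eq.~\eqref{Anot2}, so the only available scaling limit is $L\propto\ell_{0}\to 0$ with $nL$ fixed, and (ii) the ballistic behaviour of random paths in hyperbolic space, which is proven for Brownian motion and for self-avoiding walks on hyperbolic lattices but not for the self-overlapping ensemble weighted by multiplicity. Your stages one and two reproduce essentially this heuristic, and your stage three correctly isolates what an actual proof would require; since you leave stage three open, your proposal is a roadmap rather than a proof, which puts it on roughly the same footing as the paper itself.

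Two concrete gaps in the roadmap deserve flagging. First, your route to $\nu=1$ by sandwiching SOP between the unconstrained random-polygon model and SAP does not obviously work: the displacement exponent is not monotone under inclusion of path ensembles once the measure is non-uniform, and the multiplicity weight $\mu_{\gamma}$, which the paper shows can grow exponentially with $n$, could in principle reweight the ensemble towards atypical configurations; a genuine proof would have to control $\mu_{\gamma}$ on the hyperbolic lattice, for which even the flat-space Conjecture~\ref{multconj} is unproven. Second, your stage-two assertion that self-intersections and overlaps occur only at the cut-off scale and hence carry vanishing probability needs an entropy-versus-action estimate you do not supply; the paper explicitly notes that finite-action configurations with many microscopic zig-zags and overlaps exist, so the restriction to embeddings (and hence to $\mu_{\gamma}=1$, and to the reparameterization ansatz) in the scaling limit is not automatic. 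Until those two points and your stage three are addressed, the statement remains a conjecture.
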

One may actually conjecture that the Schwarzian description is universal and is not limited to the SOP model or JT gravity. Any model of random closed curves in hyperbolic space having a suitable macroscopic limit will likely be described by the Schwarzian on length scales much greater than the curvature length scale. Strong evidence for this in the case of purely random curves was discussed above and given in \cite{KitaevSuh}. Self-avoiding polygons on hyperbolic lattices have also been studied in the literature \cite{SAWhyper}. The ballistic property was proven in \cite{SAWhyperball}, consistently with the idea that a Schwarzian description will also emerge for these models in the limit $\La\rightarrow -\infty$. 

The intuitive reason explaining why ballisticity seems to be generic in hyperbolic space is that the number of lattice vertices accessible after $n$ steps from a given vertex grows exponentially with $n$ for hyperbolic lattices, instead of quadratically for flat space lattices.\footnote{This is the lattice version of the well-known exponential growth of a round disk area $A$ as a function of its geodesic radius $r$, for large $r$, $A \sim \pi L^{2}e^{r/L}$.} A random path starting from a given point has thus many more possibilities to extend far away in hyperbolic space than in flat space.

To obtain a microscopic definition of JT gravity in negative curvature by using a lattice formulation, one thus needs to refine the regular tesselations in one way or another, in order to create a lattice mesh of arbitrarily small length scales. There are many ways to do this, cutting the polygons of the regular tesselations into smaller pieces, see e.g.\ the discussion in Section 5 of the first reference in \cite{hyplatticeapp}. 

\subsubsection{Matrix model}

The matrix model formulation described in Section \ref{mmSec} in flat space can be straightforwardly generalized to the case of negative curvature. One simply chooses the matrices $A$ and $\tilde A$ such that
\be\label{AAtildehyper} \tr\tilde A^{k} = N t \delta_{k,p}\, ,\quad \tr A^{k} = N \delta_{k,q}\ee
for integers $p$ and $q$ satisfying the inequality \ref{hyperbolictesscons}, $(p-2)(q-2)>4$. The coupling to the vector variables is as in Eq.\ \eqref{actsnewMM} and the loop operator is $\phi(X) = P(\tilde A X)\tilde A$, along the lines of Section \ref{loopJTMMSec}. The polynomial $P$ can be naturally chosen to be of degree $q-2$. The matrix model partition function is then the generating function for self-overlapping polygons on the hyperbolic lattice $(p,q)$, the allowed polygons being defined as in Section \ref{bdviewSec}. The fact that $\deg P=q-2$ ensures that the only constraint on successive points $P_{k}$ and $P_{k+1}$ along the polygon is that the edge joining $P_{k}$ and $P_{k+1}$ is not the same as the edge joining $P_{k-1}$ and $P_{k}$. 

As mentioned in the previous subsection, such a model has a short-distance cut-off set by the curvature length scale $L$. Its ``continuum'' limit is expected to yield the Schwarzian theory.

However, it might be the case that the matrix model is more powerful than this and could give access to the full UV-complete theory in curved space. The idea is that there may exist a natural and physically relevant way to analytically continue the matrix model partition function and correlators as a function of the parameters $p$ and $q$. Actually, only the analytic continuation in one of the parameter is needed, e.g.\ in $q$. Let us then set $p=3$, $4$ or $6$ and consider $q$ to be very near $6$, $4$ or $3$, such that the curvature $R_{p,q}$, Eq.\ \eqref{Rpqform}, can be made arbitrarily small. This trick will make it possible to fix the curvature so that it is very small in Planck units. For concreteness, let us work with a square lattice $p=4$ as we have done in most of this paper and let us write
\be\label{qscalingdef} q = 4 (1+\varepsilon)\, ,\ee
where $\varepsilon$ is small and positive. To leading order in $\varepsilon$, Eqs.\ \eqref{Anot} and \eqref{Anot2} yield (for $p=4$)
\be\label{lzeroqsca} \ell_{0}^{2}=2\pi L^{2}\varepsilon\, .\ee
By taking $\varepsilon\rightarrow 0^{+}$, we can therefore obtain, in principle, a genuine continuous limit for the theory of negative curvature. This is summarized by the following (bold) conjecture.
\begin{conjecture}\label{MMneglimconj}
The disk partition function
\be\label{diskpfscaconj} 
W_{n}(t;p=4,q) = \frac{4}{N}\Bigl\langle\frac{1}{N}\tr\varphi(X)^{n}\Bigr\rangle\ee
of the matrix model with action
\be\label{MMconjnegact} S = N\tr\Bigl(\frac{1}{2}X^{2} -\frac{1}{q}\bigl(\tilde A X\bigr)^{q}\Bigr)\, ,\ee
for $\tr\tilde A^{k}= Nt\delta_{k,4}$ and loop operator
\be\label{loopopconjneg} \varphi(X) = \sum_{s=0}^{q-2} \bigl(\tilde A X\bigr)^{s}\tilde A\ee
admits an analytic continuation in the variable $q$, in the vicinity of $q=4$, such that the limit
\be\label{contlimitnegMMconj} n\rightarrow\infty\, ,\ \ell_{0}^{2}=\frac{\pi L^{2}}{2}\bigl(q-4\bigr)\rightarrow 0^{+}\, ,\ t\rightarrow 1\, ,\ n\ell_{0}^{2} = \beta_{\text q}\ \text{and}\ \La=\frac{16\pi}{\ell_{0}^{2}}\bigl(1-t\bigr)\ \text{fixed}\ee
yields the UV-complete disk partition function of pure JT gravity in negative curvature $R=-2/L^{2}$.
\end{conjecture}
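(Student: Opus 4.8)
The plan is to attack Conjecture \ref{MMneglimconj} not by solving the dually weighted matrix model directly --- which, as emphasised, is beyond current technology even in the flat case --- but by isolating three analytic statements and then establishing their compatibility: (i) that the generating data of the model admit a \emph{canonical} analytic continuation in $q$ near $q=4$; (ii) that this continuation possesses a double-scaling limit of the precise form \eqref{contlimitnegMMconj}; and (iii) that the resulting scaling function coincides with the object independently characterised as $W^{-}(\beta_{\text q},\La,L)$. The natural technical vehicle for (i) and (ii) is the character expansion underlying the Di Francesco--Itzykson formula \eqref{DIfIform} and its refinement in \cite{KSW}, since it trades the $N^{2}$ matrix degrees of freedom for a single Young-diagram density and, crucially, replaces the manifestly combinatorial dependence on the integer $q$ by explicit ratios of Gamma and double-factorial factors that are amenable to continuation.

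First I would recast the $q$-dependent objects \eqref{MMconjnegact}, \eqref{loopopconjneg} in forms that survive continuation. The loop operator is a finite geometric series, $\varphi(X)=\bigl(\mathbb I-(\tilde A X)^{q-1}\bigr)(\mathbb I-\tilde A X)^{-1}\tilde A$, and the interaction $\tfrac1q(\tilde A X)^{q}$ must be read through the functional calculus on the spectrum of $\tilde A X$; both acquire branch structure away from integer $q$, so the first substantive step is to show that the \emph{coefficients} of the small-$t$ expansion of $W_{n}(t;4,q)$, rather than the integral itself, are polynomials in $q$ at each fixed order in $t$ and fixed $n$, and hence continue unambiguously. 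This should follow graph by graph, by tracking how the bulk vertex degree $q$ enters the Gauss--Bonnet bookkeeping \eqref{Rpqform} and the boundary weights: the finiteness of the number of contributing graphs at each order makes the term-by-term continuation canonical, and the closed form of $\varphi$ guarantees it is the physically relevant one.

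Second, I would pass to large $N$ through a saddle point on the Young-diagram (or eigenvalue-density) description, generalising \cite{KSW} to a $q$-analytic potential, aiming at a resolvent whose defining equation depends analytically on $q$ near $q=4$ and retains the critical value $t=1$ demanded by \eqref{contlimitnegMMconj}. One then performs the joint continuum limit in which the lattice spacing is tied to the curvature by $\ell_{0}^{2}=2\pi L^{2}\varepsilon$, cf.\ \eqref{lzeroqsca}, with $\beta_{\text q}=n\ell_{0}^{2}$ and $\La=\tfrac{16\pi}{\ell_{0}^{2}}(1-t)$ held fixed. The self-consistency checks that must emerge are: the flat-space reduction \eqref{flatlimit}, realised here as the limit $\varepsilon\to0^{+}$ (equivalently $q\to4$) of the continued model; the pure-gravity exponents $\nu=\tfrac12$, $\vartheta=2$; and, as $\La\to-\infty$, the ballistic Schwarzian regime of Section \ref{areaSchSec}, equivalently the area-law asymptotics of Conjecture \ref{areanegconj}. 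Matching the full scaling function then reduces to (iii): verifying that the area distribution extracted from the continued resolvent reproduces $\rho^{-}_{\beta_{\text q}}$, using the independent continuum characterisation of \cite{ferrari,ferraJTconfgauge} and the $c\to-\infty$ benchmark of \cite{Loopcalc}.

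The hard part will be the second step. Even granting the term-by-term continuation, there is at present no proof that the character sum \eqref{DIfIform} is dominated by a single Young diagram whose density solves a tractable equation, and the continuation in $q$ need not commute with the double-scaling limit. Because \eqref{DIfIform} contains terms of both signs, the usual saddle-point heuristics are not obviously valid; controlling these cancellations, and showing that the correct branch of the $q$-continuation is the one selected by the requirement that the scaling limit exist and be real and positive, is where the genuine difficulty resides. A realistic intermediate target, which would already validate the continuation mechanism without the full non-perturbative solution, is to prove the weaker statement \eqref{flatlimit} at the level of the continued generating function: namely that the $\varepsilon\to0^{+}$ limit of the continued model reproduces, order by order in $t$, the flat self-overlapping polygon counting of Section \ref{bdviewSec}, with $\ell_{0}^{2}=2\pi L^{2}\varepsilon$ furnishing the bridge between the curvature parameter and the emergent continuum length scales of Section \ref{pathcontSec}.
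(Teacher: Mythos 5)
The statement you are addressing is presented in the paper as an explicitly ``(bold) conjecture'': the paper offers no proof of it, only the heuristic motivation provided by the curvature--lattice-spacing relation \eqref{lzeroqsca} and the analogy with the flat-space construction of Section \ref{mmSec}. There is therefore no proof of record to compare your argument against, and your text is, by your own account, a research programme rather than a proof: you candidly identify the dominance of a single Young diagram in the signed character sum \eqref{DIfIform}, and the commutation of the $q$-continuation with the double-scaling limit, as open. Those admissions are accurate, and the conjecture remains unproved by your proposal.

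Beyond the gaps you flag, your first step contains a concrete difficulty. You propose to show that the coefficients of the small-$t$ expansion of $W_{n}(t;4,q)$ are polynomials in $q$ ``graph by graph,'' so that the continuation is canonical. But for non-integer $q$ the interaction $\frac{1}{q}\tr(\tilde A X)^{q}$ is not a polynomial and generates no Feynman expansion, and for integer $q$ the set of contributing graphs is not a fixed family carrying a $q$-dependent weight: the Euler/Gauss--Bonnet constraints \eqref{pgonid1}--\eqref{pgonid2} tie the numbers of faces, vertices and boundary edges to the value of $q$ itself, so that the $\{4,q\}$ lattices for different $q$ support genuinely different (and differently sized) families of distorted disks at each order in $t$. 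There is consequently no term-by-term object that interpolates in $q$; the continuation must be defined on a resummed quantity --- the character sum \eqref{DIfIform}, where $q$ enters only through $\tr A^{k}=N\delta_{k,q}$ inside the Schur polynomials, or the large-$N$ resolvent --- and establishing that such a continuation exists, is unique (a Carlson-type issue, given that the data are a priori known only at integers $q$ with $(q-2)(4-2)>4$), and selects the branch for which the limit \eqref{contlimitnegMMconj} is real and positive is precisely the content of the conjecture rather than a preliminary to it. Your proposed intermediate target --- verifying the flat-space reduction \eqref{flatlimit} of the continued generating function as $q\rightarrow 4$ --- is a sensible and genuinely useful weaker goal, but it too presupposes that the continuation has first been constructed.
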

Of course, similar conjectures can be made when $p=3$ or $6$.

\subsection{\label{poscurvlastSec}Positive curvature}

\subsubsection{Area distribution law}

\begin{figure}
\centerline{\includegraphics[width=6in]{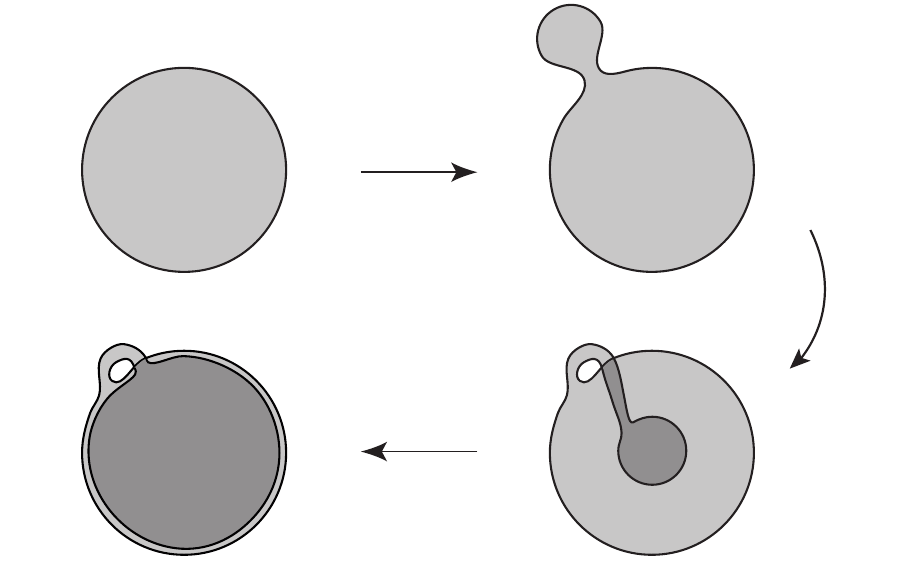}}
\caption{\label{noisofig}One step in the construction of a metric of constant positive curvature on the disk having arbitrarily large area and arbitrarily small boundary length. The distorted disks are drawn on the two-sphere in projective coordinates, with metric given by Eq.\ \eqref{deltaStwo}. In the process depicted in the figure, the boundary length and the area are both multiplied by a number that can be made arbitrarily close to two. By repeating this operation $n$ times and making the figure as large as we want in the last step, we construct a metric of area $\sim 4\pi n$ and arbitrarily small boundary length. See the main text for more details.}
\end{figure}

A crucial qualitative feature of the positive curvature theory is the existence of smooth geometries that can have arbitrarily large areas and arbitrarily small boundary lengths at the same time. A step in the construction of such a geometry is illustrated in Fig.\ \ref{noisofig}. The drawings depict immersions of the disk on the sphere, which is represented in projective coordinates with metric given by Eq.\ \eqref{deltaStwo} (the units are such that the curvature length scale is $L=1$). Starting from a round disk of projective coordinates radius $R$, boundary length $\ell(R)$ and area $A(R)$,
\be\label{ellAposcurv}\ell = \frac{4\pi R}{1+R^{2}}\underset{R\rightarrow\infty}{\sim}\frac{4\pi}{R}\, \cvp\quad A(R) = R\ell(R)\underset{R\rightarrow\infty}{\sim} 4\pi\, ,\ee
we get a distorted disk of approximately twice the original boundary length and area. By repeating the same procedure $n$ times, we obtain a distorted disk which overlaps $n$ times with itself over a region covering a fraction of the entire two-sphere. The associated metric has a boundary length $\ell_{n}$ which is aproximately $n\ell(R)$ and an area $A_{n}$ which is approximately $n A(R)$. By taking $R\gg 1$, the overlapping region covers almost the entire two-sphere, which implies that $A_{n}\sim 4\pi n$, whereas the boundary is concentrated in a tiny region around the south pole, with $\ell_{n}\sim 4\pi n/R$. If we scale $R$ with $n$ in such a way that $n/R\rightarrow 0$ when $n\rightarrow\infty$, for instance by taking $R\sim n^{2}$, we obtain a constant positive curvature metric on the disk that has arbitrarily large area and arbitrarily small boundary length.

This example illustrates a fundamental feature of the positive curvature JT gravity theory, that large geometries will be ubiquitous. This yields the following conjecture on the large area behaviour of the area distribution law for positive curvature JT.

\begin{conjecture}\label{areaposconj} The area distribution of JT gravity in positive curvature coupled to $c \leq 0$ conformal matter, with critical exponent $\nu$ given by Eq.\ \eqref{critexponentsJT}, has the asymptotic behaviour
\be\label{rhodecayposc} \rho^{+}_{\beta_{\text q}}(A)\underset{A\rightarrow +\infty}{\sim}\frac{\kappa_{\nu}^{+}(\beta_{\text q}^{\nu}/L)}{L^{2}}\biggl(\frac{L^{2}}{A}\biggr)^{\chi^{+}_{\nu}}\, ,\ee
where $\kappa_{\nu}^{+}$ is a strictly positive function of the dimensionless parameter $\beta_{\text q}^{\nu}/L$ and $\chi_{\nu}^{+}$ a critical exponent, which is a strictly increasing function of $\nu$ satisfying $\smash{\lim_{\nu\rightarrow \frac{1}{2}^{+}}\chi_{\nu}^{+} =2}$ and $\smash{\lim_{\nu\rightarrow 1^{-}}\chi_{\nu}^{+} = +\infty}$.
\end{conjecture}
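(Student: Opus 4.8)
The plan is to establish the power-law (heavy-tail) nature of $\rho^{+}_{\beta_{\text q}}$ first, and then to pin down the exponent $\chi^{+}_{\nu}$ by a winding-number analysis, anchoring the two endpoints $\nu\to\frac{1}{2}^{+}$ and $\nu\to 1^{-}$ to controllable limits. The starting point is the geometric construction of Fig.\ \ref{noisofig}, which I would upgrade from an existence statement into a quantitative lower bound on the measure. A distorted disk of area $A\simeq 4\pi n L^{2}$ is obtained by wrapping the sphere $n$ times while keeping a boundary of fixed quantum length $\beta_{\text q}$ concentrated near the south pole. Counting the multiplicity-weighted number of self-overlapping polygons on a spherical lattice that bound an $n$-layered disk bounds $\rho^{+}_{\beta_{\text q}}(A)$ from below. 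Because the sphere has finite total area $4\pi L^{2}$, adding one layer costs no area penalty and only a bounded entropic cost, so this count decays at most polynomially in $n\propto A/L^{2}$. This already rules out the stretched-exponential decay found in zero and negative curvature (Conjectures \ref{areazeroconj1} and \ref{areanegconj}) and forces a heavy tail, which is the qualitatively new feature.

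The exponent itself I would identify with the tail of the distribution of the number of wraps $n$. Since $A\simeq 4\pi n L^{2}$, a power law $\mathbb P(n\ \text{wraps})\sim n^{-\chi}$ translates directly into $\rho^{+}_{\beta_{\text q}}(A)\sim (L^{2}/A)^{\chi}$, so that $\chi^{+}_{\nu}=\chi$. The cleanest anchor is the endpoint $\nu=\frac{1}{2}$, which is pure gravity and, as in Section \ref{areaSchSec}, is modelled by closed random (Brownian) paths, here on $\Stwo$ rather than on $\Htwo$. For such loops the number of sphere-wraps is the winding of the curve in the sense of \eqref{windingP}, and its law should obey a Spitzer-type asymptotics: exactly as planar Brownian winding is Cauchy-distributed, the winding of a Brownian loop on $\Stwo$ of fixed diffusion time should have a tail $\sim n^{-2}$. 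This yields $\chi^{+}_{\nu=1/2}=2$, consistent with the flat random-polygon computation, whose winding area obeys the $\cosh^{-2}$ law \eqref{rhobetaRP}. I would make this rigorous using the positive-curvature random-path model of \cite{RDFollowup1}, where the area is L\'evy's winding area \eqref{notionsofarea} and the generating function can be written explicitly, reading off the $A^{-2}$ tail.

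For general $\nu$ the plan is to transport this winding-number heuristic to the full random SOP measure. The continuum JT theory of \cite{ferrari} fixes $\nu$ through \eqref{critexponentsJT}, and the fractality of the boundary controls how the entropy of an extra wrapping layer scales; the expectation is that a smoother boundary (larger $\nu$) makes additional wrappings progressively costlier, steepening the tail and driving $\chi^{+}_{\nu}$ up monotonically. The endpoint $\nu\to 1^{-}$ I would fix by matching to the semiclassical $c\to-\infty$ computation of \cite{Loopcalc}: there the geometry concentrates on the area-maximising classical configuration, the distribution becomes sharply peaked, and the tail exponent must diverge, giving $\lim_{\nu\to 1^{-}}\chi^{+}_{\nu}=+\infty$, in parallel with the other two curvature sectors. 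Monotonicity would then follow by interpolating the endpoints, ideally via the same reasoning that produces the closed forms $\chi_{\nu}=1/(2(1-\nu))$ and $\chi^{-}_{\nu}=1/(1-\nu)$ elsewhere; since the positive-curvature endpoints $\chi^{+}_{1/2}=2$ and $\chi^{+}_{1}=\infty$ coincide with the negative-curvature ones, a candidate such as $\chi^{+}_{\nu}=1/(1-\nu)$ is suggested, but I would not commit to it without the direct computation.

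The main obstacle is that none of these ingredients is currently under rigorous control for the multiplicity-weighted self-overlapping polygon measure. The very feature that produces the result, namely the absence of an isoperimetric inequality, defeats the super-multiplicativity and connective-constant arguments (Theorem \ref{connectiveTh}) that govern the deflated and critical regimes: those arguments bound the total number of polygons of given length but say nothing about the large-area tail at fixed $\beta_{\text q}$. Worse, the multiplicity $\mu_{\gamma}$ can itself grow exponentially in the boundary length, and its interplay with the wrapping count on $\Stwo$ is completely unexplored, so even the Brownian anchor controls only the unweighted winding rather than the metric measure. Since the analogous flat-space statement (Conjecture \ref{areazeroconj1}) and the $\nu$-dependence through \eqref{critexponentsJT} are themselves conjectural, a fully rigorous proof would require first solving the continuum random SOP model on the sphere, which is why the statement is offered as a conjecture.
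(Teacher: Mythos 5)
Your proposal follows essentially the same route as the paper: the $n$-fold wrapping construction of Fig.~\ref{noisofig} to argue that the absence of an isoperimetric inequality forces a power-law rather than (stretched-)exponential tail, the positive-curvature Brownian-bridge model of \cite{RDFollowup1} with its $1/A^{2}$ tail to anchor $\lim_{\nu\rightarrow\frac{1}{2}^{+}}\chi_{\nu}^{+}=2$, and the semiclassical $c\rightarrow-\infty$ concentration to anchor $\lim_{\nu\rightarrow 1^{-}}\chi_{\nu}^{+}=+\infty$; the paper adds only the analogy with the Liouville density \eqref{rhoL} and the caveat that the classical round-disk background exists only for $\ell<2\pi L$. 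One small slip: the flat-space winding-area law \eqref{rhobetaRP} has exponential, not $A^{-2}$, tails, so it illustrates the contrast between zero and positive curvature rather than supporting the $n^{-2}$ winding tail you invoke.
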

The asymptotic power-law behaviour \eqref{rhodecaynegc} implies that the positive curvature JT gravity theory exists only for positive cosmological constant. Moreover, only a finite number of moments of the area at zero cosmological constant exists, $\langle A^{k}\rangle <\infty$ for $k< \chi^{+}_{\nu}-1$. Note that these properties are shared by Liouville gravity. Actually, in the case of Liouville, it is straightforward to derive the area distribution law $\rho^{\text L}$ from the exact partition function given in Eq.\ \eqref{exactWLioudiskc}. One finds
\be\label{rhoL} \rho^{\text L}(A) = \frac{1}{\Gamma(1-\gamma_{\text{str}})}\biggl(\frac{\tilde\ell^{2}}{4}\biggr)^{1-\gamma_{\text{str}}} A^{\gamma_{\text{str}}-2}e^{-\frac{\tilde\ell^{2}}{4 A}}\ee
for
\be\label{ltildeLdef} \tilde\ell = \frac{\ell}{\sqrt{\sin\frac{\pi}{1-\gamma_{\text{str}}}}}\,\cdotp\ee
The large area behaviour of $\rho^{\text L}$ is thus also power-like. The moments $\langle A^{k}\rangle$ exist only when $k<1-\gamma_{\text{str}}$ and the theory is well-defined only in positive cosmological constant. 

The conjectured behaviour $\lim_{\nu\rightarrow 1^{-}}\chi_{\nu}^{+} = +\infty$ is consistent with the idea that this limit is semi-classical and all the moments of the area then automatically exist, with $\langle A^{k}\rangle = A_{*}^{k}$, where $A_{*}$ is the area of the relevant classical background. In fact, this argument is not without its subtleties, because the simple round disk classical background in positive curvature exists only when the boundary length $\ell$ is less than $2\pi L$. The behaviour in the $c\rightarrow -\infty$ limit is not known if this condition is not satisfied \cite{Loopcalc}. On may thus expect a non-trivial behaviour of the function $\kappa_{\nu}^{+}$ in Eq.\ \eqref{rhodecaynegc} in the limit $\smash{\nu\rightarrow 1^{-}}$ if $\beta_{\text q}>2\pi L$. The conjectured behaviour $\smash{\lim_{\nu\rightarrow \frac{1}{2}^{+}}\chi_{\nu}^{+} = 2}$ is motivated by the fact that the Brownian bridge model in positive curvature (that is to say, the generalization of the Kitaev-Suh model \cite{KitaevSuh} to the case of positive curvature) is found to have an area distribution that goes as $1/A^{2}$ asymptotically \cite{RDFollowup1} and it is plausible that JT gravity will behave similarly for very large areas.

\subsubsection{Lattice formulation, matrix model}

Regular tesselations of the two-sphere must satisfy the inequality $(p-2)(q-2)<4$. The only solutions are the five Platonic solids, $(p,q)=(3,3)$, $(3,4)$, $(3,5)$, $(4,3)$ or $(5,3)$, with a finite number of vertices and tiles. As in the case of negative curvature, the area of the $p$-gons are of the order of $L^{2}$, the precise formula being exactly as in Eq.\ \eqref{Anot2} with a global sign change in the right-hand side of this equation. The random SOP model on such lattices seems too trivial to have any interesting physical application; in particular, contrary to the negative curvature model, there is no Schwarzian-like limit in positive curvature. To obtain an interesting lattice model for positive curvature JT, one must thus refine the Platonic solids and discretize the two-sphere with tiles that may be made arbitrarily small; a problem well-studied in wheather forecasting and climate modeling.

The matrix model may also be conjectured to be relevant, mimicking Conjecture \ref{MMneglimconj}. The only change with the case of negative curvature is that the small parameter $\varepsilon$ in Eq.\ \eqref{qscalingdef} is now taken to be negative and the formula \eqref{lzeroqsca} for the ``Planck length'' becomes $\ell_{0}^{2} = -2\pi L^{2}\varepsilon=\frac{\pi L^{2}}{2}(4-q)$, which goes to $0^{+}$ in the continuum limit.

\section{\label{OutSec}Summary and outlook}

\subsection{\label{OutSec1}Upshots}

We have proposed a UV-complete definition of Euclidean Jackiw-Teitelboim quantum gravity, starting from the fundamental postulate that two-dimensional quantum gravity is a diffeomorphism-invariant theory of random metrics. The model on the disk has a dual formulation in terms of a theory of random self-overlapping loops that describes the fluctuating boundary. Self-overlapping loops must bound a distorted disk, a highly non-trivial constraint that can be analysed algorithmically using Blank cuts. A surprising aspect of this formulation is that some self-overlapping curves bound several distinct disks and therefore must be counted with a non-trivial multiplicity, being associated with diffeomorphism-inequivalent metrics on the disk.

At the microscopic level, JT gravity can thus be reduced to a purely combinatorial problem, which amounts to counting self-overlapping polygons with the appropriate multiplicity. This is very similar to, but probably harder than, the counting of ``maps'' (polygonizations) of surfaces in Liouville gravity. We have constructed a matrix model that does the counting, using an upgraded version of the method of dually weighted graphs. All these ideas open the way to studying the JT theories using techniques of enumerative geometry, matrix models and numerical simulations, which in the past have been instrumental in gaining insight into non-trivial random polygon models, such as the self-avoiding polygon model. 

We have learnt that the space of constant bulk curvature metrics one must consider is highly singular on the boundary, albeit perfectly smooth in the bulk. Viewed as closed curves immersed in the canonical spaces of constant curvature (hyperbolic space, Euclidean space or two-sphere), the boundaries are fractals with a non-trivial Hausdorff dimension $d_{\text H}$, which is conjectured to be two in the case of the pure gravity model. In particular, the boundaries do not have a smooth geometric length. Instead, one can associate to them a quantum length $\beta_{\text q}$, which has geometric length dimension $d_{\text H}$. This quantum length is analogous to the diffusion time, or Euclidean quantum mechanical time, that is associated to Brownian paths. At the microscopic level, it is proportional to the number of discretized, Planck-length edges that form the boundary, but in the continuum limit it takes on an anomalous dimension, reflecting the very singular nature of the boundary metric. As is made clear elsewhere \cite{ferrari,ferraJTconfgauge}, the boundary metric is actually distribution-valued. 

The short distance properties of the three models of JT gravity (negative, zero or positive curvature) are  identical: they have the same UV-completion. On the other hand, the IR properties drastically differ. The large distance behaviour is governed by quantum versions of the classical isopetrimetric inequalities. In negative curvature, the probability density for the area at fixed quantum boundary length $\beta_{\text q}$ is conjectured to decay as fast as $\exp (-\kappa^{-} A^{2}/(\beta_{\text q}L^{2}))$, where $\kappa_{-}$ is a strictly positive numerical constant and $L$ the curvature length scale, whereas in zero curvature a simple exponential decay is expected. In positive curvature, the absence of a classical isoperimetric inequality is conjectured to imply a power-law decay. Large area configurations become ubiquitous and the model is well-defined only for a positive cosmological constant, as is also the case for the Liouville theory.

The usual formulation of the negative curvature model in terms of the Schwarzian theory emerges when the cosmological constant $\La\rightarrow -\infty$. It is an approximate, IR effective description of JT gravity, valid on distance scales that are much larger than the curvature length scale $L$. In particular, the usual smooth boundary length $\ell$ is an effective, long distance parameter that can be expressed in terms of the microscopic parameters in the model, $\beta_{\text q}$, $\La$ and $L$, as in Eq.\ \eqref{ellans1}. The Schwarzian theory therefore plays a role, with respect to JT, that is similar to the one it plays with respect to the SYK model. We believe that the Schwarzian theory is a universal large-distance effective theory for many different random loop models in hyperbolic space, including models that do not have a metric or quantum gravity interpretation. Its emergence and universality is tightly related to the asymptotic ballistic property of random paths models in hyperbolic space. 

\subsection{\label{OutSec2}Comments on general topologies}

In the framework of the Schwarzian description, the construction of the negative curvature JT gravity on a two-dimensional manifold of arbitrary topology was presented in \cite{SSS}. The results are particularly simple and elegant. They rely on two basic assumptions.

The first assumption is to consider that the contributing geometries in genus $h$ with $b$ boundaries can be built as follows. One considers a ``core'' geometry of genus $h$ with $b$ geodesic boundaries of lengths $l_{1},\ldots, l_{b}$. These are precisely the geometries that are considered in topological gravity. They yield the moduli space volumes $V_{h,b}(l_{1},\ldots,l_{b})$ that are discussed in \cite{Mirzakhani}. On these core geometries are glued ``trumpets'' that correspond to annulus partition functions for which one of the boundary is a geodesic of fixed length whereas the other is a JT gravity boundary, which means that it has a fixed length but need not be a geodesic. Let us call $\smash{W^{\text{trumpet}}_{\text{Sch}}(\beta_{\text S},l)}$ the associated partition function, where $l$ is the geodesic boundary length and $\beta_{\text S}$ the Schwarzian limit parameter for the JT boundary, defined in Eq.\ \eqref{Schlimit}. The full JT gravity partition function, that follows from this gluing construction, is given by \cite{SSS}
\be\label{SSSpartgen} W^{(h,b)}_{\text{Sch}}(\beta_{\text{S},1},\ldots,\beta_{\text{S},b}) = 
\int_{[0,+\infty[^{b}} V_{h,b}(l_{1},\ldots,l_{b})
\prod_{i=1}^{b}W^{\text{trumpet}}_{\text{Sch}}(\beta_{\text{S},i},l_{i})\,  l_{i}\d l_{i}\, .\ee
The case of the annulus, $h=0$, $b=2$, is special, because there is no core manifold with this topology and geodesic boundaries. The partition function is then given by the gluing of two trumpets,
\be\label{SSSdbletrump}  W_{\text{Sch}}^{(0,2)}(\beta_{\text S,1},\beta_{\text S,2}) = \int_{0}^{\infty} W^{\text{trumpet}}_{\text{Sch}}(\beta_{\text{S},1},l)W^{\text{trumpet}}_{\text{Sch}}(\beta_{\text{S},2},l)\, l\d l\, .\ee
The measure of integration $l\d l$ comes from the usual Weil-Pertersson measure, the factor of $l$ being associated with the choice of twisting one has to make when one glues the trumpet on the core surface. 

The second assumption is that the trumpet partition function, in the Schwarzian limit, is obtained from an argument following closely the derivation of the disk partition function, Eq.\ \eqref{ZRAexact}, in the same limit. It is then found to be one-loop exact and given by
\be\label{Ztrumexact}  W^{\text{trumpet}}_{\text{Sch}}(\beta_{\text{S}},l) = 
\beta_{\text{S}}^{-1/2}e^{-\frac{l^{2}}{16\pi\beta_{\text S}}}\, .\ee

Using the equivalence between the Mirzakhani recursion relations \cite{Mirzakhani} and the Eynard topological recursion \cite{Eynardtoporec}, Eqs.\ \eqref{SSSpartgen} and \eqref{SSSdbletrump} imply that the JT gravity partition functions in the Schwarzian limit are computed by a standard Hermitian matrix model. The explicit spectral curve for the matrix model is derived straightforwardly from Eq.\ \eqref{Ztrumexact}.

In the microscopic formulation of JT gravity, the partition function in genus $h$ and $b$ boundaries is a function $W_{h,b}(\beta_{\text q,1},\ldots,\beta_{\text q,b})$ of the $b$ quantum boundary lengths $\beta_{\text q,i}$. In the Schwarzian limit $\La\rightarrow -\infty$, this microscopic partition function should match with $W^{(h,b)}_{\text{Sch}}$, with an appropriate identification between the microscopic parameters, $\beta_{\text q,i}$, $\La$ and $L$, and the Schwarzian parameters $\beta_{\text S,i}$. The discussion at the end of Section \ref{areaSchSec} suggests that $\beta_{\text S,i}$ is simply proportional to $\beta_{\text q,i}/L^{2}$ in the pure gravity theory. Even though it remains to be derived rigorously, we strongly believe that the matching between $W_{h,b}$ and $W^{(h,b)}_{\text{Sch}}$ in the Schwarzian limit is valid, and so is the analysis presented in \cite{SSS} in the same limit. The physics underlying the emergence of the Schwarzian/matrix model description is the same as for the disk topology, as explained in section \ref{nonzeroRSec}.

A fundamental open problem is the computation of $W_{h,b}(\beta_{\text q,1},\ldots,\beta_{\text q,b})$ for finite values of the microscopic parameters. Can we hope that the Saad-Shenker-Stanford framework can be suitably generalized in this case? Is it possible to maintain the relationship with a matrix model? The answer to these questions seems to be a definite no. Indeed, it is straightforward to construct many geometries contributing to $W_{h,b}$ that do not admit the simple ``trumpet decomposition'' that yields the formula \eqref{SSSpartgen}, because they do not contain a geodesic on which the gluing can be made. Let us illustrate this on the simplest case of the annulus. 

\begin{figure}
\centerline{\includegraphics[width=6in]{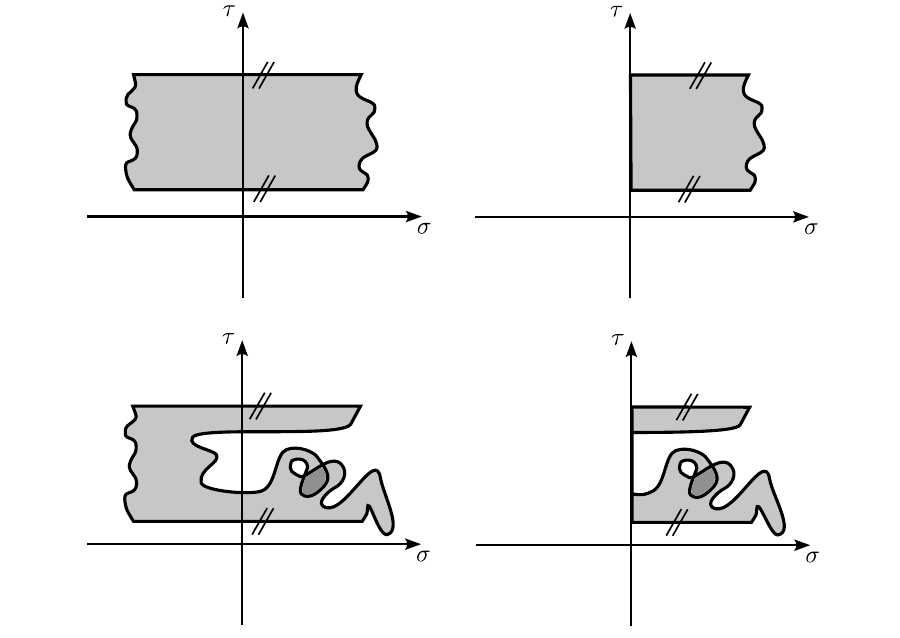}}
\caption{\label{annulusFig} Annuli of constant negative curvature depicted as immersions into $\Htwo$ in the strip representation, with metric \eqref{metH2stri} and identification $\tau\equiv\tau + l$. The identified sides are crossed out twice. Upper-left inset: a double-trumpet geometry, with its smooth and gently wiggling boundaries. It contains a closed geodesic at $\sigma=0$. Upper-right inset: one of the trumpet used to build the geometry on the upper-left inset. It has a geodesic boundary on the left and a fluctuating JT boundary on the right. Lower-left inset: a geometry with a hyperbolic holonomy but a wildly fluctuating boundary on the right, that also contains an overlapping region for the sake of illustration. The geometry does not contain a closed geodesic. Lower-right inset: a geometry a priori contributing to the ``trumpet'' partition function in the microscopic theory, with a geodesic boundary and a fluctuating JT boundary. The fluctuating boundary may have several components that end on the geodesic. Note that, in the microscopic theory, the fluctuating boundaries are fractal curves, exactly as in the case of the disk topology.}
\end{figure}

The ``double-trumpet'' geometries that contribute in \eqref{SSSdbletrump} are characterized by the fact that the holonomy of the associated flat $\PslR$ connexion along the non-trivial cycle of the annulus is a so-called hyperbolic element, conjugate to a matrix of the form
\be\label{cclasshyper} \begin{pmatrix} e^{l/2} & 0\\ 0 & e^{-l/2}\end{pmatrix}\, .\ee
The parameter $l$ is the length of the geodesic along which the gluing of the trumpets is made. These double-trumpet geometries can be explicitly described by an immersion of the annulus into hyperbolic space with an appropriate identification. The simplest way to depict this immersion is to use the strip representation of $\Htwo$, with metric, in units for which $L=1$, given by
\be\label{metH2stri} \d s^{2} = \frac{\d\tau^{2} + \d\sigma^{2}}{\cos^{2}\sigma}\, \cvp\quad -\frac{\pi}{2}<\sigma<\frac{\pi}{2}\, \cvp\ee
and identify $\tau\equiv\tau + a$. The geodesic of length $a$ is at $\sigma = 0$; see the upper-left inset in Fig.\ \ref{annulusFig}.

But there are myriads of annulus geometries that are not of this type, including geometries that can have arbitrarily large smooth boundary lengths. For instance, the annulus represented on the lower-left inset of Fig.\ \ref{annulusFig} does not contain a geodesic, even though its holonomy is hyperbolic. One can also use immersions into $\Htwo$ in the Poincar\'e disk representation, with metric given by Eq.\ \eqref{deltaHtwo2}, for instance of the type represented in Fig.\ \ref{cylFig}. The holonomy of such configurations is trivial. Configurations with a more general elliptic holonomy of the form 
\be\label{cclassellip} \begin{pmatrix} \cos (2\pi\alpha) & \sin (2\pi\alpha)\\ -\sin(2\pi\alpha) & \cos (2\pi\alpha)\end{pmatrix}\ee
are obtained if one identifies $\phi \equiv \phi +2\pi\alpha$ in \eqref{deltaHtwo2}. Annulus with a parabolic holonomy can be constructed using similar ideas, considering immersions into $\Htwo$ in the Poincar\'e half-plane representation, with metric $\d s^{2} = (\d x^{2} + \d z^{2})/z^{2}$ and identification $x\equiv x + \text{constant}$. The existence of all these geometries clearly invalidate the decomposition \eqref{SSSdbletrump} away from the Schwarzian limit. Similar constructions similarly invalidate \eqref{SSSpartgen}. Let us also note that the trumpet partition function itself, or similar partition functions involving mixed boundary conditions (some boundaries being geodesics, other being fluctuating JT boundaries), involve rather singular geometries in the microscopic theory, as illustrated in the lower-right inset of Fig.\ \ref{annulusFig}. The JT boundaries may indeed fluctuate wildly and touch the geodesics, yielding fluctuating boundaries with several components and degenerate, infinitely thin annulus portions.

\subsection{\label{OutSec3}Future research directions}

There are many possible future research directions. Let us briefly outline a few of them. 

We have already mentioned the new interesting combinatorial questions in relation with the self-overlapping polygon model, its matrix model formulation and the relevance of numerical methods. One would like to obtain an exact solution in the continuum limit and answer interesting conceptual questions about the model. For instance, is the measure of the set of configurations having a multiplicity index strictly greater than one zero, or is it strictly positive?

The study of the microscopic formulation of JT gravity on any topology is an outstanding and non-trivial problem. We have emphasized that the usual picture in terms of the trumpet decomposition and the associated matrix model formalism \`a la Saad-Shenker-Stanford does not generalize in any straightforward way. Yet, the understanding of the contributions of arbitrary topologies is important for many physical applications. On the one hand, these contributions are responsible for the so-called factorization problem, which makes it impossible to obtain a standard holographic interpretation in terms of a dual quantum theory. On the other hand, they seem to be essential for guaranteeing the full consistency of the quantum theory, in particular in relation with the black hole information problem.

In any case, our work indicates that the real-time quantization of JT gravity needs to be revisited. The UV-complete Euclidean theory we have developed in the present paper from the discretized point of view can also be formulated directly in the continuum by using the conformal gauge \cite{ferrari,ferraJTconfgauge}. This approach strongly suggests that the correct degree of freedom is the boundary conformal factor and the action to start with when quantizing has to include a ``Liouville-Hilbert'' term that is not present in the classical JT gravity action. It also suggests that, if space-time has the topology of a strip $I\times\mathbb R$, for a finite-size spatial interval $I$, boundary time reparameterization invariance remains a good unbroken gauge symmetry. This is the counterpart of the fact that, in the microscopic Euclidean theory, reparameterizations of the boundary coordinate are perfectly good unbroken gauge symmetries. This implies that there cannot be a non-trivial boundary Hamiltonian. In this sense, the model with boundaries is not so different than a model for a closed universe, with space-time topology $\text S^{1}\times\mathbb R$. Both the bulk time and the boundary time seem to be emergent concepts in this framework. The boundary time may emerge as an effective parameter in the near-AdS limit, possibly in a way similar to the emergence of the smooth boundary length parameter in the near-hyperbolic limit of the Euclidean theory.

Finally, we have explained that the positive curvature JT quantum gravity theory is a perfectly consistent model for positive cosmological constant, even though it has been very little studied up to now. In view of its relation with cosmological models, it may turn out to be the most interesting JT model for physics, especially if the real-time quantization can be studied, potentially shedding light on quantum gravity in de Sitter space and de Sitter holography.

\section*{Acknowledgements}

This work has a very long and tortuous history, tightly intertwined with the sad period of the Covid pandemic. Some results, including the role played by self-overlapping curves in JT gravity, were sketched as early as 2021 during a seminar at the Institute for Advanced Study in Dublin, Ireland and at the workshop on Quantum Geometry, Field Theory and Gravity in Corfu, Greece. I am particularly grateful to the organisers of the Quantum Gravity, Random Geometry and Holography programme at the Institut Henri Poincar\'e in Paris, France (J.~Barrett, D.~Benedetti, J.~Ben Geloun and R.~Loll, January and February 2023), of the workshop on Random Geometry in Math and Physics in Nijmegen, The Netherlands (T.~Budd, March 2023) and to the APCTP in Pohang, South Korea (J.~Yoon, April and May 2023, May 2024), for providing me with an exceptional scientific and working environment, which has enabled the work presented here to mature. I would also like to thank Soumyadeep Chaudhuri for a stimulating collaboration and useful discussions.

This work was supported in part by the International Solvay Institutes and the Asia Pacific Center for Theoretical Physics (APCTP) for participating in the APCTP Focus Program, ``Entanglement, Large N, and Black Hole 2024.'' It was completed during my stay at the APCTP in Pohang, South Korea, and at the Aspen Center for Physics, Aspen, Colorado, USA.

\appendix\clearpage

\section{\label{SemiclassApp}Classical limits of Liouville and JT gravity on the disk}

The continuum generating function (also called partition function in this context) $\mathsf W(\ell,\La)$ for Liouville quantum gravity on the disk, discussed in Section \ref{LiouclimitSec}, has a formal path integral representation
\be\label{Lioupathin1} \mathsf W(\ell,\La) = \frac{1}{\text{Vol}(\diff(\disk))}\int\! D g\, e^{-\frac{\La}{16\pi} A[g]}Z_{\text{CFT}}[g]\delta\Bigl(\ell - \oint \d s\Bigr)\, .\ee
The integral is over all the metrics on the disk, with the constraint, imposed by the $\delta$-function, that the disk boundary length is fixed and equal to $\ell$. The factor $Z_{\text{CFT}}[g]$ represents the partition function of an arbitrary CFT coupled to gravity. The cosmological constant $\La$ couples to the disk area $A[g]$. In conformal gauge, the metrics take the form
\be\label{confgaugeApp} g = e^{2\sigma}\delta\ee
for an arbitrary background disk metric $\delta$. Standard manipulations then yield\footnote{Basic references for Liouville gravity include \cite{Polyakov,Liouvillesuccess1,Liouvillesuccess2,gravityreviews}.} 
\be\label{Lioupathin2}  \mathsf W(\ell,\La) = \frac{Z_{\text{CFT}}[\delta]}{\text{Vol}(\PslR)}\int\! D \sigma\, e^{\frac{c-26}{24\pi}S_{\text L}[\delta,\sigma]-\frac{\La}{16\pi} A[g]}\delta\Bigl(\ell - \oint \d s\Bigr)\, .\ee
The partition function $Z_{\text{CFT}}[\delta]$ is an overall irrelevant, parameter-independent, constant. One must still formally divide by the (infinite) volume $\text{Vol}(\PslR)$ because there is a $\PslR$ subgroup of $\diff(\disk)$ that remains unbroken in conformal gauge. The measure $D\sigma$ over the Liouville field $\sigma$ is a formal background-independent non-linear integration measure. The action $S_{\text L}$ is the  Liouville action
\be\label{SLiouville} S_{\text L}[\delta,\sigma] = \int_{\disk}\!\d^{2}x\sqrt{\delta}\bigl(\delta^{ab}\partial_{a}\sigma\partial_{b}\sigma + R[\delta]\sigma\bigr) + 2\oint_{\partial\disk}\!\d s_{\delta}\, k[\delta]\sigma\, .\ee
The quantities $R[\delta]$, $s_{\delta}$ and $k[\delta]$ refer to the Ricci scalar, arc-length coordinate and extrinsic curvature of the boundary computed with the background metric $\delta$, respectively. The Liouville action captures the metric dependence of the CFT partition function $Z_{\text{CFT}}[g]$ and of the Fadeev-Popov ghosts determinants that appear when the gauge \eqref{confgaugeApp} is imposed.

Our goal is to discuss the semi-classical limit of the theory, defined in Eq.\ \eqref{Lioucllimit}, to leading $|c|\rightarrow\infty$ order. To do this, the precise definition of the measure $D\sigma$, or the treatment of the unbroken $\PslR$ gauge symmetry, are irrelevant. All we need is to solve the classical equations of motion associated with the classical action
\be\label{SclLiou}S_{\text{cl}} = \frac{|c|}{24\pi}\biggl(S_{\text L}[\delta,\sigma] + \mu A[g]\biggr)\, ,\ee
where $\mu =\frac{3}{2|c|} \La$ is the rescaled cosmological constant defined in \eqref{Lioucllimit}, taking into account the fact that the boundary length is fixed,
\be\label{bdconsclL} \oint\d s = \oint \d s_{\delta}\, e^{\sigma} = \ell\, .\ee
The classical action evaluated on the classical solution, $S_{\text{cl}}=S_{\text{cl}}^{*}$,  then yields the leading large $|c|$ behaviour of the generating function according to
\be\label{leadWLiou} \lim_{c\rightarrow -\infty}\frac{1}{|c|}\ln\mathsf W(\ell,\La) = -S_{\text{cl}}^{*}\, .\ee

Dealing with the constraint \eqref{bdconsclL} with the help of a Lagrange multiplier $\zeta$, it is straightforward to show that the equations of motion derived from \eqref{SclLiou} are equivalent to
\be\label{Lcleom} R = -2\tilde\La\, ,\quad k = \zeta\, ,\ee
where $R$ and $k$ are the Ricci scalar and extrinsic curvature of the disk boundary computed with the metric $g = e^{2\sigma}\delta$, respectively. The two equations in \eqref{Lcleom} come from the vanishing of the bulk and boundary terms in the variation of the action \eqref{SclLiou}.

The fact that the Ricci curvature is a negative constant and the extrinsic curvature of the boundary is constant as well implies that the solution is a disk embedded in hyperbolic space. Choosing the background metric $\delta$ to be the flat Euclidean metric for a disk of area $\pi$, the conformal factor of the solution in polar coordinates is given by
\be\label{sigclLiou} e^{\sigma} = \frac{2}{\sqrt{\mu}}\frac{r_{0}}{1-r_{0}^{2}\rho^{2}}\,\cvp\quad 0\leq 1\leq \rho\, .\ee
The constraint \eqref{bdconsclL} yields
\be\label{Lclrzeroell} r_{0} = \frac{2\pi}{\sqrt{\mu}\,\ell}\Biggl[-1+\sqrt{1+\frac{\mu\ell^{2}}{4\pi^{2}}}\Biggr]\, .\ee
On this solution, the area and the Liouville action evaluate to
\be\label{areaLstar} A^{*} = \frac{4\pi}{\mu}\frac{r_{0}^{2}}{1-r_{0}^{2}} = \frac{\ell r_{0}}{\sqrt{\mu}} \, \cvp\quad S_{\text L}^{*} = \frac{4\pi r_{0}^{2}}{1-r_{0}^{2}} + 4\pi\ln\frac{2r_{0}}{\sqrt{\mu}}\,\cdotp\ee
Introducing the variable
\be\label{zLcldef}z = \frac{\ell\sqrt{\mu}}{2\pi}\ee
and using \eqref{Lclrzeroell}, we get
\be\label{areaLstar2} \frac{r_{0}^{2}}{1-r_{0}^{2}}=\frac{1}{2}\bigl(-1+\sqrt{1+z^{2}}\bigr)\, ,\quad \frac{\sqrt{\mu}}{2r_{0}} = \frac{\pi}{\ell}\bigl(1+\sqrt{1+z^{2}}\bigr)\, .\ee
Combining \eqref{SclLiou}, \eqref{leadWLiou} and \eqref{areaLstar} with the above formulas, we finally obtain
\be\label{lead2W} \lim_{c\rightarrow -\infty}\frac{1}{|c|}\ln\mathsf W(\ell,\La) = -\frac{1}{6}\ln\frac{\ell}{2\pi} -\frac{1}{6}\bigl(\sqrt{1+z^{2}}-1\bigr)+\frac{1}{6}\ln\frac{1+\sqrt{1+z^{2}}}{2}\, \cdotp\ee
This result matches with the predictions of Eq.\ \eqref{WLioZamoexp}, \eqref{gammaexpZ} and \eqref{WLioZloop0} in the main text.

In the case of JT gravity in negative curvature, the constraint $R=-2$ is imposed by definition of the model. The semi-classical limit corresponds to $\La\rightarrow -\infty$ at fixed $\ell$ and was discussed in Section \ref{JTptSec}. Since $\La<0$ in this case, the classical solution, minimizing the action, corresponds to a disk of maximal area, at fixed boundary length $\ell$. As is well-known in the theory of extremal surfaces, this implies that the extrinsic curvature of the boundary must be constant. One thus gets exactly the same classical equations \eqref{Lcleom} as in the case of Liouville gravity, with $\mu = 1$. Using the value for the on-shell area given by \eqref{areaLstar}, \eqref{zLcldef} and \eqref{areaLstar2}, we obtain the leading behaviour of the partition function,
\be\label{leadZJTApp} \lim_{\La\rightarrow -\infty}\frac{1}{|\La|}\ln Z(\ell,\La) = \frac{A^{*}}{16\pi} = \frac{1}{8}\sqrt{1+\Bigl(\frac{\ell}{2\pi}\Bigr)^{2}} - \frac{1}{8} = \frac{\ell}{16\pi}\sqrt{1+\Bigl(\frac{2\pi}{\ell}\Bigr)^{2}}-\frac{1}{8}\, .\ee
This matches with Eqs.\ \eqref{ZIKTVLoop} and \eqref{treeJTneg} modulo the usual counterterms, as discussed below Eq.\ \eqref{treeJTneg}.

\section{\label{SAPasymp}Asymptotic analysis of the SAP generating function}

The aim of this Appendix is to derive the asymptotic formulas \eqref{WSAPasym1} for the SAP generating function \eqref{WSAPexact} in the limit
\be\label{limitApp} \La\beta^{3/2}\rightarrow - \infty\ee
of large negative cosmological constant. 

We are going to use a nice integral formula \cite{KMAiry} for the coefficients $\varphi_{r}$ of the asymptotic expansion of the logarithmic derivative of the Airy function, Eq.\ \eqref{Airylogasy},
\be\label{vphirAiform} \varphi_{r} = \frac{3}{2\pi^{2}}(-1)^{r+1}\int_{0}^{\infty}t^{3(r-1)/2} A(t)\,\d t\quad\text{for any $r\geq 1$.}\ee
where
\be\label{AAirydef} A(t) = \frac{1}{\Ai^{2}(t) + \Bi^{2}(t)}\,\cdotp\ee
Let us introduce the functions $\psi$ and $\tilde\psi$,
\be\label{psidefApp} \psi(x) = -\sum_{r=1}^{\infty}\frac{\varphi_{r}}{\Gamma\bigl((3r-1)/2\bigr)} x^{r} = \frac{3x}{2\pi^{2}}\tilde\psi(x)\, ,\ee
in terms of which the exact formula \eqref{WSAPexact} for the generating function reads
\be\label{WSAPAppex} W^{\text{SAP}}(\beta,\La) = 2\sqrt{\pi}a_{0}\frac{e^{-\beta\la}}{\beta^{5/2}}\Bigl(\frac{1}{2\sqrt{\pi}} + \psi(x)\Bigr)\, ,\quad\text{for}\ x=\frac{\La\beta^{3/2}}{32\pi^{5/2}a_{0}}\,\cdotp\ee

Note that the asymptotic behaviour of $\varphi_{r}$ can be straightforwardly obtained from \eqref{vphirAiform} by using the saddle point method and the known large $t$ behaviour of the Airy functions. One gets
\be\label{vphirasym} \varphi_{r}\underset{r\rightarrow \infty}{\sim}  (-1)^{r+1}
\sqrt{\frac{2}{\pi}}\Bigl(\frac{3}{4}\Bigr)^{r}r^{r-\frac{1}{2}}e^{-r}\, .\ee
This implies in particular that the series defining $\psi$ or equivalenlly $W^{\text{SAP}}$ have an infinite radius of convergence. 

Using \eqref{vphirAiform} and permuting the infinite sum and integral signs, we can perform explicitly the infinite sum. We get in this way
\be\label{tildepsif} \tilde\psi(x) = \int_{0}^{\infty}A(t) f\bigl(t^{3/2}x\bigr)\,\d t\ee
for
\be\label{fOFAiry} f(u) = -\frac{1}{3}\Bigl( e^{|u|^{2/3}} + 2 \cos\bigl(\frac{\sqrt{3}}{2}|u|^{2/3}\bigr) e^{-\frac{1}{2}|u|^{2/3}}\Bigr) + \frac{1}{2\sqrt{\pi} u}\Bigl( 1 - F\bigl(u^{2}/27\bigr)\Bigr)\, ,\ee
where the function $F$ is a generalised hypergeometric function,
\be\label{FgenhyperfApp} F(z) = {}_{1}F_{3}\bigl(\{1\},\{-1/6,1/6,1/2\};z\bigr)\, .\ee
When $u\rightarrow -\infty$, which is the relevant regime to study the limit \eqref{limitApp}, $f$ has an asymptotic expansion of the form
\be\label{fgenhyperasymp} f(u) = -\frac{2}{3}e^{|u|^{2/3}}+ \frac{1}{2\sqrt{\pi}u} + O\bigl(1/u^{3}\bigr)\, .\ee
This asymptotic form becomes valid when $u$ is of order one and larger. In the integral \eqref{tildepsif}, this is the region where $t$ is of the order $1/|x|^{2/3}$ or larger. Introducing an arbitrary $M>0$, we thus write \eqref{tildepsif} as a sum of two terms,
\be\label{tildepsif2} \tilde\psi(x) = \int_{0}^{M/|x|^{2/3}}A(t) f\bigl(t^{3/2}x\bigr)\, \d t+\int_{M/|x|^{2/3}}^{\infty}A(t) f\bigl(t^{3/2}x\bigr)\, \d t\, .\ee
The first term on the right-hand side of this equation can be evaluated when $|x|\rightarrow\infty$ by making the change of variable $t' = |x|^{2/3}t$ and then expanding the function $A$ near zero. This yields contributions of order $1/|x|^{2/3}$, with expansion parameter $1/|x|^{2/3}\propto 1/(|\La|^{2/3}\beta)$. Using \eqref{fgenhyperasymp}, the second integral can approximated as
\be\label{psif4til} \int_{M/|x|^{2/3}}^{\infty}f\bigl(t^{3/2}x\bigr)A(t)\,\d t \simeq -\frac{2}{3}\int_{M/|x|^{2/3}}^{\infty}A(t) e^{|x|^{2/3}t} \,\d t\,\ee
up to exponentially subleading terms. The integral \eqref{psif4til} can be evaluated  systematically by using the saddle-point method. The saddle point $t_{\text s}$ satisfies
\be\label{saddleApp} \frac{A'(t_{\text s})}{A(t_{\text s})} = -|x|^{2/3}\, .\ee
Using the asymptotic expansion
\be\label{AasymApp} A(t)\underset{t\rightarrow +\infty}{=} \pi e^{-\frac{4}{3}t^{3/2}}\sqrt{t}\Bigl(1-\frac{5}{24 t^{3/2}} + O\bigl(1/t^{3}\bigr)\Bigr) + O\bigl(e^{-\frac{8}{3}t^{3/2}}\bigr)\ee
we find that $t_{\text s}=\frac{1}{4}|x|^{4/3}$. It is thus natural to set $t = \frac{1}{4}|x|^{4/3}\tau$. Up to exponentially small terms in the limit $|x|\rightarrow\infty$, the integral \eqref{psif4til} is of the form
\be\label{psif5til} -\frac{|x|^{4/3}}{6}\int_{0}^{\infty} e^{-|x|^{2}\mathcal S(\tau)}\d\tau\ee
for 
\be\label{ScalApp}\mathcal S(\tau) = -\frac{1}{4}|x|^{2}\tau - \ln A\bigl(\frac{1}{4}|x|^{4/3}\tau\bigr) = \frac{1}{6}|x|^{2}\bigl(\tau^{3/2} -\frac{3}{2}\tau\bigr)
-\ln\frac{\pi|x|^{2/3}}{2}-\frac{1}{2}\ln\tau + O\bigl(1/|x|^{2}\bigr)\, .\ee
On this form, a systematic expansion in powers of $1/|x|^{2}$ around the saddle point $\tau=1$  can be obtained straightforwardly, to any desired order, by setting $\tau = 1 + \tau'/|x|$ and expanding. One finds in this way
\be\label{psitildeAppfinal} \tilde\psi(x) \underset{x\rightarrow -\infty}{=} -\frac{\pi^{3/2}}{3}|x|e^{x^{2}/12}\bigl(1 + O(1/x^{2})\bigr)\ee
Using \eqref{WSAPAppex}, this yields \eqref{SAPareaasymp}.

\end{document}